\newcommand\ag[1]{#1}
\begin{document}

%\markboth{Ghosh et~al.}{Design-based analysis of Rosenbaum's rank-based estimator}

\title{Robustness and Efficiency  of Rosenbaum's Rank-based Estimator in  Randomized Trials:~A Design-based Perspective}

\author[1]{Aditya Ghosh}
\affil[1]{Department of Statistics, Stanford University}

\author[2]{Nabarun Deb}
\affil[2]{Booth School of Business, University of Chicago}

\author[3]{Bikram Karmakar \thanks{Supported by NSF grant DMS-2015250}}
\affil[3]{Department of Statistics, University of Wisconsin-Madison}

\author[4]{Bodhisattva Sen \thanks{Supported by NSF grant DMS-2311062}}
\affil[4]{Department of Statistics, Columbia University}

\date{\today}
\maketitle

%\vspace{-6mm}

\begin{abstract}
Mean-based estimators of causal effects in randomized experiments may behave poorly if the potential outcomes have a heavy tail or contain outliers. An alternative estimator proposed by \citet{rosenbaum1993} estimates a constant additive treatment effect by inverting a randomization test using ranks. \ag{We develop a design-based asymptotic theory for this rank-based estimator and study its robustness and efficiency properties. We show that Rosenbaum's estimator is robust against outliers with a breakdown point that uniformly dominates that of any weighted quantile estimator.}
When pretreatment covariates are available, a regression-adjusted version of Rosenbaum's estimator uses an agnostic linear regression on the covariates and bases inference on the ranks of residuals. Under mild integrability conditions, we show that this estimator is at most 13.6\% less efficient, in the worst case, than the commonly used mean-based regression adjustment method proposed by \citet{Lin13}; often outperforming it when the residuals have heavy tails. Moreover, under suitable assumptions, Rosenbaum's regression-adjusted estimator is at least as efficient as the unadjusted one.
Finally, we initiate the study of Rosenbaum's estimator when the constant treatment effect assumption may be violated. To analyze the regression-adjusted estimator, we develop local asymptotics of rank statistics under the design-based framework, which may be of independent interest.
\end{abstract}

\begin{keywords}
Breakdown point; Causal inference; Covariate adjustment; Hodges-Lehmann 0.864 lower bound; Local asymptotic normality; Randomization inference; Wilcoxon rank-sum statistic.
\end{keywords}

%\vspace{-12mm}

\doparttoc % Tell to minitoc to generate a toc for the parts
\faketableofcontents % Run a fake tableofcontents command for the

\section{Introduction}\label{sec:intro}

When a treatment is randomly assigned to the units in a study,~\cite{Fisher1935} showed that one can use this randomization to provide valid statistical inference about the treatment effect without making strong assumptions regarding the outcome-generating model. In particular, we can calculate confidence intervals for the treatment effect based on an estimator using its randomization distribution. 
\ag{In the design-based (finite population) framework, one considers the potential outcomes \citep{Neyman1923, Rubin1974, Rubin1977, imbens2015} as fixed, making the randomization distribution of the estimator central to inference. In contrast, the model-based (infinite population) approach assumes that units are independently sampled from an infinite superpopulation, introducing randomness through both sampling and treatment assignment.  While the infinite-population approach typically yields simpler mathematical derivations due to the assumption of independent and identically distributed (i.i.d.) samples, this assumption itself has often been criticized as unrealistic and for obscuring the role of the random treatment assignment mechanism \citep{DingLiMiratrix+2017}. We refer the reader to \citet{DingLiMiratrix+2017} for a detailed comparison of these two frameworks.}

A popular choice for an estimator of the average treatment effect in randomized trials is the difference-in-means estimator, which takes the difference of the averages of outcomes for the treated and control units (see, e.g., \citet{Neyman1923}). \citet{Freedman08a} formally derived the design-based asymptotic theory of this estimator under finite fourth moment conditions, showing that the standard Wald confidence interval, which is justified under i.i.d.~sampling from infinite populations, remains asymptotically valid under the randomization distribution. 
\ag{\citet{Li2017} substantially generalized this line of work by deriving finite population central limit theorems under minimal moment conditions, covering a wide range of estimators and experimental designs.}

Despite these advances, mean-based estimators can have volatile standard errors (and thus produce very wide confidence intervals) when the potential outcomes are heavy-tailed or contaminated by outliers. Heavy-tailed or extreme outcomes are common in many modern randomized experiments, including experiments in the digital space (see, e.g., \citet{athey2021}). An important class of estimators used in practice in such situations use the {\it ranks} of the potential outcomes as opposed to their observed values (see, e.g.,~\citet[Section 5.5.4]{imbens2015}). Notably,~\cite{rosenbaum1993} proposed to use a Hodges-Lehmann type point estimate~\citep{HL63} based on the Wilcoxon rank-sum (abbreviated as WRS) test statistic \citep{Wilcoxon1945} for estimating a constant additive treatment effect under randomization inference. \ag{The constant treatment effect hypothesis generalizes Fisher's sharp null hypothesis,} and is often a convenient starting point in answering a causal question \citep{Rosenbaum02, Ho2006, athey2021}. Further, in some situations, the identification of a constant treatment effect has immediate practical use \cite[see][Section 2.4.5]{Rosenbaum2002}.

\ag{
In this paper, we develop a design-based asymptotic theory for Rosenbaum’s estimator based on the WRS statistic (a.k.a.~the Hodges-Lehmann estimator), focusing on its robustness and efficiency properties. Despite its popularity in practice --- especially under heavy-tailed outcomes (see, e.g., \citet{athey2021}) --- the asymptotic behavior of this estimator under finite-population asymptotics has, to our knowledge, not been formally analyzed. While the general finite-population central limit theorems developed by \citet{Li2017} provide a versatile toolkit for establishing the asymptotic normality of many estimators under randomization, their results do not directly apply to nonlinear rank-based statistics such as Rosenbaum's rank-based estimator based on the WRS test statistic. Our work fills this gap and provides a rigorous of Rosenabum's rank-based estimators in this framework.
}

We also consider randomization inference for the treatment effect with regression adjustment of pretreatment covariates. It is increasingly common to collect additional data on covariates in randomized experiments which are then used in an ANCOVA model, aiming to improve the unadjusted difference-in-means type estimator \citep{Fisher1935, Cox2000, Freedman08a, Freedman08b}.~\citet{Lin13} showed that including treatment-covariate interaction in the linear model results in a regression-adjusted estimator which is asymptotically at least as efficient as the unadjusted difference-in-means estimator in completely randomized experiments. Lin's result led to follow-up works on efficient regression-adjusted or model-assisted estimators for other experimental designs
\citep{Fogarty2018, Li2020, Liu2020, su2021model, Zhao2021regression}. 
However, these methods typically use robust standard errors \citep{Huber1967, White1980}, and are still sensitive to heavy-tailed distributions or extreme values of the potential outcomes  \citep{MacKinnon1985, Young2019}. 

The regression-adjusted version of Rosenbaum's rank-based estimator, originally suggested by \cite{Rosenbaum02} for observational studies, is calculated as follows: First, we compute the control potential outcomes under a hypothesized treatment effect and regress it on the covariates using least squares. Then, we invert the WRS test based on the ranks of the residuals from this regression to obtain the point estimator. Although this regression-adjusted rank-based estimator is often used in practice, a thorough understanding of its theoretical properties under the design-based framework was missing from the literature due to technical difficulties\footnote{A recent work by \citet{Cattaneo2024} provides
the first theoretical analysis  for Rosenbaum’s rank-based matching estimator in observational studies. Their results, derived under an i.i.d. superpopulation framework, can be seen as complementary to ours.}.
We undertake a systematic study of the asymptotic properties of this estimator, and derive results that can be directly used by practitioners to make an informed choice of an estimator for their data analysis.

\subsection{Summary of our contributions}

\ag{All results in this paper are derived under the finite population asymptotics framework, where the only source of randomness is the completely randomized assignment of the treatment; the potential outcomes are considered fixed. 
We illustrate the robustness of Rosenbaum's rank-based estimator against extreme potential outcomes or contamination using the concept of breakdown point; see~\cref{ABP}. The notion of breakdown point is well-established in nonparametric statistics (see, e.g., \cite{hettmansperger2010}). However, to our knowledge, its formal application to randomized experiments, particularly in the design-based framework, is novel.
}

\ag{
\cref{ABP} shows that the asymptotic breakdown point (henceforth ABP) of Rosenbaum's estimator uniformly dominates the ABP of any weighted average quantile estimator (see~\eqref{def:waq} for a definition). This class includes the difference-in-means estimator, the $\alpha$-trimmed difference-in-means estimator, the $\alpha$-Winsorized difference-in-means estimator, the difference-in-medians estimator, and the estimators proposed by \citet{athey2021}, among others. %Moreover, \citet{athey2021} construct WAQ estimators that are semiparametrically efficient (under the infinite population framework); we refer the reader to \citet{athey2021} for details. 
Our notion of the ABP also has implications for the lengths of corresponding confidence intervals; see \cref{remark:zeroABP-infinit-CI}.}

Although the robustness of Rosenbaum's estimator may come at the cost of a loss in efficiency, we argue that this cost is small. \cref{efflb} establishes that the asymptotic efficiency of Rosenbaum's estimator relative to the difference-in-means estimator is bounded below by 0.864 in the worst case, and is substantially higher when the potential outcomes have heavy tails (cf.~\cref{rem:on_table:eff}).
This result is parallel to the lower bound due to~\citet{HL56}  on the asymptotic efficiency of the WRS test relative to Student's $t$-test under an infinite population model.  We note, in particular, that when the potential outcomes behave like realizations from a normal distribution (cf.~\cref{empirical-assump-1}), 
Rosenbaum's estimator is only 5\% less efficient compared to the difference-in-means estimator (which is most efficient in this case); see also \cref{table:eff}.

When pretreatment covariates are available, we show that the regression-adjusted version of Rosenbaum's estimator offers compelling practical advantages.
First, \cref{Jb.geq.Ib} clarifies that under an asymptotic independence condition, Rosenbaum's regression-adjusted estimator is at least as efficient as Rosenbaum's estimator without regression adjustment. Next, we derive a lower bound parallel to \cref{efflb} that compares the asymptotic efficiency of the regression-adjusted Rosenbaum's estimator relative to Lin's estimator \citep{Lin13}. This finding, stated in \cref{efflb-adj-case}, implies that the confidence intervals obtained using Rosenbaum's regression-adjusted estimator will be at most 7.6\% wider, in the worst case, compared to the same for Lin's estimator. When the residuals are heavy-tailed, Rosenbaum's regression-adjusted estimator is often more efficient than Lin's estimator; we refer the reader to \cref{sec:eff-gain-by-adj} for details.
Thus, among estimators whose asymptotic behavior is well understood under the finite population setting, Rosenbaum's estimators make a strong case for use in practice, both in terms of robustness and efficiency.

We also contribute novel technical tools for studying rank-based estimators under the design-based framework. Since Rosenbaum’s estimator inverts the WRS test, deriving its asymptotic distribution requires the asymptotic distribution of the WRS statistic under a sequence of \emph{local alternatives} \citep{HL63}, which does not follow from the existing literature \citep{Li2017,Li2018,Lei2020,Wu2020}. Classical results from Le Cam's local asymptotic normality theory (see, e.g., \citet[Chapter 7]{vandervaart}) also do not apply in our design-based setting where the observed outcomes are neither independent nor identically distributed. We develop the local asymptotic normality of the WRS statistic in randomized experiments (\cref{propo-asy-tauN}) and thus establish the asymptotic distribution of Rosenbaum's estimator (\cref{tau_HL.CLT}). 
Similarly, for the regression-adjusted estimator, we derive the local asymptotic behavior of the regression-adjusted rank statistic (\cref{tNadj.finalCLT}) by carefully tracking how covariate adjustment affects the joint distribution of the ranks of the residuals. This extension requires new arguments beyond the existing techniques for regression-adjusted estimators \citep{Fogarty2018, Li2020, Zhao2021regression}, which focus on smooth functionals rather than rank-based statistics. Our proof techniques may prove useful in future research on other experimental designs or other rank-based estimators, such as the U-statistics-based estimators of \citet{RosenbaumUstatistics}.

While the main focus of this paper is on the study of Rosenbaum's estimator(s) under the assumption of constant treatment effect, it is natural to ask which estimand it targets when this assumption does not hold. In \cref{sec:HTE}, we initiate a study of this problem by deriving a weak limit of Rosenbaum's unadjusted estimator without any assumption on the treatment effect; see~\cref{thm:treathet}. This leads to a novel estimand of the treatment effect in the context of randomized trials which is robust to outliers and contamination. \ag{A recent work by \cite{lei2024} analyzes the same estimator in the infinite-population framework.}

\ag{Our empirical experiments in \cref{sec:simulations,sec:simulationsApp} demonstrate the benefits of using rank-based estimators in settings with heavy tails or contamination}. The proofs of our main results, additional technical results, and discussions are relegated to %the Supplementary Material \cite{ghosh2021supp}.  
\cref{app:additional-results,sec:simulationsApp,sec: main tools,main:proofs:unadj,main:proofs:adj,AppendixA,sec:sometechlem}.
An implementation of our proposed methods in \texttt{R} is available from \url{https://github.com/ghoshadi/RRE}.

\section{Inference without regression %covariate 
adjustment}\label{sec:w/oAdj}

We work in the Neyman-Rubin potential outcomes framework \citep{Neyman1923, Rubin1974, Rubin1977, imbens2015} \ag{and impose the stable unit treatment value assumption (SUTVA)}. For the $i$-th subject ($i=1,\dots,N$), let $a_i$ and $b_i$ denote the potential outcomes under the treatment and the control, respectively; the observed outcome is given by 
$Y_i = Z_i a_i + (1-Z_i) b_i$, 
%\begin{equation}\label{Y_i} Y_i = Z_i a_i + (1-Z_i) b_i,\end{equation}
where $Z_i$ equals $1$ if the $i$-th subject is treated, and $0$ otherwise. We assume throughout this paper that $\{a_i\}_{i=1}^N$ and $\{b_i\}_{i=1}^N$ are fixed constants, i.e., $Z_i$ is the only source of randomness in $Y_i$. 

\begin{assumption}\label{randomization}
The treatment group is formed by choosing $m=m(N)$ out of the $N$ subjects by {\it simple random sampling without replacement}, and ${m}/{N}\to \lambda\in (0,1)$ as $N\to\infty$. 
\end{assumption}
We now posit the assumption of a \emph{constant additive treatment effect} \citep{rosenbaum1993, Rosenbaum02}: 
\begin{assumption}\label{cte} For each $i=1,\dots,N$, we have
    $a_i - b_i = \cte$, 
for some unknown real number $\cte$.
\end{assumption}
Denote vectors by boldface letters, e.g., $\vec{Z} := (Z_1, \dots, Z_N)^\top$, 
$\vec{b} :=(b_1, \dots, b_N)^\top$ and so forth. We are interested in point estimation and confidence interval for the constant treatment effect $\cte$. %To formally state our asymptotic results as the sample size $N$ grows to infinity, we use an additional index $N$ in our notation for the later sections.
Following \cite{rosenbaum1993, Rosenbaum02}, we estimate $\cte$ by inverting the following testing problem.
\begin{equation}\label{eq:Hyp-Test}
H_0 : \cte = \tau_0 \qquad \mathrm{versus} \qquad H_1:\cte\neq \tau_0.
\end{equation} 
Under $H_0$, the vector $\vec{Y}-\tau_0\vec{Z}$ (called \emph{adjusted responses}) equals $\vec{b}$, which is non-random. Hence any statistic $t(\vec{Z}, \vec{Y}-\tau_0\vec{Z})$, which is a function of the treatment indicators and the adjusted responses, can be used to make randomization inference about $H_0$, since the null distribution of  $t(\vec{Z}, \vec{Y}-\tau_0\vec{Z})= t(\vec{Z}, \vec{b})$ is completely specified by the randomization distribution of $\vec{Z}$. 
For example, one can consider the difference-in-means test statistic: 
\begin{equation}\label{DM-stat}
t_{\mathrm{dm}}(\vec{Z}, \vec{v}) := \frac{1}{m}\sum_{i\,:\, Z_i\, =\, 1} v_i - \frac{1}{N-m}\sum_{i\,:\, Z_i\, =\, 0} v_i,
\end{equation} 
where $\vec{v}$ is a vector of observations of length $N$, and $m=\sum_{i=1}^N Z_i$.  
Mean-based test statistics like \eqref{DM-stat} above are often not robust to heavy-tailed potential outcomes or the presence of contamination/outliers.
On the other hand, test statistics that use the {\it ranks} of the potential outcomes, as opposed to their exact values, are in general less sensitive to thick-tailed or skewed distributions and hence can lead to more powerful tests (see, e.g.,~\citet[Section 5.5.4]{imbens2015}). 
There are various popular choices for the rank-based statistic $t(\cdot,\cdot)$; \citet{rosenbaum1993} recommended using the WRS  statistic~\citep{Wilcoxon1945}, defined as
\begin{equation}\label{WRS_N}
    t(\vec{Z},\vec{Y}-\tau_0\vec{Z}) := \vec{Z}^\top \vec{q}^{(\tau_0)} = \sum_{j : Z_j = 1} q_j^{(\tau_0)},
\end{equation}
where $q_j^{(\tau_0)}$ is the rank of $Y_j - \tau_0 Z_j$ among $\{Y_i - \tau_0 Z_i\}_{i=1}^N$. Considering the possibility of ties in the data, we take the following definition for the ranks, known as \textit{up-ranks}:
\begin{equation}\label{upranks}
q_{j}^{(\tau_0)} := \sum_{i=1}^N \ind{Y_i - \tau_0 Z_i\le Y_j - \tau_0 Z_j},\qquad \mbox{for }\; 1\le j\le N.
\end{equation}
Note, under $H_0:\cte = \tau_0$, the ranks $\{q_{j}^{(\tau_0)} \}_{j=1}^N$ can equivalently be written as
\begin{equation}\label{null-upranks}
q_{j}^{(\tau_0)}=\sum_{i=1}^N \ind{b_{i}\le b_{j}},\quad 1\le j\le N.
\end{equation}
We present the asymptotic null distribution of the WRS statistic in the following result, which is essentially an adaptation of \citet[Corollary 1]{Li2017}. %(see \cref{proof:propo-null1} for a proof). 

\begin{proposition}[Asymptotic null distribution of $t_N$]\label{propo-null1} 
Let $t_N:=t(\vec{Z}_N, \vec{Y}_N - \tau_0\vec{Z}_N)$ be the WRS statistic for a sample of size $N$, with $t(\cdot,\cdot)$ as defined in \eqref{WRS_N}. Suppose that \cref{cte,randomization} hold, and that none of the ranks in \eqref{null-upranks} dominates all the others, that is,
\begin{equation}\label{assumpq}
\lim_{n\to\infty}\frac{\max_{1\le j\le N} \big(q_{j}^{(\tau_0)} - \overline{q}_N^{(\tau_0)}\big)^2}{\sum_{j=1}^N \big(q_{j}^{(\tau_0)} - \overline{q}_N^{(\tau_0)}\big)^2} = 0,
\end{equation}
where $
\overline{q}_N^{(\tau_0)} := N^{-1}\sum_{j=1}^N q_{j}^{(\tau_0)}$.
Then, under $\cte=\tau_0$,
\begin{equation}\label{asymp.null}
N^{-3/2}\left(t_N - m\overline{q}_N^{(\tau_0)}\right) \dto \mathcal{N}\left(0,\frac{\lambda(1-\lambda)}{12}\right).
\end{equation}
\end{proposition}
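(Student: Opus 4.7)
The plan is to exploit the fact that, under $H_0:\tau=\tau_0$, the adjusted responses $\vec{Y}_N-\tau_0\vec{Z}_N$ coincide with the deterministic vector $\vec{b}_N$. Consequently, by \eqref{null-upranks}, the rank vector $\vec{q}_N^{(\tau_0)}$ is a \emph{fixed} (non-random) function of the fixed potential outcomes $\{b_{N,j}\}_{j=1}^N$, and the WRS statistic
$t_N=\sum_{j:Z_{N,j}=1} q_{N,j}^{(\tau_0)}$
is simply the sum of a simple random sample of size $m$ drawn without replacement from the finite population $\{q_{N,1}^{(\tau_0)},\ldots,q_{N,N}^{(\tau_0)}\}$. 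This reduces the problem to a finite population CLT for a linear statistic under SRSWOR, a classical result (e.g., Hájek's theorem).

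First, I would compute the exact mean and variance of $t_N$ using the SRSWOR moment formulas. The expectation is $\EE[t_N]=m\,\overline{q}_N^{(\tau_0)}$. For the variance, a standard calculation (using $\mathrm{Var}(Z_{N,j})=\tfrac{m}{N}(1-\tfrac{m}{N})$ and $\cov(Z_{N,j},Z_{N,k})=-\tfrac{m(N-m)}{N^2(N-1)}$ for $j\neq k$) gives
\[
\mathrm{Var}(t_N)=\frac{m(N-m)}{N(N-1)}\sum_{j=1}^N\bigl(q_{N,j}^{(\tau_0)}-\overline{q}_N^{(\tau_0)}\bigr)^2.
\]
Substituting the assumption \eqref{assumpq}, the bracketed sum equals $\tfrac{N(N^2-1)}{12}+o(N^3)$, so
\[
N^{-3}\,\mathrm{Var}(t_N)=\frac{m(N-m)(N+1)}{12\,N^3}+o(1)\longrightarrow\frac{\lambda(1-\lambda)}{12},
\]
matching the target limiting variance.

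Next, for the CLT, I would appeal to Hájek's central limit theorem for SRSWOR. Its Lindeberg-type condition requires that
\[
\frac{\max_{1\le j\le N}\bigl(q_{N,j}^{(\tau_0)}-\overline{q}_N^{(\tau_0)}\bigr)^2}{\sum_{j=1}^N\bigl(q_{N,j}^{(\tau_0)}-\overline{q}_N^{(\tau_0)}\bigr)^2}\longrightarrow 0.
\]
This is essentially automatic: since $q_{N,j}^{(\tau_0)}\in[1,N]$, the numerator is $O(N^2)$, while by \eqref{assumpq} the denominator is of order $N^3$, so the ratio is $O(1/N)\to 0$. Combined with $m/N\to\lambda\in(0,1)$, Hájek's theorem yields
$(t_N-\EE[t_N])/\sqrt{\mathrm{Var}(t_N)}\dto\mathcal{N}(0,1)$, and Slutsky's lemma with the variance computation above gives \eqref{asymp.null}.

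I do not anticipate a serious obstacle: the whole argument is a routine application of the SRSWOR CLT once one makes the key observation that the ranks are frozen under the null. The only subtlety worth flagging is the presence of possible ties in the $b_{N,j}$'s (which is precisely why \eqref{assumpq} is imposed rather than taken for granted as the untied value $(N^2-1)/12$), but this plays no role in verifying the Hájek condition since the crude bound $q_{N,j}^{(\tau_0)}\le N$ suffices.
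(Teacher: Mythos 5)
Your proof is correct and follows essentially the same route as the paper's: the paper also freezes the ranks under the null, computes $\mathrm{Var}(t_N)=\frac{m(N-m)}{N(N-1)}\sum_j(q_{N,j}^{(\tau_0)}-\overline{q}_N^{(\tau_0)})^2$, and applies Hoeffding's combinatorial CLT, whose regularity condition for the two-valued weights $c_{N,i}=\ind{i\le m}$ reduces (when $m/N\to\lambda\in(0,1)$) to exactly the H\'ajek-type ratio condition you verify via the crude bound $q_{N,j}^{(\tau_0)}\le N$. The two arguments are the same up to the choice of name for the finite-population CLT being invoked.
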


The limiting distribution in~\eqref{asymp.null} is identical to the asymptotic null distribution of the WRS statistic under the infinite population framework \cite[][Section 1.3]{Lehmann2006}.~\cref{propo-null1} thus justifies Rosenbaum's proposal (see~\citet[Section 4.6]{Rosenbaum2002}) of deriving a confidence interval for $\cte$ by numerically solving for the hypothesized treatment effects that equate the \emph{standardized} WRS statistic to the $\alpha/2$ and $(1-\alpha/2)$ quantiles of the standard normal distribution.  
\ag{However, Rosenbaum's proposal of constructing confidence sets does not provide much insight about the efficiency of this rank-based approach over the mean-based approach. In the subsequent sections, we develop a design-based asymptotic theory for a point estimator $\rsnu$ of $\cte$ based on the WRS statistic and explicitly characterize its robustness and efficiency properties.}

\begin{remark}[Choice of the tie-breaking method]\label{tie-breaking}
Instead of using up-ranks, one can also break ties in the control potential outcomes by (i) comparing indices, (ii) randomization, or (iii) using average ranks. Since the ranks for (i) or (ii) form a permutation of $\{1,2,\dots,N\}$, they satisfy \eqref{assumpq}, and thus~\eqref{asymp.null} holds without any assumption. For (iii), the asymptotic null distribution in~\eqref{asymp.null} holds as long as these ties do not occur in large blocks; see 
%the Supplementary Material \cite{ghosh2021supp} 
\cref{app:avgranks} 
for a formal result.
\end{remark}

\subsection{Rosenbaum's estimator based on the WRS statistic}\label{sec:WRS}

To obtain a point estimator for $\cte$,~\citet{Rosenbaum02} suggested inverting the testing problem \eqref{eq:Hyp-Test} by equating the test statistic to its expectation under the randomization distribution and solving for the hypothesized treatment effect. Thus, the estimator $\widehat\tau$ advocated by~\citet{Rosenbaum02} is a solution of the equation
\begin{equation}\label{rosenbaum_eqn}
    t(\vec{Z}, \vec{Y}-\widehat{\tau}\vec{Z}) = \E_{\tau_0} t(\vec{Z}, \vec{Y}-\tau_0\vec{Z}),
\end{equation}
where $\E_{\tau_0}$ denotes the expectation under $\tau=\tau_0$.
Note that $\E_{\tau_0} t(\vec{Z}, \vec{Y}-\tau_0\vec{Z})=\E_{\tau_0} t(\vec{Z}, \vec{b})$ does not depend on $\tau_0$ and can be explicitly evaluated for typical choices of $t(\cdot,\cdot)$, as illustrated below. %%We provide below a concrete example to illustrate this estimation strategy.
\begin{example}%[Difference-in-means statistic]
Consider the difference-in-means test statistic $t_{\mathrm{dm}}(\cdot,\cdot)$ defined in \eqref{DM-stat}. 
%Since the random treatment assignment is based on a SRSWOR sample of size $m$ from the $N$ subjects, we have $\E Z_i = m/N$ for each $1\le i\le N$. 
Since $Z_i Y_i = Z_i a_i$, $(1-Z_i)Y_i = (1- Z_i)b_i$, and $\E Z_i = m/N$, we deduce that $\E_{\tau_0} t_{\mathrm{dm}}(\vec{Z}, \vec{Y}-\tau_0 \vec{Z})=0$. Consequently, the solution to \eqref{rosenbaum_eqn} for $\widehat\tau$ based on the difference-in-means statistic is given by
\begin{equation}\label{dm}
\linu := \frac{1}{m}\sum_{i=1}^N Z_i Y_i - \frac{1}{N-m}\sum_{i=1}^N (1-Z_i) Y_i,
\end{equation}
which is the {difference-in-means} estimator.
%Thus, Rosenbaum's suggestion for $\widehat{\tau}$ coincides with the \textit{difference-in-means} estimator, for the above choice of the test statistic $t_{\mathrm{dm}}(\cdot,\cdot)$.
\end{example}

Next we define Rosenbaum's estimator based on the  WRS statistic $t(\cdot,\cdot)$ defined in~\eqref{WRS_N}. Assume for the moment that ties are not present in the ranks, so that the right hand side of \eqref{rosenbaum_eqn} becomes $$\mu:=\E_{\tau_0} t(\vec{Z}, \vec{Y}-\tau_0 \vec{Z}) = \E_{\tau_0}  \vec{Z}^\top \vec{q}^{(\tau_0)} =  \frac{m}{N}\sum_{j=1}^N q^{(\tau_0)}_j = \frac{m(N+1)}{2}.$$
Considering the possibility that \eqref{rosenbaum_eqn} may not have an exact solution (e.g., if $m(N+1)$ is odd), \citet[Section 2.7.2]{Rosenbaum2002} suggested  modifying the definition of $\widehat{\tau}$ based on the WRS statistic, in the same style as in \citet{HL63}, as follows. 
Define
\begin{equation}\label{eq:Tau-Unadj}
 \rsnu :=\frac{1}{2}\left(\sup\left\{\tau' : t(\vec{Z}, \vec{Y} - \tau'\vec{Z}) >  \mu\right \}+\inf\left\{\tau' : t(\vec{Z}, \vec{Y} - \tau' \vec{Z}) < \mu\right\}\right).%\frac{\widehat{\tau}^{*}+\widehat{\tau}^{**}}{2}.
\end{equation}
We show in the following section that $\rsnu$ is robust to outliers (unlike the difference-in-means estimator $\linu$), and study its asymptotic efficiency relative to $\linu$ in~\cref{subsec: eff}.

\subsection{Robustness of Rosenbaum's estimator}\label{sec:robustRosen}

In classical nonparametrics, a natural way to quantify the robustness of an estimator is via its (asymptotic) breakdown point (see, e.g., \citet{hettmansperger2010}). 
%The literature on the breakdown point in the context of two sample problems is sparse, and even the existing notions do not directly apply in our setup since here the indices corresponding to the two samples are not deterministic but are rather chosen by randomization. So we propose a the following definition of the (asymptotic) breakdown point of an estimator of the constant treatment effect $\tau$. Suppose that we vary $k$ many { responses} in an arbitrary fashion, without changing the treatment assignments (hence it suffices to vary, without loss of generality, only the first $k$ responses). We define the breakdown point of an estimator to be the minimum possible value of the proportion $k/N$ for which the value of the estimator can be made arbitrarily large by changing $k$ responses arbitrarily, regardless of the treatment assignment --- may be delete this??]] 
Since in our setting the indices corresponding to the treatment and control groups are not deterministic but are rather chosen by randomization, 
we define the (asymptotic) breakdown point of an estimator of the constant treatment effect $\tau$ in the following fashion.
\begin{definition}
[Breakdown point] \label{def:abp}
Consider any estimator $\widehat{\tau} =\widehat{\tau}(Y_1,\dots,Y_N;Z_1,\dots,Z_N)$ of the constant treatment effect $\tau$.  The finite sample breakdown point of $\widehat{\tau}$ is defined as:
\begin{multline}\label{BP_defn}
\mathrm{BP}_{N}\left(\widehat{\tau}\,\right):=\frac{1}{N}\min\bigg\{1\le k\le N : {\exists}\ z_1,\dots,z_N\in\{0,1\} \text{ such that } \sum_{i=1}^N z_i = m,\ \\ {\max_{1\le i_1<i_2<\dots<i_k\le N}} \sup_{y_{i_1},\dots,y_{i_k}\in\R} |\widehat{\tau}(y_1,\dots,y_N;z_1,\dots,z_N)|=\infty \bigg\}.
\end{multline}
Further, the asymptotic breakdown point of $\widehat{\tau}$ is defined as $
    \mathrm{ABP}(\widehat{\tau}\,) := \lim_{N\to\infty} \mathrm{BP}_N(\widehat{\tau}\,)$.
\end{definition}
Intuitively, the above notion of breakdown point formalizes the following question: ``What is the minimum proportion of {responses} that, if replaced with arbitrarily  extreme values, will cause the estimator to be arbitrarily large (in absolute value), for {some} treatment assignment?" Note the emphasis on the phrase ``for {some} treatment assignment'', which is reflected in the maximum taken over $i_1,i_2,\dots,i_k$ in \eqref{BP_defn}. %Another alternative would have been to replace the minimum by a maximum in \eqref{BP_defn}, which leads to a different and more adversarial notion of breakdown point. 
%Also note that while taking the supremum in \eqref{BP_defn} over the responses, we change both the potential outcomes so that \cref{cte} still holds.

\begin{remark}[Implication of our notion of asymptotic breakdown point for confidence intervals] \label{remark:zeroABP-infinit-CI}
\ag{One practical implication of \cref{def:abp} is that any confidence interval based on an estimator with zero asymptotic breakdown point must either (i) fail to maintain uniform validity under small contamination or (ii) have infinite expected length, even under arbitrarily small contamination rates.
In contrast, any estimator with strictly positive ABP can be used to construct confidence intervals with bounded expected length that maintain uniform validity under contamination, provided the contamination level is below the ABP. We refer the reader to 
\cref{sec:A.1}
%the Supplementary Material \cite{ghosh2021supp} 
for definitions and formal results in this direction.
} % {def:uniform-validity} %{thm:CI-contamination} %{thm:rsnu-CI-contamination}
\end{remark}

\ag{
We derive below the asymptotic breakdown point of Rosenbaum's estimator $\rsnu$ and show that it is uniformly higher than that of the weighted average quantile (WAQ) estimators studied by \citet{athey2021}. For any finite signed measure $\nu$ on $[0,1]$ with $\nu([0,1])=1$, define the corresponding WAQ estimator as
\begin{equation}\label{def:waq}
\begin{split}
        \widehat\tau_{\mathrm{waq}}(\nu)= \sum_{i=1}^m &  \nu\left(\left[\frac{i-1}{m},\frac{i}{m}\right]\right)a_{(i)} - \sum_{i=1}^{N-m}\nu\left(\left[\frac{i-1}{N-m},\frac{i}{N-m}\right]\right)b_{(i)},
        \end{split}
\end{equation}
where $a_{(i)}$ (resp.~$b_{(i)}$) are the order statistics in the treatment (resp.~control) group. Note that $\widehat{\tau}_{\mathrm{waq}}(\nu)$ coincides with the difference-in-means estimator when $\nu$ is the uniform measure, and the difference-in-medians estimator when $\nu$ puts unit mass at $1/2$. Other notable estimators in the above class are the $\alpha$-trimmed difference-in-means, $\alpha$-Winsorized difference-in-means, and the estimators proposed by \citet{athey2021}.
 \begin{theorem}[Asymptotic breakdown point of $\rsnu$]
\label{ABP}
Under \cref{randomization}, the asymptotic breakdown point of Rosenbaum's estimator $\rsnu$ uniformly dominates the asymptotic breakdown point of weighted average quantile estimators of the form \eqref{def:waq}. That is,  
\begin{equation}\label{eq:waq-abp-tauR}
    \sup_{\nu} \mathrm{ABP}(\widehat{\tau}_{\mathrm{waq}}(\nu))\le \frac{1}{2}\min\{\lambda,1-\lambda\} = \mathrm{ABP}(\rsnu),
\end{equation}
where $\lambda\in(0,1)$ is the limiting proportion of treated units as in \cref{randomization}, and the supremum is taken over all finite signed measures $\nu$ on $[0,1]$ with $\nu([0,1])=1$.
\end{theorem}
The inequality in \eqref{eq:waq-abp-tauR} is tight: The difference-in-medians estimator, defined as 
\begin{equation}\label{def:taumed}
    \taumed:=\median\{Y_i:Z_i=1\}-\median\{Y_i:Z_i=0\},
\end{equation}
is a WAQ estimator that achieves this upper bound. The choice between the difference-in-medians estimator $\taumed$ and Rosenbaum's estimator $\rsnu$ thus requires a comparison of their asymptotic relative efficiency (cf.~\cref{table:eff,compare-tauR-and-taumed}).}

\ag{Our proof of \cref{ABP} reveals that the ABP of the novel WAQ estimator(s) proposed by \citet{athey2021} can often be zero. This is expected because their estimator(s) are optimized for efficiency. Thus, \cref{ABP} demonstrates the trade-off between robustness and efficiency.  
}

\subsection{Asymptotic distribution of Rosenbaum's estimator}\label{subsec: asy-distn}

In this section, we establish the asymptotic distribution of Rosenbaum's estimator $\rsnu$ defined in~\eqref{eq:Tau-Unadj}. For notational clarity, we index vectors and matrices by subscript $N$ (the total sample size). 
Before presenting our main results, we briefly
 discuss this method of obtaining the asymptotic distribution of $\rsnu$ from that of the test statistic %$t(\cdot,\cdot)$, for any general statistic
 $t_N := t(\vec{Z}_N,\vec{Y}_N-\tau_0 \vec{Z}_N)$. %~\cref{HLtypeCLT} in \cref{sec:sometechlem} connects the distribution of $\rsnu$ to the distribution of $t_N$ under
 We consider a sequence of local alternatives $\tau_N := \tau_0 - hN^{-1/2}$ for the testing problem \eqref{eq:Hyp-Test}, where $h\in\R$ is fixed. 
We show in \cref{lem1}
%the Supplementary Material \citep{ghosh2021supp} 
that if
\begin{equation} \label{psa}
\lim_{N\to\infty} \mathbb{P}_{\tau_N} \left(N^{-3/2} (t_N - \E_{\tau_0} t_N) \le x\right) = \Phi \left(\frac{x+hB}{A}\right), \quad \text{ for every } \; x\in\R,
\end{equation}
where $\mathbb{P}_{\tau_N}$ denotes the law under  $\tau=\tau_N$, $\Phi(\cdot)$ is the standard normal distribution function, and $A,B$ are positive constants free of $\tau_0$ and $h$, then  $$
\sqrt{N} (\rsnu - \cte) \dto \mathcal{N}(0, A^2/B^2).$$
In view of the above, it suffices to establish \eqref{psa} for the WRS statistic $t_N:=t(\vec{Z}_N, \vec{Y}_N - \tau_0\vec{Z}_N)$, with $t(\cdot,\cdot)$ as defined in \eqref{WRS_N}. \ Although  \citet[Corollary 1]{Li2017} give the asymptotic null distribution of $t_N$, their proof technique does not work for finding the asymptotic distribution of $t_N$ under local alternatives. %This is because their proof hinges on the fact that ranks under the null are deterministic and hence do not depend on the random treatment assignment; this is not the case under local  alternatives. 
The classical results from Le Cam's local asymptotic normality theory (cf.~\citet[Chapter 7]{vandervaart}) also do not apply in our fixed design setting.
 We present the asymptotic distribution of $t_N$ under a sequence of local alternatives in~\cref{propo-asy-tauN} below. To concisely state an assumption, we introduce the following notation.
Define, for any $h, x\in\R$,
\begin{equation}\label{Ian}
I_{h,N} (x) :=\begin{cases}  \ind{0\le x <hN^{-1/2}} & \text{if } h \geq 0, \\
- \ind{hN^{-1/2} \le x <0} & \text{if } h< 0,\end{cases} 
\end{equation}
where $\ind{\cdot}$ denotes the indicator of a set. %We now make the following assumption on the control potential outcomes.%, which is a stronger version of condition \eqref{assump0} above. %stated in~\cref{lem001}.

\begin{assumption}\label{ACjs}
Let $\{b_{N,\,j}:1\le j\le N\}$ be the  control potential outcomes. We assume that, for $I_{h,N}$ as in \eqref{Ian}, the following holds: \begin{equation}\label{eq:ACjs}\lim_{N\to\infty} N^{-3/2} \sum_{j=1}^N\sum_{i=1}^N I_{h,N} (b_{N,\,j} - b_{N,\,i})= h\mathcal{I}_b, \end{equation} for some fixed constant $\mathcal{I}_b\in (0,\infty)$ and for every $h\in\R$.
\end{assumption}

\begin{remark}[On~\cref{ACjs}]\label{remarkA1} 
Intuitively,~\cref{ACjs} says that the proportion of the pairwise differences $(b_{N,\,j} - b_{N,\,i})$ falling into small intervals (shrinking at the rate of $N^{-1/2}$) scales with the lengths of those intervals. Put differently, it imposes that the control potential outcomes behave, in the limit, as if they were drawn from an absolutely continuous distribution; see also \cref{empirical-assump-1}. Indeed, if $b_{N,\,j}$'s are realizations from a distribution with square-integrable density $f(\cdot)$, then~\cref{ACjs} holds almost surely with $\mathcal{I}_b = \int_{\R} f^2(x)dx$ (see 
%the Supplementary Material \cite{ghosh2021supp} 
\cref{iid.I_C} in \cref{special-cases} 
for a  formal statement). 
Unlike moment assumptions, this condition is satisfied by many heavy-tailed distributions (e.g., Cauchy, Student's $t_\nu$, Pareto and so on.). 
\end{remark}

The following theorem is pivotal; it provides the local asymptotic normality for $t_N$, %(see~\cref{proof:propo-asy-tauN} for a proof), 
which will aid us to study the limiting behavior of $\rsnu$. %{ Should we here highlight the novelty of following Theorem more? --- we have explained it now before stating the assumption} 

\begin{theorem}[Local asymptotic normality of $t_N$]\label{propo-asy-tauN} Let $t_N:=t(\vec{Z}_N, \vec{Y}_N - \tau_0\vec{Z}_N)$ be the WRS test statistic, with $t(\cdot,\cdot)$ as in \eqref{WRS_N}. Suppose that \cref{cte,randomization,ACjs} hold. Fix $h\in\R$ and let $\tau_N = \tau_0 - hN^{-1/2}$. Then under the sequence of local alternatives $\tau=\tau_N$, 
\begin{equation*}
N^{-3/2} \left(t_N - \frac{m(N+1)}{2}\right)  \dto \mathcal{N}\left(-h\lambda(1-\lambda) \mathcal{I}_b,\frac{\lambda(1-\lambda)}{12}\right),
\end{equation*}
where $\mathcal{I}_b$ is defined  in~\cref{ACjs}. 
\end{theorem}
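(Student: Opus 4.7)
The plan is to split $t_N$ into two asymptotically decoupled pieces: a linear rank statistic whose randomization CLT is supplied by~\cref{propo-null1}, and a correction term that concentrates on a deterministic drift under~\cref{ACjs}. Under $\tau = \tau_N$, the adjusted responses equal $Y_j - \tau_0 Z_j = b_j - h N^{-1/2} Z_j$, so the ranks $q_j^{(\tau_0)}$ differ from the fixed null up-ranks $r_j := \sum_i \mathbf{1}\{b_i \le b_j\}$ only through pairs with $Z_i \neq Z_j$. Tracking these pair-by-pair and absorbing $Z_j^2 = Z_j$ yields the identity
$$t_N \;=\; W_N + C_N, \qquad W_N := \sum_{j=1}^N Z_j\, r_j, \qquad C_N := -\sum_{i,j=1}^N Z_j(1 - Z_i)\, I_{h,N}(b_j - b_i),$$
up to a negligible error at the indicator boundary that~\cref{ACjs} controls by forcing $\#\{(i,j) : b_i = b_j,\ i \neq j\}$ to be of order $o(N^{3/2})$ (obtained by letting $h \downarrow 0$ in~\eqref{eq:ACjs}).

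For $W_N$, the scores $r_j$ are deterministic and, by \cref{lem001}, satisfy~\eqref{assumpq} under \cref{ACjs}; moreover the distribution of $W_N$ depends on the data only through $\vec Z_N$ and thus is the same under every $\tau$. Hence~\cref{propo-null1} applied to these scores gives $N^{-3/2}(W_N - m(N+1)/2) \dto \mathcal{N}(0, \lambda(1-\lambda)/12)$ also under $\tau = \tau_N$. For $C_N$, I aim to show $N^{-3/2} C_N \to -h\lambda(1-\lambda)\mathcal{I}_b$ in probability. The mean is straightforward: $\mathbb{E}[Z_j(1-Z_i)] = \frac{m(N-m)}{N(N-1)}$ for $i \neq j$ and vanishes for $i = j$, so \cref{ACjs} immediately yields $N^{-3/2}\mathbb{E}[C_N] \to -\lambda(1-\lambda) h \mathcal{I}_b$.

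The variance step is where the real work lies. Writing $Z_j(1-Z_i) = Z_j - Z_i Z_j$ decomposes $C_N = L_N + Q_N$ with $L_N := -\sum_j Z_j \alpha_j$, where $\alpha_j := \sum_i I_{h,N}(b_j - b_i)$, and $Q_N := \sum_{i,j} Z_i Z_j I_{h,N}(b_j - b_i)$. The H\'ajek SRSWOR variance identity gives $\mathrm{Var}(L_N) = \frac{m(N-m)}{N(N-1)} \sum_j (\alpha_j - \bar\alpha)^2$, and for $\mathrm{Var}(Q_N)$ I would enumerate the overlap sizes $|\{i,j\} \cap \{k,l\}| \in \{0,1,2\}$ and invoke the $O(1/N)$ decay of $\mathrm{Cov}(Z_i Z_j, Z_k Z_l)$ for non-identical pairs under SRSWOR, grouping the resulting deterministic double-sums by how many of $i,j,k,l$ coincide. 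Chebyshev then delivers convergence in probability for $N^{-3/2}C_N$, and Slutsky glues it to the Gaussian limit of $N^{-3/2}(W_N - m(N+1)/2)$ to close the proof.

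The principal obstacle is the $o(N^3)$ bound on $\mathrm{Var}(C_N)$, because \cref{ACjs} is only a first-moment statement whereas the variance forces second-moment quantities such as $\sum_{i,j,k} I_{h,N}(b_j - b_i) I_{h,N}(b_j - b_k) = \sum_j \alpha_j^2$ into play. The natural estimate $\sum_j \alpha_j^2 \le (\max_j \alpha_j) \sum_j \alpha_j$ demands $\max_j \alpha_j = o(N)$, which I plan to extract from the uniformity of \cref{ACjs} in $h$: the row-sum at window $hN^{-1/2}$ is sandwiched between the row-sums at windows $2hN^{-1/2}$ viewed from neighboring indices, and a covering/symmetrisation argument against the total count $\sum_{i,j} I_{h,N}(b_j-b_i) = O(N^{3/2})$ forces each row sum to be $o(N)$. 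This delicate row-sum control is exactly what prevents the linear-statistic tools of \citet{Li2017} from applying directly, and is the technical novelty behind this theorem.
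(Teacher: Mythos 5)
Your route is the paper's route: the exact distributional identity $t_N \stackrel{d}{=} W_N + C_N$ is precisely \cref{propo-decomp}, the CLT for $W_N$ is obtained exactly as you say from \cref{propo-null1} together with \cref{lem001}, and the concentration of $N^{-3/2}C_N$ at $-h\lambda(1-\lambda)\mathcal{I}_b$ is \cref{SN}, proved by the same mean computation and the same enumeration of index overlaps in the covariance. Two remarks. First, the decomposition is an exact identity (for $h\ge 0$ one has $\ind{hN^{-1/2}\le b_j-b_i}=\ind{0\le b_j-b_i}-I_{h,N}(b_j-b_i)$ pointwise, and similarly for $h<0$), so no boundary correction or tie count is needed at that stage. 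Second, and more importantly, the step you single out as the principal obstacle is a phantom: to get $\mathrm{Var}(C_N)=o(N^3)$ you only need $\sum_j\alpha_j^2=o(N^3)$, and the trivial bound $\max_j\alpha_j\le N$ combined with $\sum_j\alpha_j=O(N^{3/2})$ (which is \cref{ACjs} at the single window $h$) already gives $\sum_j\alpha_j^2\le N\sum_j\alpha_j=O(N^{5/2})=o(N^3)$. The same crude bound handles every overlap pattern with three distinct indices, which is exactly what the paper does in case (c) of the proof of \cref{SN}. Your proposed covering/symmetrisation argument for $\max_j\alpha_j=o(N)$ would in fact succeed (apply \eqref{eq:ACjs} at window $2h$ to the cluster of points within $hN^{-1/2}$ of a fixed $b_j$), but it is entirely unnecessary, and presenting it as the technical novelty of the theorem misidentifies where the difficulty lies: the genuine obstruction to using the linear-statistic results of \citet{Li2017} is that the weights of $t_N$ under the local alternative are themselves functions of $\vec{Z}_N$, and it is the decomposition into a fixed-score combinatorial statistic plus a concentrating remainder — which you do have — that resolves it.
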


Equipped with the above result, we are now prepared to present the asymptotic distribution of the estimator $\rsnu$. This is the content of our next theorem. %(see~\cref{proof:tau_HL.CLT} for a proof).

\begin{theorem}[CLT for the estimator $\rsnu$]\label{tau_HL.CLT} 
Let $\rsnu$ be as in \eqref{eq:Tau-Unadj}, with $t(\cdot,\cdot)$ as in \eqref{WRS_N}. Then, under \cref{cte,randomization,ACjs}, it holds that %If $\tau_0$ is the true value of $\tau$, then
\begin{equation*} \sqrt{N}\left(\rsnu - \cte \right) \dto \mathcal{N}\left(0, (12\,\lambda(1-\lambda)\mathcal{I}_b^2)^{-1}\right),
\end{equation*}
where $\mathcal{I}_b$ is defined in~\cref{ACjs}. 
\end{theorem}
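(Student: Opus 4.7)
The plan is to deduce the theorem from the local asymptotic normality of $t_N$ in \cref{propo-asy-tauN} by invoking the Hodges-Lehmann-type reduction \cref{HLtypeCLT}, whose conclusion is the implication \eqref{psa}$\Rightarrow$\eqref{CLT1}. Before applying that reduction I first verify the structural prerequisites. Monotonicity of $t(\vec{Z}, \vec{Y}-\tau\vec{Z})$ in $\tau$ is immediate: as $\tau$ grows, every adjusted value $Y_j - \tau Z_j$ with $Z_j = 1$ strictly decreases while those with $Z_j = 0$ are unchanged, so the up-rank $q_j^{(\tau)}$ of each treated unit can only decrease, forcing $t_N(\tau) = \sum_{j:Z_j = 1} q_j^{(\tau)}$ to be non-increasing in $\tau$. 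This guarantees that $\widehat{\tau}^*$ and $\widehat{\tau}^{**}$ in \eqref{eq:Tau-Unadj} are well defined and validates the usual inversion duality between events of the form $\{\rsnu > \tau_0 - h/\sqrt{N}\}$ and $\{t_N(\tau_0 - h/\sqrt{N}) > \mu\}$ (up to boundary effects that are absorbed in \cref{HLtypeCLT}).

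Next I check that the centering $m(N+1)/2$ used in \cref{propo-asy-tauN} can be replaced by $\mu = \EE_{\tau_0} t_N$ required in \eqref{psa}, without altering the limit at scale $N^{3/2}$. A direct calculation gives $\EE_{\tau_0} t_N = (m/N)\sum_{j=1}^N q_j^{(\tau_0)}$, and from \eqref{null-upranks} one obtains $\sum_{j=1}^N q_j^{(\tau_0)} = N(N+1)/2 + T_N/2$, where $T_N$ is the number of ordered pairs $(i,j)$ with $i\ne j$ and $b_{N,i} = b_{N,j}$. Because the indicator $I_{h,N}(b_{N,j} - b_{N,i})$ in \eqref{Ian} captures every such tied pair for any $h > 0$, applying \eqref{eq:ACjs} and letting $h \downarrow 0$ forces $\limsup_{N\to\infty} N^{-3/2} T_N \le h \mathcal{I}_b$ for every $h>0$, hence $T_N = o(N^{3/2})$. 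Therefore $\mu - m(N+1)/2 = o(N^{3/2})$, and the two centerings are interchangeable at the relevant scale.

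With these checks in hand, \cref{propo-asy-tauN} furnishes \eqref{psa} with constants $A = \sqrt{\lambda(1-\lambda)/12}$ and $B = \lambda(1-\lambda)\mathcal{I}_b$, both strictly positive under \cref{ACjs}. Substituting into \eqref{CLT1} gives $\sqrt{N}(\rsnu - \tau_0) \dto \mathcal{N}(0, A^2/B^2)$, and the variance simplifies as
\[
\frac{A^2}{B^2} = \frac{\lambda(1-\lambda)/12}{\lambda^2(1-\lambda)^2 \mathcal{I}_b^2} = \frac{1}{12\lambda(1-\lambda)\mathcal{I}_b^2},
\]
which matches the stated limit. Because \cref{propo-asy-tauN} already shoulders the main analytic difficulty, the only real obstacle in the present argument is the bookkeeping around ties and the centering discrepancy, both neutralized by the $h \downarrow 0$ application of \cref{ACjs}; everything else is a clean invocation of the Hodges-Lehmann machinery.
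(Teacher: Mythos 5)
Your proposal is correct and follows essentially the same route as the paper: deduce the result from the local asymptotic normality of $t_N$ in \cref{propo-asy-tauN} via the Hodges--Lehmann reduction \cref{HLtypeCLT}, with $A=\sqrt{\lambda(1-\lambda)/12}$ and $B=\lambda(1-\lambda)\mathcal{I}_b$. The extra bookkeeping you supply (monotonicity of $t(\vec{Z},\vec{Y}-\tau\vec{Z})$ in $\tau$, and the fact that the centering $m(N+1)/2$ differs from $\EE_{\tau_0}t_N$ by $o(N^{3/2})$ via the $h\downarrow 0$ use of \cref{ACjs}) is exactly what the paper delegates to its surrounding remarks and to \cref{lem001}, so it is a welcome but not divergent addition.
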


\ag{~\cref{tau_HL.CLT} shows that the asymptotic variance of $\rsnu$ is characterized by the quantity $\mathcal{I}_b$ in \cref{ACjs}. In \cref{sec:est-avar}
% the Supplementary Material \citep{ghosh2021supp} 
we propose a method to consistently estimate $\mathcal{I}_b$ that readily yields Wald-type confidence intervals based on $\rsnu$. More importantly, \cref{tau_HL.CLT} allows us to compare the asymptotic efficiency of $\rsnu$ relative to other estimators of the constant treatment effect, which we discuss in the next section.
}

\subsection{Efficiency of Rosenbaum's estimator}\label{subsec: eff}

\ag{In this section, our aim is to assess the asymptotic efficiency of Rosenbaum's estimator $\rsnu$ relative to %other estimators of the constant treatment effect $\tau$, namely, 
the difference-in-means estimator $\linu$ defined in~\eqref{dm} and the difference-in-medians estimator $\taumed$ defined in~\eqref{def:taumed}.} 
In classical statistics, a framework for comparing two consistent, asymptotically normal estimators is through the ratio of their asymptotic variances. 
A precise definition is given below.

\begin{definition}[Asymptotic relative efficiency]\label{eff_defn}
Let $\widehat{\tau}_{N,\,1}$ and $\widehat{\tau}_{N,\,2}$ be two asymptotically normal estimators of $\tau$, in the sense that there exist positive sequences $\sigma_{N,\,1}^2$ and $\sigma_{N,\,2}^2$ such that $\frac{\widehat{\tau}_{N,\,1} - \tau}{\sigma_{N,\,1}}\dto \mathcal{N}(0,1)$ and $\frac{\widehat{\tau}_{N,\,2} - \tau}{\sigma_{N,\,2}}\dto \mathcal{N}(0,1).$
Then the asymptotic relative efficiency of $\widehat{\tau}_{N,\,1}$ with respect to $\widehat{\tau}_{N,\,2}$ is defined as $$\mathrm{eff}(\widehat{\tau}_{N,\,1}, \widehat{\tau}_{N,\,2}) := \lim_{N\to\infty}{\sigma_{N,\,2}^2}/{\sigma_{N,\,1}^2}.$$
\end{definition}

\ag{We now derive the asymptotic efficiency of $\rsnu$ relative to the difference-in-means estimator $\linu$ under the following regularity assumption, which posits that the empirical distribution of the control potential outcomes has a weak limit with a square-integrable density (note that heavy tails are allowed) and satisfies a smoothness condition at the $N^{-1/2}$ scale.} %(see \cref{proof:efflb} for a proof). 

\begin{assumption}
    \label{empirical-assump-1}
    Assume that the distribution $F_N$ putting equal mass on the control potential outcomes $\{b_{N,\,j}\}_{\{1\le j\le N\}}$ converges weakly to a distribution $F$ with density $f$ (w.r.t.~the Lebesgue measure on $\R$) such that (a) $N^{-1} \sum_{j=1}^N f(b_{N,\,j})\to \int_{\R} f^2(x)dx<\infty$, and (b) for every fixed $h$,
$\sup_{x\in\R} \left|\sqrt{N}\, \left(F_N\left(x+\frac{h}{\sqrt{N}}\right) - F_N(x)\right) - hf(x)\right| \to 0$ as $N\to\infty$.
\end{assumption}

\begin{theorem}[Efficiency lower bound]\label{efflb}
Under \cref{cte,randomization,empirical-assump-1}, the asymptotic distribution of $\rsnu$ is given by %\cref{ACjs} is satisfied with $\mathcal{I}_b = \int_{\R} f^2(x)\,dx$, and consequently,
\begin{equation}\label{eq:fsquared}
\sqrt{N}\left(\rsnu - \cte \right)  \dto \mathcal{N}\left(0, \left(12\,\lambda(1-\lambda)\right)^{-1}\left(\int_{\R} f^2 (x) dx\right)^{-2}\right).
\end{equation}
Moreover, if (i) $N^{-1}\sum_{i=1}^N (b_{N,\,j} - \overline{b}_N)^2 \to \sigma^2$, and (ii) $N^{-1}\max_{1\le i\le N} (b_{N,\,j} - \overline{b}_N)^2 \to 0$, the asymptotic efficiency of $\rsnu$ relative to $\linu$ is given by
\begin{equation}\label{eq:are}
\mathrm{eff}(\rsnu,\linu)=12\sigma^2\left(\int_{\R} f^2 (x) dx\right)^2.
\end{equation}
\ag{Furthermore, we have the worst-case lower bound:
$ \mathrm{eff}(\rsnu,\linu)\geq 0.864.$ }
\end{theorem}

\ag{Several remarks are now in order.}

\begin{remark}[On the assumptions of~\cref{efflb}]\label{rem:iid_eff} 
\ag{
When $F_N$ converges weakly to $F$, condition (a) in \cref{empirical-assump-1} holds under mild conditions on $f$, e.g., if $f$ is continuous and $\{f(B_N)\}_{N\ge 1}$ are uniformly integrable where $B_N\sim F_N $.  We show in \cref{iid.I_C2} in \cref{special-cases}
%the Supplementary Material \citep{ghosh2021supp} 
that \cref{empirical-assump-1} holds when $\{b_{N,\,j}\}_{\{1\le j\le N\}}$ are realizations from a distribution with density $f$. We also show in the proof of \cref{efflb} that \cref{empirical-assump-1} implies \cref{ACjs}. Lastly, conditions (i) and (ii) in \cref{efflb} are identical to the assumptions in~\citet[Theorem 5]{Li2017}, and suffice for deriving the asymptotic distribution of $\linu$.
}
\end{remark}

\begin{remark}[The $0.864$ lower bound] 
The efficiency lower bound  in~\cref{efflb} coincides with a celebrated efficiency lower bound due to \citet{HL56} in the context of two sample testing under location shift alternatives. They showed that the Pitman efficiency (see, e.g., \citet[Section 14.3]{vandervaart}) of the WRS test relative to Student's $t$-test never falls below $0.864$.
Here, $\rsnu$ is the estimator that inverts the WRS test, and thus, in a way, our~\cref{efflb} mimics the efficiency result of \citet{HL56}.
\end{remark}

\begin{table}
\caption{\label{table:eff} Asymptotic efficiency of Rosenbaum's estimator $\rsnu$ relative to the difference-in-means estimator $\widehat{\tau}_{\mathrm{dm}}$ and the difference-in-medians estimator $\taumed$, for standard choices of the limiting distribution $F$ (as in~\cref{empirical-assump-1,efflb}).\vspace{1mm}}
\centering
\begin{tabular}{@{}c c c c @{}}
Distribution $(F)$ & Density $f(x)$ &
$\mathrm{eff}(\rsnu,\linu)$ &
$\mathrm{eff}(\rsnu,\taumed)$ \\
\hline\noalign{\vskip 0.5ex}
Normal  
& $(2\pi)^{-1/2}\exp(-x^{2}/2)$
& 0.955 & 1.50 \\
Laplace
& $2^{-1}\,e^{-|x|}$
& 1.50 & 0.75 \\
Cauchy
& ${c(1+x^{2})^{-1}} $
& --- & 0.75 \\
Student's $t_{3}$
& $ c\!\left(1+{x^{2}}/{3}\right)^{-2}$
& 1.90 & 1.17 \\
Student's $t_{5}$
& $ c\!\left(1+{x^{2}}/{5}\right)^{-3}$
& 1.24 & 1.29 \\
Logistic & ${e^{-x}}{(1+e^{-x})^{-2}}$
& 1.10 & 1.33 \\
Hyperbolic secant
& $2^{-1}\,\sech\!\big({\pi x}/{2}\big)$
& 1.22 & 1.22 \\
Exponential
& $ e^{-x}\,\mathbf{1}\{x\ge 0\} $
& 3.00 & 3.00 \\
Pareto$(\alpha)$ 
& $ \alpha x^{-(\alpha+1)}\,\mathbf{1}\{x\ge 1\} $
& $\frac{12\alpha^{5}}{(\alpha-1)^{2}(2\alpha+1)^{2}(\alpha-2)}\ge 3$
& $\frac{3\,\alpha^{2}\,2^{\,2+2/\alpha}}{(2\alpha+1)^{2}}\ge 3$  \\
 &  & ($\nearrow\infty$ as $\alpha\to2$) & ($\nearrow 3.84$ as $\alpha\to2$)
\end{tabular}
\end{table}

\begin{remark}[Comparing $\rsnu$ with difference-in-means]\label{rem:on_table:eff}
\ag{
\cref{efflb} implies that in large samples the confidence intervals obtained using Rosenbaum's estimator $\rsnu$ will be at most 7.6\% wider, in the worst case, compared to the same for the difference-in-means estimator.
\cref{table:eff} records the values of the relative efficiency $\mathrm{eff}(\rsnu,\linu)$ for some standard choices of the limiting distribution $F$ (as in~\cref{empirical-assump-1}). In particular, when the control potential outcomes behave as realizations from a normal distribution, $\rsnu$ only suffers a $5\%$ loss of efficiency relative to $\linu$ (which is most efficient in this case). 
\cref{table:eff} also shows an example where $\rsnu$ exhibits infinite gains over $\linu$: In the Pareto($\alpha$) example, the relative efficiency $\mathrm{eff}(\smash\rsnu,\smash\linu)$ approaches $\infty$ as $\alpha\to 2$. }
\end{remark}

\begin{remark}[Comparing $\rsnu$ with difference-in-medians]\label{compare-tauR-and-taumed}
    To our knowledge, the design-based asymptotic distribution of the difference-in-medians estimator $\taumed$ in randomized experiments has not been worked out explicitly in the literature. However, one can combine the asymptotics of the sample quantiles under simple random sampling with replacement from a finite population (see, e.g., \cite{AChatterjee2011,DeyChaudhuri}) together with the proof techniques of \cite{Li2017} to show that, under \cref{cte,randomization,empirical-assump-1}, the randomization distribution of $\sqrt{N}(\taumed-\tau)$ converges weakly to a zero mean Gaussian with variance $1/(4\lambda(1-\lambda)f^2(\mu))$, where $\mu$ is the median under $f$. We use this formula to compute the relative efficiency of $\rsnu$ and $\taumed$ in \cref{table:eff}. It is important to note that while $\rsnu$ outperforms $\taumed$ in most of the examples we consider in \cref{table:eff}, it also shows examples where $\taumed$ is more efficient than $\rsnu$.
    
\end{remark}

    \cref{eff_defn} above provides a principled framework for comparing estimators under a randomization inference. %From a theoretical perspective, this framework can be used to compare the worst case relative performance between two estimators, as is done above between the difference-in-means and Rosenbaum's estimators. More importantly, i
In practice, this framework allows us to use a pilot sample to choose an estimator that is expected to be more efficient in the final study. Pilot sampling is a frequently used tool to plan an empirical study \citep{Wittes1990}. In this context, we can estimate $f$ using a pilot sample, then use \cref{efflb} to choose between the difference-in-means and Rosenbaum's estimator by estimating their relative efficiency. \ag{ In fact, \cref{efflb} serves as a `template' for comparing the efficiency of Rosenbaum's estimator relative to any other estimator of the constant treatment effect with known limiting distribution under finite-population asymptotics.}

\section{Inference with regression adjustment}\label{sec:wAdj}

%\subsection{Estimation strategy}

We now study a method advocated by \citet{Rosenbaum02} for drawing randomization inference on the treatment effect with regression adjustment for pre-treatment covariates, in the setup of completely randomized experiments as in~\cref{sec:w/oAdj}. 
Assume that along with the responses $\{Y_i\}_{1\le i\le N}$ we also collect data on $p$ covariates, denoted by $\vec{x}_i$ for the $i$-th subject. Also denote by $\vec{X}$ the $N\times p$ matrix of covariates. \ag{We assume throughout that $\vec{X}$ is fixed and contains the vector of ones.} %To avoid notational clutter, we suppress the use of the index $N$ in the notation until~\cref{sec:3.1}.

To motivate our estimation strategy, consider again the testing problem \eqref{eq:Hyp-Test}. The method suggested by Rosenbaum is to apply a randomization test on the residuals, as follows. First we calculate the adjusted responses $\vec{Y} - \tau_0 \vec{Z}$ and regress it on the covariates $\vec{X}$ using the least squares criterion. Define the residuals obtained from this fit as:
\begin{equation}\label{e_0}
    \vec{e}_{\tau_0} := \widehat{\eps}(\vec{Y} - \tau_0 \vec{Z}, \vec{X}) = (\vec{I}-\vec{P}_{\vec{X}})(\vec{Y} - \tau_0 \vec{Z}),
\end{equation}
where $\vec{I}$ is the identity matrix of order $N\times N$, and $\vec{P}_{\vec{X}}$ is the projection matrix onto the column space of $\vec{X}$. 
Now we let the residuals $\vec{e}_{\tau_0}$ play the role of the adjusted responses $\vec{Y} - \tau_0\vec{Z}$ while performing the randomization test. Define the WRS test statistic based on the residuals $\vec{e}_{\tau_0}$ as $t_{\mathrm{adj}} := t(\vec{Z},\vec{e}_{\tau_0}) = \vec{q}^\top \vec{Z}$, with $t(\cdot,\cdot)$ as in \eqref{WRS_N} and the up-ranks $\vec{q}$ in \eqref{upranks} being calculated on the residuals instead of the adjusted responses. That is,
 \begin{equation}\label{WRS_defn2} 
 t_{\mathrm{adj}} :=  t (\vec{Z},\vec{e}_{\tau_0}) =  \sum_{i=1}^N Z_{i} \sum_{j=1}^N \ind{e_{\tau_0,j} \le e_{\tau_0,i}}.
 \end{equation} 
{To find a point estimator for $\tau$ based on the above test procedure, \citet{Rosenbaum02} suggested we invert the test as in \cref{sec:WRS}. We denote Rosenbaum's regression-adjusted estimator by $\rsna$, which is defined using \eqref{eq:Tau-Unadj} in the same manner as $\rsnu$, only with the difference that here $t(\vec{Z},\vec{e}_\tau)$ plays the role of $t(\vec{Z},\vec{Y}-\tau\vec{Z})$. 
We set $\mu_{\mathrm{adj}}:=\E_\tau t(\vec{Z}, \vec{e}_\tau)$
and define
\begin{equation}\label{eq:Tau-adj}
 \rsna :=\frac{\sup\{\tau' : t(\vec{Z}, \vec{e}_{\tau'}) > \mu_{\mathrm{adj}} \}+\inf\{\tau' : t(\vec{Z}, \vec{e}_{\tau'}) < \mu_{\mathrm{adj}}\}}{2}.
 \end{equation}
}
\begin{comment}
statistic with its expectation under the randomization distribution of $\vec{Z}$ and solve for the hypothesized constant treatment effect. Thus, the regression-adjusted estimator $\widehat\tau$ advocated by \citet{Rosenbaum02} is a solution to the equation 
\begin{equation} \label{adj_eqn}
t(\vec{Z}, \vec{e}_{\widehat{\tau}}) = \E_\tau t(\vec{Z}, \vec{e}_\tau).
\end{equation}
Note that the right hand side of \eqref{adj_eqn} is free of $\tau$, because $t(\vec{Z}, \vec{e}_\tau)= t(\vec{Z}, (\vec{I}-\vec{P}_{\vec{X}})(\vec{Y} - \tau \vec{Z}))=t(\vec{Z}, (\vec{I}-\vec{P}_{\vec{X}})\vec{b})$  whose distribution does not depend on $\tau$. 
Again, due to the discreteness of the ranks, \eqref{adj_eqn} might not always have a solution. So we modify the estimator using the idea of \citet{HL63}, as we did in the previous section for the unadjusted estimator $\rsnu$.
\end{comment}

Analyzing $\rsna$ presents significant new challenges compared to $\rsnu$. The chief reason being that $t(\vec{Z},\vec{e}_{\tau'})$, which determines $\rsna$, involves indicators of the form $\mathbf{1}(e_{\tau',j}\leq e_{\tau',i})$, each of which depends on the entire random treatment assignment vector $\vec{Z}$ (unlike the unadjusted case where the corresponding indicator depends only on %two of the treatment assignments, namely 
$Z_i$ and $Z_j$). We relegate the discussion on how to resolve this technical challenge to 
%the Supplementary Material \cite{ghosh2021supp}.  
\cref{sec: main tools} so as not to impede the flow of the paper. 

\subsection{Asymptotic distribution of Rosenbaum's regression-adjusted estimator}\label{sec:3.1}

Consider now a sequence of completely randomized experiments as in~\cref{subsec: asy-distn}. 
Denote the residuals obtained from the least squares fitting of $\vec{Y}_N - \tau_0 \vec{Z}_N$ on the covariates $ \vec{X}_N$ by $\vec{e}_N :=\widehat{\eps}(\vec{Y}_N - \tau_0 \vec{Z}_N,\vec{X}_N)$, with $\widehat{\eps}(\cdot,\cdot)$ as in \eqref{e_0}. 
With $t(\cdot,\cdot)$ as in \eqref{WRS_defn2}, define the WRS statistic based on the residuals from the regression-adjustment as  \begin{equation}\label{WRS_defn3} 
t_{N,\mathrm{adj}} := t(\vec{Z}_N, \vec{e}_N).\end{equation} Finally, define $\rsna$ as in \eqref{eq:Tau-adj} using the statistic $t_{N,\mathrm{adj}}$. 

As in the without regression adjustment case (see \eqref{psa}), we %invoke~\cref{HLtypeCLT} to 
reduce the problem of deriving the asymptotic distribution of the estimator $\rsna$ to the problem of finding the asymptotic distribution of the test statistic $t_{N,\mathrm{adj}}$ under the sequence of local alternatives $\tau_N = \tau_0 - hN^{-1/2}$ for a fixed $h\in\R$. %as $N\to\infty$. 
 Define the regression-adjusted control potential outcomes as 
\begin{equation}\label{btilde}
\widetilde{b}_{N,\,j} := b_{N,\,j}  - \vec{p}_{N,\,j}^\top \vec{b}_N,\quad 1\le j\le N,
\end{equation} 
where $\vec{p}_{N,\,j}$ is the $j$-th column of the projection matrix $\vec{P}_{\vec{X}_N}$ that projects onto the column space of $\vec{X}_N$. {Note that under $\tau=\tau_0$, these $\widetilde{b}_{N,\,j}$'s are identical to the residuals obtained by regressing $\vec{Y}_N -\tau_0 \vec{Z}_N$ on $\vec{X}_N$.}
The following assumption mimics~\cref{ACjs} of~\cref{subsec: asy-distn}. %that we used in the without regression adjustment case. 

\begin{assumption}\label{AssumpB2} 
Let $\{\widetilde{b}_{N,\,j}\}_{j=1}^N$ be the regression-adjusted control potential outcomes as defined in \eqref{btilde}. 
 We assume that, for $I_{h,N}$ as in \eqref{Ian}, the following holds: $$\lim_{N\to\infty} N^{-3/2}\sum_{j=1}^N \sum_{i=1}^N I_{h,N} (\widetilde{b}_{N,\,j} - \widetilde{b}_{N,\,i}) = h\mathcal{J}_b,$$ for some fixed constant $\mathcal{J}_b\in (0,\infty)$ and for every $h\in\R$.
\end{assumption}

\begin{remark}[On~\cref{AssumpB2}]\label{remarkAadj} 
\ag{The substance of the above assumption is that the residuals behave, in the limit, as realizations from an absolutely continuous distribution; see also \cref{empirical-assump-2}. Indeed, when the control potential outcomes satisfy the regression model $\vec{b}_N = \vec{X}_N\vec{\beta}_N+\vec{\eps}_N$, where $\eps_{N,\,i}$ are i.i.d.~from a distribution with mean zero and square-integrable density $g$ (independent of the design $\vec{Z}_N$), we can show that \cref{AssumpB2} holds with $\mathcal{J}_b=\int_{\mathbb{R}} g^2(x)\,dx$; see \cref{iid.J_C} in \cref{special-cases}
%the Supplementary Material \cite{ghosh2021supp} 
for details.
}
\end{remark}

As a precursor to our result on the local asymptotic normality of $t_{N,\mathrm{adj}}$, we first present its limiting null distribution.
%Our next theorem provides the limiting null distribution of the test statistic $t_{N,\mathrm{adj}}$ under~\cref{AssumpB2}. %(see \cref{proof:propo-null2} for a proof). 

\begin{theorem}[Asymptotic null distribution of $t_{N,\mathrm{adj}}$]\label{propo-null2}  Let $t_{N,\mathrm{adj}}$ be the WRS statistic based on the residuals obtained from the least squares fitting of $\vec{Y}_N-\tau_0\vec{Z}_N$ on $\vec{X}_N$, as defined in \eqref{WRS_defn3}. Suppose that~\cref{cte,randomization,AssumpB2} hold. Then, under $\tau=\tau_0$,
\begin{equation}\label{asymp.null.adj}
N^{-3/2}\left(t_{N,\mathrm{adj}} - \frac{m(N+1)}{2}\right) \dto \mathcal{N}\left(0,\frac{\lambda(1-\lambda)}{12}\right).
\end{equation}
\end{theorem}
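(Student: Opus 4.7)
The plan is to reduce this theorem essentially to the unadjusted case handled in \cref{propo-null1}. The crucial observation is that under $\tau=\tau_0$, the response vector satisfies $\vec{Y}_N - \tau_0\vec{Z}_N = \vec{b}_N$, which is non-random. Consequently, the residual vector
\[
\vec{e}_N = (\vec{I}-\vec{P}_{\vec{X}_N})(\vec{Y}_N - \tau_0\vec{Z}_N) = (\vec{I}-\vec{P}_{\vec{X}_N})\vec{b}_N
\]
is deterministic under the null, and its $j$-th component equals precisely the regression-adjusted potential control outcome $\widetilde{b}_{N,j} = b_{N,j} - \vec{p}_{N,j}^\top \vec{b}_N$ defined in \eqref{btilde}. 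Therefore the upranks $\vec{q}_N$ computed from $\vec{e}_N$ are the upranks of the fixed vector $\widetilde{\vec{b}}_N$, and the only randomness in $t_{N,\mathrm{adj}} = \vec{Z}_N^\top \vec{q}_N$ comes from the SRSWOR treatment assignment $\vec{Z}_N$.

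Given this reduction, I would simply invoke \cref{propo-null1} with the role of the potential control outcomes $\{b_{N,j}\}$ played by the regression-adjusted outcomes $\{\widetilde{b}_{N,j}\}$. To do so, I must verify condition \eqref{assumpq} for the upranks of $\{\widetilde{b}_{N,j}\}$. This is precisely the content of \cref{lem001} in \cref{sec:sometechlem}, which asserts that the condition in \cref{ACjs} (in its form $N^{-3/2}\sum_{j,i}I_{h,N}(b_{N,j}-b_{N,i})\to h\mathcal{I}_b$) implies the variance normalization \eqref{assumpq} for the associated upranks. Since \cref{AssumpB2} is exactly \cref{ACjs} stated for $\{\widetilde{b}_{N,j}\}$ with limit $\mathcal{J}_b\in(0,\infty)$ in place of $\mathcal{I}_b$, the same lemma gives
\[
\frac{1}{N}\sum_{j=1}^N \bigl(q_{N,j} - \overline{q}_N\bigr)^2 = \frac{N^2-1}{12}+o(N^2),
\]
where $q_{N,j}$ is the uprank of $\widetilde{b}_{N,j}$.

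With \eqref{assumpq} in hand for the ranks of $\{\widetilde{b}_{N,j}\}$, I apply \cref{propo-null1} verbatim (the $\overline{q}_N^{(\tau_0)}$ there equals $(N+1)/2$ in the tie-free case, and more generally still satisfies $m\overline{q}_N = m(N+1)/2$ up to $o(N^{3/2})$ corrections that are absorbed into the centering when \eqref{assumpq} holds) to deduce
\[
N^{-3/2}\Bigl(t_{N,\mathrm{adj}} - m\overline{q}_N\Bigr) \dto \mathcal{N}\!\left(0,\frac{\lambda(1-\lambda)}{12}\right),
\]
and replacing $m\overline{q}_N$ by $m(N+1)/2$ (using that the difference is $o(N^{3/2})$ under \cref{AssumpB2}) yields \eqref{asymp.null.adj}. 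The only step requiring mild care is the identification of $m\overline{q}_N$ with $m(N+1)/2$ in the possible presence of ties among the $\widetilde{b}_{N,j}$'s; but \cref{AssumpB2} forces the fraction of coincident pairs to vanish at rate $N^{-1/2}$, so the centering correction is negligible at the $N^{3/2}$ scale. Beyond that bookkeeping, no new technique is needed — the crux of the argument is the observation that under the null the regression residuals are deterministic, which bypasses entirely the nontrivial coupling between the indicators $\mathbf{1}(e_{\tau,j}\le e_{\tau,i})$ and the entire $\vec{Z}_N$ that makes the local alternatives analysis hard.
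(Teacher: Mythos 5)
Your proposal is correct and follows essentially the same route as the paper: the paper's proof likewise observes that under the null the residuals equal the deterministic vector $(\vec{I}-\vec{P}_{\vec{X}_N})\vec{b}_N$, so $t_{N,\mathrm{adj}}$ is distributed as a rank-weighted sum $\sum_j q_{N,j}Z_{N,j}$ with fixed ranks, and then reruns the arguments of \cref{lem001} and \cref{propo-null1} with \cref{AssumpB2} playing the role of \cref{ACjs} before invoking Hoeffding's combinatorial CLT. Your explicit treatment of the centering discrepancy between $m\overline{q}_N$ and $m(N+1)/2$ via the first assertion of \cref{lem001} is a small point the paper leaves implicit, but it is the same bookkeeping.
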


%%%%%%%%%%%%%%%%%%%%%%%%%%%%%%%%%%%%%%%%%%%%%%%%%%%%%

It is noteworthy that the limiting null distribution of the regression-adjusted statistic $t_{N,\mathrm{adj}}$ given in \eqref{asymp.null.adj} and that of the unadjusted statistic $t_N$ given in \eqref{asymp.null} are identical. However, this is expected, since both $\vec{b}_N$ and $\vec{e}_N$ are deterministic vectors under the null, the ranks always take values in $\{1,2,\dots,N\}$, and $\vec{Z}_N$ is the only source of randomness.

Our next result provides the local asymptotic normality of the statistic $t_{N,\mathrm{adj}}$ under the sequence of local alternatives $\tau_N = \tau_0 - hN^{-1/2}$. This %(see \cref{proof:tNadj.final.CLT} for a proof), 
immediately yields the limiting distribution of $\rsna$, which is presented in~\cref{tau_adj.final.CLT} below.  %~\cref{sec: main tools}. %This is because the regression-adjusted estimator $\rsna$ involves solving \eqref{adj_eqn}, which has a more complex implicit dependence on the parameter $\tau$ through the ranks of the residuals of the fitted least squares regression. 

\begin{theorem}[Local asymptotic normality of $t_{N,\mathrm{adj}}$]\label{tNadj.finalCLT} Assume the setting of~\cref{propo-null2}. Fix $h\in\R$ and let $\tau_N = \tau_0 - hN^{-1/2}$. Then, under $\tau=\tau_N$, 
\begin{equation*}
 N^{-3/2}  \left(t_{N,\mathrm{adj}} - \frac{m(N+1)}{2}\right)  \dto \mathcal{N}\left(-h\lambda(1-\lambda) \mathcal{J}_b,\frac{\lambda(1-\lambda)}{12}\right),
\end{equation*}
where $\mathcal{J}_b$ is defined in~\cref{AssumpB2}. 
\end{theorem}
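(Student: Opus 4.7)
The plan is to adapt the argument for Theorem~\ref{propo-asy-tauN}, but carefully handle the complication that the residuals $\vec{e}_N$ depend on the entire treatment vector $\vec{Z}_N$. Under $\tau = \tau_N$ we have $\vec{Y}_N - \tau_0 \vec{Z}_N = \vec{b}_N - hN^{-1/2}\vec{Z}_N$, so
\[
\vec{e}_N = \widetilde{\vec{b}}_N - hN^{-1/2}(\vec{I} - \vec{P}_{\vec{X}_N})\vec{Z}_N,
\]
and the indicator defining the ranks becomes
\[
\mathbf{1}(e_{N,j} \le e_{N,i}) = \mathbf{1}\bigl(\widetilde{b}_{N,j} - \widetilde{b}_{N,i} \le hN^{-1/2}\bigl[(Z_{N,j} - Z_{N,i}) - (\vec{p}_{N,j} - \vec{p}_{N,i})^\top \vec{Z}_N\bigr]\bigr).
\]
The extra $(\vec{p}_{N,j} - \vec{p}_{N,i})^\top \vec{Z}_N$ couples the ranks with the treatment assignment in a way that is absent in the unadjusted case.

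First, I would introduce the auxiliary statistic
\[
\widetilde{t}_N := \sum_{i,j} Z_{N,i}\,\mathbf{1}\bigl(\widetilde{b}_{N,j} - hN^{-1/2}Z_{N,j} \le \widetilde{b}_{N,i} - hN^{-1/2}Z_{N,i}\bigr),
\]
which is the ordinary WRS statistic applied to the regression-adjusted potential outcomes $\widetilde{b}_{N,i}$. Because these satisfy Assumption~\ref{AssumpB2}, which is the exact structural analog of Assumption~\ref{ACjs} with constant $\mathcal{J}_b$, repeating the argument of Theorem~\ref{propo-asy-tauN} verbatim (with $\widetilde{b}$ in place of $b$ and $\mathcal{J}_b$ in place of $\mathcal{I}_b$) yields
\[
N^{-3/2}\bigl(\widetilde{t}_N - \tfrac{m(N+1)}{2}\bigr) \dto \mathcal{N}\bigl(-h\lambda(1-\lambda)\mathcal{J}_b,\, \tfrac{\lambda(1-\lambda)}{12}\bigr).
\]
The null part of this limit is already guaranteed by Theorem~\ref{propo-null2}, so only the bias shift from the local alternative needs to be worked out here, which involves counting pairs $(i,j)$ with $\widetilde{b}_{N,j} - \widetilde{b}_{N,i}$ in a small interval of width $hN^{-1/2}$ and invoking Assumption~\ref{AssumpB2}.

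The crucial step is then to show $N^{-3/2}(t_{N,\mathrm{adj}} - \widetilde{t}_N) \Pto 0$. The difference is a double sum of indicator-differences, each nonzero only when $\widetilde{b}_{N,j} - \widetilde{b}_{N,i} - hN^{-1/2}(Z_{N,j}-Z_{N,i})$ lies in an interval of length $|hN^{-1/2}(\vec{p}_{N,j}-\vec{p}_{N,i})^\top \vec{Z}_N|$. Projection matrix bounds such as $\sum_i (\vec{p}_{N,i}^\top \vec{Z}_N)^2 = \vec{Z}_N^\top \vec{P}_{\vec{X}_N} \vec{Z}_N \le m$ and $\mathrm{tr}(\vec{P}_{\vec{X}_N}) = p$ control the typical size of these perturbations to be $O(N^{-1/2})$ on average. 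Combining this control with Assumption~\ref{AssumpB2}, which bounds the number of pairs whose adjusted-outcome differences fall into intervals of width $O(N^{-1/2})$ by $O(N^{3/2})$, should deliver the desired $o_P(N^{3/2})$ bound, and Slutsky's theorem then gives the theorem.

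The principal obstacle is precisely this last step: because the shift $hN^{-1/2}(\vec{p}_{N,j}-\vec{p}_{N,i})^\top \vec{Z}_N$ depends on the entire $\vec{Z}_N$, one cannot condition out the treatment randomness as in the unadjusted case. A rigorous argument will likely require computing a second moment under the randomization distribution of the total number of affected indicator pairs, exploiting exchangeability of the $Z_{N,i}$'s and the fact that the projection matrix entries $\vec{p}_{N,ij}$ are uniformly $O(N^{-1/2})$ in a sense dictated by the diagonal bound. I would expect to route this through a technical lemma that uniformizes Assumption~\ref{AssumpB2} over interval shifts of order $N^{-1/2}$, in analogy with what is done in the unadjusted proof, and then use the randomization variance to conclude the perturbation is $o_P(N^{3/2})$.
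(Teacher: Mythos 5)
Your proposal is correct and follows essentially the same route as the paper: the paper likewise replaces the $\vec{Z}_N$-dependent indicators by $\vec{Z}_N$-free ones (its $\widetilde{\xi}_{N,i,j}$ and the statistic $\widetilde{A}_{N,\textrm{adj}}^h$, playing the role of your $\widetilde{t}_N$), controls the discrepancy via a second-moment/Markov argument under the randomization distribution using the projection identity $\sum_{i,j}\|\vec{p}_{N,i}-\vec{p}_{N,j}\|^2=2N(\rank(\vec{X}_N)-1)$, and invokes exactly the uniformized version of Assumption~\ref{AssumpB2} over $O(N^{-1/2})$-wide intervals that you anticipate (its Lemma on pair counts in $[(h-\delta)N^{-1/2},(h+\delta)N^{-1/2}]$), before reducing the main term to the unadjusted local asymptotic normality argument with $\widetilde{b}$ and $\mathcal{J}_b$ in place of $b$ and $\mathcal{I}_b$. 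The only cosmetic difference is that the paper first splits the sum by treatment status (producing a separate tie term $C_N$ from treated--treated pairs) before swapping the indicators, whereas your single comparison $t_{N,\mathrm{adj}}-\widetilde{t}_N$ absorbs that bookkeeping.
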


 %Now that we have established a local asymptotic normality of the statistic $t_{N,\mathrm{adj}}$, it readily yields 
% We are now ready to see a central limit theorem for the estimator $\rsna$, which is the content of the following theorem. %(see \cref{proof:tau_adj.final.CLT} for a proof).

\begin{theorem}[CLT for the estimator $\rsna$]\label{tau_adj.final.CLT}  Under~\cref{cte,randomization,AssumpB2}, it holds that
\begin{equation*}
    \sqrt{N}\left(\rsna - \cte\right) \dto \mathcal{N}\left(0, \left(12\,\lambda(1-\lambda)\mathcal{J}_b^2\right)^{-1}\right),
\end{equation*}
where $\mathcal{J}_b$ is defined in~\cref{AssumpB2}. 
\end{theorem}

{Although \cref{tNadj.finalCLT,tau_adj.final.CLT} are parallel to their unadjusted counterparts (\cref{propo-asy-tauN,tau_HL.CLT}, respectively), their proofs are substantially different, and more involved; see \cref{sec: main tools}
%the Supplementary Material \cite{ghosh2021supp} 
for a discussion on the proof techniques. Similarly to the unadjusted case, \cref{tau_adj.final.CLT} enables us to compare the asymptotic efficiency of $\rsna$ relative to other estimators of $\tau$, which we discuss in the following section.}

%The problem of estimating the asymptotic variance of $\rsna$ %essentially reduces to the problem of estimating the limiting quantity $\mathcal{J}_b$. However, the estimation of $\mathcal{J}_b$ is more involved than that of $\mathcal{I}_b$ is more challenging than in the unadjusted case, andis deferred to the Supplementary Material \cite{ghosh2021supp} %until~\cref{plug-in est} %where we propose a consistent estimator of $\mathcal{J}_b$ that readily yields asymptotically valid confidence intervals for $\tau$ (see %~\cref{adjustedCI} for details, and\cref{sec:data.analysis} for its empirical performance).

\subsection{Efficiency gain by regression adjustment}\label{sec:eff-gain-by-adj}

%Topics:

%Result comparing variances of HL with and without regression adjustment.

%Result comparing variances of HL with regression adjustment and Lin's.

As discussed in~\cref{sec:w/oAdj}, appropriate randomization inference can be drawn ignoring the  information available on the covariates. However, it is a popular belief \citep{Rosenbaum02} that regression adjustment may increase the efficiency of the inference based on the rank-based statistic, although any formal result supporting this belief was missing in the literature. \ag{We establish one result in this direction in~\cref{Jb.geq.Ib} below,
%In view of~\cref{tau_HL.CLT,tau_adj.final.CLT}, a comparison between the asymptotic variances of the adjusted estimator $\rsna$ and the unadjusted estimator $\rsnu$ is equivalent to comparing the quantities $\mathcal{J}_b$ and $\mathcal{I}_b$ defined in \cref{ACjs,AssumpB2}. In the following theorem, %(see \cref{proof:Jb.geq.Ib} for a proof) 
%we show that $\mathrm{eff}(\rsna, \rsnu)\ge 1$ (recall the notion of asymptotic relative efficiency from~\cref{subsec: eff}), 
under the following assumption.
}
\ag{ 
\begin{assumption}\label{empirical-assump-2}
   Assume that the distribution $G_N$ putting equal mass on the covariate adjusted potential outcomes $\{\widetilde{b}_{N,\,j}\}_{\{1\le j\le N\}}$ converges weakly to a distribution $G$ with density $g$ (w.r.t.~the Lebesgue measure on $\R$) such that (a) $N^{-1} \sum_{j=1}^N g(\widetilde{b}_{N,\,j})\to \int_{\R} g^2(x)dx<\infty$, and (b) for every fixed $h$,
$\sup_{x\in\R} \left|\sqrt{N}\, \left(G_N\left(x+\frac{h}{\sqrt{N}}\right) - G_N(x)\right) - hg(x)\right| \to 0$ as $N\to\infty$. Assume further that,
$$\sup_{x,y}\left|\frac{1}{N}\sum_{i=1}^N \ind{\widetilde{b}_{N,\,i}\le x,\, b_{N,\,i}-\widetilde{b}_{N,\,i}\le y}- G_N(x)\frac{1}{N}\sum_{i=1}^N \ind{ b_{N,\,i}-\widetilde{b}_{N,\,i}\le y}\right|=o(1).$$
\end{assumption}
The above assumption mimics \cref{empirical-assump-1} and posits that the empirical distribution of the residuals has a weak limit with a square-integrable density (note that heavy tails are allowed) and satisfies a smoothness condition at the ${N}^{-1/2}$ scale. Moreover, the last display imposes an asymptotic independence-like condition of the residuals $\widetilde{b}_{N,\,i}$ and the predictions ${b}_{N,\,i}-\widetilde{b}_{N,\,i}$. This holds, in particular, when the regression model $\vec{Y}_N=\tau \vec{Z}_N +\vec{X}_N\vec{\beta}_N+\vec{\eps}_N$  holds for the data, where $\eps_{N,i}$'s are i.i.d.~from a distribution with mean zero and square-integrable density $g$ (independent of the design $\vec{Z}_N$); see \cref{iid.J_C2} in \cref{special-cases}
%the Supplementary Material \citep{ghosh2021supp} 
for details. Admittedly, this asymptotic independence-like condition is more restrictive than the conditions under which standard regression adjustment is known to be beneficial \citep{Lin13}, and bridging this gap remains an open question.
\begin{theorem}[Efficiency gain by covariate adjustment]\label{Jb.geq.Ib}
    Under \cref{cte,randomization,empirical-assump-1,empirical-assump-2}, Rosenbaum's regression-adjusted estimator is at least as efficient (in the sense of \cref{eff_defn}) as the unadjusted estimator, i.e.,
$\mathrm{eff}(\rsna, \rsnu)\ge 1.$
\end{theorem}
The practical implication of \cref{Jb.geq.Ib} is that the regression-adjusted estimator $\rsna$ leads to narrower confidence intervals than the unadjusted estimator $\rsnu$ without compromising on the desired significance level. 
\begin{remark}\label{example:heavy-noise}
%We derive a formal result in the Supplementary Material \cite{ghosh2021supp} that 
\cref{Jb.geq.Ib-old} in \cref{special-cases}
covers a special case of \cref{Jb.geq.Ib} in which only the covariates are heavy-tailed, while the noise remains Gaussian. The following synthetic example illustrates the complementary case where heavy-tailed residuals stem from the noise. Taken together, these demonstrate the efficiency gain from regression adjustment across both sources of heavy tails: the noise and the covariates.
\end{remark}
\begin{example}
    Suppose that we have only one covariate $x_i$ and $b_i = \beta x_i + \eps_i$. Assume that $x_i$ are drawn from $\text{Uniform}\{-1,+1\}$ and $\eps_i$ are drawn from any density $g$. We can show that \cref{empirical-assump-1,empirical-assump-2} hold here, and it follows from the inequality $2(a^2+b^2)\ge (a+b)^2$ that
    $$\mathrm{eff}(\rsna, \rsnu) =\frac{\int g^2(t)dt}{\int (\frac{1}{2}g(t-\beta)+\frac{1}{2}g(t+\beta))^2dt} \ge 1,$$
   where the last inequality is strict for $\beta\neq 0$.
\end{example}
}
\begin{comment}
    Note that $$
\mathcal{J}_b=\int_{-\infty}^{\infty} f^2_{\varepsilon}(t) d t=\int_{-\infty}^{\infty} \frac{1}{\pi^2\left(1+t^2\right)^2} d t=\frac{1}{2 \pi}.
$$
%since $\int_{-\infty}^{\infty} \frac{d t}{\left(1+t^2\right)^2}=\frac{\pi}{2}$.
On the other hand, $b_i$ has a two-point mixture density
$
f_b(t)=\frac{1}{2} f_{\varepsilon}(t-\beta)+\frac{1}{2} f_{\varepsilon}(t+\beta),
$
hence
$$
\begin{aligned}
\mathcal{I}_b =\int f_b(t)^2 d t =\frac{1}{4} \int f_{\varepsilon}(t-\beta)^2 d t+\frac{1}{4} \int f_{\varepsilon}(t+\beta)^2 d t+\frac{1}{2} \int f_{\varepsilon}(t-\beta) f_{\varepsilon}(t+\beta) dt.
\end{aligned}
$$
By shift-invariance $\int f_{\varepsilon}(t \pm \beta)^2 d t=\int f_{\varepsilon}(t)^2 d t=1 /(2 \pi)$. We compute the cross-term using a standard contour integral, and find that
% $$
% \int f_{\varepsilon}(t-\beta) f_{\varepsilon}(t+\beta) d t=\int \frac{d t}{\pi\left(1+(t-\beta)^2\right) \cdot \pi\left(1+(t+\beta)^2\right)}= \frac{2}{\pi\left(4+4 \beta^2\right)}.
% $$
% Thus,
$$
\mathcal{I}_b=%\frac{1}{4} \cdot \frac{1}{2 \pi}+\frac{1}{4} \cdot \frac{1}{2 \pi}+\frac{1}{2} \cdot \frac{2}{\pi\left(4+4 \beta^2\right)}=
\frac{1}{4\pi}+\frac{1}{4\pi\left(1+ \beta^2\right)} \le \frac{1}{2\pi} = \mathcal{J}_b,
$$ where the last inequality is strict for $\beta\neq 0$.
\end{comment}
\ag{
We conclude this section with the following result which gives the asymptotic efficiency of $\rsna$ with respect to the regression-adjusted estimator $\lini$ proposed by \citet{Lin13}.
\begin{theorem}[Efficiency lower bound after regression adjustment]\label{efflb-adj-case}
Suppose that \cref{cte,randomization,empirical-assump-1,empirical-assump-2} hold true. Then, the asymptotic distribution of Rosenbaum's regression-adjusted estimator $\rsna$ is given by %\cref{ACjs} is satisfied with $\mathcal{I}_b = \int_{\R} f^2(x)\,dx$, and consequently,
\begin{equation*}
\sqrt{N}\left(\rsna - \cte \right)  \dto \mathcal{N}\left(0, \left(12\,\lambda(1-\lambda)\right)^{-1}\left(\int_{\R} g^2 (x) dx\right)^{-2}\right).
\end{equation*}
Moreover, assume that (i) $\smash{\sup_{N\ge1} N^{-1}\sum_{i=1}^N {b}_{N,\,i}^4 <\infty}$, $\smash{\sup_{N\ge 1} N^{-1}\sum_{i=1}^N x_{N,\,i,j}^4 <\infty}$ for each coordinate $j\le p$, (ii) $\smash{N^{-1}\vec{X}_N^\top\vec{X}_N}$ converges to a finite, invertible matrix, and (iii) $\smash{N^{-1}\sum_{i=1}^N b_{N,\,i}}$\,, $\smash{N^{-1}\sum_{i=1}^N b_{N,\,i}^2}$\,, $\smash{N^{-1}\sum_{i=1}^N b_{N,\,i} \vec{x}_{N,\,i}}$ converge to finite limits. Then, the asymptotic efficiency of Rosenbaum's regression-adjusted estimator $\rsna$ relative to Lin's estimator $\lini$ is given by
\begin{equation*}%\label{eq:are-adj}
\mathrm{eff}(\rsna,\lini)=12\sigma^2\left(\int_{\R} g^2 (x) dx\right)^2, \quad \text{where}\quad\sigma^2=\lim_{N\to\infty}\frac{1}{N}\sum_{i=1}^{N} (\widetilde{b}_{N,\,j} - \overline{\widetilde{b}}_N)^2.
\end{equation*}
Furthermore, we have the \ag{worst-case} lower bound:
$ \mathrm{eff}(\rsna,\lini)\geq 0.864.$ 
\end{theorem}
The additional moment assumptions (i)--(iii) in \cref{efflb-adj-case} are identical to the assumptions in~\citet[Theorem 1]{Lin13}, and provide sufficient conditions to derive the asymptotic distribution of Lin's estimator $\lini$.}

\ag{
 \cref{efflb-adj-case} shows that the asymptotic relative efficiency of Rosenbaum's regression-adjusted estimator with respect to Lin's estimator has the same form as $\mathrm{eff}(\rsnu,\linu)$ derived in \cref{efflb} in the unadjusted case. As a result, the comparison of $\rsnu$ and $\linu$ as discussed in \cref{rem:on_table:eff} using \cref{table:eff} also holds for comparing the efficiency of Rosenbaum's regression-adjusted estimator $\rsna$ relative to Lin's estimator $\lini$. For other methods, e.g., those in \citet{athey2021}, the asymptotic behaviors are not known under the design-based framework.
 Thus, among the regression-adjusted estimators whose asymptotic behavior is well understood under the finite population setting, \cref{efflb-adj-case} makes a strong case for using Rosenbaum's estimator in practice. 
}

\section{Numerical experiments}\label{sec:simulations}

In this section we give a preview of our extensive numerical experiments to compare the empirical performance of Rosenbaum's rank-based estimators $\rsnu$ and $\rsna$ with various other estimators of the constant treatment effect $\tau$, namely: The difference-in-means estimator $\linu$ (see \eqref{dm}), the difference-in-medians estimator $\taumed$ (see \eqref{def:taumed}), the $\alpha$-trimmed and $\alpha$-Winsorized difference-in-means estimators (with $\alpha=0.1$; see \citet{athey2021} for definitions), the estimators $\taueif$ and $\widehat\tau_{\mathrm{waq}}$ studied by \citet{athey2021}, the simple regression-adjusted estimator $\lina$ \citep{Freedman08a,Freedman08b}, and the estimator $\lini$ proposed by \citet{Lin13}. We present here the empirical performance of these estimators in two simulation settings, as follows: 
\begin{enumerate}
    \item Setting 1: Generate $x_i$ i.i.d.~from $\text{Unif}(-4, 4)$, and set $a_i = 3x_i + \eps_i$ and $b_i = a_i - 2$.
    \item Setting 2: Same as Setting 1, except that we contaminate $5\%$ of the outcomes with an arbitrary large value $M$; here we use $M=500$.
    %\item Generate $u_i$ i.i.d.~from $\text{Unif}(-4, 4)$, and set $x_i=e^{u_i}$, $a_i = \frac{1}{4}(x_i+\sqrt{x_i}) + \eps_i$ and $b_i = a_i - 2$.
\end{enumerate}
Setting 2 is designed to illustrate the trade-off between robustness and efficiency by examining how contamination affects the confidence intervals.
 In each of these settings, the i.i.d.~noise $\epsilon_i$'s are drawn from: (a) standard normal, (b) Cauchy, and (c) Student's $t_3$ distribution. The sample size is $N=1000$ and here we only report the results for the balanced design with $m/N = 0.5$. 
 %The Supplementary Material \citep{ghosh2021supp} reports 
 We also report the results across additional simulation settings and unbalanced designs in \cref{subsec:simulationsApp}. The design-based limiting distributions of some of the estimators
we consider here are not known in the literature, so we use permutation-based confidence
intervals for an apples-to-apples comparison.
We repeat the above experiments for $1000$ replications, where each replication draws new potential outcomes, covariates and a random treatment assignment. We then compute the  coverage and average lengths of the approximate 95\% permutation-based confidence intervals for each estimator.  
The results are summarized in~\cref{siml_res1,siml_res2}, and some observations are listed below.
\begin{table}[!ht]
\centering
\caption{\label{siml_res1} Empirical coverage and average length of 95\% CIs for simulation Setting 1.}
\setlength{\tabcolsep}{5pt}
\renewcommand{\arraystretch}{1}
\begin{tabular}{l|cc|cc|cc}
\toprule
\multirow{2}{*}{Estimator} &
\multicolumn{2}{c|}{(a) Gaussian errors} &
\multicolumn{2}{c|}{(b) Cauchy errors} &
\multicolumn{2}{c}{(c) $t_3$ errors} \\
& coverage & length & coverage & length & coverage & length \\
\midrule
Difference-in-Means ($\linu$)                      & $94.8\%$  & $1.75$ & $94.2\%$  & $24.26$ & $94.8\%$  & $1.79$ \\
Difference-in-Medians ($\taumed$)                  & $93.1\%$  & $2.90$ & $94.9\%$  &  $3.07$ & $94.7\%$  & $2.91$ \\
$0.1$-trimmed Diff-in-Means                        & $94.1\%$  & $2.03$ & $95.2\%$  &  $2.19$ & $95.2\%$  & $2.04$ \\
$0.1$-Winsorized Diff-in-Means                     & $94.2\%$  & $1.82$ & $95.6\%$  &  $2.01$ & $94.9\%$  & $1.84$ \\
$\taueif$ \citep{athey2021}                        & $96.3\%$  & $1.30$ & $96.2\%$  &  $1.74$ & $96.7\%$  & $1.39$ \\
$\widehat\tau_{\mathrm{waq}}$ \citep{athey2021}    & $95.7\%$  & $1.32$ & $94.5\%$  &  $2.50$ & $96.2\%$  & $1.44$ \\
Rosenbaum's estimator ($\rsnu$)                    & $94.9\%$  & $1.84$ & $95.6\%$  &  $2.14$ & $95.4\%$  & $1.88$ \\
OLS adjusted ($\lina$)                             & $99.0\%$  & $0.35$ & $94.2\%$  & $24.30$ & $96.5\%$  & $0.49$ \\
Lin's estimator ($\lini$)                          & $99.0\%$  & $0.35$ & $94.2\%$  & $24.30$ & $96.5\%$  & $0.49$ \\
Rosenbaum's adjusted ($\rsna$)                     & $99.4\%$  & $0.37$ & $97.2\%$  &  $1.34$ & $98.7\%$  & $0.43$ \\
\bottomrule
\end{tabular}
\end{table}

\begin{table}[!ht]
\centering
\caption{\label{siml_res2} Empirical coverage and average length of 95\% CIs for simulation Setting 2.}
\setlength{\tabcolsep}{5pt}
\renewcommand{\arraystretch}{1}
\begin{tabular}{l|cc|cc|cc}
\toprule
\multirow{2}{*}{Estimator} &
\multicolumn{2}{c|}{(a) Gaussian errors} &
\multicolumn{2}{c|}{(b) Cauchy errors} &
\multicolumn{2}{c}{(c) $t_3$ errors} \\
& coverage & length & coverage & length & coverage & length \\
\midrule
Difference-in-Means ($\linu$)                      & $100.0\%$ & $27.52$ & $100.0\%$ & $43.71$ & $100.0\%$ & $27.51$ \\
Difference-in-Medians ($\taumed$)                  & $94.5\%$  &  $3.04$ & $94.7\%$  &  $3.22$ & $94.9\%$  &  $3.07$ \\
$0.1$-trimmed Diff-in-Means                        & $95.4\%$  &  $2.15$ & $96.7\%$  &  $2.35$ & $96.9\%$  &  $2.16$ \\
$0.1$-Winsorized Diff-in-Means                     & $96.1\%$  &  $1.95$ & $97.9\%$  &  $2.32$ & $97.4\%$  &  $1.98$ \\
$\taueif$ \citep{athey2021}                        & $92.9\%$  &  $1.63$ & $90.2\%$  &  $9.16$ & $92.1\%$  &  $1.68$ \\
$\widehat\tau_{\mathrm{waq}}$ \citep{athey2021}    & $100.0\%$ & $59.17$ & $94.1\%$  & $659.88$ & $100.0\%$ & $56.97$ \\
Rosenbaum's estimator ($\rsnu$)                    & $95.5\%$  &  $1.97$ & $96.4\%$  &  $2.29$ & $96.0\%$  &  $2.01$ \\
OLS adjusted ($\lina$)                             & $100.0\%$ & $27.77$ & $99.9\%$  & $43.85$ & $100.0\%$ & $27.75$ \\
Lin's estimator ($\lini$)                          & $100.0\%$ & $27.77$ & $99.9\%$  & $43.85$ & $100.0\%$ & $27.75$ \\
Rosenbaum's adjusted ($\rsna$)                     & $97.3\%$  &  $0.95$ & $97.4\%$  &  $1.88$ & $97.8\%$  &  $1.01$ \\
\bottomrule
\end{tabular}
\end{table}

\begin{itemize}
%\item \textit{Robustness against heavy tails:} When errors are heavy-tailed (Cauchy and $t_3$), the CIs based on mean-based estimators $\linu$, $\lina$, and $\lini$ become excessively wide. In contrast, CIs based on Rosenbaum's estimators maintain reasonable lengths and nominal coverage.
%\item \textit{Efficiency under light tails:}   When the errors are Gaussian, the CIs constructed using $\rsnu$ (resp.~$\rsna$) have lengths comparable to or only slightly longer than those based on the mean-based estimators $\linu$ (resp.~$\lina$, and $\lini$). The efficiency losses for rank-based estimators remain within the small margin predicted by \cref{efflb,efflb-adj-case}.
    \item \textit{Comparison between Rosenbaum's estimators and~mean-based estimators:} 
    In each of the settings in~\cref{siml_res1}, the length of the CIs constructed using $\rsna$ (resp.~$\rsnu$) are either smaller than or almost equal to the lengths of the CIs constructed using $\lini$ (resp.~$\linu$), without compromising on coverage. The efficiency losses for rank-based estimators are within a small margin as predicted by~\cref{efflb,efflb-adj-case}. 
 \item \textit{Regression adjustment improves precision:} 
The CIs constructed using the regression-adjusted estimator $\rsna$ are substantially shorter than the CIs constructed using the unadjusted estimator $\rsnu$, which validates~\cref{Jb.geq.Ib}. For example, in Setting 1(a), the average
interval length decreases from 1.84 ($\rsnu$) to 0.37 ($\rsna$), a reduction of approximately 80\%.%and reinforcing the popular belief %(see, e.g., \citealt{Rosenbaum02}) 
%that regression adjustment generally improves precision over the unadjusted case.
\item \textit{Robustness against contamination:} The estimator $\taueif$ \citep{athey2021} consistently provides the shortest CIs in \cref{siml_res1} among the unadjusted estimators, but its performance substantially deteriorates under contamination in \cref{siml_res2}. In contrast, the CIs based on Rosenbaum's estimators, difference-in-medians and $0.1$-trimmed/Winsorized difference-in-means are stable under contamination --- as expected from their high breakdown point (cf.~\cref{ABP}; see also \cref{proof:ABP}). 
\item \textit{Overall performance:} 
Across both tables, Rosenbaum's regression-adjusted estimator $\rsna$ demonstrates the most favorable balance of efficiency and robustness: It achieves near-optimal efficiency under light tails and substantially outperforms mean-based methods under heavy tails (columns (b) and (c)) and contamination (\cref{siml_res2}).
\end{itemize}
Additional numerical experiments, including settings with model misspecification and 
a real-data analysis, are provided in \cref{sec:simulationsApp}.
%the Supplementary Material \citep{ghosh2021supp}.

\section{Rosenbaum's estimator under treatment heterogeneity}\label{sec:HTE}

In this section, we initiate a study of Rosenbaum's unadjusted estimator $\rsnu$ when the constant additive treatment effect assumption (\cref{cte}) does not hold. Under the randomization framework (\cref{randomization}), we aim to understand the estimand targeted by $\rsnu$ without any assumption on the treatment effect. \ag{This investigation is valuable because, in practice, Rosenbaum's estimator and the corresponding CIs are also applied in settings where heterogeneous treatment effects are plausible \citep[see, for example,][]{Silber2017,zubizarreta2013effect}.}

%This is a useful investigation since, in the literature, Rosenbaum's estimator and the corresponding CIs are sometimes used without explicitly arguing for the constant treatment effect assumption \citep[see, for example,][]{Silber2017,zubizarreta2013effect}. 

\ag{The definition of the estimator $\rsnu$ in \eqref{eq:Tau-Unadj} does not require \cref{cte}. We show below that $\rsnu$ targets the following quantity:}
\begin{align}\label{eq:medpair}
    \mbox{med}_N:=\median\{a_i-b_j:\ 1\le i\neq j\le N\}.
\end{align}
In words, $\mbox{med}_N$ is the median of the pairwise differences in the potential outcomes under treatment and control situations. Note that $\mathrm{med}_N$ is reminiscent of the popularly used estimand for two sample Wilcoxon statistics from classical nonparametrics \citep[e.g.,][Chapter 2.5]{Lehmann2006}. In the special case where the constant treatment effect assumption holds, $\mbox{med}_N=\tau+\median\{b_i-b_j:\ 1\le i\neq j\le N\}=\tau$. 
The following result shows that under appropriate assumptions,  Rosenbaum's estimator $\rsnu$ targets the estimand $\mbox{med}_N$.

\begin{theorem}\label{thm:treathet}
%Assume there exists $0<c<1/2$ free of $m,N$, such that $c<m/N<1-c$. 
Suppose that \cref{randomization} holds.
Given $\varepsilon>0$, define 
$$\kappa_N^{(1)}(\varepsilon):=\frac{\sum_{1\le i\neq j}\mathbf{1}(a_i-b_j\le \mathrm{med}_N+\varepsilon)}{N(N-1)}, \qquad \kappa_N^{(2)}(\varepsilon):=\frac{\sum_{1\le i\neq j}\mathbf{1}(a_i-b_j \,\ag{\le}\, \mathrm{med}_N-\varepsilon)}{N(N-1)}.$$
Assume further that for any $\varepsilon>0$,
\begin{equation}\label{eq:separation}
    \sqrt{N}\left(\kappa_N^{(1)}(\varepsilon)-\frac{1}{2}\right)\to \infty, \qquad \sqrt{N}\left(\frac{1}{2}-\kappa_N^{(2)}(\varepsilon)\right)\to\infty.
\end{equation}
Then it holds that $\rsnu-\mathrm{med}_N\overset{P}{\longrightarrow} 0$ as $N\to\infty$. %,  where the probability is taken with respect to the randomness of the $Z_i$'s.
\end{theorem}

 A few remarks are now in order.

\begin{remark}[On assumption \eqref{eq:separation}]
By definition of median, $\kappa_N^{(2)}(\varepsilon)\le 1/2\le \kappa_N^{(1)}(\varepsilon)$. Assumption \eqref{eq:separation} can be thought of as a weaker finite sample analogue of the positive density assumption, see, e.g., \citet[Theorem 11.2.8]{tsh4}.  In \cite{mizera1998necessary}, the authors establish necessary and sufficient conditions for convergence of the sample median of independent observations. Condition \eqref{eq:separation} is analogous to their necessary condition \cite[Equation (2.2)]{mizera1998necessary}. 
\end{remark}

\begin{remark}[Robustness of $\mbox{med}_N$]
    Being the  median of pairwise differences between treatment and control outcomes, the causal estimand $\mbox{med}_N$ is clearly more robust than the average treatment effect. In fact, the same framework as in \cref{sec:robustRosen} also shows that $\mbox{med}_N$ has an asymptotic breakdown point of $\frac{1}{2}\min\{\lambda,1-\lambda\}$ where $\lambda$ is as in \cref{randomization}. 
\end{remark}

\cref{thm:treathet} opens up a number of questions on the behavior of Rosenbaum's estimator(s) under heterogeneous treatment effects, e.g., deriving the limiting distribution for $\sqrt{N}(\widehat{\tau}_R-\mbox{med}_N)$, or analogue of the same result for the Rosenbaum's regression-adjusted estimator $\rsna$ (defined in \eqref{eq:Tau-adj}). We hope to explore such questions in future work.

\section{Discussion}

\ag{
Randomization remains the gold standard in causal inference for providing valid statistical inference without modeling assumptions about outcomes \citep{Fisher1935,Neyman1923}. Under this framework, inference relies solely on the treatment assignment mechanism, viewing the potential outcomes as fixed rather than random. This design-based perspective has grown increasingly relevant as large-scale randomized experiments have become more common across diverse fields \citep{Banerjee2016,Deaton2018}. A key question in this framework is how to construct estimators that balance robustness against outliers with statistical efficiency. In this paper, we addressed this question by developing the design-based asymptotic theory for Rosenbaum's rank-based estimators---with and without regression adjustment.
}

\ag{
Our results highlight many practical advantages of Rosenbaum's rank-based estimators when outcomes exhibit heavy tails or contain outliers. We showed that Rosenbaum's unadjusted estimator is uniformly more robust than the class of weighted average quantile estimators, which includes many robust estimators recommended in the literature \citep{athey2021}. While in general, a gain in robustness might come at a cost of efficiency, we showed that this loss is minimal for Rosenbaum's estimators. Specifically, the asymptotic relative efficiency of Rosenbaum's unadjusted (resp.~regression-adjusted) estimator is, in the worst case, only $\sim$13.6\% lower than the difference-in-means estimator (resp.~Lin's estimator) under suitable assumptions, and is often substantially higher when the potential outcomes (resp.~residuals from the linear regression of the outcomes on the covariates) exhibit heavy tails. We also established that regression adjustment gives provable efficiency gains under suitable conditions. 
Our asymptotic results lay the groundwork for comparing Rosenbaum's estimators with any alternative estimator whose design-based asymptotic distribution is known, thereby enabling the practitioner to make data-driven choices between available options such as Lin's estimator and Rosenbaum's regression-adjusted estimator. %about which estimator to use for their inference.
}

\ag{
While the assumption of a constant treatment effect provides a valuable baseline for analysis, it can be restrictive in practice.
 There are statistical methods for testing this assumption (see, e.g., \citet{Ding2016}). It would be interesting to investigate the properties of Rosenbaum's rank-based estimators when this assumption is violated.  We initiated this study by deriving a weak limit of Rosenbaum's unadjusted estimator without any assumption on the treatment effect. This leads to a novel estimand of the treatment effect in the context of randomized trials, which is robust to outliers and contamination. This estimand can be a reasonable alternative to the average treatment effect, for example, when only a fraction of the units have a large positive treatment effect and others have no effect or a negative effect; see, e.g.,~\citet{athey2021} and \citet[][Section 1.3.4]{Rosenbaum2021}.
%Our novel estimand can be a reasonable alternative to study in those circumstances. 
\citet{lei2024} analyzes the same estimator under the infinite-population framework, providing a complementary perspective to our design-based approach. We plan to further develop the theory for this estimand under heterogeneous treatment effects in future work. 
}

\ag{Finally, our analysis is based on ordinary least squares for regression adjustment, which is a standard choice in practice. However, \citet{Rosenbaum02} observed that rank-based methods can naturally accommodate alternative regression approaches, such as quantile regression or robust M-estimators. Extending our asymptotic theory to these settings would be worthwhile, particularly for cases where covariates exhibit heavy tails. The analytical techniques we develop here may prove helpful for such extensions.
}

\section*{Acknowledgement}

This research was partially supported by the U.S.~National Science Foundation. The authors would like to thank Arun Kuchibhotla for an insightful discussion, and Michael Sobel for his valuable feedback on an early draft.

\begin{comment}
\section*{Supplementary material}
\label{SM}

\ag{
The Supplementary Material \cite{ghosh2021supp} includes (a) analogous results for tie-breaking by average (cf.~\cref{tie-breaking}), (b) formal results on the implications of the asymptotic breakdown point for confidence intervals (cf.~\cref{remark:zeroABP-infinit-CI}), (c) discussion on our assumptions and sufficient conditions for their validity (cf.~\cref{remarkA1,remarkAadj,example:heavy-noise}), (d) consistent estimators for the asymptotic variances of Rosenbaum's estimators,  (e) additional numerical experiments using both synthetic and real data, (f) an outline of the main proof techniques, and (g) the proofs of our main results.
}
\end{comment}

\bibliographystyle{chicago}
\bibliography{references}

\newpage

\appendixpageoff
\appendixtitleoff
\appendix
%\addcontentsline{toc}{section}{Appendices} % Add the appendix text to the document TOC
% \cref{app:additional-results,sec:simulationsApp,sec: main tools,main:proofs:unadj,main:proofs:adj,AppendixA,sec:sometechlem}
\part{Appendices} % Start the appendix part

\numberwithin{equation}{section}
\counterwithin{figure}{section}
\counterwithin{table}{section}

\normalsize

\setcounter{assumption}{6}

% \begin{center}\linespread{1.5}
%    \Large\it Supplement to ``Robustness and Efficiency  of Rosenbaum's Rank-based Estimator in  Randomized Trials:~A Design-based Perspective"\normalsize\rm\linespread{1}
% \end{center}

{ This supplementary material begins with additional results and discussions that are omitted from the main paper for space. This includes:
\begin{itemize}\itemsep0mm
\item[(a)] A formal discussion on the implications of the asymptotic breakdown point for confidence intervals (\cref{sec:A.1}).
\item[(b)] Analogous results that hold for tie-breaking using average ranks, instead of the up-ranks we used in the main paper (\cref{app:avgranks}).
\item[(c)] Interpreting our regularity assumptions from the main paper under sufficient conditions in familiar settings (\cref{special-cases}) %Some sufficient conditions ensuring the validity of our regularity assumptions from the main paper (\cref{special-cases}).
\item[(d)] Consistent estimators for the asymptotic variances of Rosenbaum's estimators $\rsnu$ and $\rsna$ (\cref{sec:est-avar})
\item[(e)] Numerical experiments to empirically compare the robustness and efficiency  of the rank-based confidence intervals with the same for various other estimators (\cref{sec:simulationsApp})
\item[(f)] Outline of the key ideas involved in the proofs of our main results (\cref{sec: main tools})
\item[(g)] Proofs of all our main results and supporting lemmas   (\cref{main:proofs:unadj,main:proofs:adj,AppendixA,sec:sometechlem}) 
\end{itemize}}

%\tightmtctrue
\parttoc % Insert the appendix TOC

\section{Additional results}\label{app:additional-results}

\subsection{Implication of breakdown points for confidence intervals}\label{sec:A.1}

The asymptotic breakdown point (ABP) of an estimator characterizes its robustness to outliers or data contamination. In this section, we formalize the connection between breakdown points and the construction of confidence intervals that remain `uniformly valid under contamination'. The following definition formalizes the notion of contamination and uniform validity under contamination. Informally, we define contamination as a function that alters some of the outcomes arbitrarily, and we allow this function to depend on the treatment assignment.

\begin{definition}\label{def:uniform-validity}
    Define $\mathcal{Z}_{N}:=\{z\in\{0,1\}^n:\sum_{i=1}^N z_i = m\}$ where $m=m(N)$ satisfies $m/N\to\lambda\in(0,1)$. For any $\eps\in (0,1)$. The $\eps$-contamination class $\mathscr{C}_{N,\eps}$ is the collection of all measurable maps $\psi:\R^N\times \mathcal{Z}_{N} \to \R^N$ such that $$|\{\psi(y, z)\neq y\}|\le \lfloor \eps N\rfloor,$$ where $\psi(y, z)-y$ can take arbitrary values. We say that a map $\widehat{C}_N:\R^N\times \mathcal{Z}_{N}\to \mathcal{B}(\R)$, that constructs an $(1-\alpha)$-confidence set for the constant treatment effect $\tau$ under \cref{randomization,cte}, is uniformly  valid under $\eps$-contamination if 
    $$\inf_{\psi\in \mathscr{C}_{N,\eps}}\mathbb{P}_\tau\left(\tau\in\widehat{C}_N(\psi(Y, Z), Z)\right)\ge 1-\alpha.$$
    
\end{definition}

We now formally establish the connection between the asymptotic breakdown point and the feasibility of constructing confidence intervals that remain uniformly valid under contamination without being blown up in length. The following result makes this precise; see \cref{proof:thm:CI-contamination} for a proof.

\begin{proposition}\label{thm:CI-contamination} 
Suppose that \cref{randomization,cte} hold true. Consider an estimator $\widehat{\tau}(Y_1,\dots,Y_N;Z_1,\dots,Z_N)$ and a procedure to construct an $(1-\alpha)$-confidence set $\widehat{C}_N$ for the constant treatment effect $\tau$. Suppose that $\widehat{C}_N(\vec{y},\vec{z})$ contains $\widehat{\tau}(\vec{y},\vec{z})$ for all outcomes $\vec{y}\in\R^n$ and for all treatment assignments $\vec{z}\in \mathcal{Z}_N$. Assume further that $\widehat{C}_N$ is uniformly valid under $\eps$-contamination (as in \cref{def:uniform-validity}) for all large $N$, and that $\mathrm{ABP}(\widehat\tau\,)=0$. Then, for all large sample sizes $N$, 
$$\sup_{\psi\in \mathscr{C}_{N,\eps}}\E_\tau \left[\mathrm{Length}\left(\widehat{C}_N(\psi(Y, Z), Z)\right)\right]=\infty.$$
On the other hand, if $\mathrm{ABP}(\widehat\tau\,)=\gamma>0$, then for any $0<\eps<\gamma$ one can construct an $(1-\alpha)$-confidence set $\widehat{C}_N$ for the constant treatment effect $\tau$ which is uniformly valid under $\eps$-contamination for all large $N$, is such that $\widehat{C}_N(\vec{y},\vec{z})$ contains $\widehat{\tau}(\vec{y},\vec{z})$ for all outcomes $\vec{y}\in\R^n$ and for all treatment assignments $\vec{z}\in \mathcal{Z}_N$, and satisfies $$\sup_{\psi\in \mathscr{C}_{N,\eps}}\E_\tau \left[\mathrm{Length}\left(\widehat{C}_N(\psi(Y, Z), Z)\right)\right]<\infty.$$
\end{proposition}

The above result reveals a sharp dichotomy: Estimators with zero ABP are fundamentally unsuitable for constructing confidence intervals that are both uniformly valid and informative under contamination, while estimators with positive ABP can achieve both goals simultaneously, provided that the contamination level is lower than the ABP. \cref{thm:CI-contamination} has immediate consequences for the class of weightd average quantile (WAQ) estimators, including the estimators proposed and studied by \citet{athey2021}, which we explicitly state in the following remark.

\begin{remark}[ABP of WAQ estimators]\label{remark:abp_waq}
We show in the proof of \cref{ABP} (cf.~\cref{proof:ABP}) that
\begin{equation}
    \label{abp-of-waq2}
    \mathrm{ABP}(\widehat{\tau}_{waq}(\nu))=\min\{\alpha_{-}(\nu),\alpha_{+}(\nu)\}\min\{\lambda,1-\lambda\},
\end{equation}
    where %$\lambda\in(0,1)$ is the limiting proportion of treated units as in \cref{randomization}, and  $\alpha_{\pm}(\nu)$ are defined as
    $$\alpha_{-}(\nu):=\sup\{\alpha:\nu([0,s])=0\,\forall\, s\le \alpha\},\quad \alpha_{+}(\nu):=\sup\{\alpha:\nu([1-s,1])=0\,\forall\, s\le\alpha\}.$$
A consequence of \eqref{abp-of-waq2} is that
       $\mathrm{ABP}(\widehat{\tau}_{waq}(\nu))=0$ whenever $\alpha_{-}(\nu)=0$ or $\alpha_{+}(\nu)=0$, i.e., when $\nu$ places non-zero mass arbitrarily close to one of the endpoints. In particular, with the weights $w_f(u)$ as defined in \citet[Equation (2.12)]{athey2021}, where $f$ is a density on $\R$, the WAQ estimator (see~\citet[Equation (2.13)]{athey2021} for a precise definition) has zero ABP unless there exists an $a\in\R$ such that $(\log f)''=0$ a.e.~on $[a,\infty)$ or  $(\log f)''=0$ a.e.~on $(-\infty,a]$. In other words, if the control potential outcomes are i.i.d.~from any distribution with a log-density that does not vanish linearly on either tails (e.g., Gaussian, Student-$t_d$, Logistic, Pareto, or Cauchy), then the WAQ estimator proposed by \citet[Equation (2.13)]{athey2021} has zero ABP. On the other hand, if the control potential outcomes are drawn from Laplace, then their estimator reduces to the difference-in-medians estimator, which achieves the highest asymptotic breakdown point in the class of weighted average quantile estimators (cf.~\cref{ABP} in the main paper). This result is not surprising, because \citet{athey2021} design estimators that are optimized for efficiency, and thus the above only highlights the trade-off between robustness and efficiency. 
\end{remark}

%The condition $\mathrm{ABP}(\widehat{\tau})>0$ alone is generally not sufficient to obtain stronger guarantees, such as, $$\liminf_{N\to\infty}\sup_{\psi\in \mathscr{C}_{N,\eps}}\E_\tau \left[\mathrm{Length}\left(\widehat{C}_N(\psi(Y, Z), Z)\right)\right]<\infty.$$

The second part of Proposition \ref{thm:CI-contamination} guarantees that when $\mathrm{ABP}(\widehat\tau)>0$, we can construct confidence sets with finite expected length under 
$\eps$-contamination for each fixed sample size $N$. However, this does not control how the expected length behaves as the sample size $N$ increases. The following result establishes that for Rosenbaum's  (unadjusted) estimator, the expected length remains uniformly bounded across all large sample sizes; see \cref{proof:thm:rsnu-CI-contamination} for a proof.

\begin{proposition}
    \label{thm:rsnu-CI-contamination}
    Suppose that \cref{randomization,cte} hold true, and that $0<\eps<\mathrm{ABP}(\rsnu)$. Assume further that the empirical measure on the control potential outcomes $b_{N,1},\dots,b_{N,N}$ is uniformly tight. Then, one can construct an $(1-\alpha)$-confidence set $\widehat{C}_N$ for $\tau$ such that $\widehat{C}_N(\vec{y},\vec{z})$ contains $\widehat{\tau}(\vec{y},\vec{z})$ for all outcomes $\vec{y}\in\R^n$ and for all treatment assignments $\vec{z}\in \mathcal{Z}_N$, $\widehat{C}_N$ is uniformly valid under $\eps$-contamination (as in \cref{def:uniform-validity}), and satisfies $$\limsup_{N\to\infty}\sup_{\psi\in \mathscr{C}_{N,\eps}}\E_\tau \left[\mathrm{Length}\left(\widehat{C}_N(\psi(Y, Z), Z)\right)\right]<\infty.$$
\end{proposition}

We demonstrate the empirical implications of \cref{thm:CI-contamination,thm:rsnu-CI-contamination} in Section \ref{sec:simulations} of the main paper. In particular, our simulation Setting 2 provides a stylized contamination scenario, admittedly artificial, but constructed to expose the vulnerabilities of WAQ estimators (as characterized in \cref{ABP} and \cref{remark:abp_waq}). Under this contamination setting, the confidence intervals constructed using WAQ estimators (including the estimators proposed by \citet{athey2021}) exhibit both degraded coverage and inflated lengths, while Rosenbaum's rank-based intervals maintain both nominal coverage and stable length; see \cref{siml_res2.app,siml_res2.5}.

\subsection{Breaking ties using average ranks}\label{app:avgranks}

In the main paper, we used \emph{up-ranks} to break ties in the responses (see~\eqref{upranks} in the main paper for a definition). We discuss in this section the analogous results when ties are broken using average ranks. Suppose that when we sort the control potential outcomes $\left\{b_{N,j}:1\le j\le N\right\}$ in ascending order, the first $c_{1}$ many are equal, then the next $c_{2}$ many are equal, and so on. Assume that we get $k$ such blocks of sizes $c_{1}, \dots, c_{k}$, where the $c_{i}$'s can equal to 1 as well. Set $c_{0}=0$ and $s_{j}=\sum_{i=1}^{j} c_{i}$. Then for each $1 \leq j \leq k$,  define
$$
q_{N}^{\mathrm{avg}}\left(s_{j-1}+1\right)=q_{N}^{\mathrm{avg}}\left(s_{j-1}+2\right)=\cdots=q_{N}^{\mathrm{avg}}\left(s_{j}\right):=\frac{s_{j-1}+1+s_{j}}{2} = s_{j-1} + \frac{1+c_j}{2}
$$
where the last quantity is simply the average of the ranks $\left\{s_{j-1}+1, \ldots, s_{j-1}+c_j\right\}$. With the above notion of average ranks, we define the Wilcoxon rank-sum (WRS) statistic as $$t_N^{\mathrm{avg}} := \sum_{j=1}^N q_N^{\mathrm{avg}}(j)Z_{N,j}.$$
We make the following assumption on the block-sizes, which essentially tells us that the blocks formed by the ties are not too large.

\begin{assumption}\label{blocksizes} 
The block sizes $c_1,\dots,c_k$ as defined above satisfies
$\max _{1 \leq i \leq k} c_{i}=o\left(N\right)$, as $N \to \infty$.
\end{assumption}

 The following lemma shows that if~\cref{blocksizes} holds, then the statistic $t_N^{\mathrm{avg}}$ has the same asymptotic distribution as the WRS statistic $t_N$ defined using up-ranks that we studied in~\cref{sec:w/oAdj} of the main paper; see \cref{proof:tavgrank} for a proof.

\begin{lemma}\label{tavgrank} 
It holds under~\cref{blocksizes} that $N^{-3/2}(t_N^{\mathrm{avg}} - t_N) = o_p(1)$ as $N\to\infty$. Thus, $t_N^{\mathrm{avg}}$ has the same asymptotic distribution as the WRS statistic $t_N$ defined using up-ranks.
\end{lemma}

Our proof of the above result shows that the above statement holds for any value of the constant treatment effect $\tau$. This implies that the local asymptotic normality we establish in~\cref{propo-asy-tauN} of the main paper also holds if we replace $t_N$ by $t_N^{\mathrm{avg}}$. Consequently, Rosenbaum's estimator for the constant treatment effect $\tau$ constructed by inverting the WRS test defined using $t_N^{\mathrm{avg}}$ instead of $t_N$ must have the same asymptotic distribution as described in~\cref{tau_HL.CLT}. A similar analogy also holds in the regression adjusted case, which we omit here for space.

\subsection{Discussion on our regularity assumptions}\label{special-cases}

 The asymptotic results we present in the main paper rely on regularity conditions (namely, \cref{ACjs,empirical-assump-1,AssumpB2,empirical-assump-2}) stated in terms of empirical processes and convergence properties, that might initially appear as abstract. While this generality is essential for our design-based framework, it naturally raises the question: What do these assumptions substantively require? In this section, we verify these assumptions in familiar scenarios under the infinite-population framework. 
 
\subsubsection{The control potential outcomes are i.i.d.}

In the main paper, we mention that the substance of \cref{ACjs} is  that the sequence $\{b_{N,j}: 1\le j\le N\}$ of potential control outcomes behaves like an i.i.d. sequence in the limit (see also~\cref{empirical-assump-1}). The following result verifies that if the $b_{N,j}$'s are i.i.d.~realizations from a distribution with square-integrable density $f_{}(\cdot)$, then \cref{ACjs} holds; see \cref{proof:iid.I_C} for a proof.

\begin{lemma}\label{iid.I_C} Let the control potential outcomes $\{b_{N,j}: 1\le j\le N\}$ be i.i.d. from a distribution with density $f_{}(\cdot)$ satisfying $\int_{\R} f_{}^2(x)\,dx <\infty$, and $I_{h,N}$ be as in \eqref{Ian} of the main paper. Then
\begin{equation*} N^{-3/2}\sum_{j=1}^N\sum_{i=1}^N I_{h,N} (b_{N,j} - b_{N,i})  \asto h\int_{\R} f_{}^{2} (x)\, dx.\end{equation*} 
\end{lemma}

Our next result shows that by imposing more conditions on the density $f$, we can show \cref{empirical-assump-1} holds when $b_{N,j}$ be i.i.d. from the distribution with density $f(\cdot)$; see \cref{proof:iid.I_C2} for a proof.

\begin{lemma}\label{iid.I_C2} Let the control potential outcomes $\{b_{N,j}: 1\le j\le N\}$ be i.i.d. from a distribution with a square-integrable, continuous density $f_{}(\cdot)$ that has finit limits as $x\to\pm \infty$. Denote by $F_N$ the empirical distribution of $b_{N,1},\dots,b_{N,N}$. Then, (a) $F_N$ converges weakly to the distribution with density $f$, (b)  we have $N^{-1}\sum_{i=1}^N f(b_{N,i})\to \int_\R f^2(x) dx$, and (c) $F_N$ satisfies the following smoothness condition:
$$\sup_{x}\left|\sqrt{N}\left(F_N\left(x+\frac{h}{\sqrt{N}}\right)-F_N(x)\right)-h\,f(x)\right|\asto 0.$$
\end{lemma}

\cref{iid.I_C,iid.I_C2} establish that \cref{ACjs,empirical-assump-1} hold when the control potential outcomes are i.i.d. from a distribution with square-integrable, uniformly continuous density. Notably, these conditions do accommodate heavy-tailed distributions without finite variance. We next turn to the regression-adjusted case, where we verify analogous results under a correctly specified linear model.

\subsubsection{The regression model is correctly specified}

In the main paper, we mention that the substance of \cref{AssumpB2} is  that the sequence $\{b_{N,j}: 1\le j\le N\}$ of potential control outcomes in the limit behaves like realizations of the linear model $\vec{b}_{N}=\vec{X}_N\vec{\beta}_N + \vec{\eps}_N$ where $\eps_{N,i}$ are i.i.d.~realizations from a distribution with zero mean, independent of the design (see also~\cref{empirical-assump-2}).  The following result makes this precise; see \cref{proof:iid.J_C} for a proof.

\begin{lemma}\label{iid.J_C}  Assume that the control potential outcomes $b_{N,i}$ satisfy $\vec{b}_{N}=\vec{X}_N\vec{\beta}_N + \vec{\eps}_N$ where $\eps_{N,i}$ are i.i.d.~from a distribution $G$, independent of the design $\vec{Z}_N$. Assume that $G$ has a square-integrable, continuous density $g$ such that $g(x)$ has finite limits as $x\to\pm \infty$. Also assume that $\eps_{N,i}$ have zero mean and $N^{-1}\vec{X}_N^\top\vec{X}_N\to\Sigma\succ 0$. Then,
\begin{equation*} N^{-3/2}\sum_{j=1}^N\sum_{i=1}^N I_{h,N} (\widetilde{b}_{N,j} - \widetilde{b}_{N,i})  \Pto h\int_{\R} g_{}^{2} (x)\, dx.
\end{equation*} 
Further, if $\|\vec{x}_{N,i}\|$ are uniformly bounded, i.e., $\sup_{N}\max_{i\le N}\|\vec{x}_{N,i}\|_2<\infty$, then \begin{equation*} N^{-3/2}\sum_{j=1}^N\sum_{i=1}^N I_{h,N} (\widetilde{b}_{N,j} - \widetilde{b}_{N,i})  \asto h\int_{\R} g_{}^{2} (x)\, dx.\end{equation*} 
%as $N\to\infty$ provided that $c_N\to \infty$ and $b_N \sim c_N N^{-2}$. 
\end{lemma}

It is important to note that the above result allows heavy-tailed errors that possibly do not have any moment beyond the first, and also allows heavy-tailed covariates.
To complete the verification of our regularity conditions in the regression-adjusted case, we must also establish \cref{empirical-assump-2}. Our next result verifies that  \cref{empirical-assump-2} holds under the conditions of \cref{iid.J_C} and further assumptions on the fixed design matrix $\vec{X}_N$; see \cref{proof:iid.J_C2} for a proof. 

\begin{lemma}\label{iid.J_C2} Assume that the control potential outcomes $b_{N,i}$ satisfy $\vec{b}_{N}=\vec{X}_N\vec{\beta}_N + \vec{\eps}_N$ where $\eps_{N,i}$ are i.i.d.~from a distribution $G$, independent of the design $\vec{Z}_N$. Assume that $G$ has a square-integrable, continuous density $g$ such that $g(x)$ has finite limits as $x\to\pm \infty$. Also assume that $\eps_{N,i}$ have zero mean, $N^{-1}\vec{X}_N^\top\vec{X}_N\to\Sigma\succ 0$, and $\sup_{N}\max_{i\le N}\|\vec{x}_{N,i}\|_2<\infty$. Denote by $G_N$ the empirical distribution of the residuals $\widetilde{b}_{N,1},\dots,\widetilde{b}_{N,N}$. Then, (a) $G_N$ converges weakly to $G$ a.s., and $N^{-1}\sum_{i=1}^N g(\widetilde{b}_{N,i})\asto \int_\R g^2(x) dx$, (b) for every fixed $h$, $G_N$ satisfies the following smoothness condition:
$$\sup_{x}\left|\sqrt{N}\left(G_N\left(x+\frac{h}{\sqrt{N}}\right)-G_N(x)\right)-h\,g(x)\right|\asto 0.$$
If, in addition, the empirical distribution on $\{e_i^\top \vec{X}_N\vec{\beta}_N\}_{1\le i\le N}$ converges weakly to a continuous distribution, then we have the following asymptotic independence-like condition:
\begin{equation}\label{eq:indep}
    \sup_{x,y}\left|\frac{1}{N}\sum_{i=1}^N \ind{\widetilde{b}_{N,\,i}\le x,\, b_{N,\,i}-\widetilde{b}_{N,\,i}\le y}- G_N(x)\frac{1}{N}\sum_{i=1}^N \ind{ b_{N,\,i}-\widetilde{b}_{N,\,i}\le y}\right|\asto 0.
\end{equation}

\end{lemma}

\cref{iid.J_C,iid.J_C2} establish that our design-based assumptions hold under correct model specification with minimal regularity conditions. In particular, these results does not require any moment assumptions on the errors beyond the first.

\subsubsection{Heavy-tailed covariates and Gaussian noise}\label{sec:Jb.geq.Ib-old}
The following result is a special case of \cref{Jb.geq.Ib}, in which the residuals $\widetilde{b}_{N,i}$ are possibly heavy-tailed only because the covariates might have heavy-tails, and the errors are Gaussian; see \cref{proof:Jb.geq.Ib-old} for a proof. 

\begin{proposition}\label{Jb.geq.Ib-old}
Suppose that \cref{cte,randomization} hold, and that the potential control outcome sequence $\vec{b}_{N}$ satisfy the regression model $\vec{b}_N=\vec{X}_N\vec{\beta}_N+\vec{\eps}_N$, where $\eps_{N,i}$'s are i.i.d.~from $\mathcal{N}(0,\sigma^2)$.  %Define $$I_N := N^{-3/2} \sum_{j=1}^N \sum_{i=1}^N \ind{0\le b_{N,j} -b_{N,i} < 1/\sqrt{N}}.$$ Then $\limsup_{N\to\infty} \E I_N \le \mathcal{J}_b$ holds. 
Then~\cref{AssumpB2} holds, with $\mathcal{J}_b=(2\sqrt{\pi}\sigma)^{-1}$.
Further, denote $\vec{v}_N :=   \vec{X}_N\vec{\beta}_N$ and assume that $\lim_{N\to\infty} N^{-2}\sum_{j=1}^N\sum_{i=1}^N \exp\left(-(v_{N,j} - v_{N,i})^2/4\sigma^2\right)=\ell$. Then,~\cref{ACjs} holds with $\mathcal{I}_b=\ell\mathcal{J}_b$, and consequently, $$\mathcal{J}_b\ge\mathcal{I}_b,\quad\text{i.e.,}\quad \mathrm{eff}(\rsna, \rsnu)\ge 1.$$ Moreover, if it holds that
$\liminf_{N\to\infty} N^{-1}\sum_{j=1}^N (v_{N,j} - \overline{v}_N)^2 > 0,$ 
where $\overline{v}_N := N^{-1}\sum_{i=1}^N v_{N,i}$, then $\mathcal{J}_b>\mathcal{I}_b$, and consequently $\mathrm{eff}(\rsna, \rsnu)$ is strictly greater than $1$. %where the asymptotic relative efficiency $\mathrm{eff}(\cdot,\cdot)$ is defined in ~\cref{eff_defn}.
\end{proposition}

The above result, together with \cref{example:heavy-noise} of the main paper, shows that the efficiency gain from regression-adjustment (as shown in \cref{Jb.geq.Ib}) can occur when either the noise or the covariates are heavy-tailed.

\subsection{Consistent estimation of the asymptotic variances}\label{sec:est-avar}

In this section, we propose consistent estimators of the asymptotic variances of Rosenbaum's rank-based estimators $\rsnu$ (without regression adjustment) and $\rsna$ (with regression adjustment) that yield Wald-type confidence intervals for the constant treatment effect $\tau$. 

First consider the case without regression adjustment.~\cref{tau_HL.CLT} in the main paper 
tells us that the problem of estimating the asymptotic variance of $\rsnu$  reduces to the problem of estimating the limiting quantity $\mathcal{I}_b$ defined in \cref{ACjs}, which is unknown in general. We propose an estimator of $\mathcal{I}_b$ and prove its consistency in the following result; see \cref{proof:est.I_C} for a proof.

\begin{proposition}[Consistent estimation of $\mathcal{I}_b$]\label{est.I_C} 
Define \begin{equation}\label{est.basic}\widehat{\mathcal{I}}_{N} := \left(1-\frac{m}{N}\right)^{-2}N^{-3/2} \sum_{j=1}^N \sum_{i=1, i\neq j}^N (1-Z_{N,i})(1-Z_{N,j}) \ind{0\le Y_{N,j} - Y_{N,i}< N^{-1/2}}.\end{equation}
Then, under~\cref{cte,randomization,ACjs},  we have $\widehat{\mathcal{I}}_{N}  \Pto \mathcal{I}_b$ as $N\to\infty$. %% under any value of $\tau$.
\end{proposition}

%This is deferred until the~\href{https://arxiv.org/abs/2111.15524}{Supplementary Material} %\cref{plug-in est} .  %We next discuss an efficiency lower bound. 

The above theorem in conjunction with~\cref{tau_HL.CLT} readily yields an asymptotically valid Wald-type confidence interval for $\tau$, which is formally stated below.

\begin{corollary}[Confidence interval for $\tau$]\label{CI1} Let $\widehat{\mathcal{I}}_{N}$ be as defined in \eqref{est.basic}.
Under the assumptions of~\cref{tau_HL.CLT}, an approximate $100(1-\alpha)\%$ confidence interval for $\tau$ is given by 
$$\rsnu \pm \frac{z_{\alpha/2}}{\sqrt{N}} \left(12\frac{m}{N}\left(1-\frac{m}{N}\right)\widehat{\mathcal{I}}_{N}^{2}\right)^{-1/2},$$
where $z_{\alpha}$ denotes the upper $\alpha$-th quantile of the standard normal distribution.
\end{corollary}

Similarly, \cref{tau_adj.final.CLT} reduces the problem of estimating the asymptotic variance of the regression-adjusted estimator $\rsna$ to estimating the quantity $\mathcal{J}_b$ defined in \cref{AssumpB2}. However, the estimation of $\mathcal{J}_b$ is far more intricate than estimating $\mathcal{I}_b$, since $\mathcal{J}_b$ is the limit of a sum of indicators, each of which involves a linear combination of all the observations. Naturally, it is difficult to find a natural extension of~\cref{est.I_C} in the regression-adjusted case.
 In the following section, we propose an alternate method that applies to both the unadjusted case and the regression-adjusted case.

\subsubsection{Plug-in estimators of asymptotic variances}\label{plug-in est}

 In this section, we propose plug-in estimators of $\mathcal{I}_b$ and $\mathcal{J}_b$ that lead to consistent estimators of the asymptotic variances of $\rsnu$ and $\rsna$ (as derived in \cref{tau_HL.CLT,tau_adj.final.CLT} respectively). Broadly, our approach is to substitute $\tau$ by an estimator $\widehat{\tau}_N$ of $\tau$ in various quantities, such as, the indicators that define $\mathcal{J}_b$ or $\mathcal{I}_b$. 
 
 To set ideas, we consider the unadjusted case first. Using the relation $b_{N,j}=Y_{N,j}-\tau Z_{N,j}$, we define a plug-in estimator of $b_{N,j}$ as %the problem of estimating the large sample variances of the estimators $\rsnu$ and $\rsna$ essentially boils down to the estimation of the quantities $\mathcal{I}_b$ and $\mathcal{J}_b$ defined in Assumptions \ref{ACjs} and \ref{AssumpB2}, respectively. Earlier, we proposed an estimator of $\mathcal{I}_b$ based on the potential control responses which is consistent under the same assumptions as the CLT for $\rsnu$ (see~\cref{est.I_C}).  In this subsection, we propose a different strategy for estimating $\mathcal{J}_b$, which can also be employed for estimating $\mathcal{I}_b$. %Moreover, the estimator we proposed earlier in~\cref{est.I_C} only uses the potential control responses, while the new estimator proposed in this subsection uses all the observations. 
%
%Let $\widehat{\tau}_N$ be any $\sqrt{N}$-consistent estimator for $\tau$. %For simplicity in notation, we write $b_j$ in place of $b_{N,j}$ throughout this subsection. 
%For each $1\le j\le N$, we estimate $b_{N,j}=Y_{N,j} - \tau Z_{N,j}$ by
\begin{equation}\label{b.hat}
\widehat{b}_{N,j} := Y_{N,j} - \widehat{\tau}_N  Z_{N,j} =  b_{N,j} - (\widehat{\tau}_N -\tau)Z_{N,j},
\end{equation}
where $\widehat{\tau}_N$ is an estimator of $\tau$.
At this point, the next intuitive step would be to replace $(b_{N,1},\ldots ,b_{N,N})$ with $(\widehat{b}_{N,1},\ldots ,\widehat{b}_{N,N})$ in~\cref{ACjs} of the main paper, i.e., replace the term  $I_{h,N}(b_{N,j}-b_{N,i})$ on the left hand side of~\eqref{eq:ACjs}), with $I_{h,N}(\widehat{b}_{N,j}-\widehat{b}_{N,i})$ to construct the estimator. However this leads to some technical problems in proving consistency, since $I_{h,N}(\cdot)$ is a discontinuous function which itself changes with $N$. This requires us to refine our estimation strategy further and impose a slightly stronger condition than that in~\cref{ACjs}. %In the sequel, for better readability, we will first present our assumption and then provide a nuanced and interpretable explanation of the same.
%Let $\widetilde{b}_{N,j} = (b_{N,j} - \vec{p}_{N,j}^\top \vec{b})$ be the adjusted potential control outcomes defined in \eqref{btilde}.
%We estimate $\widetilde{b}_{N,j}$ by
%\begin{equation}\label{b.hat.adj}
%\widehat{\widetilde{b}}_{N,j} :=  \widehat{b}_{N,j} -  \vec{p}_{N,j}^\top \widehat{\vec{b}} =  Y_{N,j} - \widehat{\tau}_N Z_{N,j} -  \vec{p}_{N,j}^\top(\vec{Y} - \widehat{\tau}_N \vec{Z}).
%\end{equation}
%It turns out that for establishing the consistency of the plug-in estimators, it helps to inflate the indicators $I_{h,N}$ we used earlier.
%Before proposing consistent plug-in estimators of $\mathcal{I}_b$ and $\mathcal{J}_b$, we extend the notation $I_{h,N}$ defined in \eqref{Ian} as follows.
Towards this, for $\nu>0$ and $N\ge 1$, define
\begin{equation}\label{Ian.new}
I_{h,N,\nu} (x) :=\begin{cases}  \ind{0 \le x < hN^{-\nu}} & \text{if } h \geq 0, \\ 
- \ind{hN^{-\nu} \le x  < 0} & \text{if } h< 0.\end{cases} 
\end{equation}
%\subsection{Plug-in estimation without regression adjustment}
%\cref{iid.I_C.new} motivates us to impose the following assumption.
%Recall that our~\cref{ACjs} roughly states that the proportion of the pairwise differences $b_{N,j}-b_{N,i}$ falling into  intervals that shrink at the rate of $N^{-1/2}$ scales with the Lebesgue measure of those intervals. The following assumption states the same for the indicators $I_{h,N,\nu}$.

%We begin with an assumption that generalizes~\cref{ACjs}.

\begin{assumption}\label{AssumpIb.new}
%Let $\{b_{N,j} : 1\le j\le N, N\ge 1\}$ be the potential control outcomes. 
For $h\in\R$ let $I_{h,N,\nu}$ be defined in \eqref{Ian.new}. We assume that there exists $0<\nu< 1/2$, the following holds for every $u \in [\nu, 1/2]$ that
\begin{equation*} N^{-(2-u)}\sum_{j=1}^N\sum_{i=1}^N I_{h,N,u} (b_{N,j} - b_{N,i})  \to h\mathcal{I}_b\end{equation*} 
for some constant $\mathcal{I}_b\in (0,\infty)$, which is the same $\mathcal{I}_b$ as in~\cref{ACjs}.
\end{assumption}

\begin{remark}[On~\cref{AssumpIb.new}] 
Note that if we were to set $\nu=1/2$ in~\cref{AssumpIb.new}, then $I_{h,N,\nu}(\cdot)\equiv I_{h,N,1/2}(\cdot)$ is exactly the same function as $I_{h,N}(\cdot)$ as in~\eqref{Ian} of the main paper, and consequently~\cref{AssumpIb.new} would then imply~\cref{ACjs}. If we define 
\begin{equation}\label{eq:newdef}
T_{h,N,u}:=N^{-(2-u)}\sum_{j=1}^N\sum_{i=1}^N I_{h,N,u} (b_{N,j} - b_{N,i}),
\end{equation}
then~\cref{ACjs} requires $T_{h,N,1/2}\to h\mathcal{I}_b$ whereas~\cref{AssumpIb.new} requires the mildly stronger condition $T_{h,N,u}\to h\mathcal{I}_b$ for $u$ varying in any non-degenerate interval with right endpoint at $1/2$. We firmly believe that this is a reasonable assumption. For instance, one of the ways we justified~\cref{ACjs} was by showing that it is satisfied when $b_{N,i}$'s are randomly sampled from an absolutely continuous distribution (see~\cref{remarkA1,iid.I_C}). The same is also true for~\cref{AssumpIb.new}, %. In particular, if $b_{N,i}$'s are randomly sampled from a density $f_{}(\cdot)$, then by setting $\nu=1/3$, we have that: $$T_{h,N,u}\overset{a.s.}{\longrightarrow} h\int f_{}^2(x)\,dx,$$ for all $u\in [\nu,1/2]$, which shows that~\cref{AssumpIb.new} holds with $\nu=1/3$ and $\mathcal{I}_b=\int f_{}^2(x)\,dx$. A formal and more general statement is provided in
see~\cref{iid.I_C.new} for a formal result. 
\end{remark}

Finding a consistent estimator of $\mathcal{I}_b$ is now quite intuitive. With $T_{h,N,u}$ defined as in~\eqref{eq:newdef},~\cref{AssumpIb.new} implies that $T_{1,N,\nu}$ converges to $\mathcal{I}_b$. Consequently, we can construct an estimator $\widehat{V}_N$ from $T_{1,N,\nu}$ using the plug-in principle as described earlier in this section, by replacing $b_{N,i}$'s in~\eqref{eq:newdef} with $\widehat{b}_{N,i}$'s (see~\eqref{b.hat}). The following result makes it precise; see \cref{proof:plug-in.Ib} for a proof.

\begin{theorem}[Consistent estimation of $\mathcal{I}_b$]\label{plug-in.Ib}
Define $\widehat{b}_{N,j}$ as in \eqref{b.hat}, with $\widehat{\tau}_N\equiv \rsnu$. Suppose that~\cref{AssumpIb.new} holds for some $\nu\in (0,1/2)$, and define
\begin{equation}\label{Vhat}
    \widehat{V}_N(h) := h^{-1} N^{-(2-\nu)} \sum_{j=1}^N \sum_{i=1}^N  I_{h,N,u} \left(\widehat{b}_{N,j} - \widehat{b}_{N,i}\right).
\end{equation}
Then $\widehat{V}_N(h)  \Pto \mathcal{I}_b \text{ as }N\to\infty$.
\end{theorem}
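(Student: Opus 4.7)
The plan is a deterministic sandwich built on the $\sqrt N$-consistency of $\rsnu$ from~\cref{tau_HL.CLT}. Writing
$$\widehat b_{N,j}-\widehat b_{N,i}=(b_{N,j}-b_{N,i})-(\widehat\tau_N-\tau_0)(Z_{N,j}-Z_{N,i}),$$
and using $|Z_{N,j}-Z_{N,i}|\le 1$, for any $\delta>0$ I will pick $K=K(\delta)$ and an event $A_N$ with $\mathbb{P}(A_N)\ge 1-\delta$ on which $|\widehat\tau_N-\tau_0|\le K N^{-1/2}$, so that $|(\widehat b_{N,j}-\widehat b_{N,i})-(b_{N,j}-b_{N,i})|\le K N^{-1/2}$ uniformly over $(i,j)$ on $A_N$. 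Since $\nu<1/2$, this $O(N^{-1/2})$ perturbation is asymptotically negligible relative to the main bandwidth $N^{-\nu}$, which is precisely why~\cref{AssumpIb.new} strengthens~\cref{ACjs}.

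The perturbation bound yields the pointwise sandwich, valid on $A_N$,
\begin{align*}
\ind{KN^{-1/2}\le b_{N,j}-b_{N,i}<N^{-\nu}-KN^{-1/2}}
&\le \ind{0\le \widehat b_{N,j}-\widehat b_{N,i}<N^{-\nu}}\\
&\le \ind{-KN^{-1/2}\le b_{N,j}-b_{N,i}<N^{-\nu}+KN^{-1/2}}.
\end{align*}
Summing over $(i,j)$ and multiplying by $N^{-(2-\nu)}$ gives $L_N\le \widehat V_N\le U_N$ on $A_N$, where $L_N$ and $U_N$ are \emph{deterministic} sequences depending only on the fixed potential outcomes $\{b_{N,j}\}$, $K$ and $\nu$.

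The heart of the argument is to show that $L_N,U_N\to\mathcal I_b$ using~\cref{AssumpIb.new} with $u=\nu$. Fix $\eps>0$; for all sufficiently large $N$, $KN^{-1/2}\le\eps N^{-\nu}$, so event inclusion gives
$$L_N\ge T_{1-\eps,N,\nu}-T_{\eps,N,\nu}\longrightarrow (1-2\eps)\mathcal I_b \quad \text{and}\quad L_N\le T_{1,N,\nu}\longrightarrow \mathcal I_b,$$
with $T_{h,N,\nu}$ as in~\eqref{eq:newdef}; sending $\eps\downarrow 0$ yields $L_N\to\mathcal I_b$. For $U_N$, I split its indicator at $b_{N,j}-b_{N,i}=0$ and use the $(i,j)\leftrightarrow(j,i)$ symmetry of the double sum to convert the negative-difference part into a count over $0<b_{N,j}-b_{N,i}\le\eps N^{-\nu}$, which is bounded above by $T_{2\eps,N,\nu}\to 2\eps\mathcal I_b$. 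This produces $U_N\le T_{1+\eps,N,\nu}+T_{2\eps,N,\nu}\to (1+3\eps)\mathcal I_b$; combined with the trivial $U_N\ge T_{1,N,\nu}\to\mathcal I_b$ and $\eps\downarrow 0$, this gives $U_N\to\mathcal I_b$.

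With $L_N,U_N\to\mathcal I_b$ deterministically, the sandwich implies that for any $\eta>0$, $\mathbb{P}(|\widehat V_N-\mathcal I_b|>\eta)\le \mathbb{P}(A_N^c)+o(1)\le \delta+o(1)$, and letting $\delta\downarrow 0$ closes the proof. The main obstacle is conceptual rather than computational: the indicator being discontinuous, no continuous-mapping argument can transport the $\sqrt N$-rate from~\cref{tau_HL.CLT} directly into $\widehat V_N-\mathcal I_b$; the proof must be engineered so that the $O(N^{-1/2})$ error inherited from $\widehat\tau_N$ falls comfortably within the $N^{-\nu}$ bandwidth that~\cref{AssumpIb.new} controls.
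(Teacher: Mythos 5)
Your proof is correct, and it takes a genuinely different — and considerably lighter — route than the paper's. The paper proves the result by writing $\widehat V_N-V_N$ as a sum of five terms $\Delta_{1,N},\dots,\Delta_{5,N}$, replacing the indicators at the $N^{-\nu}$ boundary by Gaussian mollifiers $\Phi(\cdot/\sigma_N)$ with $\sigma_N=N^{-\nu}(\log N)^{-1}$, and working on the event $\{|\widehat\tau_N-\tau_0|\le N^{-\nu'}\}$ for an auxiliary $\nu'\in(\nu,1/2)$; this forces it to invoke~\cref{AssumpIb.new} at the intermediate exponent $u=\nu'$ as well, which is precisely why that assumption is stated for a whole interval of $u$. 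Your two-sided deterministic sandwich exploits the one structural feature special to the unadjusted case — that the perturbation $(\widehat\tau_N-\tau_0)(Z_{N,j}-Z_{N,i})$ is bounded \emph{uniformly over all pairs} $(i,j)$ by $|\widehat\tau_N-\tau_0|=O_P(N^{-1/2})$ — so that on the tightness event $A_N$ the random count is trapped between two nonrandom counts $L_N,U_N$ whose convergence to $\mathcal I_b$ follows from~\cref{AssumpIb.new} at the single exponent $u=\nu$ (together with $u=1/2$, i.e.~\cref{ACjs}, needed for the CLT of $\rsnu$ that supplies $\mathbb{P}(A_N^c)\le\delta$). What the paper's heavier machinery buys is uniformity of method with the regression-adjusted case (\cref{plug-in.Jb2}), where the perturbation is $(\widehat\tau_N-\tau_0)\bigl(Z_{N,j}-Z_{N,i}-(\vec p_{N,j}-\vec p_{N,i})^{\top}\vec Z_N\bigr)$ and the factor $(\vec p_{N,j}-\vec p_{N,i})^{\top}\vec Z_N$ is \emph{not} uniformly bounded over pairs, only controlled on average; there your deterministic sandwich would not close, and something like the paper's event/mollifier construction becomes necessary. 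For the theorem as stated, your argument is a valid and cleaner proof; the only points worth making explicit in a write-up are that~\cref{AssumpIb.new} contains~\cref{ACjs} (take $u=1/2$), so~\cref{tau_HL.CLT} indeed applies, and that the standing hypothesis $m/N\to\lambda\in(0,1)$ is being used there.
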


%The confidence interval for $\tau$ we presented earlier in~\cref{CI1} depends only on the observations from the control group.
Combining~\cref{tau_HL.CLT} of the main paper and \cref{plug-in.Ib} above we get an asymptotically valid Wald-type confidence interval for $\tau$. %, which uses all the observations as opposed to the estimator $\widehat{\mathcal{I}}_N$ in~\eqref{est.basic}. %, hence is believed to be more efficient. 
This is formally stated in the following corollary. %Note that unlike the confidence interv

\begin{corollary}[Confidence interval for $\tau$ based on $\rsnu$]\label{cor:CIunad} Under \cref{ACjs,AssumpIb.new}, an approximate $100(1-\alpha)\%$ confidence interval for $\tau$ is given by 
\begin{equation}\label{eq_unadjCI}
  \rsnu \pm \frac{z_{\alpha/2}}{\sqrt{N}} \left(12\frac{m}{N}\left(1-\frac{m}{N}\right)\widehat{V}_N^{2}(h)\right)^{-1/2},  
\end{equation}
where $\widehat{V}_N(h)$ is defined in \eqref{Vhat}.%, and $z_{\alpha/2}$ denotes the upper $\alpha/2$-th quantile for the standard normal distribution.
\end{corollary}
 We refer the reader to~\cref{rem:nu} below for a discussion on how the choice of $\nu$ in the above result is inconsequential, and to \cref{choice:h} below for a discussion on the choice of the tuning parameter $h$.

%\subsection{Consistent estimation of the standard error of \texorpdfstring{$\rsna$}{the regression adjusted estimator}}\label{sec:pluginregadj}

Next, we consider the regression-adjusted case, where we want to estimate the quantity $\mathcal{J}_b$ as defined in \cref{AssumpB2}.
Once again, due to technical reasons we require a mildly stronger condition than~\cref{AssumpB2} of the main paper to come up with a consistent estimator for $\mathcal{J}_b$. To lay the groundwork, define
\begin{equation}
    \widetilde{T}_{h,N,u}:=N^{-(2-u)}\sum_{j=1}^N \sum_{i=1}^N I_{h,N,u} (\widetilde{b}_{N,j} - \widetilde{b}_{N,i}),
\end{equation}
where $\widetilde{b}_{N,j}$'s are defined as in~\eqref{btilde} and $I_{h,N,u}(\cdot)$ is defined as in~\eqref{Ian.new}. Observe the direct correspondence between $\widetilde{T}_{h,N,u}$ and $T_{h,N,u}$ from~\eqref{eq:newdef}. The only difference is that the potential control outcomes $b_{N,j}$'s are replaced by the regression adjusted potential control outcomes $\widetilde{b}_{N,j}$'s. Note also that~\cref{AssumpB2} can be restated as $\widetilde{T}_{h,N,1/2}\to h\mathcal{J}_b$. With this observation, in the same vein as~\cref{AssumpIb.new}, we now state our mildly stronger condition (compared to~\cref{AssumpB2}):

\begin{assumption}\label{AssumpJb.new}
There exists $0<\nu< 1/2$, such that for every $u \in [\nu, 1/2]$ and $h\in\R$, $\widetilde{T}_{h,N,u} \to h\mathcal{J}_b,$
for some constant $\mathcal{J}_b\in (0,\infty)$, which is the same $\mathcal{J}_b$ as in~\cref{AssumpB2}.
\end{assumption}
{One can show that~\cref{AssumpJb.new} holds under the same regression model assumption as in~\cref{Jb.geq.Ib-old} (see the arguments used in the proof of~\cref{Jbexists}).}

We are now ready to construct our consistent estimator for $\mathcal{J}_b$. Towards this, recall the definition of $\widehat{b}_{N,j}$ from~\eqref{b.hat} and set $\widehat{\vec{b}}_N:=(\widehat{b}_{N,1},\ldots ,\widehat{b}_{N,N})$. To carry out the plug-in approach, we define the empirical analogues of $\widetilde{b}_{N,j}$'s, as follows:
\begin{equation}\label{b.hat.adj}
\widehat{\widetilde{b}}_{N,j} :=  \widehat{b}_{N,j} -  \vec{p}_{N,j}^\top \widehat{\vec{b}}_N =  Y_{N,j} - \rsna Z_{N,j} -  \vec{p}_{N,j}^\top(\vec{Y} - \rsna \vec{Z}),\qquad j=1,2,\ldots ,N.
\end{equation}
The plug-in (consistent) estimator for $\mathcal{J}_b$ is now constructed in the same manner as we did for $\mathcal{I}_b$, only with the modification that here $\widehat{\widetilde{b}}_{N,j}$'s will play the role of $\widehat{b}_{N,j}$'s. This is the content of the following theorem; see \cref{proof:plug-in.Jb2} for a proof.
\begin{theorem}[Consistent estimation of $\mathcal{J}_b$]\label{plug-in.Jb2}
Define $\widehat{\widetilde{b}}_{N,j}$ as in \eqref{b.hat.adj}, and suppose that~\cref{AssumpJb.new} holds for some $\nu\in (0,1/2)$. Define
\begin{equation}\label{What}
    \widehat{W}_N(h) := h^{-1}N^{-(2-\nu)} \sum_{j=1}^N \sum_{i=1}^N  I_{h,N,u} \left(\widehat{\widetilde{b}}_{N,j} - \widehat{\widetilde{b}}_{N,i}\right).
\end{equation}
Then $\widehat{W}_N(h)  \Pto \mathcal{J}_b \text{ as }N\to\infty$.
\end{theorem}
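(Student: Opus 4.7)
The plan is to adapt the strategy of \cref{plug-in.Ib} to the regression adjusted case, the additional difficulty being that $\widehat{\widetilde{b}}_{N,j}$ depends on the full vector $\vec{Z}_N$ through $\vec{P}_{\vec{X}_N}\vec{Z}_N$, rather than only on $Z_{N,j}$. Writing $\delta_N := \rsna - \tau_0$, which is $O_P(N^{-1/2})$ by \cref{tau_adj.final.CLT}, a direct computation from \eqref{b.hat.adj} and \eqref{btilde} gives the key identity
\[
\widehat{\widetilde{b}}_{N,j} - \widehat{\widetilde{b}}_{N,i} = (\widetilde{b}_{N,j} - \widetilde{b}_{N,i}) - \delta_N D_{N,ij}, \qquad D_{N,ij} := (Z_{N,j} - Z_{N,i}) - (\vec{p}_{N,j} - \vec{p}_{N,i})^\top \vec{Z}_N.
\]
This reduces the problem to showing that the random perturbation $\delta_N D_{N,ij}$ is negligible, uniformly in $(i,j)$, at the scale $N^{-\nu}$.

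Next, I would bound $|D_{N,ij}|$ uniformly. Cauchy--Schwarz with $\|\vec{Z}_N\|_2 = \sqrt{m}$ and $\|\vec{p}_{N,j}\|_2^2 = p_{N,jj}$ (the $j$-th diagonal entry of $\vec{P}_{\vec{X}_N}$) gives $\max_{i,j}|D_{N,ij}| \le 1 + 2\sqrt{m\,\max_j p_{N,jj}}$. Under the mild and standard bounded-leverage condition $\max_j p_{N,jj} = O(1/N)$---which is tacitly required by the paper's other regression-adjusted asymptotic results---this quantity is uniformly bounded, and consequently $\max_{i,j}|\delta_N D_{N,ij}| = O_P(N^{-1/2}) = o_P(N^{-\nu})$, since $\nu < 1/2$.

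With this perturbation bound in hand, I would close the argument via a sandwich. Fix $\eta \in (0, 1/2)$ and consider the event $\mathcal{A}_{N,\eta} := \{\max_{i,j}|\delta_N D_{N,ij}| \le \eta N^{-\nu}\}$, which satisfies $\mathbb{P}(\mathcal{A}_{N,\eta}) \to 1$ by the previous step. On $\mathcal{A}_{N,\eta}$, the indicator defining $\widehat{W}_N$ is sandwiched pointwise between $\mathbf{1}\{\eta N^{-\nu} \le \widetilde{b}_{N,j} - \widetilde{b}_{N,i} < (1-\eta)N^{-\nu}\}$ from below and $\mathbf{1}\{-\eta N^{-\nu} \le \widetilde{b}_{N,j} - \widetilde{b}_{N,i} < (1+\eta)N^{-\nu}\}$ from above. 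After summing over $(i,j)$ and normalizing by $N^{2-\nu}$, each sandwiching sum can be written as a signed combination of quantities of the form $\widetilde{T}_{h,N,\nu}$, by splitting at zero when the interval crosses the origin. Applying \cref{AssumpJb.new} with $u = \nu$ to each piece, the lower and upper sums converge to $(1-2\eta)\mathcal{J}_b$ and $(1+2\eta)\mathcal{J}_b$ respectively, trapping $\widehat{W}_N$ between them up to $o_P(1)$. Sending $\eta \downarrow 0$ yields $\widehat{W}_N \Pto \mathcal{J}_b$.

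The main technical obstacle is the uniform control of $(\vec{p}_{N,j} - \vec{p}_{N,i})^\top \vec{Z}_N$, which couples the treatment randomization to the design and has no analogue in the unadjusted setting of \cref{plug-in.Ib}; its reduction to the leverage condition $\max_j p_{N,jj} = O(1/N)$ is the pivotal step. Once that is in hand, everything else is a direct transplant of the unadjusted argument, leveraging the extra flexibility built into \cref{AssumpJb.new} via the range of exponents $u \in [\nu, 1/2]$ to accommodate the perturbed indicator in the sandwich.
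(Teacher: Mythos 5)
Your identity $\widehat{\widetilde{b}}_{N,j} - \widehat{\widetilde{b}}_{N,i} = (\widetilde{b}_{N,j} - \widetilde{b}_{N,i}) - \delta_N D_{N,ij}$ is exactly the decomposition the paper uses (with $D_{N,ij}$ equal to the paper's $u_N(i,j)$), and your direct sandwich at scale $N^{-\nu}$, exploiting \cref{AssumpJb.new} at the exponent $u=\nu$ with the shifted thresholds $h=\pm\eta$ and $h=1\pm\eta$, is a legitimate and arguably cleaner alternative to the paper's Gaussian-mollifier step. However, there is a genuine gap in how you control the perturbation: you reduce everything to the \emph{uniform} bound $\max_{i,j}|D_{N,ij}| = O(1)$, which forces you to import the leverage condition $\max_j p_{N,jj}=O(1/N)$. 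That condition is not among the hypotheses of the theorem, and your claim that it is ``tacitly required'' elsewhere in the paper is not correct: \cref{PiPj} derives all the moment bounds needed for the regression-adjusted asymptotics from the trace identity $\sum_{i,j}\|\vec{p}_{N,i}-\vec{p}_{N,j}\|^2 = 2N(\rank(\vec{X}_N)-1)$ alone, with no per-observation leverage restriction. Without that restriction your pivotal step fails outright: a single high-leverage design point with $p_{N,jj}$ bounded away from zero gives $\max_{i,j}|D_{N,ij}|\asymp\sqrt{m}\asymp\sqrt{N}$, hence $\max_{i,j}|\delta_N D_{N,ij}|=O_P(1)$, which is not $o_P(N^{-\nu})$, and the event $\mathcal{A}_{N,\eta}$ need not have probability tending to one. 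So as written you prove a strictly weaker statement than the theorem.

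The paper's proof circumvents this precisely where your argument breaks. Rather than bounding $\|\vec{p}_{N,j}-\vec{p}_{N,i}\|$ uniformly, it introduces a cutoff $K$ and splits the pairs according to whether $\sqrt{N}\|\vec{p}_{N,j}-\vec{p}_{N,i}\|\le K$; by Markov's inequality applied to a uniformly random pair $(I,J)$ and the trace identity, the exceptional pairs have proportion at most $2(\rank(\vec{X}_N)-1)K^{-2}$, which is absorbed into the final probability bound and killed by sending $K\to\infty$ \emph{after} $N\to\infty$. On the non-exceptional pairs one gets $|u_N(i,j)|\le 1+K$, so the perturbation is $O_P((1+K)N^{-\nu'})=o(N^{-\nu})$ for $\nu<\nu'<1/2$, and the rest of the argument goes through. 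To repair your proof without adding hypotheses, you would need to replace the event $\mathcal{A}_{N,\eta}$ by this pairwise truncation (noting that the sandwich then only holds pair-by-pair on the good set, and the bad pairs must be fed back into the probability estimate rather than into the normalized sum, whose crude count $N^{\nu}\cdot O(K^{-2})$ does not vanish for fixed $K$). Everything else in your outline — the identity, the $O_P(N^{-1/2})$ rate for $\delta_N$ from \cref{tau_adj.final.CLT}, and the use of the exponent range in \cref{AssumpJb.new} — is sound.
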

~\cref{plug-in.Jb2} in conjunction with~\cref{tau_adj.final.CLT} from the main paper readily yields an asymptotically valid Wald-type confidence interval for $\tau$ based on $\rsna$, which we formally state below.

\begin{corollary}[Confidence interval for $\tau$ based on $\rsna$]\label{adjustedCI} Under \cref{AssumpB2,AssumpJb.new}, an approximate $100(1-\alpha)\%$ confidence interval for $\tau$ is given by 
\begin{equation}\label{eq_adjustedCI}
\rsna \pm \frac{z_{\alpha/2}}{\sqrt{N}} \left(12\frac{m}{N}\left(1-\frac{m}{N}\right)\widehat{W}_N^{2}(h)\right)^{-1/2},
\end{equation}
where $\widehat{W}_N$ is defined in \eqref{What}.%, and $z_{\alpha}$ denotes the upper $\alpha$-th quantile for the standard normal distribution.
\end{corollary}

We conclude this section with two remarks that highlight that (a) the choice of the tuning parameter $\nu$ is inconsequential for both the unadjusted and regression-adjusted confidence intervals that we proposed above, (b) the choice of the scaling $h$ might be crucial, especially for small to moderately large sample size.

\begin{remark}[Choice of $\nu$]\label{rem:nu}
%Recall that 
Our plug-in estimators proposed in  \eqref{Vhat} and \eqref{What} both require a choice of the tuning parameter $\nu\in (0,1/2)$.
We have observed through extensive simulations that the confidence intervals are not very sensitive to the choice of $\nu$.  In fact, it seems that the choice $\nu=1/2$ also works in practice, although our proofs of~\cref{plug-in.Ib,plug-in.Jb2} precludes that choice. 

 For an illustration, consider the simulation Setting 4(a) (misspecified model with Gaussian errors) as described in~\cref{subsec:simulationsApp}. We calculate the plug-in estimator of the asymptotic variances of $\rsnu$ and $\rsna$ for different values of $\nu$, ranging from $1/4$ to $1/2$. Then fixing one particular value of $\nu$ as a reference, say $\nu=1/3$, we calculate the ratios of these estimates for for $\nu = 1/3$ to those for the other values of $\nu$. Repeating this $100$ times, we draw the boxplots of the ratios thus obtained, which is provided in~\cref{ratios}. 
 The plots suggest that the estimates of the asymptotic variance do not differ much with the choice of $\nu$. We observed similar phenomenon for the other simulation settings as well.  Furthermore, in our simulations we also varied $\nu$ while calculating the confidence intervals for $\tau$ proposed above, and found that the results do not vary considerably with the choice of $\nu$. Thus the choice of $\nu$ seems to be unimportant, and in practice one may use any number in $(0,1/2)$ as the value of $\nu$. %; in our simulation studies in~\cref{sec:simulationsApp} we use $\nu = 1/3$.

\begin{figure}[!ht]
    \centering
    \includegraphics[width = \textwidth]{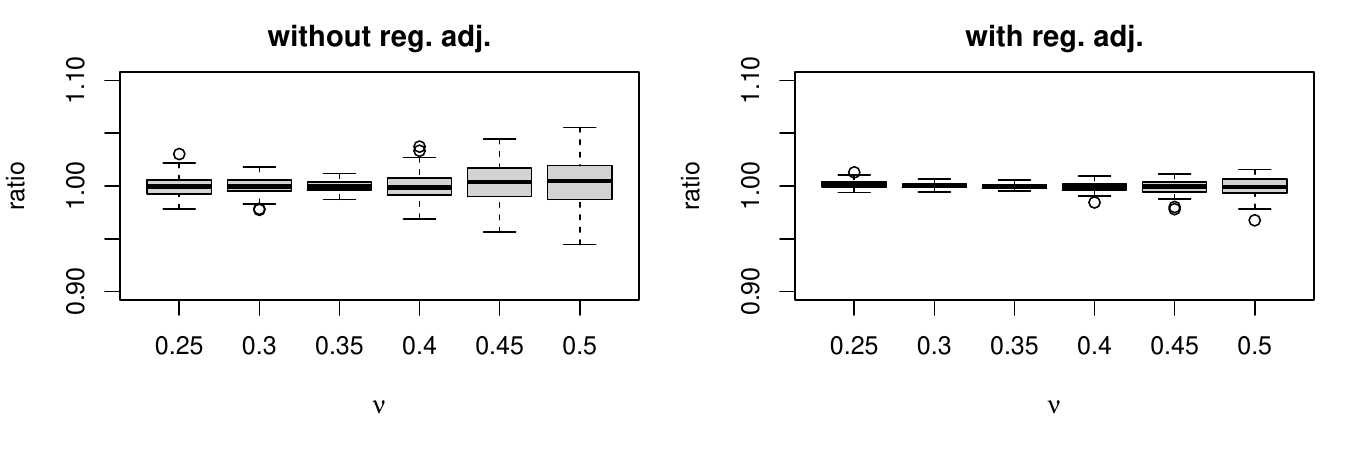}
    \caption{Boxplots of the ratios of the plug-in estimators of asymptotic variances of $\rsnu$ (left panel) and $\rsna$ (right panel) for $\nu = 1/3$ and various other values of $\nu$, in simulation Setting 4(a) (see~\cref{subsec:simulationsApp} for details on the simulation setup).}
    \label{ratios}
\end{figure}
\end{remark}

\begin{remark}[Scaling of the outcome variable]\label{choice:h}
Since \eqref{Vhat} (resp.~\eqref{What}) involves counts of differences in the outcomes (resp.~the residuals) in shrinking intervals, we might not get a reasonable estimate in small samples if the outcomes (resp.~residuals) are typically large in magnitude relative to the sample size. Thus, the variance estimators proposed in \eqref{Vhat} and \eqref{What} are particularly useful in large samples, and can exhibit instability (that stems from the tuning parameter $h$) in small samples. Given this concern, we recommend the practitioners to use our theoretical results to make an informed choice on which estimator they should use for their data analyses, and then obtain permutation-based confidence intervals for the chosen estimator. We do, however, note that the asymptotic intervals proposed in \cref{CI1,cor:CIunad,adjustedCI} provide nominal coverage for all of our numerical experiments.
\end{remark}

%Although, we have stated~\cref{plug-in.Jb} for general $\sqrt{N}$-consistent estimators, we will use $\widehat{\tau}_N\equiv \rsna$ in our numerical experiments (see~\cref{sec:simulationsApp}), which seems to work well. 

\begin{comment}
{
Finally, an important advantage of the plug-in principle is that it readily extends to the regression adjusted case as well. For space constraints, we move this discussion to \cref{sec:pluginregadj}, and only mention here that we can get consistent estimators of $\mathcal{J}_b$ by replacing $\widehat{b}_{N,j}$ in \eqref{Vhat} with $\widehat{\widetilde{b}}_{N,j}$, defined as: 
\begin{equation*}
\widehat{\widetilde{b}}_{N,j} :=  \widehat{b}_{N,j} -  \vec{p}_{N,j}^\top \widehat{\vec{b}}_N =  Y_{N,j} - \widehat{\tau}_N Z_{N,j} -  \vec{p}_{N,j}^\top(\vec{Y} - \widehat{\tau}_N \vec{Z}),\ j=1,2,\ldots ,N,
\end{equation*}
where $\widehat{\vec{b}}_N:=(\widehat{b}_{N,1},\ldots ,\widehat{b}_{N,N})$, and $\widehat{\tau}_N = \rsna$. This yields asymptotically valid confidence intervals for $\tau$ based on $\rsna$, see \cref{adjustedCI} in  \cref{sec:pluginregadj} for details.
}
\end{comment}

\section{Numerical experiments}\label{sec:simulationsApp}

{In this section, we illustrate the empirical performance of Rosenbaum's rank-based estimators $\rsnu$ and $\rsna$, comparing them with various other estimators of the constant treatment effect $\tau$, namely: The difference-in-means estimator $\linu$ (see \eqref{dm}), the difference-in-medians estimator $\taumed$ (see \eqref{def:taumed}), the $\alpha$-trimmed and $\alpha$-Winsorized difference-in-means estimators (with $\alpha=0.1$; see \citet{athey2021} for definitions), the estimators $\taueif$ and $\widehat\tau_{\mathrm{waq}}$ studied by \citet{athey2021}, the simple regression-adjusted estimator $\lina$ \citep{Freedman08a,Freedman08b}, and the estimator $\lini$ proposed by \citet{Lin13}. We present one real data analysis and extensive synthetic experiments.}

\subsection{Progresa data}\label{sec:progresa}
We analyze the data from a randomized trial that aims to study the electoral impact of \emph{Progresa}, Mexico’s conditional cash transfer program (CCT program) \citep{de2013, imai2018}. In this experiment, eligible villages were randomly assigned to receive the program either 21 months (early Progresa, ``treated'') or 6 months (late Progresa, ``control") before the 2000 Mexican presidential election. The data contains $417$ observations, each representing a precinct, and for each precinct we have information about its treatment status, outcomes of interest, socioeconomic indicators, and other precinct characteristics.

\begin{table}
\caption{\label{progresa.0} Different estimates of the effect of early Progresa on PRI support rates with the corresponding standard errors, 95\% (approximate) confidence intervals and their lengths. For the first four estimators and Rosenbaum's estimators we approximate standard error using $10^5$ Bootstrap resamples. For $\lina$ and $\lini$ we use sandwich estimator of variance, and for $\taueif$ and $\widehat\tau_{\mathrm{waq}}$ we directly use the software implemented by \citet{athey2021}.\vspace{2mm}}
\centering
\begin{tabular}{lcccc}
  %\hline
 Estimator & Estimate & Std.~Error & 95\% CI & CI Length \\
  \hline\noalign{\vskip 0.5ex}
 Difference-in-Means ($\linu$)                & $3.62$ & $1.92$ & $[-0.14,\, 7.39]$ & $7.53$ \\
 Difference-in-Medians ($\taumed$)            & $0.69$ & $1.56$ & $[-2.37,\, 3.75]$ & $6.12$ \\
 $0.1$-trimmed Difference-in-Means            & $2.00$ & $1.68$ & $[-1.29,\, 5.28]$ & $6.57$ \\
 $0.1$-Winsorized Difference-in-Means         & $2.59$ & $1.72$ & $[-0.78,\, 5.96]$ & $6.74$ \\
 $\taueif$ \citep{athey2021}                  & $1.95$ & $1.72$ & $[-1.42,\, 5.33]$ & $6.75$ \\
 $\widehat\tau_{\mathrm{waq}}$ \citep{athey2021} & $1.31$ & $1.71$ & $[-2.04,\, 4.65]$ & $6.69$ \\
 Rosenbaum's estimator ($\rsnu$)              & $1.83$ & $1.67$ & $[-1.43,\, 5.10]$ & $6.53$ \\
 OLS adjusted $\lina$                         & $3.67$ & $1.70$ & $[0.34,\, 7.00]$  & $6.67$ \\
 $\lini$ \citep{Lin13}                        & $4.21$ & $1.99$ & $[0.32,\, 8.11]$  & $7.78$ \\
 Rosenbaum's adjusted ($\rsna$)               & $2.19$ & $1.38$ & $[-0.53,\, 4.90]$ & $5.43$ \\
%  \hline
\end{tabular}
\end{table}

% \begin{tabular}{lcccc}
%   %\hline
%  Estimator & Estimate & Std.~Error & 95\% CI & CI Length \\
%   \hline\noalign{\vskip 0.5ex}
%  Difference-in-Means ($\linu$)                & $3.622$ & $1.922$ & $[-0.1446,\, 7.3900]$ & $7.535$ \\
%  Difference-in-Medians ($\taumed$)            & $0.690$ & $1.562$ & $[-2.3720,\, 3.7520]$ & $6.124$ \\
%  $0.1$-trimmed Difference-in-Means            & $1.998$ & $1.676$ & $[-1.2866,\, 5.2820]$ & $6.569$ \\
%  $0.1$-Winsorized Difference-in-Means         & $2.590$ & $1.718$ & $[-0.7775,\, 5.9580]$ & $6.736$ \\
%  $\taueif$ \citep{athey2021}                  & $1.953$ & $1.721$ & $[-1.4203,\, 5.3260]$ & $6.746$ \\
%  $\widehat\tau_{\mathrm{waq}}$ \citep{athey2021} & $1.306$ & $1.708$ & $[-2.0418,\, 4.6530]$ & $6.695$ \\
%  Rosenbaum's estimator ($\rsnu$)              & $1.834$ & $1.666$ & $[-1.4322,\, 5.1000]$ & $6.532$ \\
%  OLS adjusted $\lina$                         & $3.671$ & $1.701$ & $[0.3378,\, 7.0040]$  & $6.666$ \\
%  $\lini$ \citep{Lin13}                        & $4.214$ & $1.985$ & $[0.3224,\, 8.1050]$  & $7.783$ \\
%  Rosenbaum's adjusted ($\rsna$)               & $2.185$ & $1.384$ & $[-0.5289,\, 4.8980]$ & $5.427$ \\
% %  \hline
% \end{tabular}

Following \citet{de2013}, we use the support rates for the incumbent party as shares of the eligible voting population in the 2000 election (\emph{pri2000s}) as the outcome, and regress it on the following covariates: the average poverty level in a precinct (\emph{avgpoverty}), the total precinct population in 1994 (\emph{pobtot1994}), the total number of voters who turned out in the previous election (\emph{votos1994}), and the total number of votes cast for each of the three main competing parties in the previous election (\emph{pri1994}, \emph{pan1994}, and \emph{prd1994}), and  \emph{villages} (as factors).

The design-based limiting distributions of some of the estimators we consider here are not known in the literature, so we use permutation-based confidence intervals (with $B=10^4$ permutations) for an apples-to-apples comparison.
 We report in \cref{progresa.0} the point estimates and approximate 95\% confidence intervals %CI's
 for all the methods we consider.
 Note that the mean-based estimators $\linu$, $\lina$ and $\lini$ suggest that the CCT program led to a significant positive increase in support for the incumbent party, while all the other methods suggest the treatment effect is not significant. We also note that Rosenbaum's regression-adjusted method provides the shortest confidence interval among the above methods. %and \ref{progresa.2}, with the outcome variable being support and turnout (for PRI, in the 2000 election) respectively.
\subsection{Additional simulations}\label{subsec:simulationsApp}

In this section, we present comprehensive numerical experiments to compare the empirical performance of Rosenbaum's rank-based estimators $\rsnu$ and
$\rsna$ with various competing estimators of the constant treatment effect $\tau$, as mentioned above. These experiments extend the preview provided in \cref{sec:simulations} of the main paper, where we reported the performance of these estimators in Settings 1 and 2 with $m/N=0.5$. We consider the following simulation settings:
\begin{enumerate}
    \item Setting 1: Generate $x_i$ i.i.d.~from $\text{Unif}(-4, 4)$, and set $a_i = 3x_i + \eps_i$ and $b_i = a_i - 2$.
    \item Setting 2: Same as Setting 1, except that we contaminate $5\%$ of the outcomes with an arbitrary large value $M$; here we use $M=500$.
    \item Setting 3: Generate $x_i$ i.i.d.~from $\text{Unif}(-4, 4)$ and $v_i$  i.i.d.~from $\mathrm{Exp}(1/10)$ independently, and set $a_i = v_i +\eps_i$ and $b_i=a_i-2$.
    \item Setting 4: Generate $u_i$ i.i.d.~from $\text{Unif}(-4, 4)$, and set $x_i=e^{u_i}$, $a_i = \frac{1}{4}(x_i+\sqrt{x_i}) + \eps_i$ and $b_i = a_i - 2$.
\end{enumerate}
 Setting 1 represents a scenario where the linear model is correctly specified.
 Setting 2 is designed to the illustrate the trade-off between robustness and efficiency by examining how contamination affects the confidence intervals. Setting 3 considers the case where the outcomes are independent of the covariates (we thus expect no efficiency gain by regression adjustment).
Setting 4 is a slight
modification of an example of \citet[Setting 4.2.3]{Lin13}, and helps us illustrate the performance
of the estimators under misspecification of the regression model.

 In each of these settings, the i.i.d.~noise $\epsilon_i$'s are drawn from: (a) standard normal, (b) Cauchy, and (c) Student's $t_3$ distribution. The sample size is $N=1000$ and we report here the results for both the balanced design $m/N = 0.50$ and an unbalanced case $m/N = 0.25$ (the case $m/N=0.75$ is symmetric). The design-based limiting distributions of some of the estimators we consider here are not known in the literature, so we use permutation-based confidence intervals for an apples-to-apples comparison. In each replication, we use $B=10^4$ permutations to approximate the standard errors of the estimators.
We repeat the above experiments for $1000$ replications, and report the coverage and average lengths of the approximate 95\% permutation-based confidence intervals for each estimator.

\begin{table}[!ht]
\centering
\caption{\label{siml_res1.app} Empirical coverage and average length of approximate 95\% permutation CIs for simulation Setting 1 (correctly specified linear model), for $m/N=0.5$.\vspace{1mm}}
\setlength{\tabcolsep}{5pt}
\renewcommand{\arraystretch}{1}
\begin{tabular}{l|cc|cc|cc}
\toprule
\multirow{2}{*}{Estimator} &
\multicolumn{2}{c|}{(a) Gaussian errors} &
\multicolumn{2}{c|}{(b) Cauchy errors} &
\multicolumn{2}{c}{(c) $t_3$ errors} \\
& coverage & length & coverage & length & coverage & length \\
\midrule
Difference-in-Means ($\linu$)                      & $94.8\%$  & $1.75$ & $94.2\%$  & $24.26$ & $94.8\%$  & $1.79$ \\
Difference-in-Medians ($\taumed$)                  & $93.1\%$  & $2.90$ & $94.9\%$  &  $3.07$ & $94.7\%$  & $2.91$ \\
$0.1$-trimmed Diff-in-Means                        & $94.1\%$  & $2.03$ & $95.2\%$  &  $2.19$ & $95.2\%$  & $2.04$ \\
$0.1$-Winsorized Diff-in-Means                     & $94.2\%$  & $1.82$ & $95.6\%$  &  $2.01$ & $94.9\%$  & $1.84$ \\
$\taueif$ \citep{athey2021}                        & $96.3\%$  & $1.30$ & $96.2\%$  &  $1.74$ & $96.7\%$  & $1.39$ \\
$\widehat\tau_{\mathrm{waq}}$ \citep{athey2021}    & $95.7\%$  & $1.32$ & $94.5\%$  &  $2.50$ & $96.2\%$  & $1.44$ \\
Rosenbaum's estimator ($\rsnu$)                    & $94.9\%$  & $1.84$ & $95.6\%$  &  $2.14$ & $95.4\%$  & $1.88$ \\
OLS adjusted ($\lina$)                             & $99.0\%$  & $0.35$ & $94.2\%$  & $24.30$ & $96.5\%$  & $0.49$ \\
Lin's estimator ($\lini$)                          & $99.0\%$  & $0.35$ & $94.2\%$  & $24.30$ & $96.5\%$  & $0.49$ \\
Rosenbaum's adjusted ($\rsna$)                     & $99.4\%$  & $0.37$ & $97.2\%$  &  $1.34$ & $98.7\%$  & $0.43$ \\
\bottomrule
\end{tabular}
\end{table}

\begin{table}[!ht]
\centering
\caption{\label{siml_res1.5} Empirical coverage and average length of approximate 95\% permutation CIs for simulation Setting 1 (correctly specified linear model), for $m/N=0.25$.\vspace{1mm}}
\setlength{\tabcolsep}{5pt}
\renewcommand{\arraystretch}{1}
\begin{tabular}{l|cc|cc|cc}
\toprule
\multirow{2}{*}{Estimator} &
\multicolumn{2}{c|}{(a) Gaussian errors} &
\multicolumn{2}{c|}{(b) Cauchy errors} &
\multicolumn{2}{c}{(c) $t_3$ errors} \\
& coverage & length & coverage & length & coverage & length \\
\midrule
Difference-in-Means ($\linu$)                      & $94.4\%$ & $2.02$  & $94.3\%$ & $41.38$ & $94.0\%$ & $2.06$ \\
Difference-in-Medians ($\taumed$)                  & $93.5\%$ & $3.36$  & $94.8\%$ & $3.55$  & $94.7\%$ & $3.37$ \\
$0.1$-trimmed Diff-in-Means                        & $94.3\%$ & $2.34$  & $95.2\%$ & $2.53$  & $94.5\%$ & $2.36$ \\
$0.1$-Winsorized Diff-in-Means                     & $94.5\%$ & $2.10$  & $95.2\%$ & $2.32$  & $94.6\%$ & $2.13$ \\
$\taueif$ \citep{athey2021}                        & $95.5\%$ & $1.44$  & $94.5\%$ & $1.97$  & $95.9\%$ & $1.56$ \\
$\widehat\tau_{\mathrm{waq}}$ \citep{athey2021}    & $95.1\%$ & $1.48$  & $94.2\%$ & $3.65$  & $95.4\%$ & $1.65$ \\
Rosenbaum's estimator ($\rsnu$)                    & $94.5\%$ & $2.12$  & $95.4\%$ & $2.46$  & $94.4\%$ & $2.16$ \\
OLS adjusted ($\lina$)                             & $98.5\%$ & $0.38$  & $94.7\%$ & $41.31$ & $97.2\%$ & $0.55$ \\
Lin's estimator ($\lini$)                          & $98.7\%$ & $0.38$  & $94.3\%$ & $41.76$ & $97.2\%$ & $0.55$ \\
Rosenbaum's adjusted ($\rsna$)                     & $98.4\%$ & $0.38$  & $96.6\%$ & $1.51$  & $98.6\%$ & $0.45$ \\
\bottomrule
\end{tabular}
\end{table}

\begin{table}[!ht]
\centering
\caption{\label{siml_res2.app} Empirical coverage and average length of approximate 95\% permutation CIs for simulation Setting 2 (contamination), for $m/N=0.5$.\vspace{1mm}}
\setlength{\tabcolsep}{5pt}
\renewcommand{\arraystretch}{1}
\begin{tabular}{l|cc|cc|cc}
\toprule
\multirow{2}{*}{Estimator} &
\multicolumn{2}{c|}{(a) Gaussian errors} &
\multicolumn{2}{c|}{(b) Cauchy errors} &
\multicolumn{2}{c}{(c) $t_3$ errors} \\
& coverage & length & coverage & length & coverage & length \\
\midrule
Difference-in-Means ($\linu$)                      & $100.0\%$ & $27.52$ & $100.0\%$ & $43.71$ & $100.0\%$ & $27.51$ \\
Difference-in-Medians ($\taumed$)                  & $94.5\%$  &  $3.04$ & $94.7\%$  &  $3.22$ & $94.9\%$  &  $3.07$ \\
$0.1$-trimmed Diff-in-Means                        & $95.4\%$  &  $2.15$ & $96.7\%$  &  $2.35$ & $96.9\%$  &  $2.16$ \\
$0.1$-Winsorized Diff-in-Means                     & $96.1\%$  &  $1.95$ & $97.9\%$  &  $2.32$ & $97.4\%$  &  $1.98$ \\
$\taueif$ \citep{athey2021}                        & $92.9\%$  &  $1.63$ & $90.2\%$  &  $9.16$ & $92.1\%$  &  $1.68$ \\
$\widehat\tau_{\mathrm{waq}}$ \citep{athey2021}    & $100.0\%$ & $59.17$ & $94.1\%$  & $659.88$ & $100.0\%$ & $56.97$ \\
Rosenbaum's estimator ($\rsnu$)                    & $95.5\%$  &  $1.97$ & $96.4\%$  &  $2.29$ & $96.0\%$  &  $2.01$ \\
OLS adjusted ($\lina$)                             & $100.0\%$ & $27.77$ & $99.9\%$  & $43.85$ & $100.0\%$ & $27.75$ \\
Lin's estimator ($\lini$)                          & $100.0\%$ & $27.77$ & $99.9\%$  & $43.85$ & $100.0\%$ & $27.75$ \\
Rosenbaum's adjusted ($\rsna$)                     & $97.3\%$  &  $0.95$ & $97.4\%$  &  $1.88$ & $97.8\%$  &  $1.01$ \\
\bottomrule
\end{tabular}
\end{table}

\begin{table}[!ht]
\centering
\caption{\label{siml_res2.5} Empirical coverage and average length of approximate 95\% permutation CIs for simulation Setting 2 (contamination), for $m/N=0.25$. \vspace{1mm}}
\setlength{\tabcolsep}{5pt}
\renewcommand{\arraystretch}{1}
\begin{tabular}{l|cc|cc|cc}
\toprule
\multirow{2}{*}{Estimator} &
\multicolumn{2}{c|}{(a) Gaussian errors} &
\multicolumn{2}{c|}{(b) Cauchy errors} &
\multicolumn{2}{c}{(c) $t_3$ errors} \\
& coverage & length & coverage & length & coverage & length \\
\midrule
Difference-in-Means ($\linu$)                      & $100.0\%$ & $30.47$ & $100.0\%$ & $61.43$  & $100.0\%$ & $30.51$ \\
Difference-in-Medians ($\taumed$)                  & $93.6\%$  & $3.52$  & $94.6\%$  & $3.73$   & $95.8\%$  & $3.55$ \\
$0.1$-trimmed Diff-in-Means                        & $95.8\%$  & $2.47$  & $96.6\%$  & $2.73$   & $96.4\%$  & $2.50$ \\
$0.1$-Winsorized Diff-in-Means                     & $96.4\%$  & $2.25$  & $97.9\%$  & $2.80$   & $96.5\%$  & $2.29$ \\
$\taueif$ \citep{athey2021}                        & $91.3\%$  & $1.83$  & $80.6\%$  & $13.26$  & $92.7\%$  & $1.93$ \\
$\widehat\tau_{\mathrm{waq}}$ \citep{athey2021}    & $100.0\%$ & $82.46$ & $90.9\%$  & $1658.81$& $100.0\%$ & $79.15$ \\
Rosenbaum's estimator ($\rsnu$)                    & $95.3\%$  & $2.28$  & $96.2\%$  & $2.65$   & $96.0\%$  & $2.32$ \\
OLS adjusted ($\lina$)                             & $100.0\%$ & $29.97$ & $100.0\%$ & $61.49$  & $100.0\%$ & $30.02$ \\
Lin's estimator ($\lini$)                          & $100.0\%$ & $30.07$ & $100.0\%$ & $61.86$  & $100.0\%$ & $30.12$ \\
Rosenbaum's adjusted ($\rsna$)                     & $97.8\%$  & $1.09$  & $97.5\%$  & $2.16$   & $97.4\%$  & $1.14$ \\
\bottomrule
\end{tabular}
\end{table}

\begin{table}[!ht]
\centering
\caption{\label{siml_res3} Empirical coverage and average length of approximate 95\% permutation CIs for simulation Setting 3 (covariates are uninformative), for $m/N=0.5$. \vspace{1mm}}
\setlength{\tabcolsep}{5pt}
\renewcommand{\arraystretch}{1}
\begin{tabular}{l|cc|cc|cc}
\toprule
\multirow{2}{*}{Estimator} &
\multicolumn{2}{c|}{(a) Gaussian errors} &
\multicolumn{2}{c|}{(b) Cauchy errors} &
\multicolumn{2}{c}{(c) $t_3$ errors} \\
& coverage & length & coverage & length & coverage & length \\
\midrule
Difference-in-Means ($\linu$)                      & $94.4\%$ & $2.49$ & $96.0\%$ & $25.97$ & $95.3\%$ & $2.52$ \\
Difference-in-Medians ($\taumed$)                  & $94.6\%$ & $2.43$ & $94.0\%$ &  $2.66$ & $94.1\%$ & $2.44$ \\
$0.1$-trimmed Diff-in-Means                        & $95.1\%$ & $2.25$ & $95.2\%$ &  $2.51$ & $95.2\%$ & $2.27$ \\
$0.1$-Winsorized Diff-in-Means                     & $94.8\%$ & $2.37$ & $95.3\%$ &  $2.70$ & $95.2\%$ & $2.40$ \\
$\taueif$ \citep{athey2021}                        & $96.6\%$ & $1.21$ & $95.4\%$ &  $1.66$ & $96.0\%$ & $1.31$ \\
$\widehat\tau_{\mathrm{waq}}$ \citep{athey2021}    & $96.4\%$ & $1.25$ & $95.1\%$ &  $2.46$ & $95.3\%$ & $1.38$ \\
Rosenbaum's estimator ($\rsnu$)                    & $96.3\%$ & $1.67$ & $95.7\%$ &  $2.04$ & $96.7\%$ & $1.73$ \\
OLS adjusted ($\lina$)                             & $94.4\%$ & $2.50$ & $96.1\%$ & $26.51$ & $95.3\%$ & $2.52$ \\
Lin's estimator ($\lini$)                          & $94.4\%$ & $2.50$ & $96.1\%$ & $26.51$ & $95.3\%$ & $2.52$ \\
Rosenbaum's adjusted ($\rsna$)                     & $96.1\%$ & $1.68$ & $95.6\%$ &  $3.02$ & $96.5\%$ & $1.74$ \\
\bottomrule
\end{tabular}
\end{table}

\begin{table}[!ht]
\centering
\caption{\label{siml_res3.5} Empirical coverage and average length of approximate 95\% permutation CIs for simulation Setting 3 (covariates are uninformative), for $m/N=0.25$. \vspace{1mm}}
\setlength{\tabcolsep}{5pt}
\renewcommand{\arraystretch}{1}
\begin{tabular}{l|cc|cc|cc}
\toprule
\multirow{2}{*}{Estimator} &
\multicolumn{2}{c|}{(a) Gaussian errors} &
\multicolumn{2}{c|}{(b) Cauchy errors} &
\multicolumn{2}{c}{(c) $t_3$ errors} \\
& coverage & length & coverage & length & coverage & length \\
\midrule
Difference-in-Means ($\linu$)                      & $95.3\%$  & $2.88$ & $95.6\%$  & $44.30$ & $94.3\%$  & $2.90$ \\
Difference-in-Medians ($\taumed$)                  & $93.3\%$  & $2.82$ & $94.5\%$  &  $3.09$ & $93.9\%$  & $2.83$ \\
$0.1$-trimmed Diff-in-Means                        & $94.8\%$  & $2.59$ & $93.9\%$  &  $2.89$ & $94.2\%$  & $2.62$ \\
$0.1$-Winsorized Diff-in-Means                     & $94.9\%$  & $2.74$ & $93.9\%$  &  $3.10$ & $94.4\%$  & $2.76$ \\
$\taueif$ \citep{athey2021}                        & $95.6\%$  & $1.33$ & $94.7\%$  &  $1.87$ & $94.8\%$  & $1.47$ \\
$\widehat\tau_{\mathrm{waq}}$ \citep{athey2021}    & $96.1\%$  & $1.39$ & $95.4\%$  &  $3.46$ & $94.8\%$  & $1.56$ \\
Rosenbaum's estimator ($\rsnu$)                    & $95.4\%$  & $1.91$ & $94.8\%$  &  $2.34$ & $95.0\%$  & $1.98$ \\
OLS adjusted ($\lina$)                             & $95.2\%$  & $2.88$ & $95.5\%$  & $44.56$ & $94.1\%$  & $2.90$ \\
Lin's estimator ($\lini$)                          & $95.2\%$  & $2.88$ & $95.8\%$  & $45.30$ & $94.1\%$  & $2.91$ \\
Rosenbaum's adjusted ($\rsna$)                     & $95.1\%$  & $1.92$ & $94.7\%$  &  $3.48$ & $94.6\%$  & $1.99$ \\
\bottomrule
\end{tabular}
\end{table}

\begin{table}[!ht]
\centering
\caption{\label{siml_res4} Empirical coverage and average length of approximate 95\% permutation CIs for simulation Setting 4 (model misspecification), for $m/N=0.5$.\vspace{1mm}}
\setlength{\tabcolsep}{5pt}
\renewcommand{\arraystretch}{1}
\begin{tabular}{l|cc|cc|cc}
\toprule
\multirow{2}{*}{Estimator} &
\multicolumn{2}{c|}{(a) Gaussian errors} &
\multicolumn{2}{c|}{(b) Cauchy errors} &
\multicolumn{2}{c}{(c) $t_3$ errors} \\
& coverage & length & coverage & length & coverage & length \\
\midrule
Difference-in-Means ($\linu$)                      & $96.3\%$ & $0.91$ & $94.2\%$ & $24.05$ & $95.9\%$ & $0.98$ \\
Difference-in-Medians ($\taumed$)                  & $98.0\%$ & $0.68$ & $96.1\%$ & $0.84$  & $96.5\%$ & $0.71$ \\
$0.1$-trimmed Diff-in-Means                        & $95.6\%$ & $0.83$ & $95.7\%$ & $1.14$  & $96.4\%$ & $0.88$ \\
$0.1$-Winsorized Diff-in-Means                     & $95.1\%$ & $1.00$ & $95.3\%$ & $1.33$  & $95.8\%$ & $1.02$ \\
$\taueif$ \citep{athey2021}                        & $99.6\%$ & $0.52$ & $98.1\%$ & $0.76$  & $98.8\%$ & $0.60$ \\
$\widehat\tau_{\mathrm{waq}}$ \citep{athey2021}    & $99.4\%$ & $0.52$ & $96.1\%$ & $0.86$  & $97.8\%$ & $0.61$ \\
Rosenbaum's estimator ($\rsnu$)                    & $98.6\%$ & $0.62$ & $97.8\%$ & $0.91$  & $98.5\%$ & $0.69$ \\
OLS adjusted ($\lina$)                             & $99.2\%$ & $0.35$ & $94.7\%$ & $24.24$ & $96.6\%$ & $0.49$ \\
Lin's estimator ($\lini$)                          & $99.2\%$ & $0.35$ & $94.7\%$ & $24.23$ & $96.6\%$ & $0.49$ \\
Rosenbaum's adjusted ($\rsna$)                     & $99.2\%$ & $0.37$ & $97.7\%$ & $0.74$  & $98.5\%$ & $0.43$ \\
\bottomrule
\end{tabular}
\end{table}

\begin{table}[!ht]
\centering
\caption{\label{siml_res4.5} Empirical coverage and average length of approximate 95\% permutation CIs for simulation Setting 4 (model misspecification), for $m/N=0.25$. \vspace{1mm}}
\setlength{\tabcolsep}{5pt}
\renewcommand{\arraystretch}{1}
\begin{tabular}{l|cc|cc|cc}
\toprule
\multirow{2}{*}{Estimator} &
\multicolumn{2}{c|}{(a) Gaussian errors} &
\multicolumn{2}{c|}{(b) Cauchy errors} &
\multicolumn{2}{c}{(c) $t_3$ errors} \\
& coverage & length & coverage & length & coverage & length \\
\midrule
Difference-in-Means ($\linu$)                      & $95.0\%$ & $1.04$ & $94.0\%$ & $41.22$ & $95.9\%$ & $1.12$ \\
Difference-in-Medians ($\taumed$)                  & $98.2\%$ & $0.82$ & $98.2\%$ & $1.12$  & $98.9\%$ & $0.92$ \\
$0.1$-trimmed Diff-in-Means                        & $95.7\%$ & $0.95$ & $96.1\%$ & $1.30$  & $96.0\%$ & $1.00$ \\
$0.1$-Winsorized Diff-in-Means                     & $95.0\%$ & $1.14$ & $94.7\%$ & $1.54$  & $95.5\%$ & $1.17$ \\
$\taueif$ \citep{athey2021}                        & $97.9\%$ & $0.53$ & $97.4\%$ & $0.83$  & $98.4\%$ & $0.64$ \\
$\widehat\tau_{\mathrm{waq}}$ \citep{athey2021}    & $97.8\%$ & $0.54$ & $96.6\%$ & $1.04$  & $97.9\%$ & $0.66$ \\
Rosenbaum's estimator ($\rsnu$)                    & $97.6\%$ & $0.67$ & $97.6\%$ & $1.01$  & $97.9\%$ & $0.75$ \\
OLS adjusted ($\lina$)                             & $98.4\%$ & $0.38$ & $94.3\%$ & $41.29$ & $97.0\%$ & $0.55$ \\
Lin's estimator ($\lini$)                          & $98.5\%$ & $0.38$ & $94.2\%$ & $41.62$ & $96.9\%$ & $0.55$ \\
Rosenbaum's adjusted ($\rsna$)                     & $98.4\%$ & $0.39$ & $97.6\%$ & $0.81$  & $98.9\%$ & $0.46$ \\
\bottomrule
\end{tabular}
\end{table}

We summarize the results  in~\cref{siml_res1.app,siml_res1.5,siml_res2.app,siml_res2.5,siml_res3,siml_res3.5,siml_res4,siml_res4.5}. \cref{siml_res1.app,siml_res2.app} reproduce the results from the main paper for $m/N = 0.5$, while the remaining tables provide additional results for different treatment proportions and settings.
We list some observations below.

\begin{itemize}[leftmargin = *]\setlength\itemsep{0mm}

 \item \textit{Efficiency under light tails:}
    When the errors are Gaussian, the confidence intervals constructed using Rosenbaum's estimator $\rsnu$ (resp.~$\rsna$) have lengths comparable to or only slightly longer than those based on the mean-based estimators $\linu$ (resp.~$\lina$, and $\lini$), while maintaining nominal coverage. The efficiency losses for rank-based estimators remain within the small margin predicted by \cref{efflb,efflb-adj-case}.

\item \textit{Robustness against heavy tails:} 
When errors are heavy-tailed (Cauchy and $t_3$), the confidence intervals based on mean-based estimators $\linu$, $\lina$, and $\lini$ become excessively wide. In contrast, Rosenbaum's rank-based confidence intervals maintain reasonable lengths and provide nominal coverage across all settings.

\item \textit{Robustness against contamination:} 
Setting 2 demonstrates the behavior of different estimators under contamination. The mean-based estimators produce extremely wide confidence intervals, reflecting their sensitivity to outliers. The estimators proposed by \citet{athey2021}, which produces the shortest intervals among unadjusted estimators in all other settings, perform substantially poor under contamination, with undercoverage (in Setting 2b) and inflated interval lengths. In contrast, Rosenbaum's estimators maintain stable performance even under contamination, as expected from \cref{ABP}. The performance of $0.1$-trimmed/Winsorized difference-in-means and the difference-in-medians are also stable, since the contamination level is less than their respective asymptotic breakdown points.

\item \textit{Regression adjustment improves precision:} 
Across Settings 1, 2, and 4, where covariates contain information about the outcomes, the regression-adjusted estimator $\rsna$ produces substantially shorter confidence intervals than the unadjusted estimator $\rsnu$, validating our theoretical result in~\cref{Jb.geq.Ib}. For example, in Setting 1a with $m/N=0.5$, the average interval length decreases from 1.84 ($\rsnu$) to 0.37 ($\rsna$), a reduction of approximately 80\%.
%For Setting 3, where the responses are drawn independently of the covariates, the confidence intervals %CI's
%based on $\rsnu$ and $\rsna$ have similar lengths, as we expect.

\item \textit{Behavior when covariates are uninformative:} 
In Setting 3, where the potential outcomes are independent of the covariates, the confidence intervals based on $\rsnu$ and $\rsna$ have similar lengths under Gaussian and $t_3$ errors, as expected. Interestingly, under Cauchy errors (Setting 3b), $\rsna$ produces wider intervals than $\rsnu$. This phenomenon occurs because the regression adjustment, while not harmful in terms of coverage, introduces additional variability when the model does not provide any signal and the tails are extremely heavy. Nevertheless, even in this worst-case scenario, $\rsna$ maintains nominal coverage and outperforms all mean-based estimators.

\item \textit{Robustness against model misspecification:} 
In Setting 4, the regression model is misspecified. Yet, the performance of the rank-based confidence intervals %CI's
 is quite satisfactory, 
illustrating that the estimator $\rsna$ can be robust against model misspecification.  Note also that in this setting, despite the regression model being misspecified, the confidence intervals based on $\rsna$ are shorter than those constructed using $\rsnu$.

\item \textit{Overall performance:} 
Across all twelve sub-settings and two treatment proportions, Rosenbaum's regression-adjusted estimator $\rsna$ demonstrates the most favorable balance of efficiency and robustness. It achieves near-optimal efficiency under light tails while substantially outperforming mean-based methods under heavy tails and contamination, maintaining this advantage even under model misspecification.

\end{itemize}

\section{Main technical tools}\label{sec: main tools}

This section outlines the key proof techniques used to establish our main results in~\cref{sec:w/oAdj,sec:wAdj} of the main paper and~\cref{plug-in est}. We organize our discussion around three main challenges: (1) establishing asymptotic normality under local alternatives, (2)  handling the additional complexities introduced by regression adjustment, and (3) constructing consistent variance estimators.

\paragraph{Combinatorial CLT.}
We begin with~\cref{propo-null1} which states the null distribution of the Wilcoxon rank-sum (WRS) test statistic. Under the null $\tau= \tau_0$, the WRS test statistic can be written as: 
$$t_N\equiv t_N^0=m + \sum_{j=1}^N  Z_{N,j}  \sum_{i=1, i\neq j}^N   \ind{b_{N,i} \le b_{N,j}}.$$
Crucially, $t_N^0$ is a weighted linear combination of the treatment indicators $Z_{N,j}$, where the weights are \emph{deterministic}. This structure allows us to apply Hoeffding's combinatorial CLT \citep{Hoeffding51} to establish the asymptotic normality in~\cref{propo-null1}; see \cref{proof:propo-null1} for details. 

\paragraph{Local asymptotics via decomposition.} The case of local alternatives, i.e., where $\tau= \tau_N:= \tau_0-h/\sqrt{N}$, is more challenging. This is the subject of~\cref{propo-asy-tauN} of the main paper. Under $\tau=\tau_N$, the WRS test statistic can be written as:
\begin{equation}\label{tnh}
    t_N\equiv t_N^h=m + \sum_{j=1}^N  Z_{N,j}  \sum_{i=1, i\neq j}^N   \ind{b_{N,i}-hN^{-1/2} Z_{N,i} \le b_{N,j}-hN^{-1/2} Z_{N,j}}.
\end{equation}
Unlike the null case, the weights in this linear combination are now \emph{random} and depend on the treatment assignment itself. This precludes direct application of combinatorial CLTs. Moreover, our fixed design setting lacks the structure needed for standard contiguity arguments à la Le Cam to establish local asymptotic normality (see~\citet[Chapter 7]{vandervaart}).

Our key insight is a novel decomposition that separates the WRS test statistic under local alternatives into a tractable null component plus a shift that contributes to the bias. This decomposition, formalized in the following proposition (see \cref{proof:propo-decomp} for a proof), forms the basis of our asymptotic theory.

%When $h\neq 0$, $t_N^h$ is a weighted linear combination of $Z_{N,j}$'s with random weights depending on the $Z_{N,j}$'s themselves (unlike what we observed for $t_N^0$). Therefore classical combinatorial CLTs do not apply. {Further, as we are in the fixed design setting, the models under $\tau=\tau_0$ and $\tau = \tau_N$ lack the desirable forms for us to use the standard contiguity arguments due to Le Cam for deriving local asymptotic normality (see~\citet[Chapter 7]{vandervaart} for details).} We tackle these challenges with the aid of a special decomposition which is the subject of the following proposition .

% KEY DECOMPOSITION
\begin{proposition}\label{propo-decomp}  Let $t_N = t_N(\vec{Z}_N, \vec{Y}_N -\tau_0\vec{Z}_N)$  be the Wilcoxon rank-sum statistic based on any random treatment assignment, where $t(\cdot,\cdot)$ is as in \eqref{WRS_N} of the main paper. Fix $h\in\R$ and let $\tau_N = \tau_0 - hN^{-1/2}$.  It holds under $\tau=\tau_N$ that
\begin{equation*}
t_N \equiv t_N^h \dd t_N^0 - \gamma_N^h,\quad \mbox{where}\quad \gamma_N^h:=\sum_{j=1}^N  \sum_{i=1, i\neq j}^N  (1 -Z_{N,i}) Z_{N,j}I_{h,N} (b_{N,j} - b_{N,i})
\end{equation*}
where $I_{h,N}$ is defined in \eqref{Ian} of the main paper.
\end{proposition}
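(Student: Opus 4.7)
The plan is to prove that the claimed decomposition holds as a \emph{pointwise} identity in $\vec{Z}_N$, from which distributional equality follows trivially. The key observation is that under $\tau=\tau_N$, each observed outcome satisfies $Y_{N,j}-\tau_0 Z_{N,j} = b_{N,j} + Z_{N,j}(\tau_N-\tau_0) = b_{N,j} - hN^{-1/2}Z_{N,j}$, so the adjusted responses appearing in the WRS statistic are the shifted values $\{b_{N,j} - hN^{-1/2}Z_{N,j}\}_{j=1}^N$. Writing out the up-ranks according to \eqref{upranks}, the statistic becomes
\begin{equation*}
t_N^h = m + \sum_{j=1}^N Z_{N,j}\sum_{i\neq j}\mathbf{1}\{b_{N,i}-hN^{-1/2}Z_{N,i} \le b_{N,j}-hN^{-1/2}Z_{N,j}\},
\end{equation*}
where the $m$ comes from the diagonal $i=j$ contributions.

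Next, I would split each off-diagonal term according to the four possibilities for $(Z_{N,i},Z_{N,j})\in\{0,1\}^2$. Since every surviving summand has $Z_{N,j}=1$, only two cases contribute: (i) when $Z_{N,i}=1$ as well, the two $-hN^{-1/2}$ terms cancel and the indicator reduces to $\mathbf{1}\{b_{N,i}\le b_{N,j}\}$; (ii) when $Z_{N,i}=0$, the indicator becomes $\mathbf{1}\{b_{N,i}\le b_{N,j}-hN^{-1/2}\}$. Doing the same expansion for $t_N^0$ (which is $t_N$ evaluated at $h=0$ on the same $\vec{Z}_N$), the case $Z_{N,i}=Z_{N,j}=1$ contributes identically to both $t_N^h$ and $t_N^0$ and thus cancels in the difference, leaving
\begin{equation*}
t_N^0 - t_N^h \;=\; \sum_{j:Z_{N,j}=1}\sum_{i:Z_{N,i}=0}\Bigl[\mathbf{1}\{b_{N,i}\le b_{N,j}\} - \mathbf{1}\{b_{N,i}\le b_{N,j}-hN^{-1/2}\}\Bigr].
\end{equation*}

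The final step is to verify that the bracketed quantity equals $I_{h,N}(b_{N,j}-b_{N,i})$ in both sign regimes of $h$. For $h\ge 0$, the difference of indicators is $\mathbf{1}\{b_{N,j}-hN^{-1/2}<b_{N,i}\le b_{N,j}\} = \mathbf{1}\{0\le b_{N,j}-b_{N,i}<hN^{-1/2}\}$, which is exactly $I_{h,N}(b_{N,j}-b_{N,i})$ by \eqref{Ian}. For $h<0$, the threshold shifts the other way, and the difference equals $-\mathbf{1}\{hN^{-1/2}\le b_{N,j}-b_{N,i}<0\}$, again matching $I_{h,N}(b_{N,j}-b_{N,i})$. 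Rewriting the double sum over treated--control pairs using the weights $(1-Z_{N,i})Z_{N,j}$ yields $t_N^0 - t_N^h = \gamma_N^h$ identically in $\vec{Z}_N$, which is stronger than the claimed distributional equality.

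There is no real obstacle here beyond bookkeeping; the entire proof is a careful case analysis on $(Z_{N,i},Z_{N,j})$ combined with the correct sign convention in the definition of $I_{h,N}$. The only mild subtlety is remembering that $t_N^0$ and $\gamma_N^h$ are defined on the \emph{same} $\vec{Z}_N$ as $t_N^h$, so the decomposition is exact rather than merely in law; this is what makes the result useful downstream, since the asymptotic analysis of $t_N^h$ in \cref{propo-asy-tauN} can then be reduced to separately analyzing $t_N^0$ (a Hoeffding-type combinatorial CLT) and $\gamma_N^h$ (which is controlled by \cref{ACjs}).
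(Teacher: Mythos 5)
Your proof is correct and follows essentially the same route as the paper's: rewrite the adjusted responses under $\tau=\tau_N$ as $b_{N,j}-hN^{-1/2}Z_{N,j}$, do the case analysis on $(Z_{N,i},Z_{N,j})$ so that treated--treated pairs cancel the shift and treated--control pairs produce the correction term, and match the sign conventions of $I_{h,N}$. Your observation that the decomposition is actually a pointwise identity in $\vec{Z}_N$ (not merely in law) is a valid and slightly sharper reading of the same computation, so nothing further is needed.
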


This decomposition allows us to complete the proof of~\cref{propo-asy-tauN} via Slutsky's theorem by establishing two separate results:
 $$N^{-3/2}\left(t_N^0-\frac{m(N+1)}{2}\right)\overset{d}{\longrightarrow}\mathcal{N}\left(0,\frac{\lambda(1-\lambda)}{12}\right)\quad \mbox{and}\quad N^{-3/2}\gamma_N^h\overset{P}{\longrightarrow}-h\lambda(1-\lambda)\mathcal{I}_b,$$
 where $\mathcal{I}_b$ is defined in~\cref{ACjs} of the main paper. The asymptotic fluctuation of $t_N^0$ in the above display follows directly from~\cref{propo-null1} as discussed above. To prove the limit of $\gamma_N^h$, we use the standard Markov's inequality coupled with the convergence condition in~\cref{ACjs}; see \cref{SN} for details. 
%Our next important result in the unadjusted setting is the efficiency lower bound for $\rsnu$ when compared with $\widehat{\tau}_{\mathrm{dm}}$, as presented in~\cref{efflb} of the main paper. Fortunately, the heavy lifting is done by~\cref{tau_HL.CLT},~\citet[Theorem 5]{Li2017}, and~\citet[Theorem 1]{HL56}; see~\cref{proof:efflb} for details. %Moreover, it follows from~\citet[Corollary 5.1]{Chernozhukov2014} that the key assumption for~\cref{efflb} holds when the potential outcomes $\{b_{N,j}:1\le j\le N\}$ are realizations from a distribution with density $f$.

\paragraph{Classical tools due to \citet{HL56,HL63}.} Once the asymptotic distribution of $t_N\equiv t_N^h$ is obtained, the same for $\rsnu$ as stated in~\cref{tau_HL.CLT}, follows as a direct consequence of a classical argument due to Hodges and Lehmann~\citep{HL63}; see also~\cref{HLtypeCLT}.

Our efficiency lower bound comparing $\rsnu$ with the difference-in-means estimator $\linu$ (\cref{efflb}) builds on the asymptotic normality result in~\cref{tau_HL.CLT}, combined with the classical efficiency theory of \citet[Theorem 1]{HL56} and the finite-population perspective of \citet[Theorem 5]{Li2017}. The key technical step involves verifying that under~\cref{empirical-assump-1}, the quantity $\mathcal{I}_b$ in~\cref{ACjs} equals $\int_\R f^2(x)\,dx$, where $f$ is the limiting density of the control potential outcomes; see~\cref{proof:efflb} for details.

\begin{comment}
\bk{Our next important result in the regression unadjusted setting is the efficiency lower bound for $\rsnu$ when compared with $\widehat{\tau}_{\mathrm{dm}}$, as presented in~\cref{efflb}. The main challenge in its proof lies in showing the following: If $b_{N,1},\ldots ,b_{N,N}$ are realizations from a random sample from a probability density $f_{}(\cdot)$, then~\cref{ACjs} holds with $\mathcal{I}_b=\int f_{}^2(x)\,dx$. In other words, by~\eqref{eq:ACjs}, it suffices to show that
$$\mathcal{T}_{N,h}:=N^{-3/2}\sum_{i,j=1}^N I_{h,N}(b_{N,j}-b_{N,i})\overset{a.s.}{\longrightarrow}h\int f_{}^2(x)\,dx.$$
By an application of the Borel-Cantelli Lemma, it suffices to show that, given any $\epsilon>0$, the following holds:
$$\sum_{N\geq 1} \mathbb{P}(|\mathcal{T}_{N,h}-\mathbb{E}(\mathcal{T}_{N,h})|\geq \epsilon)<\infty\qquad \mbox{and} \qquad \mathbb{E}(\mathcal{T}_{N,h})\to h\int f_{}^2(x)\,dx.$$
We prove the first conclusion in the above display by appealing to a generalized version of the Efron-Stein inequality (see~\citet[Theorem 2]{GES05}; also see~\cref{GenES}), while the second conclusion is proved by a careful application of the mean value theorem.}
\end{comment}

\paragraph{Handling global dependence of indicators.}
We now move on to our results from~\cref{sec:wAdj} of the main paper. The proof of~\cref{tNadj.finalCLT} requires substantially different techniques compared to its unadjusted counterpart~\cref{propo-asy-tauN}. To understand why,
%The major technical difference between the adjusted and unadjusted cases is reflected in the proof of~\cref{tNadj.finalCLT}, which requires substantially different techniques compared to its unadjusted counterpart in~\cref{propo-asy-tauN}. To understand this difference, 
{recall from \eqref{WRS_defn3} of the main paper} that under $\tau=\tau_0 - hN^{-1/2}$, the regression adjusted WRS test statistic is given by:
\begin{align*}
t_{N,\textrm{adj}}\equiv t_{N,\textrm{adj}}^h=\sum_{i,j=1}^N Z_{N,j}\mathbf{1}\left(hN^{-1/2}(\vec{p}_{N,i}-\vec{p}_{N,j})^{\top}\vec{Z}_N\leq \tilde{b}_{N,j}-\tilde{b}_{N,i}-hN^{-1/2}\right) ,
\end{align*}
where $\vec{p}_{N,j}$ is the $j$-th row of the projection matrix $\vec{P}_{\vec{X}_N}$ that projects onto the column space of $\vec{X}_N$, and the residuals are defined as $\tilde{b}_{N,i}:=Y_{N,i}-\tau_0 Z_{N,i}-\vec{p}_{N,i}^{\top}\vec{b}_N$. The critical obstacle is that, for every pair $(i,j)$, the indicators in $t_{N,\textrm{adj}}$ depend on the entire random vector $(Z_{N,1},\ldots , Z_{N,N})$, unlike in \eqref{tnh} where for every pair $(i,j)$, the indicators depend only on $(Z_{N,i},Z_{N,j})$. This global dependence structure makes standard combinatorial calculations intractable. %As a result, even studying the mean and variance of $t_{N,\textrm{adj}}^h$ raises significantly more technical challenges than those of $t_N^h$. 
We circumvent this by leveraging properties of the projection matrix and showing that the term $(\vec{p}_{N,i}-\vec{p}_{N,j})^{\top}\vec{Z}_N$ appearing in the indicators is asymptotically negligible, allowing us to replace $t_{N,\textrm{adj}}^h$ with a simpler statistic. To be more specific, we define a quantity $\widetilde{A}_{N,\textrm{adj}}^h$ in \eqref{tNadj.decomp2}, which is essentially given by
$$\widetilde{A}_{N,\textrm{adj}}^h=\sum_{i,j=1}^N Z_{N,j}\mathbf{1}\left(0\leq \tilde{b}_{N,j}-\tilde{b}_{N,i}-hN^{-1/2}\right)+ \frac{m(m+1)}{2} +o_p(1).$$
Our main contribution lies in showing that 
$$N^{-3/2}|t_{N,\textrm{adj}}^h-\widetilde{A}_{N,\textrm{adj}}^h|=o_p(1).$$
The proof of this step is technical and we refer the reader to~\cref{theta_ij,gamma_ij,D_NQ_N} for details. Once the above display is established, it only remains to obtain the asymptotic distribution of $N^{-3/2}\widetilde{A}_{N,\textrm{adj}}$. Note that the indicators in $\widetilde{A}_{N,\textrm{adj}}^h$ are deterministic and do not depend on $\vec{Z}_N$. This puts us in a similar situation to the analysis of $t_N^0$, which we have already discussed above.
Once~\cref{tNadj.finalCLT} is established, the proof of~\cref{tau_adj.final.CLT} follows using ideas similar to the proof of~\cref{tau_HL.CLT}.

\paragraph{Fourier analysis.} We establish the efficiency gain by regression adjustment, as in~\cref{Jb.geq.Ib}, by leveraging tools from Fourier analysis. In view of \cref{tau_HL.CLT,tau_adj.final.CLT}, this requires showing that $\mathcal{I}_b \leq \mathcal{J}_b$, where these quantities are defined in \cref{ACjs,AssumpB2}, respectively. To derive this inequality, we impose regularity conditions on the empirical distribution of the potential outcomes (\cref{empirical-assump-1}) and the residuals (\cref{empirical-assump-2}). It follows under~\cref{empirical-assump-1} that $\mathcal{I}_b=\int_Rf^2(x)\,dx$ and under~\cref{empirical-assump-2} that $\mathcal{J}_b = \int_\R g^2(x)\,dx$, where $f$ and $g$ are the limiting densities of the empirical distributions on the control potential outcomes and the residuals, respectively. Moreover, \cref{empirical-assump-2} posits an asymptotic independence-like condition on the joint empirical distribution of the residuals and the fitted values, which implies that the densities $f$ and $g$ are related via a convolution. 
To finish the proof, we apply the Parseval–Plancherel identity from Fourier analysis; see \cref{proof:Jb.geq.Ib} for details.

\paragraph{Gaussian mollifiers.} We characterize the asymptotic variances of Rosenbaum's estimators $\rsnu$ and $\rsna$ using the unknown limiting quantities $\mathcal{I}_b$ and $\mathcal{J}_b$, respectively. A natural question is how to estimate these quantities. Here, we explain the core idea in the unadjusted case, and refer the reader to \cref{plug-in est} for details. By~\cref{AssumpIb.new}, $\mathcal{I}_b$ is the limit of $u_N$ where  
\begin{multline*}
u_N:=N^{-3/2}\sum_{j=1}^N \sum_{i=1}^N \left(\mathbf{1}( Y_{N,j}-Y_{N,i}-\tau (Z_{N,j}-Z_{N,i})\geq 0)\right.\\ -\left.\mathbf{1}(Y_{N,j}-Y_{N,i}-\tau (Z_{N,j}-Z_{N,i})\geq N^{-\nu})\right).
\end{multline*}
The natural plug-in estimator can then be constructed by replacing $\tau$ by $\rsnu$ above, to get a plug-in analogue of $u_N$, say $\widehat{u}_N^{\mathrm{R}}$. While $\widehat{u}_N^{\mathrm{R}}$ should intuitively be consistent, there are two technical issues: (a) indicators are not continuous functions, and (b) the sets in the indicators used above are not fixed, but effectively shrinking with $N$ (they are of the form $[0,N^{-\nu}]$). These technical issues preclude the possibility of using a continuous mapping type argument to establish consistency. We circumvent this by approximating the  indicators in $u_N$ with \emph{Gaussian mollifiers} with an appropriately chosen variance parameter. In particular, setting $\Phi(\cdot)$ as the standard Gaussian distribution  function and $\sigma_{N,\nu}:=(\log{N})/N^{\nu}$, define
\begin{multline*}
\hat{u}_N^{\mathrm{G}}:=\frac{1}{N^{3/2}}\sum_{i,j=1}^N  \left(\Phi(\sigma_{N,\nu}^{-1} (Y_{N,j}-Y_{N,i}-\rsnu (Z_{N,j}-Z_{N,i}))\right.\\ -\left.\Phi(\sigma_{N,\nu}^{-1}(Y_{N,j}-Y_{N,i}-\rsnu (Z_{N,j}-Z_{N,i}-N^{-\nu}))\right).
\end{multline*}
Using some careful truncation arguments, we then show that $\hat{u}_N^{\mathrm{G}}-\hat{u}_N^{\mathrm{R}}\overset{P}{\longrightarrow}0$ and $\hat{u}_N^{\mathrm{G}}-u_N\overset{P}{\longrightarrow}0$, which when combined, yield $\hat{u}_N^{\mathrm{R}}\overset{P}{\longrightarrow}\mathcal{I}_b$.

\section{Proofs of main results in the unadjusted case}\label{main:proofs:unadj}

In the sequel, if $\{x_n\}_{n\ge1}$ and $\{y_n\}_{n\ge1}$ are two sequence of positive real numbers, we write $x_n \sim y_n$ to denote that $\lim_{n\to\infty} x_n/y_n = 1$, and $x_n \lesssim y_n$ to denote that $x_n \leq C y_n$ holds for all sufficiently large $n$, for some constant $C>0$.

%===============================================================
% PROOF OF THE PROPOSITION THAT PROVIDES THE NULL CLT FOR T_N (WITHOUT COV-ADJ) 
%===============================================================

\subsection{Proof of~\texorpdfstring{\cref{propo-null1}}{Proposition 2.1}}\label{proof:propo-null1}

\begin{proof}%[Proof of~\cref{propo-null1}]
We begin by recalling that 
$t_N := \widehat{\vec{q}}_N^\top \vec{Z}_N$, where $$\widehat{q}_{N,j} := \sum_{i=1}^N \ind{Y_{N,i} - \tau_0 Z_{N,i} \le Y_{N,j} - \tau_0 Z_{N,j}}\ ,\ 1\le j\le N.$$
It follows under $\tau=\tau_0$ that
\begin{equation}\label{null-dd} 
t_N \dd \vec{q}_N^\top \vec{Z}_N,\quad q_{N,j} = \sum_{i=1}^N \ind{b_{N,i}  \le b_{N,j} }.
\end{equation} %Thus, $\vec{q}_N$ is the (non-random) vector of ranks of the potential control  outcomes  $\{b_{N,j}: 1\le j\le N\}$. 
Further, under $\tau=\tau_0$,
\begin{equation*}\label{permu}
\vec{q}_N^\top \vec{Z}_N \dd \sum_{i=1}^m q_{N,\Pi_N(i)} = \sum_{i=1}^N c_{N,i}\cdot q_{N,\Pi_N(i)}
\end{equation*}
where $c_{N,i} = \ind{i\le m}$ and $\Pi_N$ is a random permutation of $\{1,2,\dots,N\}$. %It should be noted here that the distribution of the quantity in the RHS of \eqref{permu} does not depend on $\tau_0$.
Note, $\bar{c}_N := N^{-1} \sum_{i=1}^N c_{N,i} = m/N$, and thus \begin{align*}
 \lim_{N\to\infty} N\cdot \frac{\max_{1\le i\le N} (c_{N,i} - \bar{c}_N)^2}{\sum_{i=1}^N (c_{N,i} - \bar{c}_N)^2}
&= \lim_{N\to\infty}\frac{\max\{\bar{c}_N^2, (1 - \bar{c}_N)^2\}}{\frac{m}{N}(1- \bar{c}_N)^2 + \left(1 - \frac{m}{N}\right) \bar{c}_N^2}
\\
&= \frac{\max\{\lambda^2, (1 -\lambda)^2\}}{\lambda(1-\lambda)}.
\end{align*}
Consequently,
\begin{align*}
 \lim_{N\to\infty} N\cdot \frac{\max_{1\le i\le N} (c_{N,i} - \bar{c}_N)^2}{\sum_{i=1}^N (c_{N,i} - \bar{c}_N)^2}\cdot\frac{\max_{1\le i\le N} (q_{N,i} - \overline{q}_N)^2}{\sum_{i=1}^N (q_{N,i} - \overline{q}_N)^2}
= 0.
\end{align*}
In view of the above, Hoeffding's combinatorial CLT \citet[Theorem 4]{Hoeffding51} implies that under $\tau=\tau_0,$ 
\begin{equation*}
\frac{t_N - \E( t_N)}{\sqrt{\mathrm{Var}(t_N)}} \dto \mathcal{N}(0,1)\text{ as $N\to\infty$.}
\end{equation*}
Now invoking~\cref{null-var}, we have  $\var_{\tau_0}(t_N) \sim  \frac{\lambda(1-\lambda)}{12}N^{3}$, which completes the proof.
\end{proof}

\subsection{Proof of~\texorpdfstring{\cref{ABP}}{Theorem 2.2}}\label{proof:ABP}

\begin{proof}%[Proof of~\cref{ABP}]
 It follows from the algebraic manipulations provided in \citet[Section 4]{HL63} that Rosenbaum's estimator $\rsnu$ as defined in~\eqref{eq:Tau-Unadj} of the main paper, is given by\begin{equation*}%\label{median-pairwise-diff}
 \rsnu = \median\{Y_i - Y_j: Z_i = 1, Z_j = 0, 1\le i, j\le N\}.
 \end{equation*}
Now suppose we change $k$ responses arbitrarily. If $k_1=k_1(z_1,\dots,z_N)$ denotes how many of those units fall in the treatment group, the number of pairs $(i, j)$ with $z_i=1$ and $z_j=0$ that remain unchanged is given by $I(k_1)=(m-k_1)(N-m-(k-k_1))$. The estimator $\rsnu$ remains bounded as long as $I(k_1)\ge m(N-m)/2$, and $\rsnu$ can be arbitrarily large (in absolute value) for some treatment assignment if $I(k_1)<m(N-m)/2$ holds for some treatment assignment.

Since the function $I(x)=(m-x)(N-m-(k-x))$ is concave, the smallest possible $I(k_1)$ is obtained at one of the endpoints $k_1\in\{0,k\}$. By placing all contaminated units in one arm (either treated or control), we deduce that $$\min_{k_1} I(k_1)=\min\{m(N-m-k), (m-k)(N-m)\}=m(N-m)-k \max\{m, N-m\},$$ and note that this minimum is achieved provided $k<\min\{m, N-m\}$. We can therefore continue from the last paragraph to deduce that $\rsnu$ can be arbitrarily large in absolute value for some treatment assignment iff
\begin{align*}
    m(N-m) - k\max\{m, N-m\} <\frac{1}{2}m(N-m)\\
    \iff k > \frac{m(N-m)}{\max\{m,N-m\}}=\frac{1}{2}\min\{m,N-m\}.
\end{align*}
Therefore,
$$\mathrm{BP}(\rsnu)=\frac{1}{N}\left(\left\lfloor\frac{m(N-m)}{2}\right\rfloor+1\right).$$
Now letting $N\to\infty$ and invoking \cref{randomization}, we conclude that
\begin{equation}
    \label{abp-tauR}
    \mathrm{ABP}(\rsnu)=\frac{1}{2}\min\{\lambda,1-\lambda\}.
\end{equation}
Next, we derive the ABP of the weighted average quantile estimators. Let $\nu$ be any finite signed (Borel) measure on $[0,1]$ with $\nu([0,1])=1$. Consider the weighted average quantile estimator $\widehat{\tau}_{waq}(\nu)$ defined in \eqref{def:waq} of the main paper which we reproduce here for convenience
\begin{equation*}
        \widehat\tau_{\mathrm{waq}}(\nu)= \sum_{i=1}^m  \nu\left(\left[\frac{i-1}{m},\frac{i}{m}\right]\right)a_{(i)} - \sum_{i=1}^{N-m}\nu\left(\left[\frac{i-1}{N-m},\frac{i}{N-m}\right]\right)b_{(i)}.
\end{equation*}
We will show that, under \cref{randomization},
\begin{equation}
    \label{abp-of-waq}
    \mathrm{ABP}(\widehat{\tau}_{waq}(\nu))=\min\{\alpha_{-}(\nu),\alpha_{+}(\nu)\}\min\{\lambda,1-\lambda\},
\end{equation}
    where %$\lambda\in(0,1)$ is the limiting proportion of treated units as in \cref{randomization}, and  $\alpha_{\pm}(\nu)$ are defined as
    $$\alpha_{-}(\nu):=\sup\{\alpha:\nu([0,s])=0\,\forall\, s\le \alpha\},\quad \alpha_{+}(\nu):=\sup\{\alpha:\nu([1-s,1])=0\,\forall\, s\le\alpha\}.$$
    To show this, note that increasing each of the largest $k$ order statistics in the treated group by $M_1$ changes their contribution in $\widehat{\tau}_{\mathrm{waq}}(\nu)$ by $M_1\cdot \nu([1-k/m,1])$. Align the sign of $M_1$ to that of $\nu([1-k/m,1])$. We can align the sign of $M_1$ to that of $\nu([1-k/m,1])$, so that this product is positive. 
    Similarly, decreasing each of the smallest $k$ order statistics in the treated group by $-M_2$ changes their contribution in $\widehat{\tau}_{\mathrm{waq}}(\nu)$ by $-M_2\cdot \nu([0,k/m])$. Align the sign of $M_2$ to that of $\nu([0,k/m])$ so that this product is negative. These change in tail masses do not make $\widehat{\tau}_{\mathrm{waq}}(\nu)$ arbitrarily large in magnitude as long as the contaminated fraction is less than $\alpha(\nu):=\min\{\alpha_{-}(\nu),\alpha_+(\nu)\}$, and the smallest $k$ that alters this is given by $k^*_1 =\lfloor\alpha(\nu)m\rfloor+1$.
% all of above is for treated group only.
    
     Similarly, the smallest $k$ such that altering $k$ outcomes in the control group pushes $\widehat{\tau}_{\mathrm{waq}}(\nu)$ to take arbitrarily large values in magnitude is given by $k_0^* = \lfloor\alpha(\nu)(N-m)\rfloor+1$. We therefore conclude that
     $$\mathrm{BP}(\widehat{\tau}_{\mathrm{waq}}(\nu))=\frac{\min\{k_0^*,k_1^*\}}{N}=\min\left\{\frac{\lfloor\alpha(\nu)m\rfloor+1}{N},\frac{\lfloor\alpha(\nu)(N-m)\rfloor+1}{N}\right\}.$$
     As $N\to\infty$ such that $m/N\to\lambda\in(0,1)$, the above quantity converges to $\alpha(\nu)\min\{\lambda,1-\lambda\}$. 
    Finally, note that $$\alpha_{-}(\nu)+\alpha_{+}(\nu)>1\implies \nu([0,1])\le \nu([0, \alpha_{-}(\nu)])+\nu([1-\alpha_{+}(\nu), 1])=0,$$ which contradicts $\nu([0,1])=1$. Therefore, $\alpha_{-}(\nu)+\alpha_{+}(\nu)\le 1$, which implies that
 $\alpha(\nu)=\min\{\alpha_{-}(\nu),\alpha_{+}(\nu)\}\le 1/2$. This, combined with \eqref{abp-tauR} and \eqref{abp-of-waq}, finishes the proof.
\end{proof}

%
%===============================================================
% PROOF OF THE PROPOSITION THAT PROVIDES THE CLT FOR T_N (WITHOUT COV-ADJ) UNDER CONTIGUOUS ALTERNATIVES
%===============================================================
%

\subsection{Proof of~\texorpdfstring{\cref{propo-asy-tauN}}{Theorem 2.3}}\label{proof:propo-asy-tauN}

\begin{proof}%[Proof of~\cref{propo-asy-tauN}] 
We start with the decomposition stated in~\cref{propo-decomp}. Under $\tau=\tau_N,$ it holds that
 \begin{equation*}
t_N  \dd \underbrace{m + \sum_{j=1}^N  Z_{N,j}  \sum_{i=1, i\neq j}^N   \ind{b_{N,i} \le b_{N,j}}}_{\text{call this }W_N} -  \underbrace{\sum_{j=1}^N  \sum_{i=1, i\neq j}^N  (1 -Z_{N,i}) Z_{N,j} I_{h,N} (b_{N,j} - b_{N,i})}_{\text{call this }S_N},
\end{equation*}
where $I_{h,N}(\cdot)$ is as in \eqref{Ian} of the main paper.
We now observe that the randomization distribution of $W_N$ is identical to the null distribution of $t_N$. Thus, invoking~\cref{propo-null1} and~\cref{lem001}, we deduce that 
$$
N^{-3/2}\left(W_N - \frac{m(N+1)}{2}\right)\dto \mathcal{N}\left(0,\frac{\lambda(1-\lambda)}{12}\right).
$$
On the other hand,~\cref{SN} tells us that $
N^{-3/2} S_N \Pto h\lambda(1-\lambda)\mathcal{I}_b$. Combining these using  Slutsky's theorem we conclude that under $\tau=\tau_N$,
\begin{equation}
\begin{split}
N^{-3/2} \left(t_N - \frac{m(N+1)}{2}\right) &\dd N^{-3/2} \left(W_N - \frac{m(N+1)}{2}\right)   - N^{-3/2} S_N \\ &\dto \mathcal{N}\left(-h\lambda(1-\lambda)\mathcal{I}_b,\frac{\lambda(1-\lambda)}{12}\right),
\end{split}
\end{equation}
which completes the proof. 
\end{proof}

%
%===============================================================
% PROOF OF THEOREM 1 (CLT FOR TAU-HAT IN THE W/O COV-ADJ CASE)
%===============================================================
%

\subsection{Proof of~\texorpdfstring{\cref{tau_HL.CLT}}{Theorem 2.4}}\label{proof:tau_HL.CLT}

\begin{proof}%[Proof of~\cref{tau_HL.CLT}]
Recall from~\cref{propo-asy-tauN} that under $\tau=\tau_0+hN^{-1/2}$ we have
\begin{equation*}
N^{-3/2} \left(t_N - \frac{m(N+1)}{2}\right)  \dto \mathcal{N}\left(-h\lambda(1-\lambda) \mathcal{I}_b,\frac{\lambda(1-\lambda)}{12}\right),
\end{equation*}
where $\mathcal{I}_b$ is defined  in~\cref{ACjs}. 
Now we can invoke~\cref{HLtypeCLT} to complete the proof.
\end{proof}

%===============================================================
% PROOF OF THE EFFICIENCY LOWER BOUND
%===============================================================
%

\subsection{Proof of~\texorpdfstring{\cref{efflb}}{Theorem 2.5}}\label{proof:efflb}

\begin{proof}%[Proof of~\cref{efflb}]
Let $B_{N,1},B_{N,2}$ be two independent samples from $F_{N}$. Then the quantity on the LHS of~\cref{ACjs} reads 
\begin{equation}\label{pf:efflb_1}
\sqrt{N}\, \mathbb{P}\left(0\le B_{N,2}- B_{N,1}<\frac{h}{\sqrt{N}} \right) =\sqrt{N} \, \E \left(F_N\left(B_{N,1}+\frac{h}{\sqrt{N}}\right) -F_N(B_{N,1})\right).
\end{equation}
Since $$\sup_{x\in \R} \left|\sqrt{N}\left(F_N\left(x+\frac{h}{\sqrt{N}}\right) - F_N(x)\right) - h f_{}(x)\right| \to 0,$$  the quantity in \eqref{pf:efflb_1} is equal to  $h\,\E[ f_{}(B_{N,1})] + o(1)$ which tends to $h \int_\R f^2(x)dx$ as $N\to\infty$ by assumption. Thus,~\cref{ACjs} holds with $\mathcal{I}_b=\int_\R f_{}^2(x)dx$, and hence~\cref{tau_HL.CLT} gives \begin{equation*}
\sqrt{N}\left(\rsnu - \tau \right)  \dto \mathcal{N}\left(0, \left(12\lambda(1-\lambda)\right)^{-1}\left(\int_{\R} f_{}^2 (x) dx\right)^{-2}\right).
\end{equation*} (Note that, since $F_N\dto F$, the assumption $\E f(B_{N,1}) \to \int_\R f_{}^2(x)dx<\infty$ holds under mild conditions on $f_{}$; e.g., when $f$ is continuous and $\{f_{}(B_N)\}_{N\ge 1}$ is uniformly integrable where $B_N\sim \mathbb{P}_N $.  For bounded continuous $f_{}$, this is trivial.)

Next, we consider the difference-in-means estimator $\linu$.  When $N^{-1}\sum_{j=1}^N (b_{N,j} - \overline{b}_N)^2\to \sigma_{}^2$ and  $N^{-1}\max_{1\le j\le N} (b_{N,j} - \overline{b}_N)^2\to 0$,
it follows from \citet[Theorem 5]{Li2017} that $$\sqrt{N}(\linu - \tau) \dto \mathcal{N}\left(0, {\sigma_{}^2}/{\lambda(1-\lambda)}\right).$$  % Alternatively, this can be proved using the following observation $$\sqrt{N}(\linu - \tau) =\left( \frac{m}{N}\left(1- \frac{m}{N}\right)\right)^{-1} \frac{1}{\sqrt{N}}\sum_{j=1}^N \left(Z_{N,j} - \frac{m}{N}\right) b_{N,j},$$ and using Hoeffding's Combinatorial CLT. 
Therefore, the asymptotic efficiency of $\rsnu$ relative to $\linu$ {(see~\cref{subsec: eff} of the main paper for definition)} is given by
$$\mathrm{eff}(\rsnu,\widehat{\tau}_{\mathrm{dm}})= 12\sigma_{}^2\left(\int_{\R} f_{}^{2} (x) dx\right)^2.$$
The desired lower bound then follows from \citet[Theorem 1]{HL56}. (Note that since $F_N\dto F$, $\sigma_{}^2$ is indeed the variance of the distribution with density $f_{}$ under mild regularity conditions; e.g., when $\{B_N^2\}_{N\ge 1}$ is uniformly integrable where $B_N\sim \mathbb{P}_N$.)   
\end{proof}

\begin{comment}
\begin{remark} Let $\mathbb{P}_N$ and $\mathbb{P}$ respectively denote the probability measures with distribution functions $F_N$ and $F$. Also, let $\mathbb{P}g := \int g\, d\mathbb{P}$, and define $\mathbb{P}_N g$ similarly.
Note that $$\sup_{x\in \R} \left|\sqrt{N}\left(F_N\left(x+\frac{h}{\sqrt{N}}\right) - F_N(x)\right) - h f_{}(x)\right|=\sup_{g\in\mathcal{F}_N}\left|\sqrt{N}(\mathbb{P}_N - \mathbb{P})g\right|,$$ where $\mathcal{F}_N$ is the class of indicator functions of intervals of the form $[x,x+h/\sqrt{N})$. It follows from~\citet[Corollary 5.1]{Chernozhukov2014} that the above supremum goes to $0$ as $N\to\infty$, for any fixed $h$, when $\{b_{N,j}\}_{\{1\le j\le N\}}$ are realizations from a distribution with density $f_{}$. 
\end{remark}
\end{comment}

%%%%%%%%%%%%%%%%%%%%%%%%%%%%%%%%%%%%%%%%%%%%%%%

\subsection{Proof of~\texorpdfstring{\cref{thm:treathet}}{Theorem 5.1}}\label{proof:thm:treathet}

\begin{proof}
    It follows from the algebraic manipulations provided in \citet[Section 4]{HL63} that our modified estimator $\rsnu$ in~\eqref{eq:Tau-Unadj}, with $t(\cdot,\cdot)$ as in~\eqref{WRS_N} of the main paper, is given by\begin{align*}%\label{median-pairwise-diff}
 \rsnu &= \median\{Y_i - Y_j: Z_i = 1, Z_j = 0, 1\le i, j\le N\} \\ &= \median\{a_i-b_j: Z_i=1, Z_j=0, 1\le i,j\le N\}.
 \end{align*}
 For simplicity of presentation, we restrict to the case of $m$ is odd and $N$ is even. Given $(z_1,z_2)\in \{0,1\}^2$, we define 
    $$\phi(z_1,z_2):=\begin{cases} 1 & \mbox{if}\ z_1=1, z_2=0,\\ \infty & \mbox{otherwise}.\end{cases}$$
    Furthermore, given any $x\in\R$, define 
    $$K_{m,N}(x):=\frac{1}{m(N-m)}\sum_{1\le i\neq j\le N} \mathbf{1}((a_i-b_j)\phi(Z_i,Z_j)\le x).$$
    We will first prove the following two facts.
    \begin{itemize}
        \item For any $x\in\R$, we have: 
        \begin{equation}\label{eq:dec1}
            \mathbb{E}\left[K_{m,N}(x)\right]=\frac{1}{N(N-1)}\sum_{1\le i\neq j\le N} \mathbf{1}(a_i-b_j\le x).
        \end{equation}
        \item There exists a constant $C>0$ (free of $N$) such that 
        \begin{equation}\label{eq:dec2}
            \sup_{x\in \R}\mbox{Var}\left[K_{m,N}(x)\right]\le C N^{-1}.
        \end{equation}
\end{itemize}

        \emph{Proof of \eqref{eq:dec1}.} For $i\neq j$, note that $\mathbb{P}(Z_i=1, Z_j=0)=\frac{m(N-m)}{N(N-1)}$. Therefore, 
        \begin{align*}
            \mathbb{E}[K_{m,N}(x)]&=\frac{1}{m(N-m)}\sum_{1\leq i\neq j\le N}\mathbf{1}(a_i-b_j\le x)\frac{m(N-m)}{N(N-1)}\\ &=\frac{1}{N(N-1)}\sum_{1\le i\neq j\le N}\mathbf{1}(a_i-b_j\le x).
        \end{align*}

\vspace{0.1in}

        \emph{Proof of \eqref{eq:dec2}.} 
        For notational convenience, define $\Theta_{i,j}:=(a_i-b_j)\phi(Z_i,Z_j)$. Note that for $i,j,k,l$ distinct, we have 
        \begin{align*}
        \mathbb{P}(Z_i=Z_k=1,Z_j=Z_l=0)=\frac{m(m-1)(N-m)(N-m-1)}{N(N-1)(N-2)(N-3)}.
        \end{align*}
        We also observe that 
        \begin{align}\label{eq:dec21}
        &\;\;\;\;\frac{1}{m^2(N-m)^2}\sum_{(i,j,k,l)\ \mbox{distinct}} \mathbb{E}[\mathbf{1}(\Theta_{i,j}\le x)\mathbf{1}(\Theta_{k,l}\le x)]\nonumber \\ &=\frac{1}{m^2(N-m)^2}\sum_{(i,j,k,l)\ \mbox{distinct}} \mathbf{1}(a_i-b_j\le x)\mathbf{1}(a_k-b_l\le x)\frac{m(m-1)(N-m)(N-m-1)}{N(N-1)(N-2)(N-3)}\nonumber \\&=\frac{(m-1)(N-m-1)}{m(N-m)N(N-1)(N-2)(N-3)}\left(\sum_{1\le i\neq j\le N}\mathbf{1}(a_i-b_j\le x)\right)^2+O(N^{-1})\nonumber \\&=\left(\mathbb{E}[K_{m,N}(x)]\right)^2+O(N^{-1}).
        \end{align}
        Here all the $O(N^{-1})$ terms are uniform bounds in $x$ which holds as the indicators are all uniformly bounded by $1$ irrespective of $x$. Note that we need $m/N$ to be bounded away from $0$ and $1$ for the above to hold. 

        Next, we observe that 
        \begin{align*}
            &\mathbb{E}[K^2_{m,N}(x)]\\ &=O(N^{-1})+\frac{1}{m^2(N-m)^2}\sum_{(i,j,k,l)\ \mbox{distinct}} \mathbb{E}[\mathbf{1}(\Theta_{i,j}\le x)\mathbf{1}(\Theta_{k,l}\le x)]\\ &=\left(\mathbb{E}[K_{m,N}(x)]\right)^2+O(N^{-1}).
        \end{align*}
        In the last equality above, we have used \eqref{eq:dec21}. This completes the proof of \eqref{eq:dec2}.
\vspace{0.1in}

        We are now in position to complete the proof of \cref{thm:treathet} of the main paper.   
        Recall that $\mbox{med}_N=\median\{a_i-b_j: 1\le i\neq j\le N\}$ (see \eqref{eq:medpair} of the main paper). Given any $\varepsilon>0$, we then have 
        \begin{align*}
            &\;\;\;\;\mathbb{P}(\rsnu-\mbox{med}_N>\varepsilon)\\ &=\mathbb{P}\left(K_{m,N}(\mbox{med}_N+\varepsilon)\le \frac{1}{2}\right)\\ &=\mathbb{P}\bigg(\frac{1}{N(N-1)}\sum_{1\le i\neq j\le N} \mathbf{1}(a_i-b_j\le \mbox{med}_N+\varepsilon)-K_{m,N}(\mbox{med}_N+\varepsilon)\\ &\qquad\qquad \ge \frac{1}{N(N-1)}\sum_{1\le i\neq j\le N} \mathbf{1}(a_i-b_j\le \mbox{med}_N+\varepsilon)-\frac{1}{2}\bigg)\\ &\overset{(a)}{\le} \mathbb{P}\left(\bigg|\mathbb{E}[K_{m,N}(\mbox{med}_N+\varepsilon)]-K_{m,N}(\mbox{med}_N+\varepsilon)\bigg|\ge \frac{\sum_{1\le i\neq j\le N} \mathbf{1}(a_i-b_j\le \mbox{med}_N+\varepsilon)}{N(N-1)}-\frac{1}{2}\right)\\ &\overset{(b)}{\le} \frac{\mbox{Var}(K_{m,N}(\mbox{med}_N+\varepsilon))}{\left(\frac{\sum_{1\le i\neq j\le N} \mathbf{1}(a_i-b_j\le \mbox{med}_N+\varepsilon)}{N(N-1)}-\frac{1}{2}\right)^2}\\ &\overset{(c)}{\le} C \left(\sqrt{N}\left(\kappa_{N}^{(1)}(\varepsilon)-\frac{1}{2}\right)\right)^{-2}\to 0.
        \end{align*}
        Here (a) follows from \eqref{eq:dec1} and the fact that the definition of $\mbox{med}_N$ implies $\frac{\sum_{1\le i\neq j\le N} \mathbf{1}(a_i-b_j\le \mbox{med}_N+\varepsilon)}{N(N-1)}\ge \frac{1}{2}$, (b) follows from Markov's inequality, and (c) is immediate from \eqref{eq:dec2}, the definition of $\kappa_n^{(1)}(\varepsilon)$, and assumption \eqref{eq:separation} of the main paper. 
        A similar calculation shows that $\mathbb{P}(\rsnu-\mbox{med}_N<-\varepsilon)\to 0$. This completes the proof.
\end{proof}

\section{Proofs of main results in the regression-adjusted case}\label{main:proofs:adj}

%===============================================================
% PROOF OF THE THEOREM ON THE NULL CLT OF T_N WITH COV-ADJ
%===============================================================
%

\subsection{Proof of~\texorpdfstring{\cref{propo-null2}}{Theorem 3.1}}\label{proof:propo-null2}

\begin{proof}%[Proof of~\cref{propo-null2}] 
Denote by $\vec{p}_{N,i}$ the $i$-th row of $\vec{P}_{\vec{X}_N}$. Observe that
$$e_{N,i}\le e_{N,j}\iff Y_{N,i} - \tau_0 Z_{N,i} - \vec{p}_{N,i}^\top (\vec{Y}_N - \tau_0 \vec{Z}_N)\le Y_{N,j} - \tau_0 Z_{N,j} - \vec{p}_{N,j}^\top (\vec{Y}_N - \tau_0 \vec{Z}_N).$$ 
So under $\tau=\tau_0,$ we have $t_{N,\mathrm{adj}} \dd \sum_{j=1}^N q_{N,j} Z_{N,j}$, where 
\begin{equation*}% t_N \dd  \sum_{j=1}^N Z_{N,j} \sum_{i=1}^N \ind{ b_{N,i} - \vec{p}_{N,i}^\top \vec{b}_N \leq  b_{N,j} - \vec{p}_{N,j}^\top \vec{b}_N }.
q_{N,j} := \sum_{i=1}^N \ind{b_{N,i} - \vec{p}_{N,i}^\top \vec{b}_N \leq  b_{N,j} - \vec{p}_{N,j}^\top \vec{b}_N },\ j=1,2,\dots,N.
\end{equation*}
Since  the ranks $q_{N,j}$'s are deterministic, the asymptotic normality of $\sum_{j=1}^N q_{N,j} Z_{N,j}$ can be derived in the same way as for the without regression adjustment  case. {A closer look at the proofs of \cref{lem001,propo-null1} in the without regression adjustment  case reveals that the following results hold in this case as well, since \cref{AssumpB2} plays the role of \cref{ACjs}.}
\begin{enumerate}[label = (\alph*)]
\item As $N\to\infty,$ $\sum_{j=1}^N \left(q_{N,j} - \overline{q}_N\right)^2= \frac{1}{12}{N(N^2-1)} + o(N^3)$.
\item  $\lim_{N\to\infty} {\max_{1\le j \le N}  \left(q_{N,j} - \overline{q}_N\right)^2}/{\sum_{j=1}^N  \left(q_{N,j} - \overline{q}_N\right)^2} = 0$.
\item Under $\tau=\tau_0$,
$\var(t_{N,\mathrm{adj}}) \sim \frac{1}{12}\lambda(1-\lambda)N^3$ as $N\to\infty$.
\end{enumerate}
Equipped with (b) above, we apply Hoeffding's combinatorial CLT \citep[Theorem 4]{Hoeffding51} to say that under $\tau=\tau_0$, 
\begin{equation*}
\frac{t_{N,\mathrm{adj}} - \E( t_{N,\mathrm{adj}})}{\sqrt{\var(t_{N,\mathrm{adj}})}} \dto \mathcal{N}(0,1)\text{ as $N\to\infty$.}
\end{equation*}
This, in conjunction with (c) above, completes the proof.
\end{proof}

\subsection{Proof of~\texorpdfstring{\cref{tNadj.finalCLT}}{Theorem 3.2}}\label{proof:tNadj.final.CLT}

\begin{proof}%[Proof of~\cref{tNadj.finalCLT}] 
Recall the notation
$\widetilde{b}_{N,i} = b_{N,i}- \vec{p}_{N,i}^\top \vec{b}_N$ ($1\le i\le N$) from \eqref{btilde} of the main paper.
Under $\tau=\tau_{N},$ we have
$$\vec{Y}_N - \tau_0 \vec{Z}_N \dd \vec{b}_N + (\tau_N - \tau_0) \vec{Z}_N =  \vec{b}_N - \frac{h}{\sqrt{N}} \vec{Z}_N,$$
which implies that
\begin{align*}t_{N,\mathrm{adj}} &\dd  \sum_{j=1}^N Z_{N,j} \sum_{i=1}^N \ind{ \widetilde{b}_{N,i} - \frac{h}{\sqrt{N}} Z_{N,i} + \frac{h}{\sqrt{N}} \vec{p}_{N,i}^\top \vec{Z}_N \leq   \widetilde{b}_{N,j} - \frac{h}{\sqrt{N}} Z_{N,j} + \frac{h}{\sqrt{N}}\vec{p}_{N,j}^\top  \vec{Z}_N} \\
&= \sum_{j=1}^N Z_{N,j}\sum_{i=1}^N (1- Z_{N,i}) \ind{ \widetilde{b}_{N,i} + \frac{h}{\sqrt{N}}\vec{p}_{N,i}^\top \vec{Z}_N \leq  \widetilde{b}_{N,j} + \frac{h}{\sqrt{N}}\vec{p}_{N,j}^\top \vec{Z}_N - \frac{h}{\sqrt{N}} } + C_N,
\end{align*}
where 
\begin{equation}\label{C_N.def}
C_N := \sum_{j=1}^N \sum_{i=1}^N Z_{N,j} Z_{N,i} \ind{ \widetilde{b}_{N,i} + \frac{h}{\sqrt{N}}\vec{p}_{N,i}^\top \vec{Z}_N  \leq  \widetilde{b}_{N,j} + \frac{h}{\sqrt{N}}\vec{p}_{N,j}^\top \vec{Z}_N }.
\end{equation}
Thus under $\tau=\tau_N$ we have 
$t_{N,\mathrm{adj}} \dd \mathbf{I}_N - \mathbf{II}_N + C_N$,
where
\begin{equation}\label{I.II.N}
\mathbf{I}_N := \sum_{j=1}^N  \sum_{i=1}^N Z_{N,j}\xi_{N,i,j},\quad \text{and}\quad \mathbf{II}_N :=  \sum_{j=1}^N \sum_{i=1}^N Z_{N,j} Z_{N,i}\xi_{N,i,j},
\end{equation}
and
\begin{align*}
\xi_{N,i,j}  &:=\ind{ b_{N,i} - \vec{p}_{N,i}^\top (  \vec{b}_N - hN^{-1/2} \vec{Z}_N) \leq    b_{N,j}  - \vec{p}_{N,j}^\top (  \vec{b}_N - hN^{-1/2} \vec{Z}_N)- hN^{-1/2} }\\ 
&= \ind{hN^{-1/2} (\vec{p}_{N,i} - \vec{p}_{N,j})^\top  \vec{Z}_N \leq  \widetilde{b}_{N,j} - \widetilde{b}_{N,i} - hN^{-1/2} }.
\end{align*}
The indicators $\xi_{N,i,j} $ are quite complicated to handle, since it depends on the entire random vector $(Z_{N,1}, \dots , Z_{N,N})$, for every pair $(i, j)$. To circumvent this technical hurdle, we replace $\xi_{N,i,j}$ with $\widetilde{\xi}_{N,i,j}$, where
 $$ \widetilde{\xi}_{N,i,j} := \ind{0 \leq   \widetilde{b}_{N,j} - \widetilde{b}_{N,i} - hN^{-1/2} }.$$
Define 
\begin{equation}\label{I.II.tilde.N}
\widetilde{\mathbf{I}}_N := \sum_{j=1}^N Z_{N,j} \sum_{i=1}^N \widetilde{\xi}_{N,i,j},\quad \text{and}\quad
\widetilde{\mathbf{II}}_N :=  \sum_{j=1}^N \sum_{i=1}^N Z_{N,j} Z_{N,i} \widetilde{\xi}_{N,i,j}.\end{equation}
%Observe that a CLT for $N^{-3/2}\widetilde{\mathbf{I}}_N$ and a WLLN for $N^{-3/2}\widetilde{\mathbf{II}}_N$ can be established using methods similar to the without regression adjustment case. 
Thus we decompose $t_{N,\mathrm{adj}}$ under $\tau=\tau_N$ as
\begin{equation}\label{tNadj.decomp2}
t_{N,\mathrm{adj}} \dd \underbrace{\left(\widetilde{\mathbf{I}}_N - \widetilde{\mathbf{II}}_N+ C_N\right)}_{\widetilde{A}_{N,\textrm{adj}}^h}+ \underbrace{(\mathbf{I}_N - \widetilde{\mathbf{I}}_N)}_{D_n}- \underbrace{(\mathbf{II}_N -  \widetilde{\mathbf{II}}_N )}_{Q_n}.
\end{equation}
We first focus on  $\widetilde{A}_{N,\textrm{adj}}^h$. Note that in \eqref{C_N.def} we are summing up the ranks of the $m$ numbers $$\{\widetilde{b}_{N,j}  +hN^{-1/2} \vec{p}_{N,j}^\top \vec{Z}_N : 1\le j\le N, Z_{N,j} = 1\}$$ within this set. Unfortunately, due to possibility of ties, we cannot directly equate it with $m(m+1)/2$. 
However, invoking~\cref{AssumpB2} we can show that $N^{-3/2}(C_N - m(m+1)/2) = o_p(1)$ as $N\to\infty$. Towards that, observe that for any $\delta>0$,
\begin{align*}
    &\left|C_N - \frac{m(m+1)}{2}\right| = \sum_{j=1}^N\sum_{i=1}^N \ind{\widetilde{b}_{N,j} -\widetilde{b}_{N,i} + hN^{-1/2}(\vec{p}_{N,j}-\vec{p}_{N,i})^\top \vec{Z}_N = 0}\\
    &\leq \sum_{j=1}^N\sum_{i=1}^N \ind{0\le \widetilde{b}_{N,j} -\widetilde{b}_{N,i} + hN^{-1/2}(\vec{p}_{N,j}-\vec{p}_{N,i})^\top \vec{Z}_N < \delta N^{-1/2}}\\
    &\leq \sum_{(i,j): \left|(\vec{p}_{N,j}-\vec{p}_{N,i})^\top \vec{Z}_N\right|<\delta} \ind{0\le \widetilde{b}_{N,j} -\widetilde{b}_{N,i} + hN^{-1/2}(\vec{p}_{N,j}-\vec{p}_{N,i})^\top \vec{Z}_N < \delta N^{-1/2}}\\
    &\qquad\qquad + \left|\left\{(i,j): \left|(\vec{p}_{N,j}-\vec{p}_{N,i})^\top \vec{Z}_N\right|\ge \delta\right\}\right|\\
    &\leq \sum_{(i,j): \left|(\vec{p}_{N,j}-\vec{p}_{N,i})^\top \vec{Z}_N\right|<\delta} \ind{-h\delta N^{-1/2}\le \widetilde{b}_{N,j} -\widetilde{b}_{N,i} < (h+1) \delta N^{-1/2}}\\
    &\qquad\qquad + \left|\left\{(i,j): \left|(\vec{p}_{N,j}-\vec{p}_{N,i})^\top \vec{Z}_N\right|\ge \delta\right\}\right|\\
    &\leq \sum_{j=1}^N\sum_{i=1}^N\ind{ \left| \widetilde{b}_{N,j} -\widetilde{b}_{N,i}\right| \le (h+1) \delta N^{-1/2}} + \delta^{-2}\sum_{j=1}^N\sum_{i=1}^N \left((\vec{p}_{N,j}-\vec{p}_{N,i})^\top \vec{Z}_N\right)^2.\numberthis\label{C_N}
\end{align*}
Now we invoke~\cref{PiPj} to say that 
\begin{equation*}
    N^{-3/2}\sum_{j=1}^N\sum_{i=1}^N \E \left[\left((\vec{p}_{N,j}-\vec{p}_{N,i})^\top \vec{Z}_N\right)^2\right] = O(N^{-1/2})=o(1),
\end{equation*}
for every fixed $\delta >0$.
This, in conjunction with \eqref{C_N} and Markov inequality, tells us that for any fixed $\delta>0$,
$$N^{-3/2}\left|\left\{(i,j): \left|(\vec{p}_{N,j}-\vec{p}_{N,i})^\top \vec{Z}_N\right|\ge \delta\right\}\right|=o_p(1).$$
On the other hand, we can use~\cref{AssumpB3} to conclude that under~\cref{AssumpB2}, 
\begin{equation*}
    \lim_{N\to\infty} N^{-3/2}\sum_{j=1}^N\sum_{i=1}^N\ind{ \left| \widetilde{b}_{N,j} -\widetilde{b}_{N,i}\right| \le (h+1) \delta N^{-1/2}} = 2(h+1)\delta.
\end{equation*}
Appealing to \eqref{C_N} we can now conclude, by letting $N\to\infty$ first, and then $\delta\to 0$, that $$N^{-3/2} (C_N - m(m+1)/2) = o_p(1),\text{ as }N\to\infty.$$
This allows us to write \begin{equation*}\label{anorm2} \widetilde{A}_{N,\textrm{adj}}^h =\sum_{j=1}^N Z_{N,j} \sum_{i=1}^N \ind{0\le \widetilde{b}_{N,j} - \widetilde{b}_{N,i} - hN^{-1/2} } + \frac{m(m+1)}{2} +o_p(1).\end{equation*}
In view of the above display, the asymptotic normality of $\widetilde{A}_{N,\textrm{adj}}^h$  under $\tau=\tau_N$ can be derived in the same manner as we proved the local asymptotic normality of $t_N$ in the without regression adjustment case. To be precise, it follows by mimicking the proof of~\cref{propo-asy-tauN} (in the same manner as we proved~\cref{propo-null2} by mimicking the proof of~\cref{propo-null1}) that under $\tau=\tau_N$, 
\begin{equation}\label{CLT_Atilde}
 N^{-3/2}  \left(\widetilde{A}_{N,\textrm{adj}}^h- \frac{m(N+1)}{2}\right)  \dto N\left(-h\lambda(1-\lambda) \mathcal{J}_b,\frac{\lambda(1-\lambda)}{12}\right),
\end{equation}
where $\mathcal{J}_b$ is as defined in~\cref{AssumpB2}.
The proof of the fact that $D_N$ and $Q_N$ defined in \eqref{tNadj.decomp2} are asymptotically negligible, is split into a couple of lemmas in~\cref{sec:sometechlem}.~\cref{theta_ij,gamma_ij} give upper bounds on the second moments of $D_N$ and $Q_N$, respectively. Then~\cref{D_NQ_N} shows that under~\cref{AssumpB2}, $N^{-3/2}D_N = o_p(1)$ and   $N^{-3/2}Q_N = o_p(1)$ as $N\to\infty$. This, in conjunction with \eqref{CLT_Atilde} completes the proof of~\cref{tNadj.finalCLT}.
\end{proof}

\subsection{Proof of~\texorpdfstring{\cref{tau_adj.final.CLT}}{Theorem 3.3}}\label{proof:tau_adj.final.CLT}

%
%===============================================================
% PROOF OF THE THEOREM THAT GIVES THE CLT FOR TAU-HAT-ADJ 
%===============================================================
%

\begin{proof}%[Proof of~\cref{tau_adj.final.CLT}]  
Recall from~\cref{tNadj.finalCLT} that under $\tau=\tau_0+hN^{-1/2}$, we have
\begin{equation*}
N^{-3/2}  \left(t_{N,\mathrm{adj}} -\frac{m(N+1)}{2}\right)  \dto \mathcal{N}\left(-h\lambda(1-\lambda) \mathcal{J}_b,\frac{\lambda(1-\lambda)}{12}\right),
\end{equation*}
where $\mathcal{J}_b$ is defined  in~\cref{AssumpB2}. 
Now we invoke~\cref{HLtypeCLT} to complete the proof.
\end{proof}

%%%%%%%%%%%%%%%%%%%%%%%%%%%%%%%%%%%%%%%%%%%%%%%%%%%%%%%%%%%%%%%%%%%%%%%%%%%%%%%%%%%%%%%%%%%%%%%%%%%%%%%%%%%%%%%%%%%%%%%%%%%%%%%%%%%%%%%%%%%%%%%%%%%%%%%%%%%%%%%%%%%%%%%%%

%
%===============================================================
% PROOF OF THE THEOREM THAT COMPARES J_C AND I_C
%===============================================================
%

\subsection{Proof of~\texorpdfstring{\cref{Jb.geq.Ib}}{Theorem 3.4}}\label{proof:Jb.geq.Ib}

\begin{proof} We follow the same generic approach as in the proof of \cref{efflb}. There we show that \cref{empirical-assump-1} implies \cref{ACjs} with the quantity $\mathcal{I}_b$ explicitly given by\begin{equation}\label{eq:Ib-special-formula}
    \mathcal{I}_b=\int_{\R} f^2(x)\, dx,
\end{equation} 
and hence~\cref{tau_HL.CLT} gives 
\begin{equation}\label{eq:thm8-eq1}
\sqrt{N}\left(\rsnu - \tau \right)  \dto \mathcal{N}\left(0, \left(12\lambda(1-\lambda)\right)^{-1}\left(\int_{\R} f_{}^2 (x) dx\right)^{-2}\right).
\end{equation} 
A similar argument also applies to the residuals $\widetilde{b}_{N,i}$, as follows. Let $\widetilde{B}_{N,1},\widetilde{B}_{N,2}$ be two independent samples from the empirical distribution $G_{N}$ of the residuals $\widetilde{b}_{N,i}$. Then the quantity on the LHS of~\cref{AssumpB2} reads 
\begin{equation}\label{pf:efflb_cov_adj}
\sqrt{N}\, \mathbb{P}\left(0\le \widetilde{B}_{N,2}- \widetilde{B}_{N,1}<\frac{h}{\sqrt{N}} \right) =\sqrt{N} \, \E \left(G_N\left(\widetilde{B}_{N,1}+\frac{h}{\sqrt{N}}\right) -G_N(B_{N,1})\right).
\end{equation}
Since $$\sup_{x\in \R} \left|\sqrt{N}\left(G_N\left(x+\frac{h}{\sqrt{N}}\right) - G_N(x)\right) - h g_{}(x)\right| =o(1),$$  the quantity in \eqref{pf:efflb_cov_adj} is equal to  $h\,\E [g_{}(\widetilde{B}_{N,1})] + o(1)$ which tends to $h \int_\R g^2(x)dx$ as $N\to\infty$ by assumption. Thus,~\cref{AssumpB2} holds with 
\begin{equation}\label{eq:Jb-special-formula}
   \mathcal{J}_b=\int_\R g^2(x)\,dx,
\end{equation} and hence~\cref{tau_adj.final.CLT} gives \begin{equation}\label{eq:thm8-eq2}
\sqrt{N}\left(\rsna - \tau \right)  \dto \mathcal{N}\left(0, \left(12\lambda(1-\lambda)\right)^{-1}\left(\int_{\R} g_{}^2 (x)\, dx\right)^{-2}\right).
\end{equation} 
Next, denote by $H_N$ the empirical distribution of the predictions $b_{N,i}-\widetilde{b}_{N,i}=\vec{p}_{N,i}^\top\vec{b}_N$ from the linear regression of $\vec{b}_N$ on $\vec{X}_N$, i.e., $$H_N(A):=\frac{1}{N}\sum_{i=1}^N \ind{b_{N,i}-\widetilde{b}_{N,i}\in A}.$$
Also denote by $\Pi_N$ the empirical joint distribution of the residuals $\widetilde{b}_{N,i}$ and the predictions $b_{N,i}-\widetilde{b}_{N,i}$, i.e., $$\Pi_N(A\times B)=\frac{1}{N}\sum_{i=1}^N \ind{\widetilde{b}_{N,i}\in A}\ind{b_{N,i}-\widetilde{b}_{N,i}\in B}.$$
With the above notation, the asymptotic independence-like condition in \cref{empirical-assump-2} now reads 
\begin{equation}\label{eq:Product}
    \sup_{x,\,y\,\in\,\R}\left|\Pi_N((-\infty,x]\times(-\infty, y])-G_N((-\infty, x]) H_N((-\infty, y])\right|={o}(1).
\end{equation}
The tightness of the empirical distribution $F_N$ of the control potential outcomes $b_{N,i}$ (as in \cref{empirical-assump-1}) and the empirical distribution $G_N$ of the residuals $\widetilde{b}_{N,i}$ (as in \cref{empirical-assump-2}) imply that the empirical distribution $H_N$ of $b_{N,i}-\widetilde{b}_{N,i}$ is also tight. 
It then follows that along a subsequence $\{N_k\}$, $H_{N_k}$ converges weakly to a distribution $H$. Next, we use \eqref{eq:Product} to deduce that along the same subsequence $\{N_k\}$, the joint empirical distribution $\Pi_{N_k}$ converges to the produce measure: $$\Pi_{N_k}\dto G\otimes H,$$
where $G$ is the weak limit of $G_N$ as in \cref{empirical-assump-2}.
Since $b_{N,i}=\widetilde{b}_{N,i}+(b_{N,i}-\widetilde{b}_{N,i})$, the above implies that the empirical distribution $F_{N}$ of the control potential outcomes along the subsequence $\{N_k\}$ converges weakly to $G * H$, where $*$ denotes convolution. Combining this with \cref{empirical-assump-1}, we deduce that 
\begin{equation}\label{eq:convolution}
    F = G * H,
\end{equation}
where $F$ is the weak limit of $F_N$ as in \cref{empirical-assump-1}.
Denote by $\phi_F$, $\phi_G$ and $\phi_H$ the characteristic functions of the weak limits $F$, $G$ and $H$, respectively. We can now use the Parseval–Plancherel identity to deduce the following.
\begin{align*}
    \mathcal{I}_b &= \int_{\R} f^2(x)\,dx \tag{from \eqref{eq:Ib-special-formula}}\\
    &= \frac{1}{2\pi}\int_{\R} \left|\phi_F(t)\right|^2 \,dt \tag{Parseval–Plancherel identity} \\
    &= \frac{1}{2\pi}\int_{\R} \left|\phi_G(t)\right|^2\,\left|\phi_H(t)\right|^2 \,dt\tag{using the convolution in \eqref{eq:convolution}}\\
    &\le \frac{1}{2\pi}\int_{\R} \left|\phi_G(t)\right|^2 \,dt\tag{since $|\phi_H(t)|\le 1$}\\
    &=\int_{\R} g^2(x)\,dx  \tag{Parseval–Plancherel identity} \\
    &= \mathcal{J}_b. \tag{from \eqref{eq:Jb-special-formula}}
\end{align*}
This completes the proof, in light of \eqref{eq:thm8-eq1} and \eqref{eq:thm8-eq2}.
\end{proof}

\subsection{Proof of~\texorpdfstring{\cref{efflb-adj-case}}{Theorem 3.5}}\label{proof:efflb-adj-case}

\begin{proof}%[Proof of \cref{efflb-adj-case}]
    It follows from \citet[Theorem 1]{Lin13} and  \cref{cte,randomization} that
    $$\sqrt{N}\left(\lini-\tau\right)\dto \mathcal{N}\left(0,\,\frac{\sigma^2}{\lambda(1-\lambda)}\right).$$
    The rest of the proof essentially follows by arguments analogous to the proof of \cref{efflb}.
\end{proof}

\section{Proofs of other main results}\label{AppendixA}

%

%
%===============================================================
%===============================================================
%===============================================================
%===============================================================
%===============================================================
%==========================

\subsection{Proof of~\texorpdfstring{\cref{thm:CI-contamination}}{Proposition A.1}}\label{proof:thm:CI-contamination}

\begin{proof}%[Proof of \cref{thm:CI-contamination}]
    Fix an arbitrary $B>|\tau|$. Suppose that $N$ is large enough so that $\lfloor\eps N\rfloor>1$. Since $\mathrm{ABP}(\widehat\tau\,)=0$, it follows that for any $y$ there exists $\psi_{B,y,z}\in\mathscr{C}_{N,\eps}$ such that $$|\widehat\tau(\psi_{B,y,z}(y, z), z)|>|\tau|+B.$$ Since $\widehat{C}_N(y,z)$ contains $\widehat{\tau}(y,z)$ for all $y,z$, it follows that 
    \begin{align*}
        \mathrm{Length}\left(\widehat{C}_N\left(\psi_{B,y,z}(y, z)\right)\right)&\ge \left|\widehat\tau(\psi_{B,y,z}(y, z), z)-\tau\right|\ind{\tau\in \widehat{C}_n(\psi_{B,y,z}(y,z),z)}\\&\ge B\cdot\ind{\tau\in \widehat{C}_n\left(\psi_{B,y,z}(y,z),z\right)},
    \end{align*}
   using triangle inequality and the definition of $\psi_{B,y,z}$. Taking expectations under $\mathbb{P}_\tau$, we conclude that
        \begin{align*}
        \sup_{\psi\in \mathscr{C}_{N,\eps}}\E_\tau\left[\mathrm{Length}\left(\widehat{C}_N\left(\psi(Y, Z)\right)\right)\right]&\ge \E_\tau\left[\mathrm{Length}\left(\widehat{C}_N\left(\psi_{B,Y,Z}(Y, Z)\right)\right)\right]\\
        &\ge B\cdot \mathbb{P}_\tau\left(\tau\in \widehat{C}_n\left(\psi_{B,Y,Z}(Y,Z),Z\right)\right)\\
        &\ge B\inf_{\psi\in \mathscr{C}_{N,\eps}} \mathbb{P}_\tau\left(\tau\in \widehat{C}_n\left(\psi(Y,Z),Z\right)\right)\\
        &\ge B\cdot (1-\alpha).
    \end{align*}
    Since $B>|\tau|$ is arbitrary, this completes the proof of the first part.

    Next, we turn to showing the second conclusion by construction. The definition of the asymptotic breakdown point combined with $\lfloor\eps N\rfloor\le \lfloor\gamma N\rfloor$ implies that the estimator $\widehat{\tau}(y_1,\dots,y_N; z_1,\dots,z_N)$ remains bounded when we arbitrarily change $\lfloor\eps N\rfloor$ many outcomes (preserving the constant treatment effect assumption). In other words,
    the set $\mathcal{R}(z) = \left\{\widehat{\tau}(\psi(y, z),z):\psi\in \mathscr{C}_{N,\eps}\right\}$ is bounded. Define $L(z):=\inf \mathcal{R}(z)$, $U(z):=\sup \mathcal{R}(z)$, which are both finite. Thus, $B(z):=\max\{|\tau-L(z)|,|\tau-U(z)|\}$ is finite for all $z\in\mathcal{Z}_N$, and therefore $B_N:=\sup_{z\in \mathcal{Z}_N} B(z)$ is finite as well. Now construct a confidence set as $$\widehat{C}_N(y,z):=[\widehat{\tau}(y,z)-2B_N, \widehat{\tau}(y,z)+2B_N].$$
    Note that $\widehat{C}_N(y,z)$ contains $\widehat{\tau}(y,z)$ by construction. Next, for any $\psi\in\mathscr{C}_{N,\eps}$, $|\tau-\widehat{\tau}(\psi(y,z), z)|\le B(z)\le B_N$ pointwise, implying that $\tau\in \widehat{C}_N(y,z)$ almost surely. Finally, the expected length of $\widehat{C}_N(\psi(Y,Z), Z)$ under $\mathbb{P}_\tau$ is uniformly bounded by $4B_N$, which completes the proof. 
\end{proof}

\subsection{Proof of~\texorpdfstring{\cref{thm:rsnu-CI-contamination}}{Proposition A.2}}\label{proof:thm:rsnu-CI-contamination}

\begin{proof}%[Proof of \cref{thm:rsnu-CI-contamination}]
    Denote by $F_N$ the empirical measure on the control potential outcomes $b_{N,1},\dots,b_{N,N}$. \cref{ABP} tells us that $\mathrm{ABP}(\rsnu)=\frac{1}{2}\min\{\lambda,1-\lambda\}$. Since $0<\eps<\mathrm{ABP}(\rsnu)=\frac{1}{2}\min\{\lambda,1-\lambda\}$, we deduce that $\phi(\eps)\in (0,1)$, where $$\phi(\eps):= \frac{1}{2} + \eps \frac{\max\{\lambda,1-\lambda\}}{\lambda(1-\lambda)}.$$
    For any $\eta> \sqrt{\phi(\eps)}$, we  invoke tightness of $F_N$ (uniform in $N$) to say that there exists $B_\eta>0$ such that $\inf_N F_N([-B_\eta,B_\eta])\ge \eta$ . %Denote by $G_N$ the empirical measure on the pairwise differences $\{b_{N,i}-b_{N,j}\}_{i\neq j}$, i.e., $$G_N(A):=\frac{1}{N(N-1)}\sum_{i\neq j}\ind{b_{N,i}-b_{N,j}\in A}.$$ It then follows from $\inf_N F_N([-B,B])\ge \eta$ for any $\eta>\sqrt{\phi(\eps)}$ that $\inf_{N} G_N([-2B,2B])\ge \phi(\eps)$ for all large $N$. In the randomized setting, we can 
    Now define $$X_T(B):=|\{i: |b_{N,i}|\le B\}\cap \{i: Z_{N,i}=1\}|,\quad X_C(B):=|\{i: |b_{N,i}|\le B\}\cap \{i: Z_{N,i}=0\}|.$$
    It is straightforward to note that $X_T(B)\sim \text{Hypergeometric}(N, NF_N([-B,B]), m)$ and $X_C(B)=NF_N([-B,B])-X_T(B)$. It follows from standard concentration inequality arguments that
    $$\Pr\left(\frac{X_T(B)}{m}\ge F_N([-B,B])- t\right)\ge 1-\exp\left(-2t^2 m\frac{N-m}{N-1}\right),$$
    and similarly $$\Pr\left(\frac{X_C(B)}{N-m}\ge F_N([-B,B])- t\right)\ge 1-\exp\left(-2t^2 m\frac{N-m}{N-1}\right).$$
    Combining the above using union bound, and using $\inf_N F_N([-B_\eta, B_\eta])\ge \eta$ for every $\eta>\sqrt{\phi(\eps)}$ we conclude that 
    $$\Pr\left(\min\left\{\frac{X_T(B_\eta)}{m}, \frac{X_C(B_\eta)}{N-m}\right\}\ge \eta- t\right)\ge 1-2\exp\left(-2t^2 m\frac{N-m}{N-1}\right).$$
    Therefore, with $\phi(\eps)$ as defined earlier, there exists $B>0$ (in particular, $B_\eta$ where $\eta=\sqrt{\phi}+t$ fits the bill) such that the event $$\mathcal{E}(t)=\{|X_T(B)|\ge m\sqrt{\phi(\eps)} \text{ and }|X_C(B)|\ge (N-m)\sqrt{\phi(\eps)}\}$$ satisfies $$\Pr(\mathcal{E}(t))\ge 1 - 2\exp(-2\lambda(1-\lambda)Nt^2).$$ Note that here $t<1-\sqrt{\phi(\eps)}$, and that the same $B$ works uniformly for all large $N$, because of the tightness assumption. On the event $\mathcal{E}(t)$, the number of cross-group pairs $(i, j)$ with $Z_{N,i}=1$, $Z_{N,j}=0$ and both $b_{N,i}$ and $b_{N,j}$ in $[-B,B]$ is bounded below by $m(N-m)\phi(\eps)$. On the other hand, it follows from \citet{HL63} that we can write $$\rsnu = \tau + \median\{b_{N,i} - b_{N,j}:Z_i=1, Z_i=0\}.$$ Note that contaminating at most $k=\lfloor\eps N\rfloor$ responses arbitrarily can change at most $\lfloor\eps N\rfloor\max\{m, n-m\}$ of the cross-group differences. Therefore, on the event $\mathcal{E}(t)$, the proportion of cross-group differences bounded by $2B$ that remain unchanged when we arbitrarily alter $k=\lfloor\eps N\rfloor$  outcomes is bounded below by
    $$\phi(\eps) - \frac{\lfloor\eps N\rfloor\max\{m, n-m\}}{m(N-m)} > \frac{1}{2}.$$
    This implies that $|\rsnu(\psi(y, z),z)-\tau|\le 2B$ on $\mathcal{E}(t)$, for any $\psi\in \mathscr{C}_{N,\eps}$. As a consequence, the interval $\widehat{C}_N(y,z)=[\rsnu(y,z)-2B,\rsnu(y,z)+2B]$ satisfies
    $$\inf_{\psi\in\mathscr{C}_{N,\eps}}\mathbb{P}_\tau(\tau\in \widehat{C}_{N,\eps}(\psi(Y, Z), Z))\ge 1-2\exp\left(-2\lambda(1-\lambda)Nt^2\right),$$
    for any $t<1-\sqrt{\phi(\eps)}$ and for all large $N$. For any such $t$, the above probability is greater than or equal to $1-\alpha$ for all sufficiently large $N$. Moreover, $\widehat{C}_n(y,z)$ contains $\rsnu(y, z)$ by construction, and satisfies $$\limsup_{N\to\infty}\sup_{\psi\in \mathscr{C}_{N,\eps}}\E_\tau \left[\mathrm{Length}\left(\widehat{C}_N(\psi(Y, Z), Z)\right)\right]\le 4B<\infty.$$
    This completes the proof.
\end{proof}

\subsection{Proof of~\texorpdfstring{\cref{tavgrank}}{Lemma A.3}}\label{proof:tavgrank}

\begin{proof}%[Proof of~\cref{tavgrank}]
Recall that $t_N = \sum_{j=1}^N q_N(j) Z_{N,j}$ where $q_N(j) = \sum_{i=1}^N \ind{b_{N,i}\le b_{N,j}}.$ Observe that when $s_{j-1}< j \le s_j,$ $$0\le q_N(j) - q_N^{\mathrm{avg}}(j)  \le c_j - \frac{c_j+1}{2}= \frac{c_j-1}{2}.$$ 
Hence $t_N\ge t_N^{\mathrm{avg}}$ a.s., and therefore it suffices to show that  the following converges to $0$, as $N\to\infty$.
\begin{align*}
    \E N^{-3/2} (t_N - t_N^{\mathrm{avg}})
    = \frac{m}{N} N^{-3/2}\sum_{j=1}^N (q_N(j)-q_N^{\mathrm{avg}}(j)) = \frac{m}{N} N^{-3/2}\sum_{i=1}^k \frac{c_i(c_i-1)}{2}.
\end{align*}
We split the above sum into two parts, as follows. Fix any $\epsilon>0$ and define $\mathcal{S}_\epsilon = \{1\le j\le k: c_j - 1\ge \epsilon N^{1/2}\}$. Then
$$\frac{m}{N} N^{-3/2}\sum_{i\not\in \mathcal{S}_\epsilon} \frac{c_i(c_i-1)}{2}  
\le \epsilon N^{-1}\sum_{i\not\in \mathcal{S}_\epsilon} c_i \le \epsilon.$$
On the other hand, for $i\in \mathcal{S}_\epsilon$, and $s_{i-1}<j\le s_j$ we have $q_N(j) - q_N^{\mathrm{avg}}(j) = c_i(c_i-1)/2 \ge \epsilon^2 N/2.$
Hence, if $J$ be an index chosen uniformly at random from $\{1,2,\dots,N\}$ (independent of everything else), then
\begin{align*}
    N^{-3/2}\sum_{i\in \mathcal{S}_\epsilon} \frac{c_i(c_i-1)}{2} &=N^{-3/2}\sum_{i\in \mathcal{S}_\epsilon} \sum_{j=s_{i-1}+1}^{s_i} (q_N(j)-q_N^{\mathrm{avg}}(j))\\
    &= N^{-1/2} \Pr\left(q_N(J) - q_N^{\mathrm{avg}}(J) \ge \epsilon^2 N/2\right)\\
    &\le (\epsilon^2/2)^{-2} N^{-1/2} N^{-2} \E_J\left[(q_N(J) - q_N^{\mathrm{avg}}(J))^2\right]\\
    &= (\epsilon^2/2)^{-2} N^{-7/2} \sum_{j=1}^N (q_N(j) - q_N^{\mathrm{avg}}(j))^2.
\end{align*}
Therefore it suffices to show that as $N\to\infty$,
\begin{equation}\label{avglast}
   \sum_{j=1}^{N}\left(q_{N}^{\mathrm{avg}}(j)-{q}_{N}(j)\right)^{2}=O\left(N^{3}\right). 
\end{equation}
Note that for each $1\le j\le k$, $\sum_{i=s_{j-1}+1}^{s_{j}}\left(q_{N}^{\mathrm{avg}}(i)-i\right)=0$. Thus $\overline{q}_{N}^{\mathrm{avg}}=N^{-1} \sum_{i=1}^{N} i=(N+1) / 2$, and we obtain the following.
$$
\begin{aligned}
&\sum_{i=s_{j-1}+1}^{s_{j}}\left(\left(q_{N}^{\mathrm{avg}}(i)-\overline{q}_{N}^{\mathrm{avg}}\right)^{2}-\left(i-\overline{q}_{N}^{\mathrm{avg}}\right)^{2}\right) \\
&=\sum_{i=s_{j-1}+1}^{s_{j}}\left(q_{N}^{\mathrm{avg}}(i)^{2}-2 \overline{q}_{N}^{\mathrm{avg}}\left(q_{N}^{\mathrm{avg}}(i)-i\right)-i^{2}\right) \\
&=-\sum_{i=s_{j-1}+1}^{s_{j}}\left(i^{2}-\left(\frac{s_{j-1}+1+s_{j}}{2}\right)^{2}\right) \\
&=-\sum_{i=s_{j-1}+1}^{s_{j}}\left(i-\frac{s_{j-1}+1+s_{j}}{2}\right)^{2} \\
&=-\frac{1}{12}c_{j}\left(c_{j}^{2}-1\right).
\end{aligned}
$$
Therefore,
$$
    \sum_{i=1}^{N}\left(q_{N}^{\mathrm{avg}}(i)-\overline{q}_{N}^{\mathrm{avg}}\right)^{2}=\sum_{i=1}^{N}\left(i-\overline{q}_{N}^{\mathrm{avg}}\right)^{2}-\frac{1}{12}\sum_{j=1}^{k} c_{j}\left(c_{j}^{2}-1\right).
$$
Notice that
\begin{equation}\label{eq:E2}
    0 \leq \frac{1}{12}\sum_{j=1}^{k} c_{j}\left(c_{j}^{2}-1\right) \leq \frac{1}{12}\left(\sum_{i=1}^{k} t_{i}\right) \max _{1 \leq i \leq k} t_{i}^{2}=\frac{N}{12} \max _{1 \leq i \leq k} t_{i}^{2},
\end{equation}
and therefore~\cref{blocksizes} tells us that
\begin{equation}\label{eq:E3}
\sum_{i=1}^{N}\left(q_{N}^{\mathrm{avg}}(i)-\overline{q}_{N}^{\mathrm{avg}}\right)^{2}=\frac{N(N^2-1)}{12}+o(N^3).
\end{equation}
On the other hand, it follows from the proof of~\cref{lem001} that $$\sum_{i=1}^{N}\left(q_{N}(i)-\frac{N+1}{2}\right)^{2}=\frac{N(N^2-1)}{12}+o(N^3)$$
Combining the above display with \eqref{eq:E2} and \eqref{eq:E3}, and doing some algebraic manipulations, \eqref{avglast} follows. This completes the proof.
\end{proof}

%===============================================================
% PROOF OF THE LEMMA THAT JUSTIFIES ASSUMPTION 2 IN THE IID CASE
%===============================================================
%

\subsection{Proof of~\texorpdfstring{\cref{iid.I_C}}{Lemma A.4}}\label{proof:iid.I_C}

\begin{proof}%[Proof of~\cref{iid.I_C}] 
Let us denote $S_N = S_N(b_{N,1},\dots,b_{N,N}) = \sum_{j=1}^N \sum_{i=1}^N I_{h,N} (b_{N,j} - b_{N,i})$.
Observe that for any fixed $i\neq j,$ and $h > 0,$ $$\E\left(I_{h,N} (b_{N,j} - b_{N,i})\right) = P\left(0\le b_{N,2} - b_{N,1} < hN^{-1/2}\right) = g(hN^{-1/2}) - g(0)$$
where $$g(x) = P(b_{N,2} - b_{N,1} \le x) = \int_{-\infty}^{x} \int_{\R}  f_{}(u+t) f_{}(u)\,du\,dt, \ x\in\R.$$
Using the DCT for integrals, we argue that $g'(x) = \int_{\R} f_{}(u+x) f_{}(u)du$. %In particular, $\displaystyle g'(0) = \int_{\R} f_{}(u)^2du$. 
Hence 
\begin{align*}  \lim_{N\to\infty} N^{-3/2} \E(S_N) &= \lim_{N\to\infty}  N^{-3/2} N^2 (g(hN^{-1/2}) - g(0))\\
&= \lim_{N\to\infty}  h\cdot\frac{g(hN^{-1/2}) - g(0)}{hN^{-1/2}} \\
&= hg'(0) \\
&= h\int_{\R} f_{}(u)^2du.\numberthis\label{Glim}
\end{align*}
Now we bound $\E(S_N - \E S_N)^2$ using the Efron-Stein inequality \citep{ES81}. For each $1\le k \le N,$ let $b_{N,k}'$ be an i.i.d. copy of $b_{N,k},$ independent of everything else, and define $$S_N^{(k)} = S_N(b_{N,1},\dots,b_{N,k-1},b_{N,k}',b_{N,k+1},\dots,b_{N,N}), \ 1\le k\le N.$$ 
Note that 
\begin{align*}
S_N - S_N^{(k)}
&=  \sum_{j=1}^N \left(I_{h,N} (b_{N,j} - b_{N,k}) -  I_{h,N} (b_{N,j} - b_{N,k}')\right)\\
&\quad \quad+\sum_{j=1}^N \left(I_{h,N} (b_{N,k} - b_{N,j}) -  I_{h,N} (b_{N,k}' - b_{N,j})\right).
\end{align*}
An application of the Cauchy-Schwarz inequality yields the following.
\begin{align*}
&\E\left[ \bigg(\sum_{j=1}^N I_{h,N} (b_{N,j} - b_{N,k}) -  I_{h,N} (b_{N,j} - b_{N,k}')\bigg)^2 \ \Big|\ b_{N,1},\dots,b_{N,N}\right] \\
&\leq N\E\left[ \sum_{j=1}^N \left(I_{h,N} (b_{N,j} - b_{N,k}) -  I_{h,N} (b_{N,j} - b_{N,k}')\right)^2 \ \Big|\ b_{N,1},\dots,b_{N,N}\right] \\
&\leq 2N\E\left[ \sum_{j=1}^N \left(I_{h,N} (b_{N,j} - b_{N,k}) + I_{h,N} (b_{N,j} - b_{N,k}')\right) \ \Big|\ b_{N,1},\dots,b_{N,N}\right] \\
%&= 2N \sum_{j=1}^N \left(I_{h,N} (b_{N,j} - b_{N,k}) +\E\left[I_{h,N} (b_{N,j} - b_{N,k}')\mid b_{N,j}\right]\right)\\
&= 2N \sum_{j=1}^N \left(I_{h,N} (b_{N,j} - b_{N,k}) +\E\left[I_{h,N} (b_{N,j} - b_{N,k})\mid b_{N,j}\right]\right).
\end{align*}
Applying the same argument to the second part yields
\begin{align*}
&\E\left[ \bigg(\sum_{j=1}^N I_{h,N} (b_{N,k} - b_{N,j}) -  I_{h,N} (b_{N,k}' - b_{N,j})\bigg)^2 \ \Big|\ b_{N,1},\dots,b_{N,N}\right]\\
&\le 2N \sum_{j=1}^N \left(I_{h,N} (b_{N,k} - b_{N,j}) +\E\left[I_{h,N} (b_{N,k} - b_{N,j})\mid b_{N,j}\right]\right).
\end{align*}
Using the above bounds and a generalized Efron-Stein inequality (see~\cref{GenES}),
\begin{align*}
& \E(N^{-3/2}(S_N - \E S_N))^{6} \\
&\lesssim
 N^{-9} \E \left[\sum_{k=1}^N \E\left[(S_N - S_N^{(k)})^2 \ \big|\ b_{N,1},\dots,b_{N,N}\right] \right]^3\\
&\lesssim N^{-6}\, \E \left[\sum_{k=1}^N \sum_{j=1}^N \left(I_{h,N} (b_{N,j} - b_{N,k}) +\E\left[I_{h,N} (b_{N,j} - b_{N,k})\mid b_{N,j}\right]\right)\right]^3\\
&\quad + N^{-6}\, \E \left[\sum_{k=1}^N \sum_{j=1}^N \left( I_{h,N} (b_{N,k} - b_{N,j}) +\E\left[I_{h,N} (b_{N,k} - b_{N,j})\mid b_{N,j}\right]\right)\right]^3\\
 &\lesssim N^{-6}\, \E \left[\sum_{k=1}^N \sum_{j=1}^N I_{h,N} (b_{N,j} - b_{N,k})\right]^3 + N^{-6}\,\E \left[\sum_{k=1}^N \sum_{j=1}^N \E\left[I_{h,N} (b_{N,j} - b_{N,k})\mid b_{N,j}\right]\right]^3\\
&\quad + N^{-6}\,\E \left[\sum_{k=1}^N \sum_{j=1}^N \E\left[I_{h,N} (b_{N,k} - b_{N,j})\mid b_{N,j}\right]\right]^3.
\end{align*}
We now show that the contribution of each of last three sums is at most of the order of $N^{-3/2},$ which would complete the proof, with the aid of Borel-Cantelli lemma. We start with
\begin{align*}
&N^{-6}\, \E \left[\sum_{k=1}^N \sum_{j=1}^N I_{h,N} (b_{N,j} - b_{N,k})\right]^3 \\
&= N^{-6}\sum_{1\le i_1,j_1,i_2,j_2,i_3,j_3\le N} \E\left[I_{h,N} (b_{N,j_1} - b_{N,i_1})I_{h,N} (b_{N,j_2} - b_{N,i_2})I_{h,N} (b_{N,j_3} - b_{N,i_3})\right]
\end{align*}
When $|\{i_1,j_1,i_2,j_2,i_3,j_3\}|\le 4$, the number of ways to choose such a set of indices would be $O(N^4),$ and the summands being bounded above by $1,$ the contributions from these terms is $O(N^{-2})$. If $|\{i_1,j_1,i_2,j_2,i_3,j_3\}|=6,$ i.e., the indices are all distinct, we use the independence of the $b_{N,i}$'s to split the joint probability as the product of marginal probabilities, and thus the contribution from such terms becomes
\begin{align*}
& N^{-6}\sum_{i_1,j_1,i_2,j_2,i_3,j_3 \text{ distinct}} \E I_{h,N} (b_{N,j_1} - b_{N,i_1}) \E I_{h,N} (b_{N,j_2} - b_{N,i_2}) \E I_{h,N} (b_{N,j_3} - b_{N,i_3}) \\
&\le N^{-6}\left(\sum_{1\le i, j\le N} \E I_{h,N} (b_{N,j} - b_{N,i})\right)^3 = N^{-3/2} \left(\E N^{-3/2} S_N\right)^3 = O(N^{-3/2}).
\end{align*}
Finally, consider the case where $|\{i_1,j_1,i_2,j_2,i_3,j_3\}|=5,$ i.e., exactly one index is repeated. Then at least two of the sets $\{i_1,j_1\},\{i_2,j_2\},\{i_3,j_3\}$ will make four distinct indices altogether. If the set left-out is $\{i_3,j_3\}$, we can just upper bound $I_{h,N}(b_{N,j_3} -b_{N,i_3})$ by $1$. Also, the number of ways of choosing the indices $\{i_3,j_3\}$ will be $O(N),$ since one of them is repeated within $\{i_1,j_1,i_2,j_2\}$. Using this idea, 
\begin{align*}
& N^{-6}\sum_{|i_1,j_1,i_2,j_2,i_3,j_3|=5} \E\left[ I_{h,N} (b_{N,j_1} - b_{N,i_1})  I_{h,N} (b_{N,j_2} - b_{N,i_2})  I_{h,N} (b_{N,j_3} - b_{N,i_3})\right] \\
&\lesssim N^{-5}\sum_{i_1,j_1,i_2,j_2 \text{ distinct}} \E I_{h,N} (b_{N,j_1} - b_{N,i_1})\E I_{h,N} (b_{N,j_2} - b_{N,i_2}) \\
&\le N^{-5}\left(\sum_{1\le i, j\le N} \E I_{h,N} (b_{N,j} - b_{N,i})\right)^2 = N^{-2} \left(\E N^{-3/2} S_N\right)^2 = O(N^{-2}).
\end{align*}
Combining the above cases we conclude that $$N^{-6}\, \E \left[\sum_{k=1}^N \sum_{j=1}^N I_{h,N} (b_{N,j} - b_{N,k})\right]^3 = O(N^{-3/2}).$$
We next look at
\begin{align*}
& N^{-6}\,\E \left[\sum_{k=1}^N \sum_{j=1}^N \E\left[I_{h,N} (b_{N,j} - b_{N,k})\mid b_{N,j}\right]\right]^3\\
&= N^{-6}\sum_{1\le i_1,j_1,i_2,j_2,i_3,j_3\le N} \E\left[I^*(b_{N,j_1}, b_{N,i_1})I^*(b_{N,j_2}, b_{N,i_2})I^*(b_{N,j_3}, b_{N,i_3})\right]
\end{align*}
where $I^*(b_{N,j}, b_{N,i}) := \E \left[I_{h,N} (b_{N,j} - b_{N,i}) \mid b_{N,j}\right]$, which is really a function of $b_{N,j}$. Thus, when all the $6$ indices are distinct, the joint probability is split into marginals, and once again we can see that the contribution of these terms is $O(N^{-3/2})$. When $|\{i_1,j_1,i_2,j_2,i_3,j_3\}|\le 4$, the same argument applies as we gave earlier, and tells us that contribution from these terms  would be $O(N^{-2})$. Finally, when $|\{i_1,j_1,i_2,j_2,i_3,j_3\}|= 5$ we play the same trick applied in the previous case to conclude that contribution from these terms is
\begin{align*}
& N^{-6}\sum_{|i_1,j_1,i_2,j_2,i_3,j_3|=5} \E\left[I^*(b_{N,j_1}, b_{N,i_1})I^*(b_{N,j_2}, b_{N,i_2})I^*(b_{N,j_3}, b_{N,i_3})\right]\\
&\lesssim N^{-5}\sum_{i_1,j_1,i_2,j_2 \text{ distinct}} \E\left[I^*(b_{N,j_1}, b_{N,i_1})I^*(b_{N,j_2}, b_{N,i_2})\right] \\
&= N^{-5}\sum_{i_1,j_1,i_2,j_2 \text{ distinct}} \E\left[I_{h,N} (b_{N,j_1} - b_{N,i_1}) \right]\E\left[I_{h,N} (b_{N,j_2} - b_{N,i_2}) \right] \\
&\le N^{-5}\left(\sum_{1\le i, j\le N} \E I_{h,N} (b_{N,j} - b_{N,i})\right)^2 = N^{-2} \left(\E N^{-3/2} S_N\right)^2 = O(N^{-2}).
\end{align*}
Thus, $$N^{-6}\,\E \left[\sum_{k=1}^N \sum_{j=1}^N \E\left[I_{h,N} (b_{N,j} - b_{N,k})\mid b_{N,j}\right]\right]^3=O(N^{-3/2}).$$ In a similar manner one can show that $$N^{-6}\,\E \left[\sum_{k=1}^N \sum_{j=1}^N \E\left[I_{h,N} (b_{N,k} - b_{N,j})\mid b_{N,j}\right]\right]^3=O(N^{-3/2}).$$ Hence the proof follows.
\end{proof}

%%%%%%%%%%%%%%%%%%%%%%%%%%%%%%%%%%%%%%%%%%%%%%%%%%%%%%%%%%%%%%%%%%%%%%%%%%%%%%%%%%%%%%%

\subsection{Proof of~\texorpdfstring{\cref{iid.I_C2}}{Lemma A.5}}\label{proof:iid.I_C2}

\begin{proof}%[Proof of \cref{iid.I_C2}]
Since $f(x)$ is continuous and has finite limits as $x\to\pm \infty$, it follows that $f$ is uniformly continuous and bounded. Next, by using the continuous mapping theorem and the strong law of large numbers, we obtain
$$\frac{1}{N}\sum_{i=1}^N f(b_{N,i})\asto \int_\R f^2(x)\,dx.$$
Next, define $$\Delta_N:=\sqrt{N} \sup_{x}\left|\left(F_N-F\right)\left(\left(x, x+\frac{h}{\sqrt{N}}\right]\right)\right|,\quad \omega_f(\eps)=\sup_x\sup_{|t-x|\le \eps}|f(t)-f(x)|.$$
    The modulus of continuity $\omega_f(\eps)$ is finite and goes to $0$ as $\eps\to 0$, thanks to the uniform continuity of $f$.
    Observe now that
    \begin{align*}
        &\sup_{x}\left|\sqrt{N}\left(F_N\left(x+\frac{h}{\sqrt{N}}\right)-F_N(x)\right)-h\,f(x)\right|\\
        &\le \Delta_N+\sup_{x}\left|\sqrt{N}\left(F\left(x+\frac{h}{\sqrt{N}}\right)-F(x)\right)-h\,f(x)\right|\\
        &\le \Delta_N + \sqrt{N}\sup_{x}\left|\int_x^{x+\frac{h}{\sqrt{N}}}|f(t)-f(x)|\,dt\right|\\
        &\le \Delta_N + h\,\omega_f\left(\frac{h}{\sqrt{N}}\right).
    \end{align*}
    The uniform continuity of $f$ implies that $\omega_f({h}/{\sqrt{N}})\to 0$ as $N\to\infty$. It only remains to show now that $\Delta_N\asto 0$. The class $\mathcal{F}_N :=\{\mathbf{1}(x, x+h/\sqrt{N}]:x\in\R\}$ is a VC-class with envelope $F_\text{env}\equiv 1$, and $\sigma^2 = \sup_{f\in \mathcal{F}_N} P_F f^2=\sup_{x} |F(x+h/\sqrt{N})-F(x)|\le \|f\|_\infty h N^{-1/2}$. It follows from \citet[Corollary 5.1]{Chernozhukov2014} that
    \begin{equation}\label{F.8}
        \E \Delta_N= \E\sup_{f\in\mathcal{F}_N}\sqrt{N}|(F_N - F)f|\lesssim N^{-1/4} \sqrt{\log N} + N^{-1/2}\sqrt{\log N}=o(1),
    \end{equation}
    which immediately gives $\Delta_N\Pto 0$. Moreover, we can apply \citet[Theorem 5.1]{Chernozhukov2014} with $F_\text{env}\equiv 1$ and $M=1$ to deduce that for any $q\ge 2$ and $t\ge 1$,
$$
\operatorname{Pr}\left\{\Delta_N>(1+\alpha) \E \Delta_N+K(q)\left[\left(\sigma+N^{-1 / 2}\right) \sqrt{t}+\alpha^{-1} N^{-1 / 2} t\right]\right\} \leq t^{-q / 2} .
$$
We choose $t=N^\kappa$ with any $\kappa \in(0,1 / 2)$ and $\alpha=1$. Since $\sigma=O\left(N^{-1 / 4}\right)$,
$$
\left(\sigma+N^{-1 / 2}\right) \sqrt{t}+\alpha^{-1} N^{-1 / 2} t=O\left(N^{-1 / 4+\kappa / 2}\right)+O\left(N^{-1 / 2+\kappa}\right)=o(1).
$$
We choose $q>2 / \kappa$ so that $\sum_N t^{-q / 2}=\sum_N N^{-\kappa q / 2}<\infty$. By the first Borel-Cantelli lemma,
$$
\Delta_N\leq 2 \mathbb{E}\Delta_N+o(1)=O\left(N^{-1 / 4} \sqrt{\log N}\right) \text { a.s. }
$$    
   This combined with \eqref{F.8} completes the proof.
\end{proof}

\subsection{Proof of~\texorpdfstring{\cref{iid.J_C}}{Lemma A.6}}\label{proof:iid.J_C}

\begin{proof}%[Proof of \cref{iid.J_C}]
    Denote by $G_N$ the empirical distribution of the residuals $\widetilde{b}_{N,1},\dots, \widetilde{b}_{N,N}$, and denote by $G_N^{(\eps)}$ the empirical distribution of the noises $\eps_{N,1},\dots,\eps_{N,N}$. Define
       \begin{equation*}
        %\label{def:Delta-A.5}
        \Delta_N^{(1)}:=\sup_{x}\left|\sqrt{N}\left(G_N^{(\eps)}\left(x+\frac{h}{\sqrt{N}}\right)-G_N^{(\eps)}(x)\right)-hg(x)\right|,
    \end{equation*}
    and
       \begin{equation*}
       % \label{def:Delta2-A.5}
        \Delta_N^{(2)}:=\sqrt{N}\sup_{x}\left|G_N\left(\left(x,x+\frac{h}{\sqrt{N}}\right]\right)-G_N^{(\eps)}\left(\left(x,x+\frac{h}{\sqrt{N}}\right]\right)\right|.
    \end{equation*}
    Note that
    \begin{align*}
      & \left| N^{-3/2}\sum_{j=1}^N\sum_{i=1}^N I_{h,N} (\widetilde{b}_{N,j} - \widetilde{b}_{N,i}) - h\int_\R g^2(t)\,dt\right| \\
       &= \left|\frac{1}{N} \sum_{i=1}^N \sqrt{N}\left(G_N\left(\widetilde{b}_{N,i}+\frac{h}{\sqrt{N}}\right)-G_N(\widetilde{b}_{N,i})\right)-h\int_\R g^2(t)\,dt\right|\\
       &\le \Delta_N^{(1)} \,+\,\Delta_N^{(2)} \,+\, |h|\cdot\left|\frac{1}{N}\sum_{i=1}^N g({\eps}_{N,i}) - \int_\R g^2(t)\,dt\right|.\numberthis\label{eq:Delta123}
    \end{align*}
    Since $g(x)$ is continuous and has finite limits as $x\to\pm \infty$, it follows that $g$ is uniformly continuous and bounded. it follows from the continuous mapping theorem and the strong law of large numbers that $$\frac{1}{N}\sum_{i=1}^N g(\eps_{N,i})\asto \int_\R g^2(t)\,dt.$$
    The same argument as in the proof of \cref{iid.I_C2} implies that $\Delta_N^{(1)}\asto 0$.  
    In view of \eqref{eq:Delta123}, it suffices to show now that  $\Delta_{N}^{(2)}$ converges to $0$.
    Denote by $\vec{P}_N$ the projection matrix that projects onto the column space of $\vec{X}_N$. We can write $$\widetilde{b}_{N,i}=((I-\vec{P}_N)\vec{b}_N)_i=((I-\vec{P}_N)\vec{\eps}_N)_i=\eps_{N,i} - r_{N,i},\quad\text{where}\quad r_{N,i} :=  \vec{p}_{N,i}^\top\,\eps_N=\vec{x}_{N,i}^\top\, \vec{\beta}_N^{(0)},$$
    where $\vec{\beta}_N^{(0)}:=(\vec{X}_N^\top\vec{X}_N)^{-1}\vec{X}_N^\top\vec{\eps}_N$.  Define  $h_{N,i}:=(\vec{P}_{N})_{i,i}$, $V_{N,i} := \sum_{j\neq i} (\vec{P}_{N})_{i,j} \eps_{N,j}$, and $u_{N,i} :=(\vec{X}_N\vec{\beta}_N)_i$. Also denote by $\mathcal{F}_{N,-i}$ the $\sigma$-algebra generated by $\{\eps_{N,j}, j\neq i\}$. We can then write $$\widetilde{b}_{N,i} = (1-h_{N,i})\eps_{N,i} - V_{N,i},\quad b_{N,i}-\widetilde{b}_{N,i}=u_{N,i} + h_{N,i}\,\eps_{N,i}+V_{N,i}.$$
Finally, write
$$G_N\left(\left(x,x+\frac{h}{\sqrt{N}}\right]\right)-G_N^{(\eps)}\left(\left(x,x+\frac{h}{\sqrt{N}}\right]\right)=\frac{1}{N}\sum_{i=1}^N \Xi_{N,i}(x),$$
    where \begin{align*}
        \Xi_{N,i}(x) &= \ind{x<\eps_{N,i}-r_{N,i}\le x+\frac{h}{\sqrt{N}}}-\ind{x<\eps_{N,i}\le x+\frac{h}{\sqrt{N}}}\\
        &= \ind{\frac{x+V_{N,i}}{1-h_{N,i}}<\eps_{N,i} \le \frac{x+V_{N,i}+\frac{h}{\sqrt{N}}}{1-h_{N,i}}}-\ind{x<\eps_{N,i}\le x+\frac{h}{\sqrt{N}}}.
    \end{align*}
Note that\begin{align*}
    \Delta_N^{(2)}&=\sup_{x}\left|\frac{1}{\sqrt{N}}\sum_{i=1}^N\Xi_{N,i}(x)\right|\le\sup_{x}|A_N(x)|+\sup_{x}|B_N(x)|
\end{align*}
where \begin{align*}
        A_N(x)&:=\frac{1}{\sqrt{N}}\sum_{i=1}^N \left(\Xi_{N,i}(x)-\E\left[\Xi_{N,i}(x)\mid \mathcal{F}_{N,-i}\right]\right), \quad
    B_N(x):=\frac{1}{\sqrt{N}}\sum_{i=1}^N \E\left[\Xi_{N,i}(x)\mid \mathcal{F}_{N,-i}\right].
\end{align*}
It is straightforward to bound $B_N(x)$ using the uniform continuity of $g$, as follows.
\begin{align*}
  \sup_x | B_N(x)|&= \sup_x \left|\frac{1}{\sqrt{N}}\sum_{i=1}^N G\left(\frac{x+V_{N,i}+\frac{h}{\sqrt{N}}}{1-h_{N,i}}\right)-G\left(\frac{x+V_{N,i}}{1-h_{N,i}}\right)-G\left(x+\frac{h}{\sqrt{N}}\right)+G(x)\right|\\
   &\le \sup_x  \frac{1}{\sqrt{N}}\sum_{i=1}^N \int_{x}^{x+h/\sqrt{N}}\left|g\left(\frac{t+V_{N,i}}{1-h_{N,i}}\right)-g(t)\right|dt\\
   &\le \frac{h}{N}\sum_{i=1}^N \omega_g(|V_{n,i}|)+\eta_g\left(\frac{h_{N,i}}{1-h_{N,i}}\right),\numberthis\label{eq:BN}
\end{align*}
where $\omega_g$ is the modulus of continuity of $g$ and $\eta_g(t)=\sup_{u}|g((1+t)u)-g(u)|$. It follows from the uniform continuity of $g$ that $\eta_g(t)\to 0$ as $t\to 0$. Recall that $\sum_{i=1}^N h_{N, i}=\Tr\left(\vec{P}_N\right)=p$ and $h_{N, i} \in[0,1)$. Therefore, for any $\delta>0$, \begin{align*}
\frac{1}{N} \sum_{i=1}^N \eta\left(\frac{h_{N, i}}{1-h_{N, i}}\right) &\leq \eta\left(\frac{\delta}{1-\delta}\right)+\sup _{t \in[0,1)} \eta(t) \cdot \frac{1}{N} \sum_{i=1}^N 1\left\{h_{N, i}>\delta\right\}\\
&\leq \eta\left(\frac{\delta}{1-\delta}\right)+\frac{p}{\delta N} \rightarrow \eta\left(\frac{\delta}{1-\delta}\right).
\end{align*}
Since $\delta>0$ is arbitrary, it follows that
\begin{align*}
    \frac{1}{N} \sum_{i=1}^N \eta\left(\frac{h_{N, i}}{1-h_{N, i}}\right)\to 0.
\end{align*}
Next we show that $N^{-1}\sum_{i=1}^N \omega_g(|V_{N,i}|)\to 0$ in probability / almost surely whenever $\vec\beta_N^{(0)}\to 0$ in probability / almost surely (cf.~\cref{beta0goesto0,beta0goesto0a.s.}).
Recall that $V_{N,i}=\vec{p}_{N,i}^\top\,\eps_{N,i}-(\vec{P}_{N})_{i,i}\,\eps_{N,i}=\vec{x}_{N,i}^\top\,\vec{\beta}_N^{(0)}-h_{N,i}\,\eps_{N,i}$. Therefore,
\begin{align*}
&    \frac{1}{N}\sum_{i=1}^N \omega_g(|V_{N,i}|)\le \omega_g(\delta) + 2\|g\|_\infty\frac{1}{N}\sum_{i=1}^N\ind{|V_{N,i}|>\delta}\\
    &\le \omega_g(\delta) + 2\|g\|_\infty\frac{1}{N}\sum_{i=1}^N\ind{|\vec{x}_{N,i}^\top\,\vec{\beta}_N^{(0)}|>\frac{\delta}{2}}+\ind{h_{N,i}\,|\eps_{N,i}|>\frac{\delta}{2}}\\
      &\le \omega_g(\delta) + 2\|g\|_\infty\frac{1}{N}\sum_{i=1}^N\frac{2}{\delta}|\vec{x}_{N,i}^\top\,\vec{\beta}_N^{(0)}|+\ind{h_{N,i}>\frac{\delta}{2K}}+\ind{|\eps_{N,i}|>K}\\
         &\le \omega_g(\delta) + 2\|g\|_\infty\frac{1}{N}\sum_{i=1}^N\frac{2}{\delta}\|\vec{x}_{N,i}\|_2\|\vec{\beta}_N^{(0)}\|_2+\ind{h_{N,i}>\frac{\delta}{2K}}+\ind{|\eps_{N,i}|>K}\\
         &\le \omega_g(\delta) + \frac{4\|g\|_\infty}{\delta}\left(\frac{1}{N}\sum_{i=1}^N\|\vec{x}_{N,i}\|_2^2\right)^{1/2}\|\vec{\beta}_N^{(0)}\|_2+\frac{4K\|g\|_\infty}{\delta}\frac{p}{N}+\frac{2\|g\|_\infty}{N}\sum_{i=1}^N\ind{|\eps_{N,i}|>K}.
\end{align*}
Letting $N\to\infty$ in the above display and using \cref{beta0goesto0} (resp.~\cref{beta0goesto0a.s.}), then $K\to\infty$ and $\delta\to 0$ we arrive at the conclusion that the RHS of \eqref{eq:BN} converges to $0$ in probability (resp.~almost surely, under the stronger assumption of uniformly bounded covariates).

Finally, since $|\Xi_{N,i}(x)|\le 2$ uniformly and has jumps at at most $4N$ points, a standard symmetrization argument applies to $A_{N}(x)$ and tells us that $\E\sup_x|A_N(x)|\le 2\E\sup_x |C_N(x)|$ where $C_N(x)=\frac{1}{N}\sum_{i=1}^N \sigma_i \Xi_{N,i}(x)$, $\sigma_i$ being i.i.d.~Radamacher random variables. This yields, via %Massart's finite class lemma, that $\E\sup_x|C_N(x)|\lesssim\sqrt{N^{-1}\log N}$ which gives the desired convergence in probability. Moreover, %under the stronger assumption that the covariates are bounded, we obtain using 
Hoeffding's bound, $\Pr(\sup_x|C_N(x)|>t)\le 8N \exp(-Nt^2/2)$. Using $t=\sqrt{4\log N}/\sqrt{N}$ and using the first Borel-Cantelli lemma, we obtain the desired almost sure convergence for $\sup_{x}|A_N(x)|$. This completes the proof, in light of \eqref{eq:Delta123}.
\end{proof}

\subsection{Proof of~\texorpdfstring{\cref{iid.J_C2}}{Lemma A.7}}\label{proof:iid.J_C2}

\begin{proof}%[Proof of \cref{iid.J_C2}]
Denote by $G_N$ the empirical distribution of the residuals $\widetilde{b}_{N,1},\dots,\widetilde{b}_{N,N}$, and denote by $G_N^{(\eps)}$ the empirical distribution of the noises $\eps_{N,1},\dots,\eps_{N,N}$.  With a slight abuse of notation, we also denote by $G$ the CDF of the distribution with density $g$.
As in the proof of \cref{iid.J_C} (cf.~\cref{proof:iid.J_C}), we define $\vec{\beta}_N^{(0)}=(\vec{X}_N^\top\vec{X}_N)^{-1}\vec{X}_N^\top\vec{\eps}_N$. Then, $$\widetilde{b}_{N,i}-\eps_{N,i}=-\vec{x}_{N,i}^\top\, \vec{\beta}_N^{(0)}.$$
Since $g(x)$ is continuous and has finite limits as $x\to\pm \infty$, it follows that $g$ is uniformly continuous and bounded. Recall that $\vec{x}_{N,i}$ denotes the $i$-th row of the matrix $\vec{X}_N$. 
For any bounded $L$-Lipschitz function $\psi$, we have
\begin{align*}
   &\left| \int \psi dG_N - \int \psi dG\right|\\
   &\le \frac{1}{N}\sum_{i=1}^N |\psi(\widetilde{b}_{N,i})-\psi(\eps_{N,i})| + \left| \int \psi dG_N^{(\eps)} - \int \psi dG\right|\\
   &\le \frac{1}{N}\sum_{i=1}^N L\left|\vec{x}_{N,i}^\top\vec{\beta}_N^{(0)}\right| + \left| \int \psi dG_N^{(\eps)} - \int \psi dG\right|\\
   &\le L \left(\frac{1}{N}\sum_{i=1}^N \|\vec{x}_{N,i}\|_2^2\right)^{1/2}\left\|\vec{\beta}_N^{(0)}\right\|_2 + \left| \int \psi dG_N^{(\eps)} - \int \psi dG\right|.\numberthis\label{eq:L}
\end{align*}
Note that $N^{-1}\vec{X}_N^\top\vec{X}_N\to\Sigma\succ 0$ yields $\sup_{N\ge 1} N^{-1}\sum_{i=1}^N \|\vec{x}_{N,i}\|_2^2 = \sup_{N\ge 1} \Tr(N^{-1}\vec{X}_N^\top\vec{X}_N) <\infty$. Hence the second term in \eqref{eq:L} converges to 0 almost surely by the Glivenko-Cantelli lemma. On the other hand, we show in \cref{beta0goesto0a.s.} that $\vec\beta_N^{(0)}\asto \vec{0}$, which implies that the first term in  \eqref{eq:L}  converges to 0 almost surely. This proves that $G_N$ converges weakly to $G$ almost surely, using the Portmanteau theorem. Next, using the fact that $g$ is uniformly continuous and bounded, we deduce that
\begin{align*}
    \left|\frac{1}{N}\sum_{i=1}^N g(\widetilde{b}_{N,i})- \int_\R g^2(t)\,dt\right|
    &\le \left|\frac{1}{N}\sum_{i=1}^N g(\eps_{N,i})- \int_\R g^2(t)\,dt\right| +  \frac{1}{N}\sum_{i=1}^N \omega_g(|\widetilde{b}_{N,i}- \eps_{N,i}|),\numberthis
    \label{omega_g_bound}
\end{align*}
where $\omega_g(\delta)=\sup_{x}\sup_{|t-x|\le\delta}|g(t)-g(x)|$ denotes the modulus of continuity of $g$, which is finite and converges to $0$ as $\delta\to 0$ thanks to the uniform continuity of $g$.
Using the continuous mapping theorem and the strong law of large numbers, we conclude that the first part in the above display converges to $0$ almost surely. 
Since $\omega_g$ is bounded by $2\|g\|_\infty$, and $\sup_{N\ge 1} N^{-1}\sum_{i=1}^N\|\vec{x}_{N,i}\|^2 <\infty$, we can choose $T=T(\eps)$ large enough such that
$$\frac{1}{\sqrt{N}}\sum_{i\,:\,\|\vec{x}_{N,i}\|> T} \omega_g(|r_{N,i}|)=\frac{1}{N}\sum_{i\,:\,\|\vec{x}_{N,i}\|> T} \omega_g(|\vec{x}_{N,i}^\top \beta_{N}|)\le \frac{2\|g\|_\infty}{T^2}\frac{1}{N}\sum_{i=1}^N \|\vec{x}_{N,i}\|^2<\eps.$$
    On the other hand,
    \begin{align*}
        \frac{1}{N}\sum_{i\,:\,\|x_{N,i}\|\le T}\omega_g(|r_{N,i}|)\le \omega_g(T \|\vec\beta_{N}^{(0)}\|).
    \end{align*}
%We show in \cref{beta0goesto0} that $\vec\beta_{N}^{(0)}\Pto 0$. This, in conjunction with the last two displays and \eqref{omega_g_bound}, implies that $\Delta_N^{(2)}\Pto 0$. This completes the proof of the first part, in light of \eqref{eq:Delta123}.
We show in \cref{beta0goesto0a.s.} that $\vec\beta_{N}^{(0)}\asto 0$. This, in conjunction with the last two displays implies that the RHS of \eqref{omega_g_bound} converges to $0$ almost surely. This completes the proof of part (a).

Next, we turn to showing part (b). It follows from our proof of \cref{iid.J_C} (cf.~\cref{proof:iid.J_C}) that for any fixed $h$,

       \begin{equation*}
        \label{def:Delta-A.5}
        \Delta_N^{(1)}:=\sup_{x}\left|\sqrt{N}\left(G_N^{(\eps)}\left(x+\frac{h}{\sqrt{N}}\right)-G_N^{(\eps)}(x)\right)-hg(x)\right|\asto 0,
    \end{equation*}
    and 
       \begin{equation*}
        \label{def:Delta2-A.5}
        \Delta_N^{(2)}:=\sqrt{N}\sup_{x}\left|G_N\left(\left(x,x+\frac{h}{\sqrt{N}}\right]\right)-G_N^{(\eps)}\left(\left(x,x+\frac{h}{\sqrt{N}}\right]\right)\right|\asto 0.
    \end{equation*}
    Consequently,
    \begin{align*}
        \sup_{x}\left|\sqrt{N}\left(G_N\left(x+\frac{h}{\sqrt{N}}\right)-G_N(x)\right)-h\,g(x)\right|\le \Delta_N^{(1)}+\Delta_{N}^{(2)}\asto 0.
    \end{align*}
    This completes the proof of part (b).

    Toward showing the last conclusion, 
    first we justify replacing $G_N$ with $G$ in \eqref{eq:indep}, as follows.
\begin{equation}\label{A3}
\begin{split}    
       &\sup_{x,y}\left| (G_N(x)-G(x))\frac{1}{N}\sum_{i=1}^N \ind{ b_{N,\,i}-\widetilde{b}_{N,\,i}\le y}\right|\\
        &\le \sup_{x}\left| G_N(x)-G(x)\right|\\
        &\le \sup_{x}\left| G_N(x)-G_N^{(\eps)}(x)\right|+\sup_{x}\left| G_N^{(\eps)}(x)-G(x)\right|.
        \end{split}
    \end{equation}
    It follows from the Glivenko-Cantelli lemma that $\sup_{x}\left| G_N^{(\eps)}(x)-G(x)\right|\asto 0$. 
Next, we show that
\begin{equation}\label{GminusGeps}
    \sup_{x}\left|G_N(x)-G_N^{(\eps)}(x)\right|\asto 0.
\end{equation}
    Observe that
    \begin{align*}
     \left|G_N(x) - G_N^{(\eps)}(x)\right| &=\frac{1}{N}\sum_{i=1}^N \left|\ind{\eps_{N,i}\le x+r_{N,i}} - \ind{\eps_{N,i}\le x}\right|\\
        &\le \frac{2}{N}|\{i:|r_{N,i}|>\delta\}| +\frac{1}{N}\sum_{i:|r_{N,i}|\le \delta} \ind{x-\delta<\eps_{N,i}\le x+\delta} \\
        &\le \frac{2}{\delta N}\sum_{i=1}^N |r_{N,i}| +\frac{1}{N}\sum_{i=1}^N \ind{x-\delta<\eps_{N,i}\le x+\delta} \\
        &\le \frac{2}{\delta}\left(\frac{1}{N}\sum_{i=1}^N \|\vec{x}_{N,i}\|_2^2\right)^{1/2}\|\vec{\beta}_N^{(0)}\|_2 +\frac{1}{N}\sum_{i=1}^N \ind{x-\delta<\eps_{N,i}\le x+\delta}.
    \end{align*}
We show in \cref{beta0goesto0a.s.} that $\vec\beta_{N}^{(0)}\asto 0$. Since $N^{-1}\sum_{i=1}^N \|\vec{x}_{N,i}\|_2^2=\Tr(N^{-1}\vec{X}_N^\top\vec{X}_N)\to \Tr(\Sigma)$, the above display implies that
$$\limsup_{N\to\infty}\left|G_N(x) - G_N^{(\eps)}(x)\right| \le \sup_x|G(x+\delta)-G(x-\delta)|\le 2\|g\|_\infty\delta.$$
Since $\delta>0$ is arbitrary, this completes the proof of \eqref{GminusGeps}.
In view of \eqref{eq:indep}, \eqref{A3} and \eqref{GminusGeps}, it suffices to show that
\begin{equation}\label{eq:indep2}
    \sup_{x,y}\left|\frac{1}{N}\sum_{i=1}^N \ind{\widetilde{b}_{N,\,i}\le x,\, b_{N,\,i}-\widetilde{b}_{N,\,i}\le y}- G(x)\frac{1}{N}\sum_{i=1}^N \ind{ b_{N,\,i}-\widetilde{b}_{N,\,i}\le y}\right|\asto 0.
\end{equation}
We now introduce some more notation. Define  $h_{N,i}:=(\vec{P}_{N})_{i,i}$, $V_{N,i} := \sum_{j\neq i} (\vec{P}_{N})_{i,j} \eps_{N,j}$, and $u_{N,i}=(\vec{X}_N\vec{\beta}_N)_i$. Also denote by $\mathcal{F}_{N,-i}$ the $\sigma$-algebra generated by $\{\eps_{N,j}, j\neq i\}$. We can then write $$\widetilde{b}_{N,i}=((\vec{I}-\vec{P}_N)\eps)_i = (1-h_{N,i})\eps_{N,i} - V_{N,i},\quad b_{N,i}-\widetilde{b}_{N,i}=u_{N,i} + h_{N,i}\,\eps_{N,i}+V_{N,i}.$$
Define $$\xi_{N,\,x,\,y}:=\ind{\widetilde{b}_{N,\,i}\le x,\, b_{N,\,i}-\widetilde{b}_{N,\,i}\le y}- G(x)\ind{ b_{N,\,i}-\widetilde{b}_{N,\,i}\le y}.$$
Since $|\xi_{N,\,x,\,y}|\le 2$ and $\xi_{N,\,x,\,y}$ has jumps at finitely many points (more precisely, $O(N^2)$ points), the same concentration inequality argument using symmetrization, Hoeffding's lemma and Massart's finite class lemma as in the proof of \cref{iid.J_C} (cf.~\cref{proof:iid.J_C}) implies that
$$\sup_{x,y}\left|\frac{1}{N}\sum_{i=1}^N\left(\xi_{N,x,y}-\E\left[\xi_{N,x,y}\mid \mathcal{F}_{N,-i}\right]\right)\right|\asto 0.$$
It only remains to show now that
$\sup_{x,y}\left|\frac{1}{N}\sum_{i=1}^N \E\left[\xi_{N,x,y}\mid \mathcal{F}_{N,-i}\right]\right|\asto 0$.
Note that
    \begin{align*}
&    \frac{1}{N}\sum_{i=1}^N \E\left[\xi_{N,x,y}\mid \mathcal{F}_{N,-i}\right]\\
      &= \frac{1}{N}\sum_{i=1}^N \E\left[ \ind{\widetilde{b}_{N,\,i}\le x,\, b_{N,\,i}-\widetilde{b}_{N,\,i}\le y}- G(x)\ind{ b_{N,\,i}-\widetilde{b}_{N,\,i}\le y}\mid \mathcal{F}_{N,-i}\right]\\
       &= \frac{1}{N}\sum_{i=1}^N \E\left[\ind{\eps_{N,\,i}\le \frac{x+V_{N,i}}{1-h_{N,i}},\, \eps_{N,i}\le \frac{y-u_{N,i}-V_{N,i}}{h_{N,i}}}\right.\\
       &\qquad\qquad\qquad\qquad\qquad\qquad- \left.G(x)\ind{ \eps_{N,i}\le \frac{y-u_{N,i}-V_{N,i}}{h_{N,i}}} \mid \mathcal{F}_{N,-i}\right]\\
       &=\frac{1}{N}\sum_{i=1}^N \left[G\left(\frac{x+V_{N,i}}{1-h_{N,i}}\wedge  \frac{y-u_{N,i}-V_{N,i}}{h_{N,i}}\right)-G(x)\,G\left(\frac{y-u_{N,i}-V_{N,i}}{h_{N,i}}\right)\right].\numberthis\label{eq:copula}
    \end{align*}
    Note that
    \begin{align*}
    &\sup_{x,y} \left|\frac{1}{N}\sum_{i=1}^N \left(G\left(\frac{x+V_{N,i}}{1-h_{N,i}}\wedge \frac{y-u_{N,i}-V_{N,i}}{h_{N,i}}\right)-G\left(x\wedge \frac{y-u_{N,i}-V_{N,i}}{h_{N,i}}\right)\right)\right| \\
  &\le \sup_x \frac{1}{N}\sum_{i=1}^N \left|G\left(\frac{x+V_{N,i}}{1-h_{N,i}}\right)-G(x)\right|\\
   &\le \sup_x  \frac{1}{N}\sum_{i=1}^N \int_{-\infty}^{x}\left|g\left(\frac{t+V_{N,i}}{1-h_{N,i}}\right)-g(t)\right|dt\\
   &\le \sup_x  \frac{1}{N}\sum_{i=1}^N \int_{-\infty}^{x}\left|g(t+V_{N,i})-g(t)-g\left(\frac{t+V_{N,i}}{1-h_{N,i}}\right)+g(t+V_{N,i})\right|dt\\
   &\le \frac{h}{N}\sum_{i=1}^N \omega_g(|V_{n,i}|)+\eta_g\left(\frac{h_{N,i}}{1-h_{N,i}}\right),\numberthis\label{eq:BN2}
\end{align*}
where $\omega_g$ is the modulus of continuity of $g$ and $\eta_g(t)=\sup_{u}|g((1+t)u)-g(u)|$. We show in the proof of \cref{iid.J_C} (cf.~\eqref{eq:BN}) that the RHS of \eqref{eq:BN2} goes to $0$ a.s. Therefore it suffices to show now that
$$\sup_{x,y}\left|\frac{1}{N}\sum_{i=1}^N \left[G\left(x\wedge  w_{N,i}(y)\right)-G\left(x\right)\,G\left(w_{N,i}\right)\right]\right|\asto 0,\quad w_{N,i}:=\frac{y-u_{N,i}-V_{N,i}}{h_{N,i}}.$$
Towards this, first note that for any fixed $y$,
\begin{align*}
    &\sup_{x}\left|\frac{1}{N}\sum_{i=1}^N \left[G\left(x\wedge  w_{N,i}(y)\right)-G\left(x\right)\,G\left(w_{N,i}\right)\right]\right|\\
     &=\sup_{x}\left|\frac{1}{N}\sum_{i=1}^N G\left(x\wedge  w_{N,i}(y)\right)\left[1-G\left(x \vee w_{N,i}\right)\right]\right|\\
    &\le  \frac{1}{N}\sum_{i=1}^N G(w_{N,i}(y))(1-G(w_{N,i}(y))).\numberthis\label{eq:wNi}
\end{align*}
Note that $0\le G(t)(1-G(t))\le {1}/{4}$. We will show that $\sup_y  N^{-1}|{i:|w_{N,i}(y)|\le M}|  \asto 0$.  Write
\begin{equation*}
    \begin{split}
        \frac{1}{N}|{i:|w_{N,i}(y)|\le M}| \le \frac{|i:h_{N,i}>\sqrt{\delta_N}|}{N} &+ \frac{|i:|V_{N,i}|>\delta|}{N}\\&+
        \frac{1}{N}\sum_{i\,:\,|V_{N,i}|\le \delta,\, h_{N,i}\le \sqrt{\delta_N}} \ind{|y-u_{N,i}-V_{N,i}|\le Mh_{N,i}},
    \end{split}
\end{equation*}
where $\delta_N=\max_{i\le N}h_{N,i}$ denotes the max-leverage. Note that $\delta_N\ge \frac{1}{N}\sum_{i=1}^N h_{N,i}=\frac{p}{N}$, which implies that $$\frac{|i: h_{N,i}>\sqrt{\delta_N}|}{N} \le \delta_N^{-1/2}\frac{p}{N}\le \sqrt{\frac{p}{N}}\to 0.$$
On the other hand, we show in the proof of \cref{iid.J_C} that $\frac{1}{N}|i:|V_{N,i}|>\delta|\asto 0$. Finally, note that
\begin{align*}
     &\sup_y \frac{1}{N}\sum_{i\,:\,|V_{N,i}|\le \delta,\, h_{N,i}\le \sqrt{\delta_N}} \ind{|y-u_{N,i}-V_{N,i}|\le Mh_{N,i}}\\
      &\le \sup_y \frac{1}{N}\sum_{i=1}^N \ind{|y-u_{N,i}|\le M\sqrt{\delta_N}+\delta}\\
      &\le \sup_y \left|H\left(y+M\sqrt{\delta}_N+\delta\right)-H\left(y-M\sqrt{\delta}_N-\delta\right)\right|+\sup_t |H_N(t)-H(t)|,
\end{align*}
where $H_N$ is the empirical distribution of the (deterministic) $u_{N,i}$'s and $H$ is the weak limit of $H_N$. Since $H$ is continuous, we conclude that $\sup_t|H_N(t)-H(t)|\to 0$.
Since $\delta_N\le \frac{1}{N}\sum_{i=1}^N h_{N,i}=\frac{p}{N}\to 0$, we conclude from the last display that $$\limsup_{N\to\infty}\sup_y \frac{1}{N}\sum_{i\,:\,|V_{N,i}|\le \delta,\, h_{N,i}\le \sqrt{\delta_N}} \ind{|y-u_{N,i}-V_{N,i}|\le Mh_{N,i}}\le \sup_y (H(y+\delta)-H(y-\delta)).$$
Since $\delta>0$ is arbitrary and $H$ is continuous, we are through. This, in turn, completes the proof, because we can deduce from \eqref{eq:wNi} that $$\limsup_{N\to\infty} \frac{1}{N}\sum_{i=1}^N G(w_{N,i}(y))(1-G(w_{N,i}(y)))\le \sup_{|t|>M}G(t)(1-G(t)\to 0,$$ as $M\to\infty$.
\end{proof}

\subsection{Proof of~\texorpdfstring{\cref{Jb.geq.Ib-old}}{Proposition A.8}}\label{proof:Jb.geq.Ib-old}

\begin{proof}%[Proof of~\cref{Jb.geq.Ib-old}]
Under the assumptions of this theorem,~\cref{Jbexists} tells us that~\cref{AssumpB2} holds in probability, with $\mathcal{J}_b = (2\sqrt{\pi}\sigma)^{-1}$.
Next, fix any $h\in\R$ and define $$\mathcal{I}_{h,N} := N^{-3/2} \sum_{j=1}^N \sum_{i=1}^N I_{h,N}(b_j -b_i),$$
where $I_{h,N}$ is defined in \eqref{Ian} of the main paper. Without loss of generality, we take $h>0$ (the other case will be similar). For simplicity in notation, we shall omit the index $N$ in this proof for $\vec{b}_N,\vec{X}_N$ etc. From $\vec{b}=\vec{X}\vec{\beta}+\vec{\eps}$ we write $b_j - b_i = \eps_j - \eps_i + v_j - v_i$ where $v_i$ denotes the $i$-th coordinate of  $\vec{v}=\vec{X}\vec{\beta}$. It then follows that 
\begin{align*}
\E \mathcal{I}_{h,N} %&= N^{-3/2} \sum_{j=1}^N \sum_{i=1}^N P(0\le b_j -b_i < h/\sqrt{N})\\
&= N^{-3/2} \sum_{j=1}^N \sum_{i=1}^N P(0\le \eps_j - \eps_i + v_j - v_i < hN^{-1/2})\\
&= N^{-3/2} \sum_{j=1}^N \sum_{i=1}^N \left[\Phi\left(\frac{v_i-v_j + hN^{-1/2}}{\sqrt{2}\sigma}\right) - \Phi\left(\frac{v_i-v_j}{\sqrt{2}\sigma}\right)\right]\\
&= N^{-3/2} \sum_{j=1}^N \sum_{i=1}^N \left[\frac{h}{\sqrt{2}\sigma\sqrt{N}}\phi\left(\frac{v_i-v_j}{\sqrt{2}\sigma}\right) + \frac{h^2}{4N\sigma^2} \phi'(\xi_{i,j})\right]\\
&= \frac{h}{2\sqrt{\pi}\sigma}N^{-2} \sum_{j=1}^N \sum_{i=1}^N e^{-(v_j-v_i)^2/4\sigma^2} + (4\sigma^2)^{-1}h^2 N^{-5/2} \sum_{j=1}^N \sum_{i=1}^N\phi'(\xi_{i,j}),
\end{align*}
where for each $(i,j)$, $\xi_{i,j}$ is a number between $v_i-v_j$ and $v_i-v_j+N^{-1/2}$
Since $\phi'$ is bounded, we can show (by proceeding in the same manner as in the proof of~\cref{Jbexists}) that the second sum in the above display is asymptotically negligible. Hence 
\begin{equation}\label{JbIb-0}
\lim_{N\to\infty} \left(\E \mathcal{I}_{h,N} -\frac{h}{2\sqrt{\pi}\sigma} N^{-2} \sum_{j=1}^N \sum_{i=1}^N e^{-(v_j-v_i)^2/4\sigma^2}\right)=0.\end{equation}
Since $0\le N^{-2} \sum_{j=1}^N \sum_{i=1}^N e^{-(v_j-v_i)^2/4\sigma^2}\le 1$, it follows that for any $h>0$,
\begin{equation*}
    \limsup_{N\to\infty} \E \mathcal{I}_{h,N}/h \le (2\sqrt{\pi}\sigma)^{-1}=\mathcal{J}_b.
\end{equation*}
Next we show that $\var(\mathcal{I}_{h,N})\to 0$. Towards that, we first write
$$\E \mathcal{I}_{h,N}^2 = N^{-3}\sum_{i,j,k,l} P\left(0\le b_j-b_i<hN^{-1/2}, 0\le b_l-b_k<hN^{-1/2}\right).$$
{Now, as mentioned earlier,~\cref{AssumpB2} holds in this setting (cf.~\cref{Jbexists}); hence appealing to arguments similar to those given in~\cref{proof:est.I_C}, we argue that the contribution from the terms with repeated indices are negligible.} So we are left with terms with distinct indices $i,j,k,l$. For such indices, note that we have $P(0\le b_j-b_i<hN^{-1/2}, 0\le b_l-b_k<hN^{-1/2})=P(0\le b_j-b_i<hN^{-1/2})P( 0\le b_l-b_k<hN^{-1/2})$, since the errors $\eps_i$'s are independent and $v_i$'s non-stochastic. We can thus write 
$$\E \mathcal{I}_{h,N}^2 =  \left(N^{-3/2}\sum_{i,j \text{ distinct }} P\left(0\le b_j-b_i<hN^{-1/2}\right)\right)^2 + o(1) = (\E\mathcal{I}_{h,N})^2 + o(1),$$
from which the conclusion follows.
Moreover, \eqref{JbIb-0} tells us that~\cref{ACjs} holds, with $\lim_{N\to\infty}\E \mathcal{I}_{h,N}/h=\ell (2\sqrt{\pi} \sigma)^{-1}=\ell \mathcal{J}_b.$
Finally, for any $x\in\R,$ 
\begin{equation} e^{-x^2} \le \frac{1}{1+x^2} \le \max\left\{1-\frac{x^2}{2}, \frac{1}{2}\right\}.
\end{equation}
Hence 
\begin{align*}
    N^{-2} \sum_{j=1}^N \sum_{i=1}^N e^{-(v_j-v_i)^2/4\sigma^2} 
    &\le \max\left\{1-\frac{1}{2N^2}\sum_{j=1}^N \sum_{i=1}^N (v_j-v_i)^2, \frac{1}{2}\right\} \\
    &= \max\left\{1-\frac{1}{N}\sum_{j=1}^N (v_j - \overline{v})^2, \frac{1}{2}\right\}.
\end{align*}
Therefore, if $ \liminf_{N\to\infty} \frac{1}{N}\sum_{j=1}^N (v_j - \overline{v})^2>0$, then
\begin{align*}
\limsup_{N\to\infty}\E\mathcal{I}_{h,N}/h 
& \leq \frac{1}{2\sqrt{\pi}\sigma} \limsup_{N\to\infty} \max\left\{1-\frac{1}{N}\sum_{j=1}^N (v_j - \overline{v})^2, \frac{1}{2}\right\}  \\
& \leq \frac{1}{2\sqrt{\pi}\sigma} \max\left\{1-\liminf_{N\to\infty} \frac{1}{N}\sum_{j=1}^N (v_j - \overline{v})^2, \frac{1}{2}\right\}  \\
%&\leq \frac{1}{2\sqrt{\pi}\sigma}  \max\left\{1-\sigma_v^2, \frac{1}{2}\right\} \\
&< \frac{1}{2\sqrt{\pi}\sigma} = \mathcal{J}_b.
\end{align*}
Hence the proof is complete.
\end{proof}

%===============================================================
%===============================================================
\subsection{Proof of~\texorpdfstring{\cref{est.I_C}}{Proposition A.9}}\label{proof:est.I_C}

\begin{proof}%[Proof of~\cref{est.I_C}] 
First note that under any value of $\tau,$ $$B_N \dd  \sum_{j=1}^N \sum_{i=1, i\neq j}^N (1-Z_{N,i})(1-Z_{N,j}) I_{1,N} (b_{N,j} - b_{N,i}).$$ 
Invoking~\cref{ACjs}, we get
$$ \E N^{-3/2} B_N = \frac{(N-m)(N-m-1)}{N(N-1)} N^{-3/2}\sum_{j=1}^N  \sum_{i=1, i\neq j}^N I_{1,N} (b_{N,j} - b_{N,i})
\to (1-\lambda)^2\mathcal{I}_b.$$
Consequently, $\E ((1-m/N)^{-2} N^{-3/2}  B_N) \Pto \mathcal{I}_b$. Thus it suffices to show that
$\var(B_N) =o(N^3)$.
For brevity, we denote by $I_N(i,j)$ the indicator $I_{1,N}(b_{N,j} - b_{N,i})$ in the rest of the proof.
Note that $$\text{Var}\left(B_N\right) =% \sum_{j=1}^N  \sum_{i=1, i\neq j}^N I_N (i,j) + 
\sum_{i=1}^N  \sum_{j=1, j\neq i}^N \sum_{k=1}^N  \sum_{l=1, l\neq k}^N  \text{Cov}\left((1-Z_{N,i}) (1-Z_{N,j}),(1-Z_{N,k}) (1-Z_{N,l})\right) I_N (i,j) I_N (k, l).$$ 
Since $2\le |\{i, j, k, l\}| \le 4$, we consider the following cases. 
\begin{enumerate}
\item[(a)] $|\{i, j, k, l\}|=2,$ i.e., $(i, j) = (k, l)$ or $(l,k)$. Since $\text{Var}((1-Z_{N,i}) (1-Z_{N,j})) = p_N(1-p_N)$ where $p_N = P(Z_{N,i} = 0, Z_{N,j}=0) = \frac{(N-m)(N-m-1)}{N(N-1)} \sim (1-\lambda)^2,$ the contribution of these terms in $\text{Var}(B_N)$ is given by $$\sum_{i=1}^N  \sum_{j=1, j\neq i}^N 2p_N(1-p_N) I^2_N (i,j) \lesssim 2 (1-\lambda)^2 \sum_{i=1}^N  \sum_{j=1, j\neq i}^N I_N (i,j)  \lesssim N^{3/2}.$$

\item[(b)] $|\{i, j, k, l\}|=4,$ i.e., all $4$ indices are distinct. Note that in this case 
\begin{align*} &\cov\left((1-Z_{N,i}) (1-Z_{N,j}),(1-Z_{N,k}) (1-Z_{N,l})\right)\\ 
&= P(Z_{N,i}=0, Z_{N,j}=0, Z_{N,k} = 0, Z_{N,l} =0) - p_N^2\\
& =  p_N \left(\frac{(N-m-2)(N-m-3)}{(N-2)(N-3)}-  \frac{(N-m)(N-m-1)}{N(N-1)} \right)\\
& =  - p_N \frac{2 m (2 N^2 - 2 m N - 6 N  + 3 + 3 m)}{N(N-1)(N-2)(N-3)}\sim -4\lambda(1-\lambda)^3 N^{-1}.
\end{align*}
Hence the contribution $u_N$ of these terms in $\text{Var}(B_N)$ satisfies the following. \begin{align*}u_N &\lesssim N^{-1} \sum_{i=1}^N  \sum_{j=1, j\neq i}^N \sum_{k=1}^N  \sum_{l=1, l\neq k}^N I_N (i,j)I_N (k,l) \lesssim N^{-1+3/2+3/2} =N^{2}.\end{align*}

\item[(c)] $|\{i, j, k, l\}|=3$. Here we have 4 sub-cases: $i = k, j\neq l$; $i \neq k, j = l$; $i = l, j \neq k$; $i \neq l, j = k$. In each of the subcases, 
\begin{align*} 
&\cov\left((1-Z_{N,i}) (1-Z_{N,j}),(1-Z_{N,k}) (1-Z_{N,l})\right)\\
&= P(Z_{N,1}=Z_{N,2}=Z_{N,3} = 0) - p_N^2 \\
&\sim (1-\lambda)^3 - (1-\lambda)^4 = \lambda(1-\lambda)^3.
\end{align*}
Hence, if $v_n$ be the contribution of these terms in $\text{Var}(B_N)$, then 
\begin{align*}
v_N  & \lesssim  \sum_{i, j, l \text{ distinct}}  I_N (i,j)I_N (i,l)  + \sum_{i, j, k \text{ distinct}}  I_N (i,j) I_N (k,j)  \\ 
&\quad + \sum_{i, j, k \text{ distinct}}  I_N (i,j) I_N (k,i)  + \sum_{i, j, l \text{ distinct}} I_N (i,j)I_N (j,l) \\
&\le 4N\sum_{i, j\neq i}  I_N (i,j)  \lesssim N^{1+3/2} =N^{5/2}.
\end{align*}
\end{enumerate}
Combining the above cases, we conclude that $N^{-3} \text{Var}(B_N) \lesssim N^{-3} (N^{3/2} + N^2 + N^{5/2}) = o(1),$
which completes the proof.
\end{proof}

\subsection{Proof of~\texorpdfstring{\cref{plug-in.Ib}}{Theorem A.10}}\label{proof:plug-in.Ib}

\begin{proof}%[Proof of~\cref{plug-in.Ib}] 
Recall the notation from \cref{plug-in est}. We present the proof for $h=1$; the same proof will work for any fixed $h$.
 In this proof, we replace $\rsnu$ by $\widehat{\tau}_N$, where $\widehat{\tau}_N$ is any $\sqrt{N}$-consistent estimator of $\tau$.
For brevity, we shall skip the index $N$ for $b_{N,i}$'s and $Z_{N,i}$'s in this proof.
Define $$V_N := N^{-(2-\nu)} \sum_{j=1}^N \sum_{i=1}^N \ind{0\le b_j - b_i < N^{-\nu}},$$
and note that~\cref{AssumpIb.new} yields that $V_N\to \mathcal{I}_b$ as $N\to\infty$. So, it suffices to show that $\widehat{V}_N - V_N \Pto 0$ as $N\to\infty$.
We write
\begin{align*}
\widehat{V}_N - V_N &= N^{-(2-\nu)} \sum_{j=1}^N \sum_{i=1}^N \left( \ind{0\le \widehat{b}_j - \widehat{b}_i} - \ind{0\le b_j - b_i}\right)\\
&\quad\quad - N^{-(2-\nu)} \sum_{j=1}^N \sum_{i=1}^N \left( \ind{N^{-\nu}\le \widehat{b}_j - \widehat{b}_i} - \ind{N^{-\nu}\le b_j - b_i}\right).\numberthis\label{Vhat-V}
\end{align*}
Since $\widehat{b}_j - \widehat{b}_i = b_j - b_i -(\widehat{\tau}_N -\tau) (Z_j - Z_i)$, we can write
\begin{align*} 
\Delta_{1,N} &:= N^{-(2-\nu)} \sum_{j=1}^N \sum_{i=1}^N\left( \ind{0\le \widehat{b}_j - \widehat{b}_i} - \ind{0\le b_j - b_i}\right)\\
&= N^{-(2-\nu)} \sum_{j=1}^N \sum_{i=1}^N\left( \ind{(\widehat{\tau}_N -\tau) (Z_j - Z_i)\le b_j- b_i} - \ind{0\le b_j - b_i}\right)\\
&= N^{-(2-\nu)} \sum_{j=1}^N \sum_{i=1}^N\left( \ind{(\widehat{\tau}_N -\tau) (Z_j - Z_i)\le b_j- b_i < 0}\right.\\
&\qquad\qquad\qquad\qquad\qquad\qquad\left. - \ind{0\le b_j - b_i < (\widehat{\tau}_N -\tau) (Z_j - Z_i)}\right).
\end{align*}
Now pick $\nu<\nu'<1/2$ and define an event $$E_N := \{ \left|\widehat{\tau}_N  - \tau\right| > N^{-\nu'}\}.$$ Since $\widehat{\tau}_N  - \tau=O_P(N^{-1/2})$ and $\nu'<1/2$, we get $\widehat{\tau}_N  - \tau=o_p(N^{-\nu'}),$ implying that $P(E_N) \to 0$ as $N\to\infty$.  Therefore for any $\epsilon>0$, $$P(\left|\Delta_{1,N} \inds{E_N} \right|> \epsilon) \le P(\inds{E_N}=1) = P(E_N) \to 0.$$
On the other hand, we have $\left| (\widehat{\tau}_N -\tau) (Z_j - Z_i)\right| \le N^{-\nu'}$ on $E_N^c$, hence
\begin{align*}
&N^{-(2-\nu)} \sum_{j=1}^N \sum_{i=1}^N \ind{(\widehat{\tau}_N -\tau) (Z_j - Z_i)\le b_j- b_i < 0}\inds{E_N^c} \\
&\le N^{-(2-\nu)} \sum_{j=1}^N \sum_{i=1}^N \ind{-N^{-\nu'} \le b_j- b_i < 0}\inds{E_N^c}  \\
&\le N^{-(\nu'-\nu)} \left(N^{-(2-\nu')} \sum_{j=1}^N \sum_{i=1}^N \ind{-N^{-\nu'} \le b_j- b_i < 0}\right).
\end{align*}
\cref{AssumpIb.new} tells us that the term in the above parentheses converges to $\mathcal{I}_b$. Since $\nu'>\nu$, we conclude that $$N^{-(2-\nu)} \sum_{j=1}^N \sum_{i=1}^N \ind{(\widehat{\tau}_N -\tau) (Z_j - Z_i)\le b_j- b_i < 0}\inds{E_N^c}\Pto 0.$$
Similarly, $$N^{-(2-\nu)} \sum_{j=1}^N \sum_{i=1}^N \ind{0\le b_j - b_i < (\widehat{\tau}_N -\tau) (Z_j - Z_i)}\inds{E_N^c}\Pto 0.$$
Thus, $\Delta_{1,N} \inds{E_N^c}\Pto 0$ and consequently $\Delta_{1,N} \Pto 0,\text{ as }N\to\infty$.
%%%%%%%%%%%%%%%%%%%%%%%%%%%%%%%%%%%%%%%%%%%%%%%%%%%%%%%%%%%%%

We next focus on the second summand in the RHS of \eqref{Vhat-V}.
The key idea to deal with this sum is to replace the indicators with smooth functions, such as a Gaussian CDF. Define 
$\sigma_N :=N^{-\nu}(\log N)^{-1}$, $r_N(i,j) := b_j - b_i - N^{-\nu}$, and $\widehat{r}_N(i,j) := \widehat{b}_j - \widehat{b}_i - N^{-\nu}$.
Then \begin{align*}\widehat{V}_N - V_N - \Delta_{1,N} &= N^{-(2-\nu)} \sum_{j=1}^N \sum_{i=1}^N\left( \ind{N^{-\nu}\le \widehat{b}_j - \widehat{b}_i} - \ind{N^{-\nu}\le b_j - b_i}\right)\\ &= \Delta_{3,N} - \Delta_{2,N} + \Delta_{4,N} + \Delta_{5,N},\end{align*}
where
$$\Delta_{2,N} := 
N^{-(2-\nu)} \sum_{j=1}^N \sum_{i=1}^N\left( \ind{r_N(i,j)>0 }- \Phi\left(\frac{r_N(i,j)}{\sigma_N}\right)\right),$$
$$\Delta_{3,N} := 
N^{-(2-\nu)} \sum_{j=1}^N \sum_{i=1}^N\left( \ind{\widehat{r}_N(i,j)>0 }- \Phi\left(\frac{\widehat{r}_N(i,j)}{\sigma_N}\right)\right),$$
$$ \Delta_{4,N} := N^{-(2-\nu)} \sum_{j=1}^N \sum_{i=1}^N\left(\Phi\left(\frac{\widehat{r}_N(i,j)}{\sigma_N}\right)- \Phi\left(\frac{r_N(i,j)}{\sigma_N}\right)\right),$$
and $$ \Delta_{5,N} := N^{-(2-\nu)} \sum_{j=1}^N \sum_{i=1}^N\left(\ind{\widehat{r}_N(i,j)=0}- \ind{r_N(i,j)=0}\right).$$
Showing $\Delta_{2,N} \to0$, $\Delta_{k,N}\Pto 0$ for $k=3,4,5,$ will complete the proof. 
%%%%%%%%%%%%%%%%%%%%%%%%%%%%%%%%%%%%%%%%%%%%%%%%%%%%%%%%%%%%%%%%%%%%%%%%%%%%%%%%%%%%%%%%%%%%%%%%%%%%%%%%%%%%%%%%%%%%%%%%%%%%%%%%%%%%%%%%%%%%%%%%%%%%%%%%%%%%%%%%%%%%%%%%%%%%%%%%%%%%%%%%%%%%%%%%%%%%
\begin{enumerate}
\item[(a)] First we deal with  $\Delta_{5,N}$. Note that
$$N^{-(2-\nu)} \sum_{j=1}^N \sum_{i=1}^N \ind{r_N(i,j)=0} \le N^{-(2-\nu)} \sum_{j=1}^N \sum_{i=1}^N \ind{N^{-\nu} \le b_j - b_i < (1+\delta)N^{-\nu}} \to \delta \mathcal{I}_b.$$
Now let $\delta\to 0$ to get $$\lim_{N\to\infty} N^{-(2-\nu)} \sum_{j=1}^N \sum_{i=1}^N \ind{r_N(i,j)=0}=0.$$ For $\widehat{r}_N(i,j)$ we  proceed just as in the proof of $\Delta_{1,N}\to 0$. Define $E_N = \{ \left|\widehat{\tau}_N  - \tau\right| > N^{-\nu'}\}$ for some $\nu<\nu'<1/2$. Then
\begin{align*}
&N^{-(2-\nu)} \sum_{j=1}^N \sum_{i=1}^N \ind{\widehat{r}_N(i,j)=0}\inds{E_N^c} \\
&\le N^{-(2-\nu)} \sum_{j=1}^N \sum_{i=1}^N \ind{N^{-\nu} \le \widehat{b}_j - \widehat{b}_i < (1+\delta)N^{-\nu}}\inds{E_N^c} \\
&= N^{-(2-\nu)} \sum_{j=1}^N \sum_{i=1}^N \ind{N^{-\nu} \le b_j -b_i  - (\widehat{\tau}_N -\tau) (Z_j - Z_i)< (1+\delta)N^{-\nu}}\inds{E_N^c} \\
&\le N^{-(2-\nu)} \sum_{j=1}^N \sum_{i=1}^N \ind{-\left|\widehat{\tau}_N -\tau\right| +N^{-\nu} \le b_j -b_i < (1+\delta)N^{-\nu}+\left|\widehat{\tau}_N -\tau\right| }\inds{E_N^c} \\
&\le N^{-(2-\nu)} \sum_{j=1}^N \sum_{i=1}^N \ind{-N^{-\nu'} +N^{-\nu} \le b_j -b_i < (1+\delta)N^{-\nu}+N^{-\nu'}}.
\end{align*}
Since $\nu'>\nu,$ it holds for all sufficiently large $N$ that $N^{-\nu'} \le \delta N^{-\nu}$, and consequently
\begin{align*}
&\limsup_{N\to\infty} N^{-(2-\nu)} \E\sum_{j=1}^N \sum_{i=1}^N \ind{\widehat{r}_N(i,j)=0}\inds{E_N^c}\\
&\le \limsup_{N\to\infty} N^{-(2-\nu)} \sum_{j=1}^N \sum_{i=1}^N \ind{(1-\delta)N^{-\nu} \le b_j -b_i < (1+2\delta)N^{-\nu}} = 3\delta \mathcal{I}_b.
\end{align*}
Letting $\delta\to 0$ here, and using $P(E_N) \to 0,$ we conclude that $$N^{-(2-\nu)} \sum_{j=1}^N \sum_{i=1}^N \ind{\widehat{r}_N(i,j)=0}\Pto 0, \text{ as }N\to\infty.$$
%%%%%%%%%%%%%%%%%%%%%%%%%%%%%%%%%%%%%%%%%%%%%%%%%%%%%%%%%%%%%%%%%%%%%%%%%%%%%%%%%%%%%%%%%%%%%%%%%%%%%%%%%%%%%%%%%%%%%%%%%%%%%%%%%%%%%%%%%%%%%%%%%%%%%%%%%%%%%%%%%%%%%%%%%%%%%%%%%%%%%%%%%%%%%%%%%%%%%%%%%%%%%%%%%%%%%%%%%%%%%%%%

%old c
\item[(b)] Next we deal with $\Delta_{2,N},$ for which the bound provided by~\cref{Phi-bound} is crucial.
Recall that $$\Delta_{2,N} = N^{-(2-\nu)} \sum_{j=1}^N \sum_{i=1}^N\left( \ind{r_N(i,j)>0 }- \Phi\left(\frac{r_N(i,j)}{\sigma_N}\right)\right),$$
where $\sigma_N =  N^{-\nu}(\log N)^{-1}$ and 
$r_N(i,j) = b_j - b_i - N^{-\nu}$.
The key idea is to use the bound in~\cref{Phi-bound} only for those $i,j$ for which $r_N(i,j)$ is at least as large as $\delta N^{-\nu}$. Towards that, fix $\delta>0$ and define 
$$\mathcal{S}_{N,\delta}  = \{(i, j) : \left| r_N(i,j) \right| \le \frac{\delta}{N^\nu}, 1\le i,j\le N\}.$$
Note that
\begin{align*} 
(i,j)\in\mathcal{S}_{N,\delta}  &\implies \delta N^{-\nu} \ge \left| r_N (i,j) \right| \ge \left|\left|b_j - b_i\right| -N^{-\nu}\right| \\
&\implies \left|b_j -b_i\right| \in\left[(1-\delta)N^{-\nu},(1+\delta)N^{-\nu}\right].
\end{align*}
Therefore~\cref{Ib.even.more} implies that 
\begin{equation}\label{S_N,delta.new}
\lim_{\delta\downarrow 0}\limsup_{N\to\infty} N^{-(2-\nu)} \left|\mathcal{S}_{N,\delta} \right| = 0.
\end{equation}
On the other hand, $(i,j)\not\in\mathcal{S}_{N,\delta} \implies \left| r_N (i,j) \right| >  \delta N^{-\nu}$, and then~\cref{Phi-bound} yields that
\begin{align*}&\left| \ind{r_N(i,j)>0 }- \Phi\left(\frac{r_N(i,j)}{\sigma_N}\right)\right| \\
&\le \sigma_N \left|r_N(i,j)\right|^{-1} \exp(-r_N^2(i,j)/2\sigma_N^2)\\
&\le \sigma_N \delta^{-1} N^\nu \exp(-\delta^2N^{-2\nu}/2\sigma_N^2)\\
&= (\log N)^{-1} \delta^{-1} \exp\left(-\delta^2(\log N)^2/2\right)
\end{align*}
Combining the above bounds, the following chain of inequalities emerges.
\begin{align*}
\left|\Delta_{2,N}\right| &\le  N^{-(2-\nu)} \sum_{j=1}^N  \sum_{i=1}^N \left| \ind{r_N(i,j)>0 }- \Phi\left(\frac{r_N(i,j)}{\sigma_N}\right)\right|  \\
&\le N^{-(2-\nu)}  \sum_{(i,j)\in \mathcal{S}_{N,\delta} } 2 + N^{-(2-\nu)}  \sum_{(i,j)\not\in \mathcal{S}_{N,\delta} } \left| \ind{r_N(i,j)>0 }- \Phi\left(\frac{r_N(i,j)}{\sigma_N}\right)\right| \\
&\le 2N^{-(2-\nu)}  \left| \mathcal{S}_{N,\delta} \right|  + N^{-(2-\nu)}  N^2   (\log N)^{-1} \delta^{-1} \exp\left(-\delta^2(\log N)^2/2\right)\\[2mm]
&=2N^{-(2-\nu)}  \left| \mathcal{S}_{N,\delta} \right|  +  (\log N)^{-1} \delta^{-1} \exp\left(\nu \log N-\delta^2(\log N)^2/2\right).
\end{align*}
Therefore, $$\limsup_{N\to\infty} \left|\Delta_{2,N}\right| \le 2\limsup_{N\to\infty} N^{-(2-\nu)}  \left| \mathcal{S}_{N,\delta} \right|.$$
Letting $\delta\downarrow 0$ and invoking \eqref{S_N,delta.new} we get the desired conclusion.

%old d
\item[(c)] Recall that 
$$\Delta_{3,N} = 
N^{-(2-\nu)} \sum_{j=1}^N \sum_{i=1}^N\left( \ind{\widehat{r}_N(i,j)>0 }- \Phi\left(\frac{\widehat{r}_N(i,j)}{\sigma_N}\right)\right).$$
We can proceed just as in the previous proof. First we argue that for any fixed $\delta>0,$ $$N^{-(2-\nu)} \sum_{(i,j): \left|\widehat{r}_N(i,j)\right| > \delta N^{-\nu}} \left|\ind{\widehat{r}_N(i,j)>0 }- \Phi\left(\frac{\widehat{r}_N(i,j)}{\sigma_N}\right)\right| \Pto 0,\text{ as }N\to\infty.$$
Proof of the above part is exactly same as what we did in part (b). For the other part, it suffices to show that for any $\epsilon>0,$
\begin{equation}\label{plug-in.Ib.last}
\lim_{\delta\downarrow 0 }\limsup_{N\to\infty} P\left(N^{-(2-\nu)} \sum_{j=1}^N\sum_{i=1}^N \ind{\left|\widehat{b}_j -  \widehat{b}_i\right| \in \left[(1-\delta)N^{-\nu},(1+\delta)N^{-\nu}\right]} > \epsilon\right) = 0\end{equation}
because the above implies that  $$\lim_{\delta\downarrow 0 }\limsup_{N\to\infty} P\left(N^{-(2-\nu)} \left|\{(i,j): \left|\widehat{r}_N(i,j)\right| \le \delta N^{-\nu}\}\right| > \epsilon\right) = 0,$$
which further implies that $$\lim_{\delta\downarrow 0 }\limsup_{N\to\infty} P\left(\left|\Delta_{3,N}\right| > \epsilon\right) = 0.$$
To prove \eqref{plug-in.Ib.last}, we  again use the event $\widetilde{E}_{N,K} = \{ \left|\widehat{\tau}_N  - \tau\right| > N^{-\nu'}\}$ where $\nu<\nu'<1/2$.
\begin{align*}
& N^{-(2-\nu)} \sum_{j=1}^N\sum_{i=1}^N \ind{\left|\widehat{b}_j -  \widehat{b}_i\right| \in \left[(1-\delta)N^{-\nu},(1+\delta)N^{-\nu}\right]}\inds{E_N^c}\\
&\le N^{-(2-\nu)} \sum_{j=1}^N \sum_{i=1}^N \ind{(1-\delta)N^{-\nu} \le \widehat{b}_j - \widehat{b}_i < (1+\delta)N^{-\nu}}\inds{E_N^c} \\
&= N^{-(2-\nu)} \sum_{j=1}^N \sum_{i=1}^N \ind{(1-\delta)N^{-\nu} \le b_j -b_i  - (\widehat{\tau}_N -\tau) (Z_j - Z_i)< (1+\delta)N^{-\nu}}\inds{E_N^c} \\
&\le N^{-(2-\nu)} \sum_{j=1}^N \sum_{i=1}^N \ind{-\left|\widehat{\tau}_N -\tau\right| +(1-\delta)N^{-\nu} \le b_j -b_i < (1+\delta)N^{-\nu}+\left|\widehat{\tau}_N -\tau\right| }\inds{E_N^c} \\
&\le N^{-(2-\nu)} \sum_{j=1}^N \sum_{i=1}^N \ind{-N^{-\nu'} +(1-\delta)N^{-\nu} \le b_j -b_i < (1+\delta)N^{-\nu}+N^{-\nu'}}.
\end{align*}
Since $\nu'>\nu,$ it holds for all sufficiently large $N$ that $N^{-\nu'} \le \delta N^{-\nu}$, and consequently
\begin{align*}
& \limsup_{N\to\infty} N^{-(2-\nu)} \E\sum_{j=1}^N\sum_{i=1}^N \ind{\left|\widehat{b}_j -  \widehat{b}_i\right| \in \left[(1-\delta)N^{-\nu},(1+\delta)N^{-\nu}\right]}\inds{E_N^c}\\
&\le \limsup_{N\to\infty}  N^{-(2-\nu)} \sum_{j=1}^N \sum_{i=1}^N \ind{(1-2\delta)N^{-\nu} \le b_j -b_i < (1+2\delta)N^{-\nu}}  =4\delta \mathcal{I}_b.
\end{align*}
Letting $\delta\to 0$ here, we conclude that
\begin{equation}\label{Delta3Nlast}
\lim_{\delta\downarrow 0 }\limsup_{N\to\infty} N^{-(2-\nu)} \E\sum_{j=1}^N\sum_{i=1}^N \ind{\left|\widehat{b}_j -  \widehat{b}_i\right| \in \left[(1-\delta)N^{-\nu},(1+\delta)N^{-\nu}\right]}\inds{E_N^c}= 0.
\end{equation}
Finally, for any $\epsilon>0,$
\begin{align*}
&P\left(N^{-(2-\nu)} \sum_{j=1}^N\sum_{i=1}^N \ind{\left|\widehat{b}_j -  \widehat{b}_i\right| \in \left[(1-\delta)N^{-\nu},(1+\delta)N^{-\nu}\right]} > \epsilon\right)\\
&\le P\left(N^{-(2-\nu)} \sum_{j=1}^N\sum_{i=1}^N \ind{\left|\widehat{b}_j -  \widehat{b}_i\right| \in \left[(1-\delta)N^{-\nu},(1+\delta)N^{-\nu}\right]}\inds{E_N^c} > \epsilon/2\right)+ P(E_N)\\
&\le (\epsilon/2)^{-1} N^{-(2-\nu)} \E\sum_{j=1}^N\sum_{i=1}^N \ind{\left|\widehat{b}_j -  \widehat{b}_i\right| \in \left[(1-\delta)N^{-\nu},(1+\delta)N^{-\nu}\right]}\inds{E_N^c} + P(E_N).
\end{align*}
Invoking \eqref{Delta3Nlast} and $\lim_{N\to\infty} P(E_N)=0$, we complete the proof of \eqref{plug-in.Ib.last}.
\item[(d)] Next let us focus on $\Delta_{4,N}$. We first write $$\left|\Delta_{4,N}\right| \le N^{-(2-\nu)} \sum_{j=1}^N \sum_{i=1}^N\left|\Phi\left(\frac{\widehat{r}_N(i,j)}{\sigma_N}\right)- \Phi\left(\frac{r_N(i,j)}{\sigma_N}\right)\right|.$$
Now we break the last summation into three parts: (i) $(i,j)$ such that $|r_N(i,j)| \le \delta N^{-\nu},$ i.e., $(i,j)\in \mathcal{S}_{N,\delta};$ (ii) $(i,j)$ such that $|\widehat{r}_N(i,j)| \le \delta N^{-\nu};$ and (iii) remaining $(i,j)$'s, for which $|r_N(i,j)| \wedge |\widehat{r}_N(i,j)| > \delta N^{-\nu}$. For indices $(i,j)$ in the cases (i) and (ii), we can use the crude bound $\left|\Phi(x)- \Phi(y)\right|\le 2$, since  equations \eqref{S_N,delta.new} and \eqref{plug-in.Ib.last} tell us that these contributions will be asymptotically negligible. For the case (iii), we use the following bound: for any $x,y\in\R,$ $$\left|\Phi(x) - \Phi(y)\right| \le |x-y| \sup_{z\in [x\wedge y, x\vee y]} \phi(z) \le |x-y| \exp\left(-\frac{x^2\wedge y^2}{2}\right).$$ Thus, if $\mathcal{T}_{N,\delta}$ denote the (random) set of indices in case (iii), we have
\begin{align*}
& N^{-(2-\nu)} \sum_{(i,j)\in \mathcal{T}_{N,\delta}} \left|\Phi\left(\frac{\widehat{r}_N(i,j)}{\sigma_N}\right)- \Phi\left(\frac{r_N(i,j)}{\sigma_N}\right)\right|\\
&\le \sigma_N^{-1} N^{-(2-\nu)} \sum_{(i,j)\in \mathcal{T}_{N,\delta}}\left|\widehat{r}_N(i,j) - r_N(i,j)\right| \exp\left(-\frac{r_N(i,j)^2\wedge \widehat{r}_N(i,j)^2}{2\sigma_N^{2} }\right) \\
&\le \sigma_N^{-1} N^{-(2-\nu)} \sum_{(i,j)\in \mathcal{T}_{N,\delta}}\left|\widehat{r}_N(i,j) - r_N(i,j)\right| \exp\left(-\frac{\delta^2 N^{-2\nu}}{2\sigma_N^{2} }\right) \\
&\le  \sigma_N^{-1} N^{-(2-\nu)}\exp\left(-\frac{\delta^2 N^{-2\nu}}{2\sigma_N^{2} }\right) \sum_{j=1}^N \sum_{i=1}^N\left|(\widehat{b}_j - \widehat{b}_i)-(b_j - b_i)\right|\\
&= \sigma_N^{-1} N^{-(2-\nu)} \exp\left(-\frac{\delta^2}{2}(\log N)^2\right)\sum_{j=1}^N \sum_{i=1}^N\left|(\widehat{\tau}_N -\tau) (Z_j - Z_i)\right|\\
&\le \sigma_N^{-1} N^{\nu} \left|\widehat{\tau}_N -\tau\right|\exp\left(-\frac{\delta^2}{2}(\log N)^2\right)\\
&=  N^{2\nu-1/2} \log N \cdot \sqrt{N} \left|\widehat{\tau}_N -\tau\right|\exp\left(-\frac{\delta^2}{2}(\log N)^2\right).
\end{align*}
Since $\widehat{\tau}_N -\tau=O_P(N^{-1/2})$, we get the desired conclusion that $\Delta_{4,N}\Pto 0$ as $N\to\infty$.
\end{enumerate}
Combining the above steps, the proof is now complete.
\end{proof}

\subsection{Proof of~\texorpdfstring{\cref{plug-in.Jb2}}{Theorem A.11}}\label{proof:plug-in.Jb2}

\begin{proof}%[Proof of~\cref{plug-in.Jb2}] 
Recall the notation from \cref{plug-in est}. We present the proof for $h=1$; the same proof will work for any fixed $h$.
In this proof, we replace $\rsna$ by $\widehat{\tau}_N$, where $\widehat{\tau}_N$ is any $\sqrt{N}$-consistent estimator of $\tau$. For brevity, we shall skip the index $N$ for $b_{N,i}$, $Z_{N,i}$, $\vec{X}_N$, and $\vec{p}_{N,i}$ in this proof.
To start with, we define $$W_N := N^{-(2-\nu)} \sum_{j=1}^N \sum_{i=1}^N \ind{0\le \widetilde{b}_j -  \widetilde{b}_i < N^{-\nu}},$$
and note that $W_N\to \mathcal{J}_b,$ by~\cref{AssumpJb.new}. We are to show that $\widehat{W}_N - W_N \Pto 0$.  %The proof is quite similar to the proof of~\cref{plug-in.Ib}. 
Let $I$ and $J$ be two random indices chosen with replacement from $\{1,2,\dots,N\}$.  We replace the event $E_N = \{\left|\widehat{\tau}_N - \tau \right| > N^{-\nu'}\}$ in~\cref{proof:plug-in.Ib} with  the following event: 
$$\widetilde{E}_{N,K} = \{\left|\widehat{\tau}_N - \tau \right| > N^{-\nu'}\} \cup \{\sqrt{N}\|\vec{p}_J -  \vec{p}_I\| > K\},$$
where $\nu<\nu'<1/2$, and $K>0$ is a fixed positive real number. Observe that 
\begin{align*}
P(\sqrt{N}\|\vec{p}_J -  \vec{p}_I\| > K) 
&\le K^{-2} N\cdot \E_{J,I} \|\vec{p}_J -  \vec{p}_I\|^2 \\
&= K^{-2} N^{-1} \sum_{j=1}^N \sum_{i=1}^N \|\vec{p}_j -  \vec{p}_i\|^2 \\
&= 2(\rank(\vec{X})-1) K^{-2},
\end{align*}
where the last step follows from the fact that $\vec{p}_j$'s are columns of an idempotent matrix (see~\eqref{norm} for a proof). Thus,
%Since $\widehat{\tau}_N  - \tau=O_P(N^{-1/2})$ and $\nu'<1/2$, we get $\widehat{\tau}_N  - \tau=o_p(N^{-\nu'})$, implying that
$$P(\widetilde{E}_{N,K}) \leq P(\sqrt{N}\|\vec{p}_J -  \vec{p}_I\| > K) \lesssim 2(p-1)K^{-2},$$
where $p$ is the number of covariates (i.e., number of columns of $\vec{X}$).
Also note that,
$$\widehat{\widetilde{b}}_j  =  Y_j - \widehat{\tau}_N Z_j -  \vec{p}_j^\top(\vec{Y} - \widehat{\tau}_N \vec{Z})=\widetilde{b}_j - (\widehat{\tau}_N - \tau) \left( Z_j - \vec{p}_j^\top \vec{Z}\right).$$
Thus $\widehat{\widetilde{b}}_j - \widehat{\widetilde{b}}_i = \widetilde{b}_j - \widetilde{b}_i -  (\widehat{\tau}_N - \tau)u_N(i,j)$, where  
$u_N(i,j):= Z_j -  Z_i- (\vec{p}_j - \vec{p}_i)^\top \vec{Z}.$
Now write \begin{align*}
\widehat{W}_N - W_N &= N^{-(2-\nu)} \sum_{j=1}^N \sum_{i=1}^N\Big( \ind{0\le \widehat{\widetilde{b}}_j - \widehat{\widetilde{b}}_i} - \ind{0\le \widetilde{b}_j - \widetilde{b}_i}\Big)\\
&\quad\quad - N^{-(2-\nu)} \sum_{j=1}^N \sum_{i=1}^N\Big( \ind{N^{-\nu}\le \widehat{\widetilde{b}}_j - \widehat{\widetilde{b}}_i} - \ind{N^{-\nu}\le \widetilde{b}_j - \widetilde{b}_i}\Big) .\end{align*}
%%%%%%%%%%%%%%%%%%%%%%%%%%%%%%%%%%%%%%%%%%%%%%%%%%%%%%%%%%%%%%%%%%%%%%%%%%%%%%%%%%%%%%%%%%%%%%%%%%%%%%%%%%%%%%%%%%%%%%
For the first part, we imitate the proof of $\Delta_{1,N} \Pto 0, $  as follows. \begin{align*} 
\widetilde{\Delta}_{1,N} &= N^{-(2-\nu)} \sum_{j=1}^N \sum_{i=1}^N\Big( \ind{0\le \widehat{\widetilde{b}}_j - \widehat{\widetilde{b}}_i} - \ind{0\le \widetilde{b}_j - \widetilde{b}_i}\Big)\\
&= N^{-(2-\nu)} \sum_{j=1}^N \sum_{i=1}^N\Big( \ind{(\widehat{\tau}_N -\tau) u_N(i,j)\le \widetilde{b}_j - \widetilde{b}_i} - \ind{0\le \widetilde{b}_j - \widetilde{b}_i}\Big)\\
&= N^{-(2-\nu)} \sum_{j=1}^N \sum_{i=1}^N\left( \ind{(\widehat{\tau}_N -\tau) u_N(i,j)\le \widetilde{b}_j - \widetilde{b}_i < 0}\right.\\
&\qquad\qquad\qquad\qquad\qquad\qquad\left. - \ind{0\le \widetilde{b}_j - \widetilde{b}_i < (\widehat{\tau}_N -\tau) u_N(i,j)}\right).
\end{align*}
Now fix any $\epsilon>0$ and note that on  $\widetilde{E}_{N,K}^c$ we have $\left|\widehat{\tau}_N -\tau\right|\le N^{-\nu'}$,  and 
\begin{align*}
\left|u_N(i,j)\right| &\le  \left|Z_j -  Z_i\right| + \left|(\vec{p}_j - \vec{p}_i)^\top \vec{Z}\right|\\
&\le 1 + \|\vec{p}_j - \vec{p}_i\| \cdot \|\vec{Z}\| \\
&\le 1 + \sqrt{N} \|\vec{p}_j - \vec{p}_i\|\\
&\le 1+K.
\end{align*}
Hence
\begin{align*}
&N^{-(2-\nu)} \sum_{j=1}^N \sum_{i=1}^N \ind{(\widehat{\tau}_N -\tau) u_N(i,j)\le \widetilde{b}_j - \widetilde{b}_i < 0}\inds{\widetilde{E}_{N,K}^c} \\
&\le N^{\nu} \cdot N^{-2} \sum_{j=1}^N \sum_{i=1}^N \ind{-N^{-\nu'}(1 + K) \le \widetilde{b}_j - \widetilde{b}_i   < 0}\inds{\widetilde{E}_{N,K}^c} \\
&\le N^{-(\nu'-\nu)} \left(N^{-(2-\nu')} \sum_{j=1}^N \sum_{i=1}^N \ind{-N^{-\nu'}(1+K) \le \widetilde{b}_j - \widetilde{b}_i  < 0}\right).
\end{align*}
Now~\cref{AssumpJb.new} tells us that the term in the above parentheses converges to $(1+K)\mathcal{J}_b$. Since $\nu'>\nu,$ we conclude that $$N^{-(2-\nu)} \sum_{j=1}^N \sum_{i=1}^N \ind{(\widehat{\tau}_N -\tau) u_N(i,j)\le \widetilde{b}_j - \widetilde{b}_i< 0}\inds{\widetilde{E}_{N,K}^c}\Pto 0.$$
Similarly, $$N^{-(2-\nu)} \sum_{j=1}^N \sum_{i=1}^N \ind{0\le \widetilde{b}_j - \widetilde{b}_i< (\widehat{\tau}_N -\tau) u_N(i,j)}\inds{\widetilde{E}_{N,K}^c}\Pto 0.$$
Thus $\widetilde{\Delta}_{1,N} \inds{\widetilde{E}_{N,K}^c}\Pto 0$. Hence
\begin{align*}
\limsup_{N\to\infty} P\left(\Big|\widetilde{\Delta}_{1,N} \Big| > \epsilon\right) &\le \limsup_{N\to\infty} P\left(\Big|\widetilde{\Delta}_{1,N} \inds{\widetilde{E}_{N,K}} \Big|> \epsilon/2\right)\\
&\le \limsup_{N\to\infty} P\left(\inds{\widetilde{E}_{N,K}}=1\right) \\
&= \limsup_{N\to\infty} P(\widetilde{E}_{N,K}) \le 2(p-1)K^{-2}.\end{align*}
Letting $K\to\infty$ here, we conclude that
$\widetilde{\Delta}_{1,N} \Pto 0,\text{ as }N\to\infty$.
Let us next focus on the second part of $\widehat{W}_N - W_N$, namely, $$ N^{-(2-\nu)} \sum_{j=1}^N \sum_{i=1}^N\left( \ind{N^{-\nu}\le \widehat{\widetilde{b}}_j - \widehat{\widetilde{b}}_i} - \ind{N^{-\nu}\le \widetilde{b}_j - \widetilde{b}_i}\right) .$$
Once again we approximate the indicators in the above display using the CDF of $\mathcal{N}(0,\sigma_N^2)$ where  $\sigma_N :=N^{-\nu}(\log N)^{-1}$. 
Define $$\widetilde{r}_N(i,j) := \widetilde{b}_j - \widetilde{b}_i - N^{-\nu},\quad\text{and}\quad \widehat{\widetilde{r}}_N(i,j) := \widehat{\widetilde{b}}_j - \widehat{\widetilde{b}}_i - N^{-\nu}.$$
The rest of the proof is essentially same as the proof of~\cref{plug-in.Ib}. We write $$N^{-(2-\nu)} \sum_{j=1}^N \sum_{i=1}^N\left( \ind{N^{-\nu}\le \widehat{\widetilde{b}}_j - \widehat{\widetilde{b}}_i} - \ind{N^{-\nu}\le \widetilde{b}_j - \widetilde{b}_i}\right)  = \widetilde{\Delta}_{3,N} - \widetilde{\Delta}_{2,N} + \widetilde{\Delta}_{4,N} + \widetilde{\Delta}_{5,N},$$
where
$$\widetilde{\Delta}_{2,N} = 
N^{-(2-\nu)} \sum_{j=1}^N \sum_{i=1}^N\left( \ind{\widetilde{r}_N(i,j)>0 }- \Phi\left(\frac{\widetilde{r}_N(i,j)}{\sigma_N}\right)\right),$$
$$\widetilde{\Delta}_{3,N} = 
N^{-(2-\nu)} \sum_{j=1}^N \sum_{i=1}^N\left( \ind{\widehat{\widetilde{r}}_N(i,j)>0 }- \Phi\left(\frac{\widehat{\widetilde{r}}_N(i,j)}{\sigma_N}\right)\right),$$
$$ \widetilde{\Delta}_{4,N} = N^{-(2-\nu)} \sum_{j=1}^N \sum_{i=1}^N\left(\Phi\left(\frac{\widehat{\widetilde{r}}_N(i,j)}{\sigma_N}\right)- \Phi\left(\frac{\widetilde{r}_N(i,j)}{\sigma_N}\right)\right),$$
and $$ \widetilde{\Delta}_{5,N} = N^{-(2-\nu)} \sum_{j=1}^N \sum_{i=1}^N\left(\ind{\widehat{\widetilde{r}}_N(i,j)=0}- \ind{\widetilde{r}_N(i,j)=0}\right).$$
The proof of $\widetilde{\Delta}_{2,N}\to 0$ is exactly same as the proof of $\Delta_{2,N}\to 0$ given in~\cref{proof:plug-in.Ib}, hence omitted. 
%%%%%%%%%%%%%%%%%%%%%%%%%%%%%%%%%%%%%%%%%%%%%%%%%%%%%%%%%%%%%%%%%%%%%%%%%%%%%%%%%%%%%%%%%%%%%%%%%%%%%%%%%%%%%%%%%%%%%%%%
Next, in order to prove $\widetilde{\Delta}_{3,N}\Pto 0,$ we first argue that for any fixed $\delta>0,$  $$N^{-(2-\nu)} \sum_{(i,j): \left|\widehat{\widetilde{r}}_N(i,j)\right| > \delta N^{-\nu}} \left|\ind{\widehat{\widetilde{r}}_N(i,j)>0 }- \Phi\left(\frac{\widehat{\widetilde{r}}_N(i,j)}{\sigma_N}\right)\right| \Pto 0,\text{ as }N\to\infty.$$
Proof of the above display is omitted, for being completely analogous to the proof of part (b) in the proof of~\cref{plug-in.Ib}. For the other part of $\widetilde{\Delta}_{3,N}$, it suffices to show that for any $\epsilon>0,$
\begin{equation}\label{plug-in.Jb.last}
\lim_{\delta\downarrow 0 }\limsup_{N\to\infty} P\left(N^{-(2-\nu)} \sum_{j=1}^N\sum_{i=1}^N \ind{\big|\widehat{\widetilde{b}}_j -  \widehat{\widetilde{b}}_i\big| \in \left[(1-\delta)N^{-\nu},(1+\delta)N^{-\nu}\right]} > \epsilon\right) = 0\end{equation}
because the above implies that  $$\lim_{\delta\downarrow 0 }\limsup_{N\to\infty} P\left(N^{-(2-\nu)} \left|\{(i,j): \left|\widehat{\widetilde{r}}_N(i,j)\right| \le \delta N^{-\nu}\}\right| > \epsilon\right) = 0,$$
which further implies that $$\lim_{\delta\downarrow 0 }\limsup_{N\to\infty} P\left(\left|\widetilde{\Delta}_{3,N}\right| > \epsilon\right) = 0.$$
In order to prove \eqref{plug-in.Jb.last}, we once again use the event $\widetilde{E}_{N,K}$ and the trick with $I$ and $J$. For brevity, we use the notation $$\kappa_{j,i}:=(1+\sqrt{N}\|\vec{p}_j - \vec{p}_i\|) \left|\widehat{\tau}_N - \tau\right|.$$ Observe that,
\begin{align*}
& N^{-(2-\nu)} \sum_{j=1}^N\sum_{i=1}^N \ind{\big|\widehat{\widetilde{b}}_j -  \widehat{\widetilde{b}}_i\big| \in \left[(1-\delta)N^{-\nu},(1+\delta)N^{-\nu}\right]}\inds{\widetilde{E}_{N,K}^c}\\
&\le N^{-(2-\nu)} \sum_{j=1}^N \sum_{i=1}^N \ind{(1-\delta)N^{-\nu} \le \widehat{\widetilde{b}}_j - \widehat{\widetilde{b}}_i< (1+\delta)N^{-\nu}}\inds{\widetilde{E}_{N,K}^c} \\
&= N^{-(2-\nu)} \sum_{j=1}^N \sum_{i=1}^N \ind{(1-\delta)N^{-\nu} \le \widetilde{b}_j-\widetilde{b}_i - (\widehat{\tau}_N -\tau) u_N(i,j)< (1+\delta)N^{-\nu}}\inds{\widetilde{E}_{N,K}^c} \\
&\le N^{-(2-\nu)} \sum_{j=1}^N \sum_{i=1}^N \ind{-\kappa_{j,i} +(1-\delta)N^{-\nu} \le \widetilde{b}_j-\widetilde{b}_i< (1+\delta)N^{-\nu}+\kappa_{j,i} }\inds{\widetilde{E}_{N,K}^c} \\
&= N^{\nu}\cdot \E_{J,I} \left[\ind{-\kappa_{J,I} +(1-\delta)N^{-\nu} \le \widetilde{b}_J -\widetilde{b}_I < (1+\delta)N^{-\nu}+\kappa_{J,I} }\inds{\widetilde{E}_{N,K}^c}\right]\\
\begin{split} &\le N^{\nu}\cdot \E_{J,I} \left[\mathbf{1}\Big\{-(1+K)\left|\widehat{\tau}_N -\tau\right| +(1-\delta)N^{-\nu} \le \widetilde{b}_J -\widetilde{b}_I  \right.  \\
&\left. \qquad\qquad\qquad\qquad\qquad\qquad\qquad\qquad\qquad\qquad\qquad < (1+\delta)N^{-\nu}+(1+K)\left|\widehat{\tau}_N -\tau\right| \Big\}\inds{\widetilde{E}_{N,K}^c}\right]
\end{split}\\
&\le N^{-(2-\nu)} \sum_{j=1}^N \sum_{i=1}^N \mathbf{1}\Big\{-(1+K)\left|\widehat{\tau}_N -\tau\right| +(1-\delta)N^{-\nu} \le \widetilde{b}_j-\widetilde{b}_i  \\
&\qquad\qquad\qquad\qquad\qquad\qquad\qquad\qquad\qquad\qquad\qquad\qquad < (1+\delta)N^{-\nu}+(1+K)\left|\widehat{\tau}_N -\tau\right| \Big\} \inds{\widetilde{E}_{N,K}^c}\\
%&\le N^{\nu}\cdot \E_{J,I} \left[\ind{-(1+K)\left|\widehat{\tau}_N -\tau\right| +(1-\delta)N^{-\nu} \le \widetilde{b}_J -\widetilde{b}_I  < (1+\delta)N^{-\nu}+(1+K)\left|\widehat{\tau}_N -\tau\right| }\inds{\widetilde{E}_{N,K}^c}\right] \\
%&\le N^{-(2-\nu)} \sum_{j=1}^N \sum_{i=1}^N \ind{-(1+K)\left|\widehat{\tau}_N -\tau\right| +(1-\delta)N^{-\nu} \le \widetilde{b}_j-\widetilde{b}_i< (1+\delta)N^{-\nu}+(1+K)\left|\widehat{\tau}_N -\tau\right| }\inds{\widetilde{E}_{N,K}^c} \\
&\le N^{-(2-\nu)} \sum_{j=1}^N \sum_{i=1}^N \ind{-(1+K)N^{-\nu'} +(1-\delta)N^{-\nu} \le \widetilde{b}_j-\widetilde{b}_i< (1+\delta)N^{-\nu}+(1+K)N^{-\nu'}}.
\end{align*}
Since $\nu'>\nu,$ it holds for all sufficiently large $N$ that $(1+K)N^{-\nu'} \le \delta N^{-\nu}$, and consequently
\begin{align*}
& \limsup_{N\to\infty} N^{-(2-\nu)} \E\left(\sum_{j=1}^N\sum_{i=1}^N \ind{\big|\widehat{\widetilde{b}}_j -  \widehat{\widetilde{b}}_i\big| \in \left[(1-\delta)N^{-\nu},(1+\delta)N^{-\nu}\right]}\inds{\widetilde{E}_{N,K}^c}\right)\\
&\le \limsup_{N\to\infty} N^{-(2-\nu)} \sum_{j=1}^N \sum_{i=1}^N \ind{(1-2\delta)N^{-\nu} \le \widetilde{b}_j-\widetilde{b}_i< (1+2\delta)N^{-\nu}} = 4\delta \mathcal{J}_b.
\end{align*}
Letting $\delta\to 0$ here, we conclude that
\begin{equation}\label{tildeDelta3Nlast}
\lim_{\delta\downarrow 0 }\limsup_{N\to\infty} N^{-(2-\nu)} \E\left(\sum_{j=1}^N\sum_{i=1}^N \ind{\big|\widehat{\widetilde{b}}_j -  \widehat{\widetilde{b}}_i\big| \in \left[(1-\delta)N^{-\nu},(1+\delta)N^{-\nu}\right]}\inds{\widetilde{E}_{N,K}^c}\right)= 0.
\end{equation}
Finally, for any $\epsilon>0,$
\begin{align*}
&P\left(N^{-(2-\nu)} \sum_{j=1}^N\sum_{i=1}^N \ind{\big|\widehat{\widetilde{b}}_j -  \widehat{\widetilde{b}}_i\big| \in \left[(1-\delta)N^{-\nu},(1+\delta)N^{-\nu}\right]} > \epsilon\right)\\
&\le P\left(N^{-(2-\nu)} \sum_{j=1}^N\sum_{i=1}^N \ind{\big|\widehat{\widetilde{b}}_j -  \widehat{\widetilde{b}}_i\big| \in \left[(1-\delta)N^{-\nu},(1+\delta)N^{-\nu}\right]}\inds{\widetilde{E}_{N,K}^c} > \epsilon/2\right)+ P(\widetilde{E}_{N,K})\\
&\le (\epsilon/2)^{-1} N^{-(2-\nu)} \E\left(\sum_{j=1}^N\sum_{i=1}^N \ind{\big|\widehat{\widetilde{b}}_j -  \widehat{\widetilde{b}}_i\big| \in \left[(1-\delta)N^{-\nu},(1+\delta)N^{-\nu}\right]}\inds{\widetilde{E}_{N,K}^c}\right) + P(\widetilde{E}_{N,K}).
\end{align*}
Invoking \eqref{tildeDelta3Nlast} and using  $\lim_{K\to\infty}\limsup_{N\to\infty} P(\widetilde{E}_{N,K})=0$, we complete the proof of \eqref{plug-in.Jb.last}, which, in turn, proves that $\widetilde{\Delta}_{3,N}\Pto 0$.

For $k=4$ or $5,$ the proof of $\widetilde{\Delta}_{k,N}\Pto 0$ follows by making similar changes in the proof of  $\Delta_{k,N}\Pto 0$ in~\cref{proof:plug-in.Ib}, just as we did for $k=1$ or $3$.
\end{proof}

%
%===============================================================
%===============================================================%===============================================================
%===============================================================%

\subsection{Proof of~\texorpdfstring{\cref{propo-decomp}}{Proposition C.1}}\label{proof:propo-decomp}

\begin{proof}%[Proof of~\cref{propo-decomp}]  
The statistic $t_N$ is defined as $$t_N = \widehat{\vec{q}}_N^\top \vec{Z}_N,\quad \widehat{q}_N(j) = \sum_{i=1}^N \ind{Y_{N,i} - \tau_0 Z_{N,i} \le Y_{N,j} - \tau_0 Z_{N,j}},\ 1\le j\le N.$$ Note, the distribution of $t_N$ under $\tau=\tau_N$ depends on $\tau_N$ only through the vector $\widehat{\vec{q}}_N$. Under $\tau=\tau_N$ we can write $$(Y_{N,1} - \tau_0 Z_{N,1}, \dots, Y_{N,N} - \tau_0 Z_{N,N}) \dd (b_{N,1} + (\tau_N - \tau_0)Z_{N,1}, \dots, b_{N,N} + (\tau_N - \tau_0)Z_{N,N})$$ and hence the distribution of $\widehat{\vec{q}}_N^\top \vec{Z}_N$ under $\tau=\tau_N$ is same as the randomization distribution of $\vec{q}_N^\top \vec{Z}_N$ where $$q_{N,j}=\sum_{i=1}^N \ind{b_{N,i} + (\tau_N -\tau_0) Z_{N,i} \le b_{N,j}+ (\tau_N -\tau_0) Z_{N,j}},\ 1\le j\le N.$$ 
In order words, we have  $t_N \dd t^*_N$ under $\tau=\tau_N,$ where 
\begin{align*}
t^*_N  &= \sum_{j=1}^N Z_{N,j} \sum_{i=1}^N \ind{b_{N,i} + (\tau_N -\tau_0) Z_{N,i} \le b_{N,j}+ (\tau_N -\tau_0)Z_{N,j}}\\
&= \sum_{j=1}^N Z_{N,j} \sum_{i=1}^N \left(Z_{N,i}\ind{b_{N,i}  \le b_{N,j}} + (1-Z_{N,i})\ind{hN^{-1/2} \le b_{N,j} - b_{N,i}} \right) \\
&=\sum_{j=1}^N Z_{N,j} + \sum_{j=1}^N \sum_{i=1, i\neq j}^N Z_{N,j} Z_{N,i}\ind{b_{N,i}  \le b_{N,j}} \\
&\qquad\qquad\qquad+  \sum_{j=1}^N \sum_{i=1, i\neq j}^N Z_{N,j} (1-Z_{N,i})\ind{hN^{-1/2} \le b_{N,j} - b_{N,i}}. 
\end{align*}
Now when $h\ge 0,$  $$\ind{hN^{-1/2} \le b_{N,j} - b_{N,i}} = \ind{0 \le b_{N,j} - b_{N,i}} - \ind{0\le b_{N,j} - b_{N,i}<hN^{-1/2}}$$ and thus
\[t^*_N = m +  \sum_{j=1}^N \sum_{i=1, i\neq j}^N Z_{N,j}\ind{b_{N,i}  \le b_{N,j}} -  \sum_{j=1}^N \sum_{i=1, i\neq j}^N(1-Z_{N,i})Z_{N,j}\ind{0 \le b_{N,j} - b_{N,i} < hN^{-1/2}} .\]
Similarly, for $h < 0,$  $$\ind{hN^{-1/2} \le b_{N,j} - b_{N,i}} = \ind{0 \le b_{N,j} - b_{N,i}} + \ind{hN^{-1/2}\le b_{N,j} - b_{N,i}<0}$$ which gives us the desired expression. 
\end{proof}

\section{Some technical results}\label{sec:sometechlem}

%===============================================================
%===============================================================%===============================================================
%===============================================================%

%
%===============================================================
% THE TWO LEMMAS THAT BUILD THE CONNECTING RESULT LEMMA 1
%===============================================================
%

%%%\subsection{}\label{proof:}

\begin{lemma} \label{HLtypeCLT} 
Let $t(\vec{Z}, \vec{Y}-\tau_0\vec{Z})$ be a test statistic for testing $H_0:\tau=\tau_0$ and $\rsnu$ be the estimator of $\tau$ based on $t(\cdot,\cdot)$, as defined in~\eqref{eq:Tau-Unadj} of the main paper. Assume that for  all values of $\vec{y}$ and $\vec{z}$, $t(\vec{z}, \vec{y} -\tau \vec{z})$ is a non-increasing function of $\tau$.
Let $\{c_N\}_{N\ge 1}$ and $\{d_N\}_{N\ge 1}$ be sequences of positive real numbers and $\mu_N$ be the mean of $t_N = t(\vec{Z}_N, \vec{Y}_N - \tau_0 \vec{Z}_N)$ under $\tau=\tau_0$.  Fix $h\in\R$ and define $\tau_N := \tau_0 - h/c_N$. Suppose that 
\begin{equation} \label{psa0}
\lim_{N\to\infty} \mathbb{P}_{\tau_N} \left(d_N (t_N - \mu_N) \le x\right) = G((x+hB)/A)
\end{equation}
holds for every $x\in\R$, where $G$ is a  distribution function of a continuous random variable with mean $0$ and variance $1$, and $A, B>0$ are constants. Then it holds that
\begin{equation} \label{CLT0}
\lim_{N\to\infty} \mathbb{P}_{\tau_0} \left(c_N (\rsnu - \tau_0) \le h\right) = G(hB/A).
\end{equation}
\end{lemma}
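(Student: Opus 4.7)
The plan is to translate the event $\{c_N(\widehat{\tau}^{\mathrm{R}}-\tau_0)\le h\}$ into an event on $t_N$ under the local alternative $\tau_N$, via a Hodges-Lehmann-style monotonicity sandwich, and then invoke \eqref{psa0}. First I would set $\tau' := \tau_0 + h/c_N$ and exploit the key distributional identity: under $\tau=\tau_0$, the vector $\vec{Y}_N-\tau'\vec{Z}_N = \vec{b}_N - (h/c_N)\vec{Z}_N$ has exactly the same law (as a function of $\vec{Z}_N$) as $\vec{Y}_N-\tau_0\vec{Z}_N$ under $\tau=\tau_N = \tau_0 - h/c_N$. Consequently,
\[
\mathbb{P}_{\tau_0}\bigl(t(\vec{Z}_N,\vec{Y}_N-\tau'\vec{Z}_N)\le \mu_N\bigr) \;=\; \mathbb{P}_{\tau_N}\bigl(t_N\le \mu_N\bigr) \;=\; \mathbb{P}_{\tau_N}\bigl(d_N(t_N-\mu_N)\le 0\bigr),
\]
which by hypothesis \eqref{psa0} (and continuity of $G$) tends to $G(hB/A)$.

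Next I would establish the sandwich inclusions
\[
\{\widehat{\tau}^{\mathrm{R}} < \tau'\} \;\subseteq\; \{t(\vec{Z}_N,\vec{Y}_N-\tau'\vec{Z}_N)\le \mu_N\} \;\subseteq\; \{\widehat{\tau}^{\mathrm{R}} \le \tau'\}.
\]
For the left inclusion: since $\widehat{\tau}^{*}\le \widehat{\tau}^{**}$ gives $\widehat{\tau}^{*}\le \widehat{\tau}^{\mathrm{R}}$, the event $\widehat{\tau}^{\mathrm{R}}<\tau'$ implies $\widehat{\tau}^{*}<\tau'$, i.e.\ $\sup\{\tau:t(\vec{Z}_N,\vec{Y}_N-\tau\vec{Z}_N)>\mu_N\}<\tau'$, which by non-increasingness of $t$ in $\tau$ yields $t(\vec{Z}_N,\vec{Y}_N-\tau'\vec{Z}_N)\le \mu_N$. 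For the right inclusion: if $t(\vec{Z}_N,\vec{Y}_N-\tau'\vec{Z}_N)<\mu_N$ then $\tau'$ lies in $\{\tau:t<\mu_N\}$, so $\widehat{\tau}^{**}\le \tau'$ and hence $\widehat{\tau}^{\mathrm{R}}\le \widehat{\tau}^{**}\le \tau'$; the boundary case $t=\mu_N$ can be absorbed into the inequality $\le \mu_N$ without affecting the bound. Combining these gives
\[
\mathbb{P}_{\tau_0}(\widehat{\tau}^{\mathrm{R}}<\tau') \;\le\; \mathbb{P}_{\tau_N}(t_N\le \mu_N) \;\le\; \mathbb{P}_{\tau_0}(\widehat{\tau}^{\mathrm{R}}\le \tau').
\]

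Rewriting in terms of $c_N(\widehat{\tau}^{\mathrm{R}}-\tau_0)$ and letting $N\to\infty$, I obtain
\[
\limsup_N \mathbb{P}_{\tau_0}\bigl(c_N(\widehat{\tau}^{\mathrm{R}}-\tau_0)<h\bigr) \;\le\; G(hB/A) \;\le\; \liminf_N \mathbb{P}_{\tau_0}\bigl(c_N(\widehat{\tau}^{\mathrm{R}}-\tau_0)\le h\bigr).
\]
To upgrade the strict/non-strict mismatch, I would approximate from the right: for any $h''>h$, apply the liminf bound at $h''$ via the chain $\mathbb{P}_{\tau_0}(c_N(\widehat{\tau}^{\mathrm{R}}-\tau_0)\le h)\le \mathbb{P}_{\tau_0}(c_N(\widehat{\tau}^{\mathrm{R}}-\tau_0)<h'')\le G(h''B/A)+o(1)$, then send $h''\downarrow h$ and use continuity of $G$. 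A symmetric argument with $h'<h$ pins down the liminf side. Together these sandwich the full limit at $G(hB/A)$, proving \eqref{CLT0}. The only delicate step I expect is the monotonicity sandwich; this is routine once the non-increasingness of $t(\vec{z},\vec{y}-\tau\vec{z})$ in $\tau$ is used carefully, and the continuity of $G$ takes care of the ties that arise from the discreteness of the rank statistic.
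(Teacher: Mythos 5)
Your proof is correct and follows essentially the same route as the paper's: the distributional identity for $t(\vec{Z},\vec{Y}-\tau'\vec{Z})$ under $\tau_0$ versus $t_N$ under $\tau_N$ (the paper's \cref{lem2}), the monotonicity sandwich between the estimator event and the test-statistic event (the paper's \cref{lem1}), and the evaluation of \eqref{psa0} at $x=0$ with continuity of $G$ absorbing the strict/non-strict boundary cases. The only cosmetic difference is that you close the $<$ versus $\le$ gap by sending $h''\downarrow h$, whereas the paper does it by noting that both $G_N(\mu_N)$ and $G_N(\mu_N-)$ converge to $G(hB/A)$; the two devices are equivalent.
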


%\subsection{Proof of~\cref{HLtypeCLT}}\label{proof:HLtypeCLT}

\begin{proof}%[Proof of~\cref{HLtypeCLT}]
Denote $\tau'_N = \tau_0 + h/c_N$. We use~\cref{lem1} to write the following:
\begin{equation}\label{sand}
\begin{split}
\mathbb{P}_{\tau_0}\left(t_N(\vec{Z}, \vec{Y} - \tau'_N \vec{Z}) < \mu_N\right) &\le \mathbb{P}_{\tau_0}\left(\widehat{\tau}^* \leq \tau'_N\right) \\
& \le \mathbb{P}_{\tau_0}\left(\rsnu \leq \tau'_N\right) \\
& \le \mathbb{P}_{\tau_0}\left(\widehat{\tau}^{**} \leq \tau'_N\right)\le \mathbb{P}_{\tau_0}\left(t_N(\vec{Z}, \vec{Y} -\tau'_N \vec{Z}) \le \mu_N\right).
\end{split}
\end{equation}
In view of~\cref{lem2}, the distribution of $t_N(\vec{Z}, \vec{Y} - \tau'_N \vec{Z})$ under $\tau=\tau_0$ is same as the distribution of $t_N = t_N(\vec{Z}, \vec{Y} - \tau_0 \vec{Z})$ under $\tau=\tau_N$, which we denote by $G_N$. Then equation \eqref{psa0} tells us that $G_N(x/d_N + \mu_N) \to G((x+hB)/A)$ as $n\to\infty,$ for every $x\in\R$. In particular, for $x=0$ we can say that $G_N(\mu_N) \to G(hB/A)$ and that $G_N(\mu_N-)\to G(hB/A - )$ which is also equal to $G(hB/A)$ since $G$ is continuous. Therefore both the extreme sides of  \eqref{sand} converge to $G(hB/A)$ and thus by sandwich principle, equation \eqref{CLT0} holds.

%Observe further that equation \eqref{CLT0} also holds if we replace $\rsnu$ with $\widehat{\tau}^*$ or $\widehat{\tau}^{**}$. Now  $\rsnu$ being a solution of the equation $t(\vec{Z}, \vec{Y} -\tau\vec{Z}) = \mu$, is sandwiched between $\widehat{\tau}^*$ and $\widehat{\tau}^{**}$. Hence $\rsnu$ has the same limiting distribution as $\rsnu$, as long as \eqref{psa0} holds for the test statistic $t(\cdot, \cdot)$.
\end{proof}

\begin{lemma}\label{lem1}
Let $t(\cdot, \cdot)$ be a test statistic such that for all values of $\vec{y}$ and $\vec{z}$, $t(\vec{z}, \vec{y} -\tau \vec{z})$ is a non-increasing function of $\tau$. Then for any $h\in\R,$ 
\begin{equation*}
\begin{split}
\mathbb{P}_\tau\left(t(\vec{Z}, \vec{Y} -a \vec{Z}) < \mu\right) \le \mathbb{P}_\tau\left(\widehat{\tau}^* \leq a\right) 
& \le \mathbb{P}_\tau\left(\rsnu \leq a\right) \\
& \le \mathbb{P}_\tau\left(\widehat{\tau}^{**} \leq a\right)\le \mathbb{P}_\tau\left(t(\vec{Z}, \vec{Y} -a \vec{Z}) \le \mu\right).
\end{split}
\end{equation*}
\end{lemma}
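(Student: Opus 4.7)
The plan is to deduce each of the four probability inequalities from a deterministic event inclusion and then apply the monotonicity of probability.

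First I would use the non-increasing monotonicity of $t(\vec{Z}, \vec{Y}-\tau\vec{Z})$ in $\tau$ to argue that $\widehat{\tau}^{*} \le \widehat{\tau}^{**}$: the sets $\{\tau: t(\vec{Z}, \vec{Y}-\tau\vec{Z}) > \mu\}$ and $\{\tau: t(\vec{Z}, \vec{Y}-\tau\vec{Z}) < \mu\}$ are respectively a left and a right half-line in $\tau$, and they are disjoint, so every element of the former is at most any element of the latter. Since $\widehat{\tau}^{\mathrm{R}} = (\widehat{\tau}^{*} + \widehat{\tau}^{**})/2$, the three estimators therefore satisfy $\widehat{\tau}^{*} \le \widehat{\tau}^{\mathrm{R}} \le \widehat{\tau}^{**}$ pointwise. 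This gives the two middle event inclusions and hence the two middle probability inequalities.

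Next I would handle the two outer inclusions. For the first, suppose $t(\vec{Z}, \vec{Y}-a\vec{Z}) < \mu$. Then by non-increasing monotonicity, $t(\vec{Z}, \vec{Y}-\tau\vec{Z}) \le t(\vec{Z}, \vec{Y}-a\vec{Z}) < \mu$ for every $\tau \ge a$, so the defining set of $\widehat{\tau}^{*}$ is contained in $(-\infty, a)$, forcing $\widehat{\tau}^{*} \le a$. For the last, suppose $\widehat{\tau}^{**} \le a$. By the definition of infimum, there exists a sequence $\tau_n \downarrow \widehat{\tau}^{**}$ with $t(\vec{Z}, \vec{Y}-\tau_n \vec{Z}) < \mu$; since $a \ge \widehat{\tau}^{**}$, monotonicity yields $t(\vec{Z}, \vec{Y}-\tau\vec{Z}) < \mu$ for all $\tau > a$, and a right-limit argument then delivers $t(\vec{Z}, \vec{Y}-a\vec{Z}) \le \mu$.

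Combining these event inclusions with the monotonicity of $\mathbb{P}_\tau$ gives the entire chain of four inequalities. The main technical obstacle is the boundary behavior at $\tau = a$: when the infimum defining $\widehat{\tau}^{**}$ is not attained and equals $a$ exactly, the value $t(\vec{Z}, \vec{Y}-a\vec{Z})$ could strictly exceed $\mu$ even though the right-limit is $\le \mu$. For the rank statistics used in the paper, however, $\tau \mapsto t(\vec{Z}, \vec{Y}-\tau\vec{Z})$ is a piecewise-constant step function whose jumps occur only at finitely many $\tau$-values (namely the pairwise differences of the observed responses along relevant coordinates), so the right-limit argument goes through cleanly and the non-strict boundary inequality is recovered.
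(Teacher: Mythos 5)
Your treatment of the left-most inequality and of the two middle inequalities is correct and is essentially the paper's own argument: the sets $\{\tau : t(\vec{Z},\vec{Y}-\tau\vec{Z})>\mu\}$ and $\{\tau : t(\vec{Z},\vec{Y}-\tau\vec{Z})<\mu\}$ are disjoint half-lines, so $\widehat{\tau}^{*}\le\widehat{\tau}^{\mathrm{R}}\le\widehat{\tau}^{**}$ pointwise, and $t(\vec{Z},\vec{Y}-a\vec{Z})<\mu$ forces $\{\tau: t>\mu\}\subseteq(-\infty,a)$, hence $\widehat{\tau}^{*}\le a$.

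The right-most inequality is where the real difficulty sits, and you have correctly located the obstacle but your proposed repair does not work. You need the contrapositive $t(\vec{Z},\vec{Y}-a\vec{Z})>\mu\Rightarrow\widehat{\tau}^{**}>a$, whereas monotonicity only yields $\widehat{\tau}^{**}\ge a$: if $a$ is a jump point at which the step function drops across $\mu$, then $t(a)>\mu$ while $t(\tau)<\mu$ for all $\tau>a$, so $\widehat{\tau}^{**}=a$ and the event inclusion fails. Appealing to piecewise constancy with finitely many jumps does not help, because the failure occurs precisely at the jump points: for the up-rank WRS statistic \eqref{WRS_N} the map $\tau\mapsto t(\vec{Z},\vec{Y}-\tau\vec{Z})$ is \emph{left}-continuous (the indicator $\mathbf{1}\{\tau\le Y_j-Y_i\}$ equals $1$ at $\tau=Y_j-Y_i$), so $t(a)$ takes the \emph{upper} value at a jump; the right limit $\lim_{\tau\downarrow a}t(\vec{Z},\vec{Y}-\tau\vec{Z})\le\mu$ is a lower bound on $t(a)$ for a non-increasing function, not an upper bound, so no ``right-limit argument'' can deliver $t(a)\le\mu$. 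In fairness, the paper's own proof has the same lacuna: its case (b) asserts $t(\vec{Z},\vec{Y}-a\vec{Z})>\mu\Rightarrow a<\sup\{\tau : t>\mu\}$ with a strict inequality that is unjustified when the supremum is attained at $a$. The discrepancy lives entirely on the event $\{\widehat{\tau}^{**}=a,\ t(\vec{Z},\vec{Y}-a\vec{Z})>\mu\}$, and in the only downstream use (\cref{HLtypeCLT}) the two outer probabilities are squeezed to a common continuous limit, so the asymptotic conclusions are unaffected; but as stated, the final link of the chain requires either weakening it to $\mathbb{P}_\tau(\widehat{\tau}^{**}<a)\le\mathbb{P}_\tau(t(\vec{Z},\vec{Y}-a\vec{Z})\le\mu)$ or explicitly controlling this boundary event, and your write-up (like the paper's) does neither.
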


\begin{proof}%[Proof of~\cref{lem1}] 
The proof is straightforward from the definitions of $\widehat{\tau}^*$ and $\widehat{\tau}^{**}$. First observe that $\tau^*\le \rsnu\le \tau^{**}$ holds almost surely. Now, if $a$ is such that $ t(\vec{Z}, \vec{Y} -a \vec{Z}) < \mu,$ then $$h \ge \sup \{\tau : t(\vec{Z}, \vec{Y} - \tau \vec{Z}) < \mu\} = \widehat{\tau}^{**} \implies h \ge \widehat{\tau}^*,$$ so the left-most inequality  follows. Next, $$\widehat{\tau}^*  \leq a \implies \rsnu \leq a \implies \widehat{\tau}^{**}\leq a.$$ Finally, suppose that $a$ is such that $ t(\vec{Z}, \vec{Y} -a \vec{Z}) > \mu$. We split it into two cases:\\ (a) when $\widehat{\tau}^{*} < \widehat{\tau}^{**},$  we have $$t(\vec{Z}, \vec{Y} -a \vec{Z}) > \mu \implies a \le \sup \{\tau : t(\vec{Z}, \vec{Y} - \tau \vec{Z}) > \mu \} = \widehat{\tau}^{*} < \widehat{\tau}^{**}.$$
(b) When $\widehat{\tau}^{*} = \widehat{\tau}^{**},$ we have $$t(\vec{Z}, \vec{Y} -a \vec{Z}) > \mu \implies h< \sup \{\tau : t(\vec{Z}, \vec{Y} - \tau \vec{Z}) > \mu \} = \widehat{\tau}^{*} = \widehat{\tau}^{**}.$$
This finishes the proof.
\end{proof}

\begin{lemma}\label{lem2}
The distribution of $t(\vec{Z}, \vec{Y} - (\tau + \delta)\vec{Z})$ under $\tau=\tau_0$ is identical to the distribution of $t(\vec{Z}, \vec{Y} - \tau \vec{Z})$ under $\tau=\tau_0 - \delta$.
\end{lemma}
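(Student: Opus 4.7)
The plan is to directly compute the argument of $t(\cdot,\cdot)$ on both sides and observe that, as random vectors (viewed as functions of $\vec{Z}$), they coincide. The only source of randomness in the model is the treatment assignment vector $\vec{Z}$, whose distribution is the SRSWOR randomization distribution and does not depend on the true value of $\tau$. Therefore, it suffices to show that under the stated true parameter on each side, the vector fed into $t(\vec{Z},\cdot)$ coincides, after which the two distributions are trivially equal.

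The key step is to unwind the potential outcomes using the constant additive treatment effect assumption. Under true parameter $\tau$, we have $a_i = b_i + \tau$, so $Y_i = Z_i a_i + (1-Z_i) b_i = b_i + \tau Z_i$. Substituting:
\begin{itemize}
\item Under $\tau = \tau_0$, the adjusted response is
\[
Y_i - (\tau_0 + \delta) Z_i \;=\; b_i + \tau_0 Z_i - (\tau_0+\delta) Z_i \;=\; b_i - \delta Z_i.
\]
\item Under $\tau = \tau_0 - \delta$, the adjusted response is
\[
Y_i - \tau_0 Z_i \;=\; b_i + (\tau_0 - \delta) Z_i - \tau_0 Z_i \;=\; b_i - \delta Z_i.
\]
\end{itemize}
In both cases, the vector entering $t(\vec{Z},\cdot)$ is the deterministic function $\vec{b} - \delta \vec{Z}$ of $\vec{Z}$ (recall that $\vec{b}$ is fixed). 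Hence both sides equal $t(\vec{Z}, \vec{b} - \delta \vec{Z})$ as random variables in $\vec{Z}$, and since the randomization distribution of $\vec{Z}$ is the same under $\mathbb{P}_{\tau_0}$ and $\mathbb{P}_{\tau_0 - \delta}$, the two distributions coincide.

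There is no substantive obstacle here; the lemma is an algebraic bookkeeping identity that formalizes the intuition that shifting the adjustment value by $\delta$ while keeping the true parameter fixed is distributionally equivalent to shifting the true parameter by $-\delta$ while keeping the adjustment fixed. The only thing to be careful about is the (mild) notational overloading of $\tau$ between a dummy argument of $t$ and the true parameter indexing the probability measure; once both sides are reduced to the same function of $\vec{Z}$, the conclusion is immediate.
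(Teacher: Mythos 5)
Your proof is correct and follows essentially the same route as the paper's: both substitute $\vec{Y}=\vec{b}+\tau\vec{Z}$ using the constant additive effect assumption, observe that the two adjusted-response vectors reduce to the same function of $\vec{Z}$ (the paper keeps the dummy $\tau$ symbolic and rewrites $\tau_0\vec{Z}-(\tau+\delta)\vec{Z}=(\tau_0-\delta)\vec{Z}-\tau\vec{Z}$, while you specialize to $\tau=\tau_0$ and get $\vec{b}-\delta\vec{Z}$ on both sides), and conclude via the fact that the randomization distribution of $\vec{Z}$ is free of $\tau$. No gaps.
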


\begin{proof}%[Proof of~\cref{lem2}]  
For any $x\in\R,$
\begin{align*}
 \mathbb{P}_{\tau_0} \left(t(\vec{Z}, \vec{Y} - (\tau + \delta)\vec{Z})\le x\right)
&= \mathbb{P}_{\tau_0} \left(t(\vec{Z}, \tau_0 \vec{Z} + \vec{b} - (\tau + \delta)\vec{Z})\le x\right) \\
&= P\left(t(\vec{Z}, (\tau_0  - \delta)\vec{Z} + \vec{b} - \tau\vec{Z})\le x\right) \\
&= \mathbb{P}_{\tau_0-\delta} \left(t(\vec{Z}, \vec{Y} - \tau\vec{Z})\le x\right),
\end{align*}
where the second step follows from the fact that the randomization distribution of $\vec{Z}$ is free of  $\tau$. %(this is why we used $P$ without any subscript there).
\end{proof}

%\begin{align*}
%& \lim_{N\to\infty} N\cdot \frac{\max_{1\le i\le N} (c_{N,i} - \bar{c}_N)^2}{\sum_{i=1}^N (c_{N,i} - \bar{c}_N)^2}\cdot\frac{\max_{1\le i\le N} (q_{N,i} - \overline{q}_N)^2}{\sum_{i=1}^N (q_{N,i} - \overline{q}_N)^2}\\
%& = \lim_{N\to\infty}\frac{\max\{\bar{c}_N^2, (1 - \bar{c}_N)^2\}}{\frac{m(N)}{N}(1- \bar{c}_N)^2 + \left(1 - \frac{m(N)}{N}\right) \bar{c}_N^2}\cdot\frac{\max_{1\le i\le N} \left((i -  \frac{N+1}{2}) - (\bar{r}_N - \frac{N+1}{2})\right)^2}{\sum_{i=1}^N (q_{N,i} - \overline{q}_N)^2}\\
%& \le \lim_{N\to\infty}\frac{\max\{\bar{c}_N^2, (1 - \bar{c}_N)^2\}}{\frac{m(N)}{N}(1- \bar{c}_N)^2 + \left(1 - \frac{m(N)}{N}\right) \bar{c}_N^2}\cdot\frac{2\max_{1\le i\le N} \left(i - \frac{N+1}{2}\right)^2 + O(N)}{\frac{N(N^2-1)}{12} + o(N^3)}\\
%&= \frac{\max\{\lambda^2, (1 -\lambda)^2\}}{\lambda(1-\lambda)}\cdot \lim_{N\to\infty}\frac{\left(N - \frac{N+1}{2}\right)^2 + O(N)}{\frac{N(N^2-1)}{12} + o(N^3)} = 0
%\end{align*}

%\subsection{Proof of~\cref{lem001}}\label{proof:lem001}

%\begin{equation}\label{ties}
%\sum_{i=1}^N \left(q_{N,i} - \overline{q}_N\right)^2 = \frac{N(N^2-1)}{12}  + o(N^3), \text{ as $N\to\infty$}.
%\end{equation}

\begin{lemma}\label{lem001}
Suppose that the potential control outcomes $\{b_{N,j} \}_{1\le j\le N}$ satisfy~\cref{ACjs}. Define $$q_{N,j}=\sum_{i=1}^N \ind{b_{N,i}\le b_{N,j}},\quad 1\le j\le N.$$
Then, as $N\to\infty$,
\begin{equation*}
\overline{q}_N := \frac{1}{N}\sum_{j=1}^N q_{N,j} = \frac{N+1}{2} + o(N^{1/2}),\quad\text{ and }\quad
\frac{1}{N}\sum_{j=1}^N \left(q_{N,j} - \overline{q}_N\right)^2 = \frac{N^2-1}{12}+o(N^2).
\end{equation*}
\end{lemma}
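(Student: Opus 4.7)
The plan is to reduce both claims to the exact formulae that hold for a tie-free sample, and control the error introduced by ties using Assumption \ref{ACjs}. Let $v_1<v_2<\cdots<v_K$ be the distinct values among $\{b_{N,j}\}$ with multiplicities $n_1,\ldots,n_K$, and write $S_k=n_1+\cdots+n_k$. Set $T_N:=\sum_{i\neq j}\mathbf{1}(b_{N,i}=b_{N,j}) = \sum_k n_k(n_k-1)$, the number of ordered tied pairs. The central bound I will extract from Assumption \ref{ACjs} is $T_N=o(N^{3/2})$, and consequently $\max_k n_k = o(N^{3/4})$. Both follow from the trivial domination
$$\sum_{i,j} I_{h,N}(b_{N,j}-b_{N,i}) \ge \sum_{i,j}\mathbf{1}(b_{N,i}=b_{N,j}) = N+T_N,$$
which, combined with \eqref{eq:ACjs}, gives $\limsup_N N^{-3/2}(N+T_N) \le h\mathcal{I}_b$ for every $h>0$; since $h$ is arbitrary, $T_N=o(N^{3/2})$, and then $\max_k n_k \le (N+T_N)^{1/2} = o(N^{3/4})$.

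For the first claim I will simply use $q_{N,j}=S_k$ whenever $b_{N,j}=v_k$ to obtain
$$\sum_j q_{N,j}=\sum_k n_k S_k=\tfrac12\Bigl(N^2+\textstyle\sum_k n_k^2\Bigr)=\tfrac12 N(N+1)+\tfrac12 T_N,$$
so $\overline q_N = (N+1)/2 + T_N/(2N) = (N+1)/2 + o(N^{1/2})$. For the second claim I will introduce an auxiliary permutation-valued rank vector $\tilde q_{N,j}$, obtained by breaking each tied group in index order, so that $\{\tilde q_{N,j}\}=\{1,\ldots,N\}$; this gives the exact identity $\sum_j \tilde q_{N,j}^2 = N(N+1)(2N+1)/6$. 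The discrepancy satisfies $0\le q_{N,j}-\tilde q_{N,j}\le n_{N,j}-1$ (where $n_{N,j}$ is the tied-group size at $j$), and a direct sum over groups gives $\sum_j (q_{N,j}-\tilde q_{N,j})=T_N/2$, hence
$$\sum_j (q_{N,j}-\tilde q_{N,j})^2 \le (\max_k n_k-1)\cdot T_N/2 = o(N^{3/4})\cdot o(N^{3/2}) = o(N^{9/4}).$$

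Expanding $q_{N,j}^2=(\tilde q_{N,j}+(q_{N,j}-\tilde q_{N,j}))^2$ and summing, the cross term is bounded by $N\cdot T_N/2=o(N^{5/2})$ and the square term by the $o(N^{9/4})$ estimate above, so $\sum_j q_{N,j}^2 = N(N+1)(2N+1)/6 + o(N^{5/2})$. Combining with $N\overline q_N^2 = N(N+1)^2/4 + o(N^{5/2})$ (from the first claim) and using the elementary identity $N(N+1)(2N+1)/6-N(N+1)^2/4 = N(N^2-1)/12$, I get $\sum_j (q_{N,j}-\overline q_N)^2 = N(N^2-1)/12 + o(N^{5/2})$, and dividing by $N$ yields the stated $(N^2-1)/12+o(N^{3/2})=(N^2-1)/12+o(N^2)$. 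The main obstacle is that Assumption \ref{ACjs} is a statement about small positive gaps, not about exact ties or maximum tie-group sizes; the pair-counting inequality above is the bridge that converts the assumption into the two quantitative tie controls ($T_N=o(N^{3/2})$ and $\max_k n_k=o(N^{3/4})$) that drive the rest of the argument.
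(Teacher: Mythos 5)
Your proof is correct and follows essentially the same route as the paper's: both compare the up-ranks to a tie-broken permutation of $\{1,\dots,N\}$ and control the discrepancy via the count of coincident pairs, which Assumption~\ref{ACjs} (applied with $h\downarrow 0$) forces to be $o(N^{3/2})$. The only differences are cosmetic — you derive exact combinatorial identities (e.g.\ $\sum_j q_{N,j}=N(N+1)/2+T_N/2$) and the extra bound $\max_k n_k=o(N^{3/4})$ where the paper settles for inequalities and a Cauchy--Schwarz estimate of the cross term, but both yield the required $o(N^{1/2})$ and $o(N^2)$ error rates.
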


\begin{proof}%[Proof of~\cref{lem001}]
We first break the ties in an arbitrary manner, and obtain $b_{N(1)}\le b_{N(2)} \le \dots \le b_{N(N)}$. Define $r_{N,j} = \sum_{i=1}^N \ind{b_{N,i}\le b_{N(j)}}$. Note that for each $1\le j\le N,$ 
\begin{equation*}\left|r_{N,j} - j\right| \leq \sum_{i=1}^N \ind{b_{N,i} = b_{N(j)}}.\end{equation*} 
The last equation in conjunction with~\cref{ACjs} yields the following bounds.
\begin{align*}\sum_{j=1}^N \left|r_{N,j} - j\right| &\leq \sum_{j=1}^N \sum_{i=1}^N \ind{b_{N,i} = b_{N,j}} \\
&\leq  \sum_{j=1}^N \sum_{i=1}^N \ind{0\le b_{N,j} - b_{N,i}< N^{-1/2}}\\&
\lesssim N^{3/2}, \end{align*} 
and
\begin{align*}\sum_{j=1}^N \left(r_{N,j} - j\right)^2 &\leq \sum_{j=1}^N N \sum_{i=1}^N \ind{b_{N,i} = b_{N,j}} \\
&\leq  N\sum_{j=1}^N \sum_{i=1}^N \ind{0\le b_{N,j} - b_{N,i}< N^{-1/2}}\\&\lesssim N^{5/2}.\end{align*} 
 Now,
 \begin{align*} 
 \left|\sum_{i=1}^N r_{N,i}^2 - \sum_{i=1}^N i^2 \right| &= \sum_{i=1}^N (r_{N,i} - i)^2 + \left|\sum_{i=1}^N 2i(r_{N,i} - i)\right|\\ 
 &\le\sum_{i=1}^N (r_{N,i} - i)^2 + 2\left(\sum_{i=1}^N i^2 \sum_{i=1}^N (r_{N,i} - i)^2\right)^{1/2}\\
 &\lesssim N^{5/2} + N^{11/4}\\[2mm]
 &= o(N^3).
 \end{align*}
 On the other hand, 
 \begin{align*} 
N\left|\overline{r}_N^2 - ((N+1)/2)^2\right| &= \frac{1}{N}\left|\sum_{i=1}^N (r_{N,i} - i)\sum_{i=1}^N (r_{N,i} + i)\right|\\
&\le\frac{1}{N}\sum_{i=1}^N |r_{N,i} - i|\sum_{i=1}^N (|r_{N,i} - i| + 2i)\\
&\lesssim N^{-1} N^{3/2} \left(N^{3/2} + N^2\right)\\&= o(N^3).
 \end{align*}
Combining the above with the fact that $\{r_{N,1},\dots,r_{N,N}\}$ is same as $\{q_{N,1},\dots,q_{N,N}\}$, %$\overline{q}_N = \overline{r}_N,$ which is greater than or equal to $(N+1)/2$. 
\begin{align*}\sum_{i=1}^N \left(q_{N,i} - \overline{q}_N\right)^2 &= \sum_{i=1}^N r_{N,i}^2 - N \bar{r}_N^2\\ &= \sum_{i=1}^N \left(i - \frac{N+1}{2}\right)^2 + o(N^3)\\ &= \frac{N(N^2-1)}{12} + o(N^3).\end{align*}
%\begin{equation*}\sum_{i=1}^N \left(q_{N,i} - \overline{q}_N\right)^2 = \sum_{i=1}^N r_{N,i}^2 - N \bar{r}_N^2 = \sum_{i=1}^N \left(i - \frac{N+1}{2}\right)^2 + o(N^3) = \frac{N(N^2-1)}{12} + o(N^3).\end{equation*}
This finishes the proof of the second assertion. To prove the first assertion, we fix any $\delta>0$ and note that
\begin{align*}
\left|\sum_{j=1}^N q_{N,j} - \frac{N(N+1)}{2}\right| &= \left|\sum_{j=1}^N (r_{N,j} - j)\right| \\
&\le \sum_{j=1}^N \left|r_{N,j} - j\right| \\
&\leq \sum_{j=1}^N \sum_{i=1}^N \ind{b_{N,i} = b_{N,j}} \\
&\leq  \sum_{j=1}^N \sum_{i=1}^N \ind{0\le b_{N,j} - b_{N,i}< \delta N^{-1/2}}.
\end{align*} 
Invoking~\cref{ACjs} we can say that
$$\limsup_{N\to\infty}\left|\sum_{j=1}^N q_{N,j} - \frac{N(N+1)}{2}\right|\le \delta \mathcal{I}_b.$$
Now letting $\delta\to 0$ completes the argument.
\end{proof}

%Next we derive the asymptotic null-variance of $t_N$.

% NULL VARIANCE 
\begin{lemma}\label{null-var} Suppose that ranks $\{q_{N,j}\}$ satisfy 
$$\sum_{j=1}^N \left(q_{N,j} - \overline{q}_N\right)^2 = \frac{N(N^2-1)}{12}+o(N^3),\text{ as }N\to\infty.$$
Let $t_N$ be the Wilcoxon rank-sum statistic when the treatments are assigned by an $m$-out-of-$N$ SRSWOR sample where $m/{N}\to \lambda\in (0,1)$ as $N\to\infty$.  Then, under $\tau=\tau_0$, $$\var \left(t_N\right) \sim\frac{ \lambda(1-\lambda)}{12}N^3\text{ as $N\to \infty$.}$$ 
\end{lemma}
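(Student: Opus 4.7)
The plan is to exploit the fact that under $\tau=\tau_0$, the ranks $\{q_{N,j}\}$ are deterministic, so $t_N = \sum_{j=1}^N q_{N,j} Z_{N,j}$ is just a linear combination of the SRSWOR indicators with deterministic weights. The variance then follows from the known covariance structure of SRSWOR, and the final asymptotics come directly from the hypothesis on $\sum_j (q_{N,j} - \bar{q}_N)^2$.

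The first step is to recall the SRSWOR moments: $\mathrm{Var}(Z_{N,j}) = \frac{m}{N}\left(1-\frac{m}{N}\right)$ and, for $i\neq j$, $\mathrm{Cov}(Z_{N,i},Z_{N,j}) = -\frac{m(N-m)}{N^2(N-1)}$. Plugging these into $\mathrm{Var}(t_N) = \sum_j q_{N,j}^2 \mathrm{Var}(Z_{N,j}) + \sum_{i\neq j} q_{N,i}q_{N,j} \mathrm{Cov}(Z_{N,i},Z_{N,j})$ and collecting the $\sum q_{N,j}^2$ and $(\sum q_{N,j})^2$ pieces (using that $\sum_{i\neq j} q_{N,i} q_{N,j} = (\sum_j q_{N,j})^2 - \sum_j q_{N,j}^2$), an elementary algebraic simplification yields the clean identity
\[
\mathrm{Var}(t_N) \;=\; \frac{m(N-m)}{N(N-1)} \sum_{j=1}^N (q_{N,j} - \bar{q}_N)^2.
\]
This is the same identity that underlies the variance of a simple random sample mean (with weights $q_{N,j}$), so it should be a very short computation.

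Next, I invoke the hypothesis $\sum_{j=1}^N (q_{N,j} - \bar{q}_N)^2 = \frac{N(N^2-1)}{12} + o(N^3)$ to substitute into the above identity, obtaining
\[
\mathrm{Var}(t_N) \;=\; \frac{m(N-m)(N+1)}{12} \;+\; \frac{m(N-m)}{N(N-1)}\cdot o(N^3).
\]
Since $m(N-m)/[N(N-1)] = O(1)$, the error term is $o(N^3)$. Finally, using $m/N \to \lambda \in (0,1)$, the leading term $\frac{m(N-m)(N+1)}{12}$ is asymptotically equivalent to $\frac{\lambda(1-\lambda)N^3}{12}$, which gives $\mathrm{Var}(t_N) \sim \frac{\lambda(1-\lambda)}{12}N^3$ as claimed.

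There is no real obstacle here; the lemma is essentially a bookkeeping exercise. The one thing to be careful about is making sure the $o(N^3)$ error from the ranks variance assumption gets multiplied by a bounded quantity (which it does, since $m(N-m)/[N(N-1)]\le 1/4$), so it remains $o(N^3)$ after substitution, and hence negligible compared to the $\Theta(N^3)$ leading term.
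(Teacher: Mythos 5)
Your proof is correct and follows essentially the same route as the paper's: write $t_N$ as a deterministic linear combination of the SRSWOR indicators, use the covariance structure of $\vec{Z}_N$ to obtain the exact identity $\var(t_N)=\frac{m(N-m)}{N(N-1)}\sum_{j=1}^N(q_{N,j}-\overline{q}_N)^2$, and then substitute the hypothesis on the ranks. The algebra and the asymptotic bookkeeping all check out.
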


%For simplicity in notation, let us drop the index $N$ for $\vec{Z}$ and $\vec{q}$. 
\begin{proof}%[Proof of~\cref{null-var}] 
Recall from \eqref{null-dd} that under $\tau=\tau_0,$ we can write $t_N \dd \sum_{j=1}^N  q_{N,j}Z_{N,j},$ where $q_{N,j}=\sum_{i=1}^N \ind{b_{N,i} \le b_{N,j}}$.
Note that each $Z_{N,j}$ is a Bernoulli($m/N$) random variable, and for $i\neq j$ we have \begin{align*}\text{Cov}(Z_{N,i}, Z_{N,j}) &= P(Z_{N,i}=1,Z_{N,j}=1) - P(Z_{N,i}=1)P(Z_{N,j}=1)\\ 
&= \frac{m(m-1)}{N(N-1)} - \left(\frac{m}{N}\right)^2\\ &=-\frac{1}{N-1} \cdot\frac{m}{N} \left(1-\frac{m}{N}\right).\end{align*} Hence we deduce that, under $\tau=\tau_0$, \begin{align*}
\var\left(t_N\right) &= \sum_{j=1}^N\text{Var}\left(q_{N,j} Z_{N,j}\right) +2\sum_{1\le i<j\le N}\text{Cov}\left(q_{N,i} Z_{N,i}, q_{N,j} Z_{N,j}\right) \\
&= \sum_{j=1}^N q_{N,j}^2 \frac{m}{N}\left(1-\frac{m}{N}\right) -2\sum_{1\le i<j\le N} q_{N,i}q_{N,j} \frac{m}{N} \left(1-\frac{m}{N}\right)\frac{1}{N-1} \\
&=  \frac{m}{N}\left(1-\frac{m}{N}\right)\frac{N}{N-1} \sum_{j=1}^N \left(q_{N,j} - \overline{q}_N\right)^2.
\end{align*}
In light of the given condition on the ranks, the above implies that
\begin{align*}
\var\left(t_N\right) &=  \frac{m}{N}\left(1-\frac{m}{N}\right)\frac{N}{N-1}\left( \frac{N(N^2-1)}{12} + o(N^3)\right)\sim \lambda(1-\lambda)\frac{N^3}{12},
\end{align*}
which completes the proof.
\end{proof}

\begin{lemma}\label{SN}  Let $\vec{Z}_N$ be the vector of treatment indicators when the treatments are assigned by $m$-out-of-$N$ SRSWOR where ${m(N)}/{N}\to \lambda\in (0,1)$ as $N\to\infty$. Define $$S_N = \sum_{j=1}^N  \sum_{i=1, i\neq j}^N  (1 -Z_{N,i}) Z_{N,j} I_{h,N} (b_{N,j} - b_{N,i})$$ where  $I_{h,N}$ is defined in \eqref{Ian} of the main paper.  Then under~\cref{ACjs} it holds that $$N^{-3/2} S_N\Pto h\lambda(1-\lambda)\mathcal{I}_b.$$
\end{lemma}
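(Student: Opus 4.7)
The plan is to prove the claim by verifying convergence in mean and showing the variance is asymptotically negligible. I will show that $N^{-3/2}\mathbb{E}[S_N] \to h\lambda(1-\lambda)\mathcal{I}_b$ and that $\mathrm{Var}(S_N)=o(N^3)$, then conclude by Chebyshev's inequality.

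First, for the mean computation: since the treatment is assigned by SRSWOR, for any $i\neq j$ we have $\mathbb{E}[(1-Z_{N,i})Z_{N,j}] = P(Z_{N,i}=0, Z_{N,j}=1) = m(N-m)/(N(N-1))$, which converges to $\lambda(1-\lambda)$. Therefore
$$\mathbb{E}[S_N] = \frac{m(N-m)}{N(N-1)} \sum_{j=1}^N \sum_{i=1,i\neq j}^N I_{h,N}(b_{N,j}-b_{N,i}).$$
Using \cref{ACjs}, the double sum over all $(i,j)$ (including the diagonal) satisfies $N^{-3/2}\sum_{i,j} I_{h,N}(b_{N,j}-b_{N,i}) \to h\mathcal{I}_b$; since the diagonal contribution $I_{h,N}(0)\le 1$ gives at most $N = o(N^{3/2})$ terms, removing it does not affect the limit. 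Hence $N^{-3/2}\mathbb{E}[S_N]\to h\lambda(1-\lambda)\mathcal{I}_b$.

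Next, for the variance, I will write
$$\mathrm{Var}(S_N) = \sum_{i\neq j}\sum_{k\neq l} I_{h,N}(b_{N,j}-b_{N,i}) I_{h,N}(b_{N,l}-b_{N,k})\, \mathrm{Cov}\bigl((1-Z_{N,i})Z_{N,j}, (1-Z_{N,k})Z_{N,l}\bigr),$$
and split according to the cardinality of $\{i,j,k,l\}\in\{2,3,4\}$, mirroring the approach used in \cref{proof:est.I_C}. In each case the covariance can be computed explicitly from the SRSWOR joint distribution:
\begin{itemize}
\item When $|\{i,j,k,l\}|=2$, the covariance is $O(1)$ and the number of nonzero indicator products is $O(\sum_{i,j} I_{h,N}(b_{N,j}-b_{N,i})) = O(N^{3/2})$, giving contribution $O(N^{3/2})$.
\item When $|\{i,j,k,l\}|=3$, the covariance is $O(1)$; an application of Cauchy--Schwarz together with \cref{ACjs} gives $\sum_{\text{3 distinct}} I_{h,N}(b_{N,j}-b_{N,i}) I_{h,N}(b_{N,l}-b_{N,k})\le N\cdot O(N^{3/2})=O(N^{5/2})$.
\item When $|\{i,j,k,l\}|=4$, direct computation of the SRSWOR probabilities shows the covariance is $O(1/N)$; bounding the product sum by $\bigl(\sum_{i,j} I_{h,N}(b_{N,j}-b_{N,i})\bigr)^2 = O(N^3)$ yields contribution $O(N^2)$.
\end{itemize}
Summing, $\mathrm{Var}(S_N) = O(N^{5/2}) = o(N^3)$, so Chebyshev's inequality gives $N^{-3/2}(S_N - \mathbb{E}[S_N]) \stackrel{P}{\to} 0$, and combining with the mean convergence completes the proof.

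The main obstacle is the case $|\{i,j,k,l\}|=3$, where the covariance remains $O(1)$ rather than vanishing, so the bound on the product sum must be sharp enough to guarantee $o(N^3)$; the $N\cdot O(N^{3/2})$ estimate obtained via one-index summation of the indicator (with \cref{ACjs}) is the key step. Once this is established, the $O(1/N)$ decay of the four-index covariance absorbs the $O(N^4)$ cardinality gracefully, and the diagonal cases are trivial.
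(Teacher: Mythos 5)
Your proposal is correct and follows essentially the same route as the paper's proof: compute the mean via the SRSWOR pair probability and \cref{ACjs}, then bound $\mathrm{Var}(S_N)$ by splitting over the cardinality of $\{i,j,k,l\}$, obtaining contributions of order $N^{3/2}$, $N^{5/2}$, and $N^2$ respectively, and conclude by Chebyshev. The only cosmetic difference is that the paper bounds the three-distinct-index sum by simply using $I_{h,N}\le 1$ on one factor rather than invoking Cauchy--Schwarz, but the resulting $O(N^{5/2})$ bound is identical.
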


\begin{proof}%[Proof of~\cref{SN}] 
We imitate the proof of~\cref{est.I_C}. The first step is to note that $$N^{-3/2} \E S_N = \frac{m}{N} \left(1-\frac{m-1}{N-1}\right) N^{-3/2}\sum_{j=1}^N  \sum_{i=1, i\neq j}^N I_{h,N} (b_{N,j} - b_{N,i})
\to h\lambda(1-\lambda)\mathcal{I}_b.$$ It therefore remains to show that $N^{-3}\text{Var}\left(S_N\right) \to 0$.
For brevity, let us abuse the notation $I_{h,N} (b_{N,j}-b_{N,i})$ and write $I_{h,N} (i, j)$ instead. We have $$\text{Var}\left(S_N\right) =% \sum_{j=1}^N  \sum_{i=1, i\neq j}^N I_{h,N} (b_{N,j} - b_{N,i}) + 
\sum_{i=1}^N  \sum_{j=1, j\neq i}^N \sum_{k=1}^N  \sum_{l=1, l\neq k}^N  \text{Cov}\left((1 -Z_{N,i}) Z_{N,j},(1 -Z_{N,k}) Z_{N,l}\right) I_{h,N} (i, j)I_{h,N} (k, l).$$ 
Note, $2\le |\{i, j, k, l\}| \le 4$. Consider the following cases. 
\begin{enumerate}
\item[(a)] $|\{i, j, k, l\}|=2,$ i.e., $(i, j) = (k, l)$. Note that $\text{Var}((1 -Z_{N,i}) Z_{N,j}) = p_N(1-p_N)$ where $$p_N = P(Z_{N,i} = 0, Z_{N,j}=1) = \frac{m}{N}\frac{N-m}{N-1} \sim \lambda (1-\lambda).$$ Hence the contribution of these terms in $\text{Var}(S_N)$ is given by $$\sum_{i=1}^N  \sum_{j=1, j\neq i}^N p_N(1-p_N)I^2_{h,N} (i,j) \lesssim   \lambda(1-\lambda) \sum_{i=1}^N  \sum_{j=1, j\neq i}^N I_{h,N} (i,j)  \lesssim N^{3/2}.$$

\item[(b)] $|\{i, j, k, l\}|=4,$ i.e., all $4$ indices are distinct. Note that in this case 
\begin{align*} &\text{Cov}\left((1 -Z_{N,i}) Z_{N,j},(1 -Z_{N,k}) Z_{N,l}\right)\\ 
&= P(Z_{N,i}=0, Z_{N,j}=1, Z_{N,k} = 0, Z_{N,l} =1) - p_N^2\\
& = \frac{\binom{N-4}{m-2}}{\binom{N}{m}} - \left(\frac{m}{N}\frac{N-m}{N-1} \right)^2\\
& = \frac{m}{N}\frac{N-m}{N-1} \left(\frac{(m-1)(N-m-1)}{(N-2)(N-3)} - \frac{m}{N}\frac{N-m}{N-1} \right)\\
&= \frac{m}{N}\frac{N-m}{N-1} \frac{(4 mN^2 - 4 m^2 N  - N^3 + 6 m^2  - 6 m N + 2 N^2 - N)}{N(N-1)(N-2)(N-3)}\\
&\sim -\lambda(1-\lambda)(1-2\lambda)^2 N^{-1}.
\end{align*}
Hence the contribution $u_N$ of these terms in $\text{Var}(S_N)$ satisfies the following. $$u_N \lesssim N^{-1} \sum_{i=1}^N  \sum_{j=1, j\neq i}^N \sum_{k=1}^N  \sum_{l=1, l\neq k}^N I_{h,N} (i,j)I_{h,N} (k,l)  \lesssim N^{-1+3/2+3/2} =N^{2}.$$
\item[(c)] $|\{i, j, k, l\}|=3$. Here we have 4 sub-cases:
\begin{center}
\fbox{%
\begin{tabular}{c c r}
\vspace{1mm}
      sub-case & $\E(1 -Z_{N,i}) Z_{N,j}(1 -Z_{N,k}) Z_{N,l}$ & $\text{Cov}\left((1 -Z_{N,i}) Z_{N,j},(1 -Z_{N,k}) Z_{N,l}\right)$ \\
      \hline\noalign{\vskip 1ex}
$i = k, j\neq l$ & $P(Z_{N,i} = 0, Z_{N,j} =1, Z_{N,l}=1)$  & $\frac{m(m-1)(N-m)}{N(N-1)(N-2)}   - p_N^2\sim \lambda^3(1-\lambda) $\\[2mm]
$i \neq k, j = l$ & $P(Z_{N,i} = 0, Z_{N,j}=1, Z_{N,k} =0)$  & $\frac{m(N-m)(N-m-1)}{N(N-1)(N-2)} - p_N^2\sim\lambda(1-\lambda)^3 $\\[1mm]
$i = l, j \neq k$ & $0$  & $- p_N^2 \sim - \lambda^2(1-\lambda)^2$\\[2mm]
$i \neq l, j = k$ & $0$ & $-p_N^2 \sim - \lambda^2(1-\lambda)^2$\\
\end{tabular}}
\end{center}

Hence, if $v_n$ be the contribution these terms in $\text{Var}(S_N)$, then 
\begin{align*}
v_N  & \lesssim  \sum_{i, j, l \text{ distinct}}  I_{h,N} (i,j)I_{h,N} (i,l)  + \sum_{i, j, k \text{ distinct}}  I_{h,N} (i,j)I_{h,N} (k,j)  \\ 
&\quad + \sum_{i, j, k \text{ distinct}}  I_{h,N} (i,j)I_{h,N} (k,i)  + \sum_{i, j, l \text{ distinct}} I_{h,N} (i,j)I_{h,N} (j,l) \\
&\le 4N\sum_{i, j\neq i}  I_{h,N} (i,j) \lesssim N^{1+3/2} =N^{5/2}.
\end{align*}
\end{enumerate}
Combining the three cases, we can say that $$N^{-3} \text{Var}(S_N) \lesssim N^{-3} (N^{3/2} + N^2 + N^{5/2}) = o(1),$$ as desired to show.
 \end{proof}

%As a corollary of the last proposition, we can say the following.
\begin{comment}
\begin{corollary}\label{t_N alt cent} Consider the setup of~\cref{propo-asy-tauN}. It holds under $\tau=\tau_N$ that
$$N^{-3/2}\left(t_N - \E_{\tau_N} \left(t_N\right)\right)  \dto \mathcal{N}\left(0, \frac{\lambda(1-\lambda)}{12}\right).$$
\end{corollary}

\begin{proof}%[Proof of~\cref{t_N alt cent}]  
Immediate from~\cref{propo-asy-tauN} and~\cref{asy-bias}.\end{proof}
\end{comment}

\begin{lemma}\label{beta0goesto0}
    Assume that $\eps_{N,i}$ are i.i.d.~with mean zero, and $\vec{X}_N\in\R^{N\times p}$ is deterministic and satisfies  $N^{-1}\vec{X}_N^\top\vec{X}_N\to\Sigma\succ 0$. Then, $$\vec{\beta}_N^{(0)}:=(\vec{X}_N^\top \vec{X}_N)^{-1} \vec{X}_N^\top \vec{\eps}_N\Pto \vec{0}.$$
\end{lemma}
\begin{proof}
    Since $N^{-1}\vec{X}_N^\top\vec{X}_N\to \Sigma\succ 0$, it suffices to show that
$$
\frac{1}{N} \sum_{i=1}^N \vec{x}_{N, i}\, \eps_{N, i}\Pto \vec{0},
$$
where we recall that $\vec{x}_{N,i}$ denotes the $i$-th row of the matrix $\vec{X}_N$. We define truncation at level $K$ as $$\varepsilon^{(K)}=\varepsilon \mathbf{1}\{|\varepsilon| \leq K\}-\mathbb{E}[\varepsilon \mathbf{1}\{|\varepsilon| \leq K\}],\quad\text{and}\quad \eps^{(>K)}=\varepsilon-\varepsilon^{(K)}.$$ Then $\mathbb{E} \left[\varepsilon^{(K)}\right]=\mathbb{E} \left[\eps^{(>K)}\right]=0$, and $\left|\varepsilon^{(K)}\right| \leq 2 K$. Moreover, it follows from $\E|\eps|<\infty$ and the DCT that $\mathbb{E}\left|\eps^{(>K)}\right| \rightarrow 0$ as $K\to \infty$. Fix $\delta>0$ and choose $K$ large enough so that $\E\left|\eps^{(>K)}\right|<\delta$.
Next, decompose $N^{-1}\sum_{i=1}^N\vec{x}_{N,i}\,\eps_{N,i}$ into the following parts:
$$U_N^{(K)}=\frac{1}{N} \sum_{i=1}^N \vec{x}_{N, i}\, \varepsilon_{N, i}^{(K)},\quad \text{and} \quad V_N^{(K)}=\frac{1}{N}  \sum_{i=1}^N \vec{x}_{N, i}\, \eps_{N, i}^{(>K)}.$$
Note that
$$
\mathbb{E}\left\|U_N^{(K)}\right\|_2^2=\frac{\operatorname{Var}\left(\varepsilon^{(K)}\right)}{N^2} \sum_{i=1}^N \|\vec{x}_{N,i}\|_2^2 \rightarrow 0\implies V_N^{(K)} \Pto \vec{0}.
$$
On the other hand,
$$
\mathbb{E}\left\|V_N^{(K)}\right\|_1\le\frac{1}{N}\sum_{i=1}^N\sum_{j=1}^p |(\vec{x}_{N,i})_{j}|\E|\eps^{(>K)}| \leq\sqrt{p}\left(\frac{1}{N} \sum_{i=1}^N\left\|x_{N, i}\right\|_2^2\right)^{1 / 2} \delta.
$$
Since $\delta>0$ is arbitrary, we are through.
\end{proof}

\begin{lemma}\label{beta0goesto0a.s.}
    Assume that $\eps_{N,i}$ are i.i.d.~with mean zero, and $\vec{X}_N\in\R^{N\times p}$ is deterministic, satisfies  $N^{-1}\vec{X}_N^\top\vec{X}_N\to\Sigma\succ 0$ and has uniformly bounded row-norms: $\sup_N\max_{i\le N}\|\vec{x}_{N,i}\|_2<\infty$. Then, $$\vec{\beta}_N^{(0)}:=(\vec{X}_N^\top \vec{X}_N)^{-1} \vec{X}_N^\top \vec{\eps}_N\asto \vec{0}.$$
\end{lemma}

\begin{proof}
It suffices to show that for each coordinate $1\le j\le p$,
$$W_{N,\,j}:=\frac{1}{N} \sum_{i=1}^N (\vec{x}_{N, i})_{j}\, \eps_{N, i}\asto 0,$$
where $(\vec{x}_{N,i})_j$ denotes the $j$-th entry in the $i$-th row of $\vec{X}_N$. We define truncation at level $K$ as follows:
$$\varepsilon^{(K)}=\varepsilon \mathbf{1}\{|\varepsilon| \leq K\}-\mathbb{E}[\varepsilon \mathbf{1}\{|\varepsilon| \leq K\}],\quad\text{and}\quad \eps^{(>K)}=\varepsilon-\varepsilon^{(K)}.$$ Then $\mathbb{E} \left[\varepsilon^{(K)}\right]=\mathbb{E} \left[\eps^{(>K)}\right]=0$, and $\left|\varepsilon^{(K)}\right| \leq 2 K$. Moreover, it follows from $\E|\eps|<\infty$ and the DCT that $\mathbb{E}\left|\eps^{(>K)}\right| \rightarrow 0$ as $K\to \infty$. 
Next, decompose $W_{N,\,j} =N^{-1}\sum_{i=1}^N (\vec{x}_{N,i})_{j}\,\eps_{N,i}$ into the following parts:
$$U_N^{(K)}=\frac{1}{N} \sum_{i=1}^N (\vec{x}_{N, i})_{j}\, \varepsilon_{N, i}^{(K)},\quad \text{and} \quad V_N^{(K)}=\frac{1}{N}  \sum_{i=1}^N (\vec{x}_{N, i})_{j}\, \eps_{N, i}^{(>K)}.$$
It follows from Hoeffding's inequality that
\begin{align*}
\PP\left(\left|U_N^{(K)}\right|>t\right)\le 2\exp\left(\frac{-t^2}{8K^2\sum_{i=1}^N (\vec{x}_{N,i})_{j}^2/N^2}\right)\le 2\exp\left(\frac{-Nt^2}{8K^2C_j}\right),
\end{align*}
where $C_j=\sup_{N\ge 1}N^{-1}\sum_{i=1}^N (\vec{x}_{N,i})_{j}^2<\infty$. It therefore follows from the first Borel-Cantelli lemma that
$U_N^{(K)}\asto 0$ as $N\to\infty$, for any fixed $K$.
On the other hand,
\begin{align*}
        \sup_{N\ge 1} \left|V_N^{(K)}\right|&\le\sup_{N\ge 1}\frac{1}{N}\sum_{i=1}^N  |(\vec{x}_{N,i})_{j}||\eps_{N,i}^{(>K)}| \\
        &\leq \left(\sup_{N\ge 1}\max_{i\le N} |(\vec{x}_{N,i})_{j}| \right) \frac{1}{N} \sum_{i=1}^N|\eps_{N,i}^{(>K)}|\\
        &\asto \left(\sup_{N\ge 1}\max_{i\le N} |(\vec{x}_{N,i})_{j}| \right) \E|\eps^{(>K)}|.
\end{align*}
Consequently,
$$\limsup_{N\to\infty} |W_{N,\,j}|\le \left(\sup_{N\ge 1}\max_{i\le N} |(\vec{x}_{N,i})_{j}| \right) \E|\eps^{(>K)}|\quad\text{almost surely.}$$
Since $\E|\eps^{(>K)}|\to 0$ as $K\to\infty$, this finishes the proof.
\end{proof}

\begin{lemma}\label{Jbexists}
 Suppose that $\vec{b}_N = \vec{X}_N\vec{\beta}_N +\vec{\eps}_N$, where $\eps_{N,1},\dots,\eps_{N,N}$ are i.i.d$.$ from $\mathcal{N}(0,\sigma^2)$. Define, for any fixed $h$, 
 \begin{equation}
     J_N := N^{-3/2}\sum_{j=1}^N \sum_{i=1}^N I_{h,N} (\widetilde{b}_{N,j} - \widetilde{b}_{N,i}),
 \end{equation}
 where $\widetilde{b}_{N,j}$ is defined in \eqref{btilde}, and $I_{h,N}$ is defined in \eqref{Ian} of the main paper.
 Then $\E J_N\to h(2\sqrt{\pi}\sigma)^{-1}$, and $\var(J_N)\to 0$, implying that~\cref{AssumpB2} holds in probability, with $\mathcal{J}_b=(2\sqrt{\pi}\sigma)^{-1}$.
\end{lemma}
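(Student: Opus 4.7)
The plan is to exploit the Gaussian structure of the regression residuals. Since $(\vec{I}-\vec{P}_{\vec{X}_N})\vec{X}_N\vec{\beta}_N=\vec{0}$, we have the clean representation $\widetilde{\vec{b}}_N=(\vec{I}-\vec{P}_{\vec{X}_N})\vec{\eps}_N$, so each difference $U_{ij}:=\widetilde{b}_{N,j}-\widetilde{b}_{N,i}=(\vec{e}_j-\vec{e}_i)^\top(\vec{I}-\vec{P}_{\vec{X}_N})\vec{\eps}_N$ is a centered Gaussian with variance $\sigma_{ij}^2=\sigma^2(2-\|\vec{p}_{N,j}-\vec{p}_{N,i}\|^2)$, and pairs $(U_{ij},U_{kl})$ are jointly Gaussian with covariance $\sigma^2(\vec{e}_j-\vec{e}_i)^\top(\vec{I}-\vec{P}_{\vec{X}_N})(\vec{e}_l-\vec{e}_k)$, expressible through entries of $\vec{P}_{\vec{X}_N}$. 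The heart of the argument is that all relevant moments of $J_N$ reduce to Gaussian CDF evaluations plus aggregate sums over entries of $\vec{P}_{\vec{X}_N}$, which are controlled by $\Tr(\vec{P}_{\vec{X}_N})=\Tr(\vec{P}_{\vec{X}_N}^2)=\rank(\vec{X}_N)$ (fixed as $N\to\infty$). I treat $h>0$; the case $h<0$ is symmetric (swap $i\leftrightarrow j$) and $h=0$ is trivial.

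For the mean, the $i=j$ terms contribute $N\cdot N^{-3/2}=o(1)$, and for $i\ne j$ a Taylor expansion gives
\[
\EE I_{h,N}(U_{ij})=\Phi\bigl(hN^{-1/2}/\sigma_{ij}\bigr)-\tfrac12 = \frac{h}{\sqrt{2\pi N}\,\sigma_{ij}}+O(N^{-3/2})
\]
whenever $\sigma_{ij}$ stays bounded below. Combining the bound $\sum_{i,j}\|\vec{p}_{N,j}-\vec{p}_{N,i}\|^2\le 4N\Tr(\vec{P}_{\vec{X}_N})=O(N)$ with the expansion $1/\sigma_{ij}=(\sqrt 2\sigma)^{-1}+O(\|\vec{p}_{N,j}-\vec{p}_{N,i}\|^2)$ on the "good" pairs, and isolating via Markov's inequality the $O(N)$ "bad" pairs where $\sigma_{ij}$ might approach $0$ (which contribute at most $O(N)\cdot\tfrac12\cdot N^{-3/2}=o(1)$), one obtains $\EE J_N\to h/(2\sqrt\pi\sigma)=h\mathcal{J}_b$ after the $N^{-3/2}$ scaling.

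For the variance I expand
\[
\EE J_N^2 = N^{-3}\sum_{i,j,k,l}\mathbb{P}\bigl(0\le U_{ij}<hN^{-1/2},\,0\le U_{kl}<hN^{-1/2}\bigr)
\]
and split by $|\{i,j,k,l\}|$. A case-by-case bookkeeping analogous to the variance calculation in the proof of \cref{est.I_C} shows that tuples with $\le 3$ distinct indices contribute $o(N^3)$: the $O(N^3)$ generic such tuples each carry joint probability at most $O(N^{-1})$, while the pairs that collapse to the diagonal (probability up to $1$) number only $O(N^2)$. For the $O(N^4)$ fully-distinct tuples, $(U_{ij},U_{kl})$ is bivariate normal with correlation $\rho_{ij,kl}=\sigma^2(P_{ik}-P_{il}-P_{jk}+P_{jl})/(\sigma_{ij}\sigma_{kl})$, and the standard corner expansion $\mathbb{P}(0\le U_{ij}<a,\,0\le U_{kl}<a)=a^2/(2\pi\sigma_{ij}\sigma_{kl}\sqrt{1-\rho^2})+O(a^3)$ with $a=hN^{-1/2}$ yields leading term $h^2/(4\pi\sigma^2)=(h\mathcal{J}_b)^2$. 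The $(1-\rho^2)^{-1/2}$ correction is controlled by $\sum_{i,j,k,l}(P_{ik}-P_{il}-P_{jk}+P_{jl})^2 \lesssim N^2\Tr(\vec{P}_{\vec{X}_N}^2)=O(N^2)$, which is $o(N^3)$ after the $N^{-3}$ scaling. Thus $\EE J_N^2 = (\EE J_N)^2 + o(1)$, i.e.\ $\var(J_N)\to 0$.

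The main obstacle will be this last moment bookkeeping, namely controlling uniformly across the quadruple sum the combined effect of the cubic Gaussian Taylor remainder and the $(1-\rho^2)^{-1/2}$ density correction, to verify they remain $o(1)$ after the $N^{-3}$ scaling; the key input is the family of summability bounds coming from $\Tr(\vec{P}_{\vec{X}_N}^k)=\rank(\vec{X}_N)=O(1)$. Once both $\EE J_N\to h\mathcal{J}_b$ and $\var(J_N)\to 0$ are in hand, Chebyshev's inequality gives $J_N\Pto h\mathcal{J}_b$ for every $h\in\mathbb{R}$, which is exactly \cref{AssumpB2} holding in probability with $\mathcal{J}_b=(2\sqrt\pi\sigma)^{-1}$.
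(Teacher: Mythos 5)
Your proposal is correct and follows essentially the same route as the paper's proof: the Gaussian representation $\widetilde{\vec{b}}_N=(\vec{I}-\vec{P}_{\vec{X}_N})\vec{\eps}_N$, a Taylor expansion of $\Phi$ for the mean with the projection-trace identity $\sum_{i,j}\|\vec{p}_{N,j}-\vec{p}_{N,i}\|^2=O(N)$ controlling the "bad" pairs, and a bivariate-normal estimate with a good/bad splitting of the quadruple sum for the variance (the paper packages the latter as its Lemma~\ref{bvn-estimate} and a $\delta$-sandwich rather than your corner expansion, but the content is the same). The only nits are a sign slip in the cross-covariance $\cov(U_{ij},U_{kl})=-\sigma^2(\vec{p}_{N,j}-\vec{p}_{N,i})^\top(\vec{p}_{N,l}-\vec{p}_{N,k})$, which is harmless since only $|\rho|$ enters, and that the first-order-in-$\rho$ part of the bivariate correction needs a Cauchy--Schwarz step to be absorbed by your $\sum\rho^2=O(N^2)$ bound.
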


\begin{proof}%[Proof of~\cref{Jbexists}]
We do the proof for $h>0$, the other case will be similar. For simplicity in notation, we omit the index $N$ in $\widetilde{b}_{N,j}, Z_{N,j}, \vec{p}_{N,j}$, etc. throughout this proof. Observe that under the linear model, we have $$\vec{\widetilde{b}} = (\vec{I} - \vec{P}_{\vec{X}})\vec{b} = (\vec{I} - \vec{P}_{\vec{X}})\vec{\eps}.$$ Hence for any pair $(i,j)$ of distinct indices, $\widetilde{b}_{j} - \widetilde{b}_i \sim \mathcal{N}(0, \sigma_{ij}^2)$, where $\sigma_{ij}^2 :=  \sigma^2(2-\|\vec{p}_j - \vec{p}_i\|^2)$ (this follows from the fact that $\vec{P}_{\vec{X}}$ is a projection matrix). 
Thus
\begin{align*}
\E J_N &= N^{-3/2} \sum_{j=1}^N \sum_{i=1}^N P\left(0\le  \widetilde{b}_{j}-\widetilde{b}_{i} < hN^{-1/2}\right)\\
&= N^{-3/2} \sum_{j=1}^N \sum_{i=1}^N \left(\Phi\left(\sigma_{i,j}^{-1}hN^{-1/2}\right) - \Phi(0)\right).
 \end{align*}
 For any $\delta>0$ and $N\in\mathbb{N}$, define \begin{equation}\label{SNd}
     S_{N,\delta} := \{(i,j) : 1\le i,j\le N, \|\vec{p}_j - \vec{p}_i\| \le \delta\}.
 \end{equation}
 Invoking~\cref{PiPj}, we get
$$|S_{N,\delta}^c|=|\{(i,j) : 1\le i,j\le N, \|\vec{p}_j - \vec{p}_i\| > \delta\}|\le \delta^{-2} \sum_{j=1}^N\sum_{i=1}^N \|\vec{p}_j - \vec{p}_i\|^2 = O(N).$$
 Consequently, $$N^{-3/2} \sum_{(i,j)\in S_{N,\delta}^c} \left(\Phi\left(\sigma_{i,j}^{-1}hN^{-1/2}\right) - \Phi(0)\right) = O(N^{-1/2}).$$
 For $(i,j)\in S_{N,\delta}$ we apply the Taylor theorem to conclude that there exists $\psi_{i,j}\in (0, \sigma_{i,j}^{-1}hN^{-1/2})$ such that
 \begin{align*}
 \Phi\left(\sigma_{i,j}^{-1}hN^{-1/2}\right) - \Phi(0)
  &=  \sigma_{i,j}^{-1}hN^{-1/2}\phi\left(0\right) +  \sigma_{i,j}^{-2}h^2 N^{-1} \phi'(\psi_{i,j}).
\end{align*}
Since $\phi'$ is  bounded on $\R,$ we can say that 
\begin{align*}
\Delta_N &:= \left|\E J_N - N^{-3/2} \sum_{(i,j)\in S_{N,\delta}} \sigma_{i,j}^{-1}hN^{-1/2}\phi\left(0\right) \right| \\
 &\lesssim N^{-1/2} + N^{-3/2} \sum_{(i,j)\in S_{N,\delta}} \sigma_{i,j}^{-2}h^2 N^{-1} \\
&\lesssim N^{-1/2} + N^{-5/2} \sum_{(i,j)\in S_{N,\delta}} (2- \|\vec{p}_j - \vec{p}_i\|^2)^{-1}\\
&\lesssim N^{-1/2} + N^{-5/2} \sum_{(i,j)\in S_{N,\delta}} (2- \delta^2)^{-1} \lesssim N^{-1/2}.
\end{align*}
Hence $\Delta_N\to 0$ as $N\to\infty$, and thus 
\begin{equation*}
    \limsup_{N\to\infty} \E J_N = \limsup_{N\to\infty} N^{-3/2} \sum_{(i,j)\in S_{N,\delta}} \sigma_{i,j}^{-1}hN^{-1/2}\phi\left(0\right) \le h\sigma^{-1}\phi(0)(2-\delta^2)^{-1/2}.
\end{equation*}
Since the above holds for every $\delta>0$, we can say that $\limsup_{N\to\infty} \E J_N\le  \frac{h}{2\sqrt{\pi}\sigma}.$
On the other hand, the fact that $\Delta_{N}\to 0$ also yields the following:
\begin{align*}
    \liminf_{N\to\infty} \E J_N &= \liminf_{N\to\infty} N^{-3/2} \sum_{(i,j)\in S_{N,\delta}} (2-\|\vec{p}_j-\vec{p}_i\|)^{-1/2}h\sigma^{-1} N^{-1/2}\phi\left(0\right)\\
    &\ge \liminf_{N\to\infty} N^{-2} |S_{N,\delta}| \cdot h\sigma^{-1}\phi(0)\cdot 2^{-1/2}=h\sigma^{-1}\phi(0)\cdot 2^{-1/2},
\end{align*}
where in the last step we again used the fact that $|S_{N, \delta}^c| = O(N)$. We thus conclude that $$\lim_{N\to\infty} \E J_N =  \frac{h}{2\sqrt{\pi}\sigma}.$$
Next we show that $\var(J_N)\to 0$ as $N\to\infty$. First,
$$\E J_N^2 = N^{-3}\E \sum_{i,j,k,l} \ind{0\le  \widetilde{b}_{j}-\widetilde{b}_{i} < hN^{-1/2}, 0\le  \widetilde{b}_{l}-\widetilde{b}_{k} < hN^{-1/2}}.$$
Now, as in the proof of~\cref{est.I_C}, observe that the contribution of the terms with repeated indices in the above summation is negligible (since~\cref{AssumpB2} holds and we already have shown that $\E J_N$ converges). To analyze the terms with distinct indices, note that  for any $4$ distinct indices $i,j,k,l$,
\begin{equation*}
    \begin{pmatrix}
    \widetilde{b}_{j}-\widetilde{b}_{i} \\ \widetilde{b}_{l} - \widetilde{b}_{k}
    \end{pmatrix} \sim \mathcal{N}_2\left(\begin{pmatrix}
    0\\ 0
    \end{pmatrix}, \sigma^2\begin{pmatrix}
    2-\|\vec{p}_j - \vec{p}_i\|^2 & - (\vec{p}_j - \vec{p}_i)^\top(\vec{p}_l - \vec{p}_k) \\ 
   - (\vec{p}_j - \vec{p}_i)^\top(\vec{p}_l - \vec{p}_k)  & 2-\|\vec{p}_l - \vec{p}_k\|^2
    \end{pmatrix}\right).
\end{equation*}
Let $\rho_{i,j,k,l}:=\text{corr}(\widetilde{b}_{j}-\widetilde{b}_{i},\widetilde{b}_{l} - \widetilde{b}_{k})$.
Now we again play the trick of splitting the sum into two groups, such that in one group $|\rho_{i,j,k,l}|$ is small, whereas for the other group the number of summands is small. Fix any $\delta\in(0,1)$, and consider the set $S_{N,\delta}$ defined in \eqref{SNd}. If $(i,j)$ or $(k,l)$ does not belong to $S_{N,\delta}$, then that brings down the count for such summands. To be precise,
\begin{align*}
    & N^{-3}\sum_{(i,j)\in S_{N,\delta}^c \text{ or } (k,l) \in S_{N,\delta}^c} P(0\le   \widetilde{b}_{j}-\widetilde{b}_{i} \le hN^{-1/2}, 0\le   \widetilde{b}_{l}-\widetilde{b}_{k} \le hN^{-1/2})\\
    &\le 2N^{-3}\sum_{1\le j,i\le N}\sum_{(k,l) \in S_{N,\delta}^c} P(0\le   \widetilde{b}_{j}-\widetilde{b}_{i} \le hN^{-1/2})\\
    &\lesssim N^{-2}\sum_{1\le j,i\le N} P(0\le   \widetilde{b}_{j}-\widetilde{b}_{i} \le hN^{-1/2}) = N^{-1/2} \E J_N.
\end{align*}
Since $\E J_N$ converges, the above shows that the contribution from such indices are also negligible. Finally, for $(i,j),(k,l)\in S_{N,\delta}$, we have $$|\rho_{i,j,k,l}|=\frac{\left|(\vec{p}_j - \vec{p}_i)^\top(\vec{p}_l - \vec{p}_k)\right|}{\sqrt{2-\|\vec{p}_j - \vec{p}_i\|^2 }\sqrt{2-\|\vec{p}_l - \vec{p}_k\|^2}} \leq \frac{\delta}{2-\delta^2}\le \delta,$$
and hence~\cref{bvn-estimate} tells us that
\begin{align*}
&\big|P(0\le   \widetilde{b}_{j}-\widetilde{b}_{i} \le hN^{-1/2}, 0\le   \widetilde{b}_{l}-\widetilde{b}_{k} \le hN^{-1/2}) \\
&\qquad \qquad - 
P(0\le   \widetilde{b}_{j}-\widetilde{b}_{i} \le hN^{-1/2})P(0\le   \widetilde{b}_{l}-\widetilde{b}_{k} \le hN^{-1/2})\big|\\[2mm]
&\le |\rho_{i,j,k,l}|\left((1-\rho_{i,j,k,l}^2)^{-1/2} + (1-|\rho_{i,j,k,l}|)^{-2} \right) h^2 N^{-2} \sigma_{i,j}^{-1}\sigma_{k,l}^{-1}\\[2mm]
&\le h^2 N^{-2} \sigma^{-2} \delta \left((1-\delta^2)^{-1/2}+(1-\delta)^{-2}\right).
\end{align*}
Thus we have shown that
\begin{equation*}
\begin{split}
\left|\E J_N^2 - \left(N^{-3/2}\sum_{(i,j)\in S_{N,\delta}} P\left(0\le \widetilde{b}_{j}-\widetilde{b}_{i} \le hN^{-1/2}\right)\right)^2\right|\\ 
\leq h^2\sigma^{-2} \delta \left((1-\delta^2)^{-1/2}+(1-\delta)^{-2}\right)+O(N^{-1/2}).
\end{split}
\end{equation*}
Now for every $\delta\in (0,1)$, we have $|S_{N,\delta}^c|=O(N)$, so it follows that $$\limsup_{N\to\infty}\var(J_N) \leq h^2\sigma^{-2} \delta \left((1-\delta^2)^{-1/2}+(1-\delta)^{-2}\right).$$ Since $\delta\in(0,1)$ is arbitrary here, letting $\delta\to 0$ completes the proof. 
\end{proof}
%%%%%%%%%%%%%%%%%%%%%%%%%%%%%%%%%%%%%%%%%%%%%%%%%%%%%%%%%%%%%%%%%%%%%%%%%%%%%%%%%%%%%%%%%%%%%%%%%%%%%%%%%%%%%%%%%%%%%%%%%%%%%%%%%%%%%%%%%%%%%%%%%%%%%%%%%%%%%%%%%%%%%%%%%%%%%%%%%%%%%%%%%%%%%%%%%%%%%%%%%%%%%%%%%%%%%%%%%%%%%%%%%%%%%%%%%%%%%%%%%%%%%%%%%%%%%%%%%%%%%%%%%%%%

%%%%%% Observe that for any $\delta>0,$~\cref{AssumpB2} yields the following. $$\lim_{N\to\infty} N^{-3/2}\sum_{j=1}^N \sum_{i=1}^N \ind{\big|\widetilde{b}_{N,j} - \widetilde{b}_{N,i} \big|  \leq  \delta N^{-1/2}} = 2\delta\mathcal{J}_b.$$

\begin{lemma}\label{AssumpB3} Suppose that~\cref{AssumpB2} holds. Then for any $h\ge \delta>0,$  
\begin{equation*}
\lim_{N\to\infty} N^{-3/2}\sum_{j=1}^N \sum_{i=1}^N \ind{\big|\widetilde{b}_{N,j}- \widetilde{b}_{N,i}\big| \in \left[(h-\delta)N^{-1/2},(h+\delta)N^{-1/2}\right]} = 4\delta\mathcal{J}_b
\end{equation*}
where $\mathcal{J}_b$ is defined in~\cref{AssumpB2} of the main paper.
\end{lemma}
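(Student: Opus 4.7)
The strategy is to express the indicator of the closed interval event appearing in the lemma as a linear combination of the half-open indicators $I_{h',N}$ that appear in~\cref{AssumpB2}, plus boundary terms that will be shown to be asymptotically negligible.

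For brevity write $D_{j,i}:=\widetilde{b}_{N,j}-\widetilde{b}_{N,i}$, $\alpha_N:=(h-\delta)N^{-1/2}$ and $\beta_N:=(h+\delta)N^{-1/2}$. Since $h\ge \delta>0$, we have $\alpha_N\ge 0<\beta_N$. Splitting by the sign of $D_{j,i}$, one has the pointwise identity
\begin{align*}
\ind{|D_{j,i}|\in[\alpha_N,\beta_N]}
&= \ind{\alpha_N\le D_{j,i}<\beta_N}+\ind{-\beta_N< D_{j,i}\le -\alpha_N}\\
&\quad +\ind{D_{j,i}=\beta_N}+\ind{D_{j,i}=-\beta_N}-\ind{D_{j,i}=0}\ind{h=\delta},
\end{align*}
where the last term corrects the double counting of $\{D_{j,i}=0\}$ in the boundary case $h=\delta$. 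Using the definition of $I_{h',N}$ in \eqref{Ian} and the fact that $h\pm\delta\ge 0$, the first two indicators on the right equal
\begin{equation*}
I_{h+\delta,N}(D_{j,i}) - I_{h-\delta,N}(D_{j,i}) \quad\text{and}\quad -I_{-(h+\delta),N}(D_{j,i}) + I_{-(h-\delta),N}(D_{j,i}),
\end{equation*}
respectively, with the usual convention $I_{0,N}\equiv 0$. Summing over $(i,j)$, scaling by $N^{-3/2}$, and invoking~\cref{AssumpB2} four times yields
\begin{equation*}
(h+\delta)\mathcal{J}_b - (h-\delta)\mathcal{J}_b - (-(h+\delta))\mathcal{J}_b + (-(h-\delta))\mathcal{J}_b = 4\delta\mathcal{J}_b.
\end{equation*}

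It remains to show that the boundary indicators $\ind{D_{j,i}=cN^{-1/2}}$, for each fixed $c\in\{\beta_N N^{1/2},-\beta_N N^{1/2},0\}=\{h+\delta,-(h+\delta),0\}$, contribute $o(N^{3/2})$ to the double sum. Fix any such $c$ and any $\epsilon>0$. Then
\begin{equation*}
\ind{D_{j,i}=cN^{-1/2}} \le \ind{cN^{-1/2}\le D_{j,i}<(c+\epsilon)N^{-1/2}},
\end{equation*}
and the right-hand side, after a case analysis on the sign of $c$ analogous to the one above, can be written as a difference of two $I_{\cdot,N}$ indicators whose parameters are $c+\epsilon$ and $c$. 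Applying~\cref{AssumpB2} to both parameters gives
\begin{equation*}
\limsup_{N\to\infty} N^{-3/2}\sum_{j,i}\ind{D_{j,i}=cN^{-1/2}} \le (c+\epsilon)\mathcal{J}_b - c\mathcal{J}_b = \epsilon\mathcal{J}_b.
\end{equation*}
Letting $\epsilon\downarrow 0$ shows that each boundary term is asymptotically negligible, completing the proof.

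The only delicate step is the boundary-negligibility argument, but as shown above it reduces immediately to~\cref{AssumpB2} itself; all remaining manipulations are purely algebraic.
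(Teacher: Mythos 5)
Your proof is correct and follows essentially the same route as the paper's: both reduce the closed two-sided interval indicator to differences of the half-open indicators $I_{\cdot,N}$ from \cref{AssumpB2}, and both dispose of the equality (boundary) events by sandwiching them inside half-open intervals of width $\epsilon N^{-1/2}$, applying \cref{AssumpB2}, and letting $\epsilon\downarrow 0$. The only quibble is a bookkeeping slip on the negative side---the combination $-I_{-(h+\delta),N}+I_{-(h-\delta),N}$ equals $\ind{-\beta_N\le D_{j,i}<-\alpha_N}$ rather than your stated $\ind{-\beta_N< D_{j,i}\le-\alpha_N}$, so the boundary points to be discarded are $\{\beta_N,-\alpha_N,0\}$ rather than $\{\beta_N,-\beta_N,0\}$---but since your negligibility argument applies verbatim to any fixed boundary value, nothing is lost.
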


\begin{proof}%[Proof of~\cref{AssumpB3}] 
First we show that for any $h\in\R,$ 
\begin{equation}\label{equal.a}
\lim_{N\to\infty} N^{-3/2}\sum_{j=1}^N \sum_{i=1}^N \ind{\widetilde{b}_{N,j}- \widetilde{b}_{N,i} =hN^{-1/2}} =0.\end{equation}
To show this, fix $h\geq 0$. Note that for any $\delta>0,$
\begin{align*}
0 &\le N^{-3/2}\sum_{j=1}^N \sum_{i=1}^N \ind{\widetilde{b}_{N,j}- \widetilde{b}_{N,i} =hN^{-1/2}} \\
&\le N^{-3/2}\sum_{j=1}^N \sum_{i=1}^N \ind{hN^{-1/2}\le \widetilde{b}_{N,j}- \widetilde{b}_{N,i} < \frac{h+\delta}{\sqrt{N}}} \\
&\le N^{-3/2}\sum_{j=1}^N \sum_{i=1}^N \ind{0\le \widetilde{b}_{N,j}- \widetilde{b}_{N,i} < \frac{h+\delta}{\sqrt{N}}} - N^{-3/2}\sum_{j=1}^N \sum_{i=1}^N \ind{0\le \widetilde{b}_{N,j}- \widetilde{b}_{N,i} < hN^{-1/2}}.
\end{align*}
As $N\to\infty,$ the above RHS converges to $\delta \mathcal{J}_b$. Then letting $\delta\to 0$ we finish the proof of \eqref{equal.a} for $h\geq 0$. The case $h < 0$ is similar.

Next, fix any $h\ge \delta>0$. We write
\begin{align*}
&N^{-3/2}\sum_{j=1}^N \sum_{i=1}^N \ind{\left|\widetilde{b}_{N,j}- \widetilde{b}_{N,i}\right| \in \left[\frac{h-\delta}{\sqrt{N}},\frac{h+\delta}{\sqrt{N}}\right]} \\
&= N^{-3/2}\sum_{j=1}^N \sum_{i=1}^N \left(\ind{\left|\widetilde{b}_{N,j}- \widetilde{b}_{N,i}\right| \le\frac{h+\delta}{\sqrt{N}}} - \ind{\left|\widetilde{b}_{N,j}- \widetilde{b}_{N,i}\right| \le\frac{h-\delta}{\sqrt{N}}} \right)\\
&= N^{-3/2}\sum_{j=1}^N \sum_{i=1}^N \left(\ind{0\le \widetilde{b}_{N,j}- \widetilde{b}_{N,i} < \frac{h+\delta}{\sqrt{N}}}  + \ind{- \frac{h+\delta}{\sqrt{N}}\le \widetilde{b}_{N,j}- \widetilde{b}_{N,i} <0} 
 \right)\\
 &\quad -  N^{-3/2}\sum_{j=1}^N \sum_{i=1}^N \left(\ind{0\le \widetilde{b}_{N,j}- \widetilde{b}_{N,i} < \frac{h-\delta}{\sqrt{N}}}  + \ind{- \frac{h-\delta}{\sqrt{N}}\le \widetilde{b}_{N,j}- \widetilde{b}_{N,i} <0}  \right)\\
 &\quad + N^{-3/2}\sum_{j=1}^N \sum_{i=1}^N \left(\ind{\widetilde{b}_{N,j}- \widetilde{b}_{N,i} = \frac{h+\delta}{\sqrt{N}}} - \ind{\widetilde{b}_{N,j}- \widetilde{b}_{N,i} = \frac{h-\delta}{\sqrt{N}}}\right).
\end{align*}
Appealing to~\cref{AssumpB2} and \eqref{equal.a}, we conclude that as $N\to\infty$, the above display converges to $2(h+\delta)\mathcal{J}_b - 2(h-\delta)\mathcal{J}_b = 4\delta\mathcal{J}_b$.
\end{proof}

\begin{comment}
\begin{remark}~\cref{AssumpB3} can be relaxed by assuming only that
 \begin{equation*}
\lim_{\delta\downarrow 0}\limsup_{N\to\infty} N^{-3/2}\sum_{j=1}^N \sum_{i=1}^N \ind{\left|\widetilde{b}_{N,j} - \widetilde{b}_{N,i}\right| \in \left[\frac{h-\delta}{\sqrt{N}},\frac{h+\delta}{\sqrt{N}}\right]} = 0.
\end{equation*}
~\cref{AssumpB2} can also be relaxed in the same manner.
\end{remark}
\end{comment}

\begin{lemma}\label{AssumpIb.more} Suppose that~\cref{ACjs} holds. Then for any $h\ge \delta>0,$  
\begin{equation*}
\lim_{N\to\infty} N^{-3/2}\sum_{j=1}^N \sum_{i=1}^N \ind{\left|b_{N,j}- b_{N,i}\right| \in \left[\frac{h-\delta}{\sqrt{N}},\frac{h+\delta}{\sqrt{N}}\right]} = 4\delta\mathcal{I}_C
\end{equation*}
where $\mathcal{I}_C$ is defined in~\cref{ACjs} of the main paper.
\end{lemma}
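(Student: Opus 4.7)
The lemma is the direct analog of Lemma E.5 (AssumpB3): the latter was proved for the regression-adjusted outcomes $\widetilde{b}_{N,j}$ under Assumption AssumpB2, and here we want the identical statement for the raw outcomes $b_{N,j}$ under Assumption ACjs. My plan is to mimic that earlier proof verbatim, with $b_{N,j}$ in place of $\widetilde{b}_{N,j}$, ACjs in place of AssumpB2, and $\mathcal{I}_b$ in place of $\mathcal{J}_b$. (Note the $\mathcal{I}_C$ in the statement is presumably a typo for $\mathcal{I}_b$.)

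The first step is to establish the auxiliary ``no-mass-on-points'' fact
\[
\lim_{N\to\infty} N^{-3/2}\sum_{j=1}^N \sum_{i=1}^N \ind{b_{N,j}- b_{N,i} = hN^{-1/2}} = 0, \qquad h\in\mathbb{R}.
\]
For $h\ge 0$ and any $\delta>0$, this indicator is dominated by $\ind{hN^{-1/2}\le b_{N,j}-b_{N,i} < (h+\delta)N^{-1/2}}$, which in turn equals $\ind{0\le b_{N,j}-b_{N,i} < (h+\delta)N^{-1/2}} - \ind{0\le b_{N,j}-b_{N,i} < hN^{-1/2}}$; by ACjs both of these double sums, scaled by $N^{-3/2}$, converge, the difference of the limits being $\delta\mathcal{I}_b$. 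Letting $\delta\downarrow 0$ yields the claim. The case $h<0$ is analogous, using the negative-$h$ branch of $I_{h,N}$.

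Next, for $h\ge\delta>0$, I rewrite the indicator of interest as
\[
\ind{|b_{N,j}-b_{N,i}|\in[(h-\delta)N^{-1/2},(h+\delta)N^{-1/2}]}
 = \ind{|b_{N,j}-b_{N,i}| \le (h+\delta)N^{-1/2}}
 - \ind{|b_{N,j}-b_{N,i}| < (h-\delta)N^{-1/2}}
\]
(plus a boundary indicator at the right endpoint that is negligible by the first step), and then split each $|\cdot|\le c$ event into the union of its positive and negative halves. After this split the double sum becomes a signed combination of four terms of the form $N^{-3/2}\sum_{i,j}\ind{0\le b_{N,j}-b_{N,i}<cN^{-1/2}}$ and $N^{-3/2}\sum_{i,j}\ind{-cN^{-1/2}\le b_{N,j}-b_{N,i}<0}$ with $c\in\{h-\delta,h+\delta\}$, plus error terms of the form $N^{-3/2}\sum_{i,j}\ind{b_{N,j}-b_{N,i}=\pm c N^{-1/2}}$.

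By ACjs applied with $h$ replaced by $\pm(h+\delta)$ and $\pm(h-\delta)$, each of the four main terms converges to $(h+\delta)\mathcal{I}_b$ or $(h-\delta)\mathcal{I}_b$ (with appropriate sign), and the error terms vanish by the first step. Summing up, the limit is $2(h+\delta)\mathcal{I}_b - 2(h-\delta)\mathcal{I}_b = 4\delta\mathcal{I}_b$, as desired. No step is technically difficult; the only thing to be careful about is matching the strict/non-strict inequalities when decomposing $|b_{N,j}-b_{N,i}|\le c$ into its two halves, which is precisely what the boundary-mass argument of the first step is designed to absorb.
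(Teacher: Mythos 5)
Your proof is correct and follows essentially the same route as the paper: the paper simply states that the proof is identical to that of~\cref{AssumpB3} with $b_{N,j}$, \cref{ACjs}, and $\mathcal{I}_b$ replacing $\widetilde{b}_{N,j}$, \cref{AssumpB2}, and $\mathcal{J}_b$, which is precisely the translation you carry out (including the preliminary no-mass-on-points step and the four-term decomposition). You are also right that $\mathcal{I}_C$ in the statement is a typo for $\mathcal{I}_b$.
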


\begin{proof}%[Proof of~\cref{AssumpIb.more}] 
The proof is essentially same as the proof of~\cref{AssumpB3}, hence omitted.
\end{proof}

%%%%%%%%%%%%%%%%%%%%%%%%%%%%%%%%%%%%%%%%
%%%%%%%%%%%%%%%%%%%%%%%%%%%%%%%%%%%%%%%%
%%%%%%%%%%%%%%%%%%%%%%%%%%%%%%%%%%%%%%%%
\begin{lemma}\label{theta_ij}
Define $D_N$ as in \eqref{tNadj.decomp2}, and let 
\begin{equation}\label{theta_ij_defn}
    \theta_{N,i,j} := \E \left[|\xi_{N,i,j} - \widetilde{\xi}_{N,i,j}| \ \Big|\  Z_{N,j} =1\right].
\end{equation}
Then the following holds:
\begin{equation*}
\sqrt{\E(N^{-3/2} D_N^2)}  \leq N^{-3/2} \sum_{j=1}^N  \sum_{i=1}^N\sqrt{\theta_{N,i,j}}.
\end{equation*}
\end{lemma}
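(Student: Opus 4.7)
The plan is to obtain the bound via Minkowski's inequality (the triangle inequality in $L^2$) applied to $D_N$ viewed as a sum of $N^2$ random variables. Writing
\begin{equation*}
D_N \;=\; \sum_{j=1}^N \sum_{i=1}^N Z_{N,j}\bigl(\xi_{N,i,j} - \widetilde{\xi}_{N,i,j}\bigr),
\end{equation*}
Minkowski's inequality gives
\begin{equation*}
\sqrt{\EE\bigl(N^{-3/2} D_N\bigr)^2} \;\leq\; N^{-3/2} \sum_{j=1}^N \sum_{i=1}^N \Bigl\| Z_{N,j}\bigl(\xi_{N,i,j} - \widetilde{\xi}_{N,i,j}\bigr) \Bigr\|_{L^2}.
\end{equation*}

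Next, I would compute each $L^2$ norm on the right-hand side explicitly. Two simple observations drive the calculation: (i) $Z_{N,j} \in \{0,1\}$, so $Z_{N,j}^2 = Z_{N,j}$; and (ii) both $\xi_{N,i,j}$ and $\widetilde{\xi}_{N,i,j}$ are $\{0,1\}$-valued, so their difference lies in $\{-1,0,1\}$, whence $(\xi_{N,i,j} - \widetilde{\xi}_{N,i,j})^2 = |\xi_{N,i,j} - \widetilde{\xi}_{N,i,j}|$. Combining these,
\begin{equation*}
\EE\Bigl[ Z_{N,j}^2 \bigl(\xi_{N,i,j} - \widetilde{\xi}_{N,i,j}\bigr)^2 \Bigr]
= \EE\Bigl[ Z_{N,j}\,\bigl|\xi_{N,i,j} - \widetilde{\xi}_{N,i,j}\bigr| \Bigr]
= \frac{m}{N}\,\theta_{N,i,j},
\end{equation*}
by conditioning on $\{Z_{N,j}=1\}$ and using $P(Z_{N,j}=1) = m/N$ together with the definition of $\theta_{N,i,j}$ in~\eqref{theta_ij_defn}.

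Since $m/N \le 1$, we obtain $\bigl\| Z_{N,j}(\xi_{N,i,j} - \widetilde{\xi}_{N,i,j}) \bigr\|_{L^2} \le \sqrt{\theta_{N,i,j}}$, and substituting this into the Minkowski bound yields the claimed inequality. The argument is essentially two lines of $L^2$ manipulation; there is no real obstacle here — the only point worth flagging is the use of the indicator identity $(\xi - \widetilde{\xi})^2 = |\xi - \widetilde{\xi}|$ for $\{0,1\}$-valued random variables, which is what lets us convert the second-moment bound into the $\sqrt{\theta_{N,i,j}}$ form that will later be compatible with our control of $\theta_{N,i,j}$ via Assumption~\ref{AssumpB2} in the subsequent lemmas feeding into the proof of \cref{D_NQ_N}.
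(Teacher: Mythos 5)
Your proof is correct and rests on the same ingredients as the paper's: the triangle inequality in $L^2$, the identities $Z_{N,j}^2=Z_{N,j}$ and $(\xi_{N,i,j}-\widetilde{\xi}_{N,i,j})^2=|\xi_{N,i,j}-\widetilde{\xi}_{N,i,j}|$, and conditioning on $\{Z_{N,j}=1\}$ to surface $\theta_{N,i,j}$. The only (cosmetic) difference is that you apply Minkowski once to all $N^2$ summands, whereas the paper applies it in two stages (first over $j$, then over $i$ inside the conditional expectation given $Z_{N,j}=1$); your one-shot version is slightly cleaner and even yields the marginally sharper constant $\sqrt{m/N}$ before relaxing it to $1$.
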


\begin{proof}%[Proof of~\cref{theta_ij}]
Recall the notations $\mathbf{I}_N$ and $\widetilde{\mathbf{I}}_N$ from \eqref{I.II.N} and \eqref{I.II.tilde.N}, respectively.
We use the simple result $\sqrt{\E (\sum_{j=1}^N V_j)^2} \leq  \sum_{j=1}^N \sqrt{\E(V_j^2)}$ to derive the following.
\begin{align*}
\sqrt{\E(  D_N)^2} &=    \sqrt{\E\left(\mathbf{I}_N - \widetilde{\mathbf{I}}_N\right)^2}\\
&=  \left[\E\bigg( \sum_{j=1}^N Z_{N,j}  \sum_{i=1}^N (\xi_{N,i,j} - \widetilde{\xi}_{N,i,j})\bigg)^2 \right]^{1/2}\\
&\le   \sum_{j=1}^N  \left[\E\bigg( Z_{N,j} \sum_{i=1}^N (\xi_{N,i,j} - \widetilde{\xi}_{N,i,j})\bigg)^2\right]^{1/2}\\
&=   \sum_{j=1}^N  \Bigg[\E\left(Z_{N,j} \E\left(\bigg(\sum_{i=1}^N (\xi_{N,i,j} - \widetilde{\xi}_{N,i,j})\bigg)^2\ \Big|\ Z_{N,j}\right)\right)\Bigg]^{1/2}\\
&=   \sum_{j=1}^N  \Bigg[\E\left(Z_{N,j} \E\left(\bigg(\sum_{i=1}^N (\xi_{N,i,j} - \widetilde{\xi}_{N,i,j})\bigg)^2\ \Big|\ Z_{N,j}=1\right)\right)\Bigg]^{1/2}\\
&=   \sum_{j=1}^N  \sqrt{\E(Z_{N,j})} \Bigg[\E\left(\bigg(\sum_{i=1}^N (\xi_{N,i,j} - \widetilde{\xi}_{N,i,j})\bigg)^2\ \Big|\ Z_{N,j}=1\right)\Bigg]^{1/2}\\
&\le   \sum_{j=1}^N  \sqrt{\E(Z_{N,j})} \sum_{i=1}^N \left[\E\left((\xi_{N,i,j} - \widetilde{\xi}_{N,i,j})^2\ \Big|\ Z_{N,j}=1\right)\right]^{1/2}\\
&=   \sum_{j=1}^N  \sqrt{\E(Z_{N,j})} \sum_{i=1}^N \left[\E\left(|\xi_{N,i,j} - \widetilde{\xi}_{N,i,j}|\ \Big|\ Z_{N,j}=1\right)\right]^{1/2}\\
&\le   \sum_{j=1}^N  \sum_{i=1}^N \sqrt{\theta_{N,i,j}}.
\end{align*}
This completes the proof.
\end{proof} %%%%%%%%%%%%%%%%%%%%%%%%%%%%%%%%%%%%%%%%%%%%%%%%%%%%%%%%%%%%%%%%%%%%%%%%%%%%%%%%%%%%%%%%%%%%%%%%%%%%%%%%%%%%%%%%%%%%%%%%%%%%%%%%%%%%%%%%%%%%%%%%%%%%%%%%%%%%%%%%%%%%%%%%%%%%%%%%%%%%%%%%%%%%%%%%%%%%%%%%%%%%%%%%%%%%%%%%%%%%%%%%%%%%%%%%%%%%%%%%%%%%%%%%%%%%%%%%%%%%%%%%%%%%%%%%%%%%%%%%

\begin{lemma}\label{gamma_ij} Define $Q_N$ as in \eqref{tNadj.decomp2}, and let 
\begin{equation}\label{gamma_ij_defn}
    \gamma_{N,i,j} := \E \left[|\xi_{N,i,j} - \widetilde{\xi}_{N,i,j}| \ \Big|\  Z_{N,i} =1,Z_{N,j} =1\right] 
\end{equation}
Then it holds that,
\begin{equation*}
\sqrt{\E(N^{-3/2} Q_N)^2}  \leq N^{-3/2} \sum_{j=1}^N  \sum_{i=1}^N\sqrt{\gamma_{N,i,j}}.
\end{equation*}
%where \begin{align*}\gamma_{N,i,j} &= \E \left[|\xi_{N,i,j} - \widetilde{\xi}_{N,i,j}| \ \Big|\  Z_{N,i} =1,Z_{N,j} =1\right] \\ &= P\left(hN^{-1/2} \left|(\vec{p}_{N,i} - \vec{p}_{N,j})^\top  \vec{Z}_N\right| \ \ge \ \left|r_{N} (i,j)\right| \ \Big|\ Z_{N,i} =1,Z_{N,j} =1\right).\end{align*} 
\end{lemma}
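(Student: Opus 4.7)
The proof will follow the template laid out in~\cref{theta_ij} essentially verbatim, with one extra layer of conditioning to absorb the additional factor $Z_{N,i}$ in $Q_N$. Writing $Q_N = \sum_{j=1}^N Z_{N,j} W_{N,j}$ with $W_{N,j} := \sum_{i=1}^N Z_{N,i}(\xi_{N,i,j}-\widetilde{\xi}_{N,i,j})$, the first step is to apply the Minkowski-type inequality $\sqrt{\EE(\sum_j V_j)^2}\le \sum_j \sqrt{\EE V_j^2}$ with $V_j = Z_{N,j} W_{N,j}$ to get $\sqrt{\EE Q_N^2}\le \sum_{j=1}^N \sqrt{\EE(Z_{N,j} W_{N,j}^2)}$, where we used $Z_{N,j}^2=Z_{N,j}$.

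Next, I would condition on $Z_{N,j}$: since $Z_{N,j}W_{N,j}^2$ vanishes on $\{Z_{N,j}=0\}$, one has $\EE(Z_{N,j}W_{N,j}^2)=\EE(Z_{N,j})\,\EE[W_{N,j}^2\mid Z_{N,j}=1]$. A second application of the same Minkowski-type inequality inside the conditional expectation gives
\[
\sqrt{\EE[W_{N,j}^2\mid Z_{N,j}=1]}\le \sum_{i=1}^N \sqrt{\EE[Z_{N,i}(\xi_{N,i,j}-\widetilde{\xi}_{N,i,j})^2\mid Z_{N,j}=1]},
\]
again using $Z_{N,i}^2=Z_{N,i}$.

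Finally, I condition once more on $Z_{N,i}$, so that each inner conditional expectation equals $P(Z_{N,i}=1\mid Z_{N,j}=1)\cdot \EE[(\xi_{N,i,j}-\widetilde{\xi}_{N,i,j})^2\mid Z_{N,i}=1,\,Z_{N,j}=1]=\frac{m-1}{N-1}\gamma_{N,i,j}$, where I use that both $\xi_{N,i,j}$ and $\widetilde{\xi}_{N,i,j}$ are $\{0,1\}$-valued so that $(\xi_{N,i,j}-\widetilde{\xi}_{N,i,j})^2=|\xi_{N,i,j}-\widetilde{\xi}_{N,i,j}|$, matching the definition of $\gamma_{N,i,j}$ in \eqref{gamma_ij_defn}. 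Chaining the three displays and using the crude bounds $\EE(Z_{N,j})=m/N\le 1$ and $(m-1)/(N-1)\le 1$ then yields the claimed inequality.

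There is no real technical obstacle here beyond careful bookkeeping: the only thing to watch is that Minkowski is applied \emph{twice}, first to the outer $j$-sum and then inside the $Z_{N,j}=1$ conditional to the inner $i$-sum, and that the two-step conditioning on $(Z_{N,j},Z_{N,i})$ is done in the correct order so that $\gamma_{N,i,j}$ emerges exactly as defined.
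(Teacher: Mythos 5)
Your proposal is correct and follows essentially the same route as the paper's proof: the same two applications of the Minkowski-type inequality $\sqrt{\EE(\sum_j V_j)^2}\le\sum_j\sqrt{\EE V_j^2}$ (outer over $j$, then inner over $i$ under the conditional law given $Z_{N,j}=1$), the same two-stage conditioning on $Z_{N,j}$ and then $Z_{N,i}$, and the same use of $(\xi_{N,i,j}-\widetilde{\xi}_{N,i,j})^2=|\xi_{N,i,j}-\widetilde{\xi}_{N,i,j}|$ to recover $\gamma_{N,i,j}$. The only cosmetic difference is that you evaluate $P(Z_{N,i}=1\mid Z_{N,j}=1)=(m-1)/(N-1)$ explicitly before bounding it by $1$, whereas the paper bounds $\sqrt{\EE(Z_{N,i}\mid Z_{N,j}=1)}$ by $1$ directly.
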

\begin{proof}%[Proof of~\cref{gamma_ij}] 
This proof mimics the proof of~\cref{theta_ij}. Recall the notations $\mathbf{II}_N$ and $\widetilde{\mathbf{II}}_N$ from \eqref{I.II.N} and \eqref{I.II.tilde.N}, respectively. Note that,
\begin{align*}
\sqrt{\E( Q_N^2)} &=    \sqrt{\E\left(\mathbf{II}_N - \widetilde{\mathbf{II}}_N\right)^2}\\
&=  \left[\E\bigg( \sum_{j=1}^N Z_{N,j}  \sum_{i=1}^N Z_{N,i}(\xi_{N,i,j} - \widetilde{\xi}_{N,i,j})\bigg)^2 \right]^{1/2}\\
&\le   \sum_{j=1}^N  \left[\E\bigg( Z_{N,j} \sum_{i=1}^N Z_{N,i}(\xi_{N,i,j} - \widetilde{\xi}_{N,i,j})\bigg)^2\right]^{1/2}\\
&=   \sum_{j=1}^N  \Bigg[\E\left(Z_{N,j} \E\left(\bigg(\sum_{i=1}^N Z_{N,i}(\xi_{N,i,j} - \widetilde{\xi}_{N,i,j})\bigg)^2\ \Big|\ Z_{N,j}\right)\right)\Bigg]^{1/2}\\
&=   \sum_{j=1}^N  \Bigg[\E\left(Z_{N,j} \E\left(\bigg(\sum_{i=1}^N Z_{N,i}(\xi_{N,i,j} - \widetilde{\xi}_{N,i,j})\bigg)^2\ \Big|\ Z_{N,j}=1\right)\right)\Bigg]^{1/2}\\
&=   \sum_{j=1}^N  \sqrt{\E(Z_{N,j})} \Bigg[\E\left(\bigg(\sum_{i=1}^N Z_{N,i}(\xi_{N,i,j} - \widetilde{\xi}_{N,i,j})\bigg)^2\ \Big|\ Z_{N,j}=1\right)\Bigg]^{1/2}\\
&\le   \sum_{j=1}^N  \sqrt{\E(Z_{N,j})} \sum_{i=1}^N \left[\E\left(Z_{N,i}(\xi_{N,i,j} - \widetilde{\xi}_{N,i,j})^2\ \Big|\ Z_{N,j}=1\right)\right]^{1/2}\\
&= \sum_{j=1}^N  \sqrt{\E(Z_{N,j})} \sum_{i=1}^N \left[\E\left(Z_{N,i}\E\left((\xi_{N,i,j} - \widetilde{\xi}_{N,i,j})^2\ \Big|\ Z_{N,i}, Z_{N,j}=1\right)  \Big|\ Z_{N,j}=1\right)\right]^{1/2}\\
&= \sum_{j=1}^N  \sqrt{\E(Z_{N,j})} \sum_{i=1}^N \left[\E\left(Z_{N,i}\E\left((\xi_{N,i,j} - \widetilde{\xi}_{N,i,j})^2\ \Big|\ Z_{N,i}=1, Z_{N,j}=1\right)  \Big|\ Z_{N,j}=1\right)\right]^{1/2}\\
&=   \sum_{j=1}^N  \sqrt{\E(Z_{N,j})} \sum_{i=1}^N \sqrt{\E(Z_{N,i} \mid Z_{N,j}=1)}  \left[\E\left(|\xi_{N,i,j} - \widetilde{\xi}_{N,i,j}|\ \Big|\ Z_{N,i}=1,Z_{N,j}=1\right)\right]^{1/2}\\
&\le   \sum_{j=1}^N   \sum_{i=1}^N  \left[\E\left(|\xi_{N,i,j} - \widetilde{\xi}_{N,i,j}|\ \Big|\ Z_{N,i}=1,Z_{N,j}=1\right)\right]^{1/2}\\
&=  \sum_{j=1}^N  \sum_{i=1}^N \sqrt{\gamma_{N,i,j}}.
\end{align*}
This completes the proof.
\end{proof} %%%%%%%%%%%%%%%%%%%%%%%%%%%%%%%%%%%%%%%%%%%%%%%%%%%%%%%%%%%%%%%%%%%%%%%%%%%%%%%%%%%%%%%%%%%%%%%%%%%%%%%%%%%%%%%%%%%%%%%%%%%%%%%%%%%%%%%%%%%%%%%%%%%%%%%%%%%%%%%%%%%%%%%%%%%%%%%%%%%%%%%%%%%%%%%%%%%%%%%%%%%%%%%%%%%%%%%%%%%%%%%%%%%%%%%%%%%%%%%%%%%%%%%%

\begin{lemma}\label{D_NQ_N} Define $D_N$ and $Q_N$ as in \eqref{tNadj.decomp2}. It holds under~\cref{AssumpB2} that $N^{-3/2} D_N=o_p(1)$ and $N^{-3/2} Q_N=o_p(1)$ as $N\to\infty$.\end{lemma}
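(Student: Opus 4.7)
The plan is to show $\EE|D_N| = o(N^{3/2})$ and $\EE|Q_N| = o(N^{3/2})$ directly, then conclude via Markov's inequality. One could try to pass through the $L^2$ bounds in \cref{theta_ij,gamma_ij}, but a naive Cauchy--Schwarz on $\sum_{i,j}\sqrt{\theta_{N,i,j}}$ only yields $O(1)$ after the natural truncation, so I would bypass those and work with the first moment.

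The central observation, which I would derive first, isolates the $\vec{Z}_N$-dependence through a pointwise bound. Both $\xi_{N,i,j}$ and $\widetilde{\xi}_{N,i,j}$ compare the quantity $s_{i,j} := \widetilde{b}_{N,j} - \widetilde{b}_{N,i} - hN^{-1/2}$ against the thresholds $r_{i,j}(\vec{Z}_N) := hN^{-1/2}(\vec{p}_{N,i} - \vec{p}_{N,j})^\top\vec{Z}_N$ and $0$ respectively, and therefore they differ only when $s_{i,j}$ lies between the two thresholds. Splitting according to whether $|(\vec{p}_{N,i} - \vec{p}_{N,j})^\top\vec{Z}_N| \leq \delta$ or not, for any fixed $\delta > 0$ I would write
\begin{equation*}
|\xi_{N,i,j} - \widetilde{\xi}_{N,i,j}| \;\leq\; \underbrace{\ind{|s_{i,j}| \leq |h|\delta N^{-1/2}}}_{A_{i,j}(\delta)} \;+\; \underbrace{\ind{|(\vec{p}_{N,i} - \vec{p}_{N,j})^\top\vec{Z}_N| > \delta}}_{C_{i,j}(\delta)},
\end{equation*}
where $A_{i,j}(\delta)$ is deterministic in $\vec{Z}_N$ while all $\vec{Z}_N$-dependence is pushed into $C_{i,j}(\delta)$.

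Using $|D_N| \leq \sum_{i,j}Z_{N,j}|\xi_{N,i,j} - \widetilde{\xi}_{N,i,j}|$ and taking expectation, the $A$-part contributes $(m/N)\sum_{i,j}A_{i,j}(\delta)$, whose $N^{-3/2}$-scaling tends to $2\lambda|h|\delta\mathcal{J}_b$ as a direct consequence of \cref{AssumpB2} (cf.\ \cref{AssumpB3}); the case $h=0$ is trivial because $\xi_{N,i,j} \equiv \widetilde{\xi}_{N,i,j}$. For the $C$-part, Markov's inequality gives $\sum_{i,j}\EE[Z_{N,j}C_{i,j}(\delta)] \leq \delta^{-2}\sum_{i,j}\EE[((\vec{p}_{N,i} - \vec{p}_{N,j})^\top\vec{Z}_N)^2]$, and the projection-matrix estimate \cref{PiPj} (already used in the proof of \cref{tNadj.finalCLT}) bounds this sum by $O(N)$, so dividing by $N^{3/2}$ gives $\delta^{-2} O(N^{-1/2}) \to 0$. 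Combining the two contributions, $\limsup_N N^{-3/2}\EE|D_N| \leq 2\lambda|h|\delta\mathcal{J}_b$; letting $\delta \downarrow 0$ yields $N^{-3/2}\EE|D_N| \to 0$, so $N^{-3/2}D_N = o_P(1)$ by Markov. The argument for $Q_N$ is the same with $Z_{N,j}$ replaced by $Z_{N,i}Z_{N,j}$; since $\EE[Z_{N,i}Z_{N,j}] = m(m-1)/(N(N-1)) \to \lambda^2$ for $i \neq j$ (and the diagonal $i = j$ contributes zero because $\xi_{N,j,j} = \widetilde{\xi}_{N,j,j}$ identically), one only picks up $\lambda^2$ in place of $\lambda$ in the $A$-part, while the $C$-part bound is unchanged.

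The main obstacle is the delicate truncation balance: the $A$-part is $O(\delta)$ by \cref{AssumpB2}, while the $C$-part is $O(\delta^{-2}N^{-1/2})$, and only by sending $N \to \infty$ first and then $\delta \downarrow 0$ do both vanish. The quantitative input making this work is the $O(N)$ bound on $\sum_{i,j}\EE[((\vec{p}_{N,i} - \vec{p}_{N,j})^\top\vec{Z}_N)^2]$ — rather than the naive $O(N^2)$ — which relies on $\vec{P}_{\vec{X}_N}$ being idempotent of fixed rank $p$ so that $\sum_{i,j}\|\vec{p}_{N,i} - \vec{p}_{N,j}\|^2 = 2N(p-1)$, together with the SRSWOR covariance structure of $\vec{Z}_N$. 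This is the same structural estimate that handled the analogous error term $C_N$ in the proof of \cref{propo-null2}, and it is really the quantitative heart of the argument.
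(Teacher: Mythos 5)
Your argument is correct, and it reaches the conclusion by a genuinely different (and somewhat more economical) route than the paper. The paper first passes to an $L^2$ bound via \cref{theta_ij,gamma_ij}, writing $\sqrt{\EE D_N^2}\le \sum_{i,j}\sqrt{\theta_{N,i,j}}$ with $\theta_{N,i,j}=\EE[|\xi_{N,i,j}-\widetilde{\xi}_{N,i,j}|\mid Z_{N,j}=1]$, and then controls each $\sqrt{\theta_{N,i,j}}$ off the exceptional set $\mathcal{S}_{N,\delta}=\{(i,j):|r_N(i,j)|\le \delta N^{-1/2}\}$ by a Markov bound with the \emph{fourth} conditional moment of $(\vec{p}_{N,i}-\vec{p}_{N,j})^\top\vec{Z}_N$ (the $r=2$ case of \cref{PiPj}); the set $\mathcal{S}_{N,\delta}$ itself is handled by \cref{AssumpB3}. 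You instead work with $\EE|D_N|$ directly, using the pointwise inequality $|\xi_{N,i,j}-\widetilde{\xi}_{N,i,j}|\le A_{i,j}(\delta)+C_{i,j}(\delta)$ — which is valid, since the two indicators can disagree only when $|s_{i,j}|\le|r_{i,j}(\vec{Z}_N)|$ — so that all randomness sits in $C_{i,j}(\delta)$ and only the \emph{second} unconditional moment from \cref{PiPj} is needed, while the deterministic $A$-part is again \cref{AssumpB3} in disguise. Both proofs thus rest on exactly the same two structural inputs (\cref{AssumpB2} via \cref{AssumpB3}, and the $O(N)$ projection estimate of \cref{PiPj}) and the same ``$N\to\infty$ first, then $\delta\downarrow 0$'' truncation balance; what your version buys is that it bypasses the conditional-expectation Minkowski machinery of \cref{theta_ij,gamma_ij} and the fourth-moment computations entirely. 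Your parenthetical worry that the paper's $\sum_{i,j}\sqrt{\theta_{N,i,j}}$ route only gives $O(1)$ is not quite right — the paper makes it work by the $\delta^{-2}$-weighted fourth-moment Markov bound off $\mathcal{S}_{N,\delta}$ — but this does not affect the validity of your own argument. (Minor: your constant $2\lambda|h|\delta\mathcal{J}_b$ for the $A$-part versus the $4\delta\mathcal{J}_b$ normalization in \cref{AssumpB3} is immaterial since it vanishes as $\delta\downarrow 0$.)
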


\begin{proof}%[Proof of~\cref{D_NQ_N}] 
We first focus on $D_N$. In view of~\cref{theta_ij}, it suffices to show that 
\begin{equation*}
\lim_{N\to\infty} N^{-3/2} \sum_{j=1}^N  \sum_{i=1}^N\sqrt{\theta_{N,i,j}} =0,
\end{equation*}
where $\theta_{N,i,j}$ is defined in \eqref{theta_ij_defn}. 
Fix $h$ for the moment, and introduce the notation $$r_{N} (i,j) := \widetilde{b}_{N,j} - \widetilde{b}_{N,i} - hN^{-1/2}, \ 1\le i,j\le N.$$
Note, %$|\xi_{N,i,j} - \widetilde{\xi}_{N,i,j}|$ is just an indicator. In fact,
\begin{align*}
\xi_{N,i,j} - \widetilde{\xi}_{N,i,j} &= \ind{hN^{-1/2} (\vec{p}_{N,i} - \vec{p}_{N,j})^\top  \vec{Z}_N \leq  r_{N} (i,j) } -  \ind{0 \leq  r_{N} (i,j)}\\
&= \ind{hN^{-1/2} (\vec{p}_{N,i} - \vec{p}_{N,j})^\top  \vec{Z}_N \leq  r_{N} (i,j)  <  0 } \\
&\qquad\qquad -  \ind{hN^{-1/2} (\vec{p}_{N,i} - \vec{p}_{N,j})^\top  \vec{Z}_N >  r_{N} (i,j)  \ge  0 }\\
&= \ind{hN^{-1/2} \left|(\vec{p}_{N,i} - \vec{p}_{N,j})^\top  \vec{Z}_N\right|  \ge  \left|r_{N} (i,j)\right|}\left(\ind{ r_{N} (i,j)  <  0 } -  \ind{r_{N} (i,j)  \ge  0}\right).
\end{align*}
Now applying the Markov inequality, we get
\begin{align*}
\sqrt{\theta_{N,i,j}}  &= \E^{1/2} \left[|\xi_{N,i,j} - \widetilde{\xi}_{N,i,j}| \ \Big|\  Z_{N,j} =1\right]\\
&= \left[P\left(hN^{-1/2} \left|(\vec{p}_{N,i} - \vec{p}_{N,j})^\top  \vec{Z}_N\right| \ \ge \ \left|r_{N} (i,j)\right| \ \Big|\ Z_{N,j}=1\right)\right]^{1/2}\\
%&= \left[P\left( \left|(\vec{p}_{N,i} - \vec{p}_{N,j})^\top  \vec{Z}_N\right|^4  \ \ge \  a^{-4} N^2 \left|r_{N} (i,j)\right|^4\ \Big|\ Z_{N,j}=1\right)\right]^{1/2}\\
&\leq   h^2 N^{-1}  r_{N} (i,j)^{-2}\E^{1/2}\left[\left(\left(\vec{p}_{N,i} - \vec{p}_{N,j}\right)^\top  \vec{Z}_N\right)^4 \ \Big|\ Z_{N,j}=1\right].%\numberthis\label{theta_ij_Markov}
\end{align*}
To upper bound the above RHS, we use~\cref{PiPj}.
The key idea is to use this bound only for those $i,j$ for which $r_{N} (i,j)$ is at least as large as $\delta N^{-1/2}$. Towards that, fix $\delta>0$ and define 
$$\mathcal{S}_{N,\delta}  = \{(i, j) : \left| r_{N} (i,j) \right| \le \delta N^{-1/2}, 1\le i,j\le N\}.$$
Then
\begin{align*} 
(i,j)\not\in\mathcal{S}_{N,\delta} 
%&\implies \left| r_{N} (i,j) \right| >  \frac{\delta}{\sqrt{N}}\\
%&\implies   a^2 N^{-1}  r_{N} (i,j)^{-2}\leq a^2 \delta^{-2}\\
\implies \sqrt{\theta_{N,i,j}} \leq a^2 \delta^{-2}\E^{1/2}\left[\left(\left(\vec{p}_{N,i} - \vec{p}_{N,j}\right)^\top  \vec{Z}_N\right)^4 \ \Big|\ Z_{N,j}=1\right].
\end{align*}
Therefore
\begin{align*}
& N^{-3/2} \sum_{j=1}^N  \sum_{i=1}^N\sqrt{\theta_{N,i,j}}  \\
&\leq N^{-3/2} \sum_{(i,j)\in \mathcal{S}_{N,\delta} } \sqrt{\theta_{N,i,j}}  + N^{-3/2}  \sum_{(i,j)\not\in \mathcal{S}_{N,\delta} }^N\sqrt{\theta_{N,i,j}}  \\
&\leq N^{-3/2} \sum_{(i,j)\in \mathcal{S}_{N,\delta} } 1 + N^{-3/2}  \sum_{(i,j)\not\in \mathcal{S}_{N,\delta} }^N a^2 \delta^{-2}\E^{1/2}\left[\left(\left(\vec{p}_{N,i} - \vec{p}_{N,j}\right)^\top  \vec{Z}_N\right)^4 \ \Big|\ Z_{N,j}=1\right] \\
&\leq N^{-3/2} \left|\mathcal{S}_{N,\delta}  \right| + a^2 \delta^{-2} N^{-3/2}  \sum_{j=1}^N\sum_{i=1}^N \E^{1/2}\left[\left(\left(\vec{p}_{N,i} - \vec{p}_{N,j}\right)^\top  \vec{Z}_N\right)^4 \ \Big|\ Z_{N,j}=1\right]\\
&= N^{-3/2} \left|\mathcal{S}_{N,\delta}  \right| + a^2 \delta^{-2} (\rank(\vec{X}_N)-1)\cdot O(N^{-1/2}).
\end{align*}
In the last step we used~\cref{PiPj}. Letting $N\to\infty$ and then $\delta\downarrow 0$, it follows that $$\limsup_{N\to\infty}N^{-3/2} \sum_{j=1}^N  \sum_{i=1}^N\sqrt{\theta_{N,i,j}} \leq \lim_{\delta\downarrow 0}\limsup_{N\to\infty} N^{-3/2} \left|\mathcal{S}_{N,\delta}\right|.$$
Now
\begin{align*} 
(i,j)\in\mathcal{S}_{N,\delta}  &\implies \frac{\delta}{\sqrt{N}} \ge \left| r_{N} (i,j) \right| \ge \left|\left|(b_{N,j} -\vec{p}_{N,j}^\top \vec{b}_N)- (b_{N,i}-\vec{p}_{N,i}^\top \vec{b}_N)\right| - \frac{|a|}{\sqrt{N}}\right| \\
&\implies \left|(b_{N,j} -\vec{p}_{N,j}^\top \vec{b}_N)- (b_{N,i}-\vec{p}_{N,i}^\top \vec{b}_N)\right| \in\left[\frac{|a|-\delta}{\sqrt{N}},\frac{|a|+\delta}{\sqrt{N}}\right].
\end{align*}
We now invoke~\cref{AssumpB3} to arrive at
\begin{equation*}
\lim_{\delta\downarrow 0}\limsup_{N\to\infty} N^{-3/2} \left|\mathcal{S}_{N,\delta} \right| = 0,
\end{equation*}
which finishes the proof for $D_N$. 
The proof for $Q_N$ can be done in the same manner, using~\cref{gamma_ij}, Markov inequality, and \cref{PiPj}.
\end{proof} %%%%%%%%%%%%%%%%%%%%%%%%%%%%%%%%%%%%%%%%%%%%%%%%%%%%%%%%%%%%%%%%%%%%%%%%%%%%%%%%%%%%%%%%%%%%%%%%%%%%%%%%%%%%%%%%%%%%%%%%%%%%%%%%%%%%%%%%%%%%%%%%%%%%%%%%%%%%%%%%%%%%%%%%%%%%%%%%%%%%%%%%%%%%%%%%%%%%%%%%%%%%%%%%%%%%%%%%%%%%%%%%%%%%%%%%%%%%%%%%%%%%%%%%%%%%%%%%%%%%%%%%%%%%%%%%%%%%%%%%

\begin{lemma}\label{Ib.even.more} Suppose that~\cref{AssumpIb.new} holds. Then for any $h\ge \delta>0,$  
\begin{equation*}
\lim_{N\to\infty} N^{-(2-\nu)}\sum_{j=1}^N \sum_{i=1}^N \ind{\left|b_{N,j}- b_{N,i}\right| \in \left[\frac{h-\delta}{N^\nu},\frac{h+\delta}{N^\nu}\right]} = 4\delta\mathcal{I}_b
\end{equation*}
where $\mathcal{I}_b$ is defined in~\cref{ACjs} of the main paper.
\end{lemma}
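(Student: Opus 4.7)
The plan is to mirror the proof of \cref{AssumpB3} (which handled the analogous statement for the regression-adjusted outcomes $\widetilde{b}_{N,j}$ under \cref{AssumpB2}). The argument has two stages: first, show that for any fixed $h \in \mathbb{R}$,
\begin{equation*}
\lim_{N\to\infty} N^{-(2-\nu)}\sum_{j=1}^N \sum_{i=1}^N \ind{b_{N,j}- b_{N,i} = hN^{-\nu}} = 0,
\end{equation*}
and second, decompose the indicator of the band $\{|b_{N,j}-b_{N,i}| \in [(h-\delta)N^{-\nu},(h+\delta)N^{-\nu}]\}$ as a difference of four one-sided indicators plus correction terms that are controlled by the first stage.

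For the first stage (vanishing mass on a single point $hN^{-\nu}$), I would take $h \ge 0$ without loss of generality (the case $h<0$ is symmetric) and use the sandwich
\begin{equation*}
0 \le \ind{b_{N,j}- b_{N,i} = hN^{-\nu}} \le \ind{hN^{-\nu} \le b_{N,j}- b_{N,i} < (h+\epsilon)N^{-\nu}}
\end{equation*}
for any $\epsilon > 0$. Writing the right side as the difference of $\ind{0 \le b_{N,j}-b_{N,i} < (h+\epsilon)N^{-\nu}}$ and $\ind{0 \le b_{N,j}-b_{N,i} < hN^{-\nu}}$ and applying \cref{AssumpIb.new} with exponent $u=\nu$, the normalized double sum tends to $(h+\epsilon)\mathcal{I}_b - h\mathcal{I}_b = \epsilon \mathcal{I}_b$. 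Letting $\epsilon \downarrow 0$ yields the claim.

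For the second stage, assume $h \ge \delta > 0$ and use the identity
\begin{align*}
&\ind{|b_{N,j}-b_{N,i}| \in [(h-\delta)N^{-\nu},(h+\delta)N^{-\nu}]}\\
&\quad = \Big(\ind{0\le b_{N,j}-b_{N,i} < (h+\delta)N^{-\nu}} + \ind{-(h+\delta)N^{-\nu} \le b_{N,j}-b_{N,i} < 0}\Big) \\
&\quad\ - \Big(\ind{0\le b_{N,j}-b_{N,i} < (h-\delta)N^{-\nu}} + \ind{-(h-\delta)N^{-\nu} \le b_{N,j}-b_{N,i} < 0}\Big)\\
&\quad\ + \ind{b_{N,j}-b_{N,i} = (h+\delta)N^{-\nu}} - \ind{b_{N,j}-b_{N,i} = -(h-\delta)N^{-\nu}}.
\end{align*}
Applying \cref{AssumpIb.new} to each of the four one-sided indicators (with the exponent still $u=\nu$, using both $h+\delta$ and $h-\delta$, in both the positive and negative directions), the normalized sums converge to $(h+\delta)\mathcal{I}_b$, $(h+\delta)\mathcal{I}_b$, $(h-\delta)\mathcal{I}_b$, $(h-\delta)\mathcal{I}_b$, respectively, giving a contribution of $2(h+\delta)\mathcal{I}_b - 2(h-\delta)\mathcal{I}_b = 4\delta\mathcal{I}_b$. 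The stage-one result ensures that the two single-point terms contribute zero in the limit.

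There is no real obstacle beyond being careful about endpoint conventions and the symmetric application of \cref{AssumpIb.new} on the negative side (which follows by choosing $h<0$ in the assumption). The whole proof is a verbatim translation of \cref{AssumpB3}'s argument with $\widetilde{b}$, $\mathcal{J}_b$, and exponent $1/2$ replaced by $b$, $\mathcal{I}_b$, and exponent $\nu$, so I would in fact simply note this parallelism and omit the duplicated computation.
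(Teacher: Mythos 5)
Your proposal is correct and follows exactly the route the paper intends: the paper's own proof of this lemma is simply the remark that it is analogous to \cref{AssumptB3}\if0 1\fi~(i.e., \cref{AssumpB3,AssumpIb.more}), and your two-stage argument (vanishing single-point mass via a sandwich and \cref{AssumpIb.new} with $u=\nu$, then the four-one-sided-indicator decomposition) is that analogy carried out verbatim with $b$, $\mathcal{I}_b$, $N^{-\nu}$ in place of $\widetilde b$, $\mathcal{J}_b$, $N^{-1/2}$. The only nit is that the correction term $\ind{b_{N,j}-b_{N,i} = -(h-\delta)N^{-\nu}}$ in your band identity should enter with a $+$ sign rather than a $-$ sign for the decomposition to be exact, but since both single-point terms vanish in the limit by your first stage, this slip has no effect on the conclusion.
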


\begin{proof}%[Proof of~\cref{Ib.even.more}]
The proof is analogous to those of~\cref{AssumpB3,AssumpIb.more}, and hence omitted.
\end{proof}

\begin{lemma}\label{Jb.even.more}  Suppose that~\cref{AssumpJb.new} holds. Then for any $h\ge \delta>0,$  
\begin{equation*}
\lim_{N\to\infty} N^{-(2-\nu)}\sum_{j=1}^N \sum_{i=1}^N \ind{\left|\widetilde{b}_{N,j}- \widetilde{b}_{N,i}\right| \in \left[\frac{h-\delta}{N^\nu},\frac{h+\delta}{N^\nu}\right]} = 4\delta\mathcal{J}_b
\end{equation*}
where $\mathcal{J}_b$ is defined in~\cref{AssumpB2} of the main paper.
\end{lemma}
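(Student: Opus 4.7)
The statement is the direct analogue of~\cref{AssumpB3} with the shrinking rate $N^{-1/2}$ replaced by $N^{-\nu}$, and~\cref{AssumpB2} replaced by the mildly stronger~\cref{AssumpJb.new}. Accordingly, the plan is to mimic the proof of~\cref{AssumpB3} verbatim, substituting $I_{h,N,\nu}$ for $I_{h,N}$ and invoking convergence of $\widetilde{T}_{h,N,\nu}$ to $h\mathcal{J}_b$ instead of the $1/2$-version.

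The argument has two steps. In the first step, I would show that exact ties at distance $hN^{-\nu}$ are negligible, i.e.,
\begin{equation}\label{eq:tiesJ}
\lim_{N\to\infty} N^{-(2-\nu)}\sum_{j=1}^N \sum_{i=1}^N \ind{\widetilde{b}_{N,j}-\widetilde{b}_{N,i}=hN^{-\nu}}=0,\qquad h\in\mathbb{R}.
\end{equation}
For $h\ge 0$ and any $\delta'>0$, the left-hand side count is bounded above by the count of pairs with $hN^{-\nu}\le \widetilde{b}_{N,j}-\widetilde{b}_{N,i}<(h+\delta')N^{-\nu}$, which equals $\widetilde{T}_{h+\delta',N,\nu}-\widetilde{T}_{h,N,\nu}$ in the notation of~\eqref{eq:newdef} (the analogue for $\widetilde{b}$'s). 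By~\cref{AssumpJb.new} this difference tends to $\delta'\mathcal{J}_b$; letting $\delta'\downarrow 0$ yields~\eqref{eq:tiesJ}. The case $h<0$ is symmetric.

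In the second step, for $h\ge\delta>0$ I would decompose the two-sided shell indicator as
\begin{align*}
&\ind{|\widetilde{b}_{N,j}-\widetilde{b}_{N,i}|\in[(h-\delta)N^{-\nu},(h+\delta)N^{-\nu}]} \\
&= \bigl(\ind{0\le \widetilde{b}_{N,j}-\widetilde{b}_{N,i}<(h+\delta)N^{-\nu}}-\ind{0\le \widetilde{b}_{N,j}-\widetilde{b}_{N,i}<(h-\delta)N^{-\nu}}\bigr)\\
&\quad+\bigl(\ind{-(h+\delta)N^{-\nu}\le \widetilde{b}_{N,j}-\widetilde{b}_{N,i}<0}-\ind{-(h-\delta)N^{-\nu}\le \widetilde{b}_{N,j}-\widetilde{b}_{N,i}<0}\bigr)\\
&\quad+\bigl(\ind{\widetilde{b}_{N,j}-\widetilde{b}_{N,i}=(h+\delta)N^{-\nu}}-\ind{\widetilde{b}_{N,j}-\widetilde{b}_{N,i}=(h-\delta)N^{-\nu}}\bigr),
\end{align*}
where the last line accounts for the mismatch between the open/closed endpoints in the half-interval indicators of~\eqref{Ian.new} and the closed two-sided shell on the left. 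Summing over $i,j$, dividing by $N^{2-\nu}$, and recognizing each of the four summands in the first two lines as $\pm\widetilde{T}_{\pm(h\pm\delta),N,\nu}$, by~\cref{AssumpJb.new} they converge respectively to $(h+\delta)\mathcal{J}_b$, $-(h-\delta)\mathcal{J}_b$, $(h+\delta)\mathcal{J}_b$, and $-(h-\delta)\mathcal{J}_b$, whose sum is $4\delta\mathcal{J}_b$. The last line contributes zero by~\eqref{eq:tiesJ} applied with $h$ replaced by $\pm(h\pm\delta)$.

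There is no substantial obstacle here beyond careful bookkeeping of the strict/non-strict endpoints so that the tie-correction terms are pushed into the negligible sum handled by Step 1; all other ingredients are immediate consequences of~\cref{AssumpJb.new}. The structural identity with the proof of~\cref{AssumpB3} means I could simply remark that the proof is entirely analogous, as the authors have already elected to do in the excerpt for~\cref{AssumpIb.more,Jb.even.more}.
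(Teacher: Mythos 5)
Your proposal is correct and follows exactly the route the paper intends: the paper omits this proof with the remark that it is analogous to \cref{AssumpB3}, and your two-step argument (negligibility of exact ties at distance $hN^{-\nu}$ via differences of the $\widetilde{T}_{\cdot,N,\nu}$ sums, then decomposition of the two-sided shell into four half-interval terms converging to $(h+\delta)\mathcal{J}_b$, $-(h-\delta)\mathcal{J}_b$, $(h+\delta)\mathcal{J}_b$, $-(h-\delta)\mathcal{J}_b$) is precisely that analogue with $N^{-1/2}$ replaced by $N^{-\nu}$ and \cref{AssumpB2} by \cref{AssumpJb.new}. The only nit is that your displayed identity for the shell indicator is off by the endpoint terms $\ind{\widetilde{b}_{N,j}-\widetilde{b}_{N,i}=\pm(h-\delta)N^{-\nu}}$, but as you note these are exactly the tie counts killed by Step 1, so the argument goes through.
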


\begin{proof}%[Proof of~\cref{Jb.even.more}]
The proof is analogous to those of~\cref{AssumpB3,AssumpIb.more}, and hence omitted.
\end{proof}

%%%%%%%%%%%%%%%%%%%%%%%%%%%%%%%%%%%%%%%%%%%%%%%%%%%%%%%%%%%%%%%%%%%%%%%%%%%%%%%%%%%%%%%%%%%%%%%%%%%%%%%%%%%%%%%%%%%%%%%%%%%%%%%%%%%%%%%%%%%%%%%%%%%%%%%%%%%%%%%%%%%%%%%%%%%%%%%%%%%%%%%%%%%%%%%%%%%%%%%%%%%%%

%%%%%%%%%%%%%%%%%%%%%%%%%%%%%%%%%%%%%%%%%%%%%%%%%%%%%%%%%%%%%%%%%%%%%%%%%%%%%%%%%%%%%%%%%%%%%%%%%%%%%%%%%%%%%%%%%%%%%%%%%%%%%%%%%%%%%%%%%%%%%%%%%%%%%%%%%%%%%%%%%%%%%%%%%%%%%%%%%%%%%%%%%%%%%%%%%%%%%%%%%%%%%%%%%%%%%%%%%%%%%%%%%%%%%%%%%%%%%%%%%%%%%%%%%%%%%%%%%%%%%%%%%%%%%%%%%%%%%%%%%%%%%%%%%%%%%%%%%%%%%%%%%%%%%%%%%%%%%%%%%%%%%%%%%%%%%%%%%%%%%%%%%%%%%%%%%%%%%%%%%%%%%%%%%%%%%%%%%%%%%%%%%%%%%%%%%%%%%%%%%%%%%%%%%%%%%%%%%%%%%%%%%%%%%%%%%%%%%%%%%%%%%%%%%%%%%%%%%%%%%%%%%%%%%%%%%%%%%%%%%%%%%%%%%%%%%%%%%%%%%%%%%%%%%%%%%%%%%%%%%%%%%%%%%%%%%%%%%%%%%%%%%%%%%%%%

\begin{lemma}\label{iid.I_C.new} For each $N\ge 1,$ let $b_{N,1},\dots,b_{N,N}$ be i.i.d. from a  distribution with density $f_{}(\cdot)$, and $I_{h,N,\nu}$ be defined in \eqref{Ian.new} of the main paper,  then for any $0<\nu\le 1/2,$
\begin{equation*} N^{-(2-\nu)}\sum_{j=1}^N\sum_{i=1}^N I_{h,N,\nu} (b_{N,j} - b_{N,i})  \Pto h\int_{\R} f_{}^{2} (x) dx.\end{equation*} 
\end{lemma}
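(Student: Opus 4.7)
The plan is to establish convergence in probability via Chebyshev's inequality: I will show that $\EE[S_N] \to h\int_{\mathbb{R}} f_b^2$ and $\var(S_N) \to 0$, where $S_N$ denotes the double sum on the left-hand side. Because only convergence in probability is required here, as opposed to the almost-sure convergence proved in \cref{iid.I_C} for the special case $\nu = 1/2$, I can bypass the Efron--Stein machinery used there and work directly with just the first two moments. I treat the case $h > 0$; the case $h < 0$ is symmetric.

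For the mean, split the double sum into the diagonal $i = j$ (where each indicator equals one) and the off-diagonal $i \neq j$. The diagonal contribution is exactly $N \cdot N^{-(2-\nu)} = N^{-(1-\nu)} \to 0$ since $\nu < 1$. For the off-diagonal term, introduce $g(x) := P(b_{N,2} - b_{N,1} \le x)$; a short dominated-convergence argument (using Cauchy--Schwarz to bound the convolution integrand $\int f_b(u+x) f_b(u)\,du$ by $\int f_b^2 < \infty$) shows that $g$ is differentiable in a neighborhood of $0$, with $g'$ continuous at $0$ and $g'(0) = \int_{\mathbb{R}} f_b^2$. The mean value theorem then yields $N^{\nu}(g(hN^{-\nu}) - g(0)) \to h\int_{\mathbb{R}} f_b^2$, so that $\EE[S_N] \to h\int_{\mathbb{R}} f_b^2$.

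For the variance, write $\var(S_N) = N^{-(4-2\nu)} \sum_{i,j,k,l} \mathrm{Cov}(I(i,j), I(k,l))$ with $I(i,j) := I_{h,N,\nu}(b_{N,j} - b_{N,i})$, and split the sum according to $|\{i,j,k,l\}|$. When $|\{i,j,k,l\}| = 4$, independence gives zero covariance. When $|\{i,j,k,l\}| \le 2$, only $O(N^2)$ tuples contribute non-trivially (those with $(i,j) = (k,l)$ or $(i,j) = (l,k)$), each with covariance bounded by $\var(I(i,j)) \le \EE[I(i,j)] = O(N^{-\nu})$, for an overall contribution of $O(N^{-(2-\nu)}) \to 0$. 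The main case is $|\{i,j,k,l\}| = 3$: by conditioning on the repeated index and using independence of the remaining two observations, each non-trivial covariance is bounded above by $\EE[G(b)^2]$ or $\EE[G(b)\widetilde{G}(b)]$, where $G(y) := F_b(y + hN^{-\nu}) - F_b(y)$ and $\widetilde{G}(y) := F_b(y) - F_b(y - hN^{-\nu})$. Since $G, \widetilde{G} \in [0,1]$, each such quantity is at most $\EE[G(b)] = O(N^{-\nu})$, and summing the $O(N^3)$ such tuples yields a contribution of $O(N^{\nu-1}) \to 0$ (using $\nu \le 1/2 < 1$).

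There is no serious technical obstacle; the main care lies in enumerating the four patterns of index repetition in the three-distinct case ($i=k$, $j=l$, $i=l$, $j=k$) and verifying that the crude bound $\EE[G(b)^2] \le \EE[G(b)]$ (coming from $G \le 1$) is sharp enough to make the variance vanish. The integrability $\int f_b^2 < \infty$ is implicit in the finiteness of the limit in the statement. Combining the mean computation, the variance bound, and Chebyshev's inequality completes the proof.
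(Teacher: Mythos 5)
Your proof is correct, and your mean computation (the function $g$, the difference quotient at $0$, the negligible diagonal) is essentially identical to the paper's. Where you genuinely diverge is the variance bound. The paper applies the Efron--Stein inequality with the crude almost-sure bound $|S_N - S_N^{(k)}| \le 4N$, which gives $\var(N^{-(2-\nu)}S_N) \lesssim N^{-(1-2\nu)}$; this vanishes only for $\nu < 1/2$ strictly, so the paper must peel off the endpoint $\nu = 1/2$ and invoke the heavier machinery of \cref{iid.I_C} (a sixth-moment bound via a generalized Efron--Stein inequality plus Borel--Cantelli, which of course delivers almost-sure convergence). Your direct covariance decomposition by the cardinality of $\{i,j,k,l\}$ --- exact independence killing the four-distinct terms, and the conditional-expectation bound $\EE[G(b)^2] \le \EE[G(b)] = O(N^{-\nu})$ controlling the $O(N^3)$ three-distinct terms --- yields $\var(N^{-(2-\nu)}S_N) = O(N^{\nu-1})$, which vanishes uniformly over the whole range $0 < \nu \le 1/2$ (indeed for all $\nu<1$). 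So your argument is both more elementary (second moments and Chebyshev only) and handles the endpoint without a case split, at the cost of proving only convergence in probability rather than the almost-sure convergence available at $\nu = 1/2$; since the lemma only asserts $\Pto$, that cost is nil. Two minor points of hygiene: in the three-distinct case you should bound $|\mathrm{Cov}(I(i,j),I(k,l))|$ rather than the covariance itself (the product term $\EE[I(i,j)]\EE[I(k,l)] = O(N^{-2\nu})$ is harmless, so nothing changes), and the mean value theorem is not actually needed for the mean --- the convergence of the difference quotient $N^{\nu}(g(hN^{-\nu})-g(0))/h \to g'(0)$ is just the definition of the derivative at $0$, which the fundamental theorem of calculus plus continuity of translation in $L^2$ already supplies.
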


%\subsection{Proof of~\cref{iid.I_C.new}}\label{proof:iid.I_C.new}

\begin{proof}%[Proof of~\cref{iid.I_C.new}] 
We proved the above for $\nu=1/2$ in~\cref{iid.I_C}, so assume now that $\nu\in (0,1/2)$. Define $$S_N := S_N(b_{N,1},\dots,b_{N,N}) := \sum_{j=1}^N \sum_{i=1}^N I_{h,N,\nu} (b_{N,j} - b_{N,i}).$$
Observe that for any fixed $i\neq j,$ and $h > 0,$ $$\E\left(I_{h,N,\nu} (b_{N,j} - b_{N,i})\right) = P\left(0\le b_{N,2} - b_{N,1} < hN^{-\nu}\right) = g(hN^{-\nu}) - g(0)$$
where $$g(x) = P(b_{N,2} - b_{N,1} \le x) =\int_{-\infty}^x \int_{\R} f_{}(u + t) f_{}(u)\,du\,dt, \ x\in\R.$$
Using the DCT for integrals, we argue that $$g'(x) = \int_{\R} f_{}(u+x) f_{}(u)du.$$ %In particular, $\displaystyle g'(0) = \int_{\R} f_{}(u)^2du$. 
Hence 
\begin{align*} \numberthis\label{Glim2} \lim_{N\to\infty} N^{-(2-\nu)} \E(S_N) &= \lim_{N\to\infty}  N^{-(2-\nu)} N^2 (g(hN^{-\nu}) - g(0))\\
&= \lim_{N\to\infty}  h\cdot\frac{g(hN^{-\nu}) - g(0)}{hN^{-\nu}} \\
&= hg'(0)\\
&= h\int_{\R} f_{}(u)^2du.
\end{align*}
The case $h < 0$ can be handled similarly, and the case $h = 0$ is straight-forward.
Next we bound $\E(S_N - \E S_N)^2$ using the Efron-Stein inequality \citep{ES81}. For each $1\le k \le N,$ let $b_{N,k}'$ be an i.i.d. copy of $b_{N,k},$ independent of everything else, and define $$S_N^{(k)} = S_N(b_{N,1},\dots,b_{N,k-1},b_{N,k}',b_{N,k+1},\dots,b_{N,N}), \ 1\le k\le N.$$ 
Note,
\begin{align*}
S_N - S_N^{(k)}
&=  \sum_{j=1}^N \left(I_{h,N,\nu} (b_{N,j} - b_{N,k}) -  I_{h,N,\nu} (b_{N,j} - b_{N,k}')\right)\\
&\quad \quad+\sum_{j=1}^N \left(I_{h,N,\nu} (b_{N,k} - b_{N,j}) -  I_{h,N,\nu} (b_{N,k}' - b_{N,j})\right).
\end{align*}
Hence $\big| S_N - S_N^{(k)}\big| \leq 4N$ almost surely. The Efron-Stein inequality tells us that
\begin{align*}
 \E(N^{-(2-\nu)}(S_N - \E S_N))^2 
&\leq
 N^{-(4-2\nu)} \E \left[\sum_{k=1}^N \E\left[(S_N - S_N^{(k)})^2 \ \big|\ b_{N,1},\dots,b_{N,N}\right] \right]\\
&\le N^{-(4-2\nu)}\cdot 16 N^3= 16N^{-(1-2\nu)}.
\end{align*}
Since $\nu<1/2,$ we can now invoke \eqref{Glim2} to get the desired conclusion.
\end{proof}

%====================================================================================================================================================================================================================================================================================================================================

\begin{lemma}\label{PiPj} As $N\to\infty,$ the following results hold, for $r=1,2$.
\begin{align}
 \label{PiPjMu}\sum_{j=1}^N \sum_{i=1}^N \E^{1/r}\left[\left(\left(\vec{p}_{N,i} - \vec{p}_{N,j}\right)^\top  \vec{Z}_N\right)^{2r} \right]  &= (\rank(\vec{X}_N)-1)\cdot O(N),\\ 
% \label{PiPjM2}\sum_{j=1}^N \sum_{i=1}^N \E\left[\left(\left(\vec{p}_{N,i} - \vec{p}_{N,j}\right)^\top  \vec{Z}_N\right)^2 \ \Big|\ Z_{N,j}=1\right]  &= (\rank(\vec{X}_N)-1)\cdot O(N)\\ 
 \label{PiPjM}\sum_{j=1}^N \sum_{i=1}^N \E^{1/r}\left[\left(\left(\vec{p}_{N,i} - \vec{p}_{N,j}\right)^\top  \vec{Z}_N\right)^{2r} \ \Big|\ Z_{N,j}=1\right]  &= (\rank(\vec{X}_N)-1)\cdot O(N),\\
% \label{PiPjM2dc}\sum_{j=1}^N \sum_{i=1}^N \E\left[\left(\left(\vec{p}_{N,i} - \vec{p}_{N,j}\right)^\top  \vec{Z}_N\right)^2 \ \Big|\ Z_{N,i}=1,Z_{N,j}=1\right]  &= (\rank(\vec{X}_N)-1)\cdot O(N),\\
 \label{PiPjMdc}\sum_{j=1}^N \sum_{i=1}^N \E^{1/r}\left[\left(\left(\vec{p}_{N,i} - \vec{p}_{N,j}\right)^\top  \vec{Z}_N\right)^{2r} \ \Big|\ Z_{N,i}=1,Z_{N,j}=1\right]  &= (\rank(\vec{X}_N)-1)\cdot O(N).
\end{align}
\end{lemma}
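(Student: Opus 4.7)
My plan is first to reduce the two conditional bounds (F.2) and (F.3) to the unconditional one (F.1) by a probability-weighting argument: because $P(Z_{N,j}=1)=m/N$ and $P(Z_{N,i}=Z_{N,j}=1)$ for $i\ne j$ are bounded away from zero (recall $m/N\to\lambda\in(0,1)$), one has $\EE[W_{ij}^{2r}\mid Z_{N,j}=1]\le(N/m)\,\EE[W_{ij}^{2r}]$ and similarly for the double conditioning. Taking $r$th roots and summing shows that (F.2) and (F.3) differ from (F.1) only by an absolute multiplicative constant, so it suffices to establish (F.1).

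For (F.1), set $\vec{c}:=\vec{p}_{N,i}-\vec{p}_{N,j}$ so that $W_{ij}:=(\vec{p}_{N,i}-\vec{p}_{N,j})^{\top}\vec{Z}_N=\vec{c}^{\top}\vec{Z}_N$. Two elementary facts drive the argument. First, in standard regression $\mathbf{1}$ lies in the column space of $\vec{X}_N$, so $\vec{P}_{\vec{X}_N}\mathbf{1}=\mathbf{1}$ and consequently $\vec{c}^{\top}\mathbf{1}=0$. Second, expanding and using $\sum_i\|\vec{p}_{N,i}\|^2=\Tr(\vec{P}_{\vec{X}_N}^{2})=\Tr(\vec{P}_{\vec{X}_N})$ yields the trace identity
\begin{equation*}
    \sum_{i,j=1}^{N}\|\vec{p}_{N,i}-\vec{p}_{N,j}\|^{2}=2N\,\Tr(\vec{P}_{\vec{X}_N})-2\|\vec{P}_{\vec{X}_N}\mathbf{1}\|^{2}=2N(\rank(\vec{X}_N)-1).
\end{equation*}
For $r=1$, the SRSWOR covariance $\mathrm{Cov}(\vec{Z}_N)=\beta(\vec{I}-\mathbf{1}\mathbf{1}^{\top}/N)$ with $\beta=m(N-m)/(N(N-1))$, together with $\vec{c}^{\top}\mathbf{1}=0$, gives $\EE[W_{ij}^{2}]=\beta\|\vec{c}\|^{2}$; summing and invoking the trace identity immediately yields $(\rank(\vec{X}_N)-1)\cdot O(N)$.

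The $r=2$ case is the main step and the chief obstacle. The naive Cauchy--Schwarz bound $|W_{ij}|\le\sqrt{m}\,\|\vec{c}\|$ would give $\sqrt{\EE W_{ij}^{4}}\lesssim\sqrt{m}\,\|\vec{c}\|^{2}$, which after summing costs an unaffordable factor $\sqrt{N}$ because $\sum_{i,j}\|\vec{p}_{N,i}-\vec{p}_{N,j}\|$ is only $O(N^{3/2})$. To beat this, I plan to invoke Hoeffding's (1963) convex-domination theorem: for every convex $\phi$, $\EE_{\mathrm{WOR}}[\phi(\vec{c}^{\top}\vec{Z}_N)]\le\EE_{\mathrm{WR}}[\phi(\vec{c}^{\top}\vec{Z}_N)]$, where the right-hand side refers to $m$ IID draws from the empirical distribution of $\{c_1,\ldots,c_N\}$. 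On the IID side the sum is of $m$ bounded, mean-zero (thanks to $\vec{c}^{\top}\mathbf{1}=0$) independent variables $Y_j$, so a one-line fourth-moment expansion gives $\EE[(\sum_{j=1}^m Y_j)^{4}]\le 3m^{2}(\EE Y_1^{2})^{2}+m\,\EE Y_1^{4}$. The decisive (elementary) estimate $\sum_k c_k^{4}\le(\max_k c_k^{2})\|\vec{c}\|^{2}\le\|\vec{c}\|^{4}$, applied to $\EE Y_1^{2}=\|\vec{c}\|^{2}/N$ and $\EE Y_1^{4}\le\|\vec{c}\|^{4}/N$, collapses everything to $\EE W_{ij}^{4}\le C\|\vec{c}\|^{4}$ for a universal constant $C$. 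Hence $\sqrt{\EE W_{ij}^{4}}\le\sqrt{C}\,\|\vec{c}\|^{2}$, and summing with the trace identity again yields $(\rank(\vec{X}_N)-1)\cdot O(N)$, completing the proof.
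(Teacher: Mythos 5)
Your proof is correct, but it reaches the conclusion by a genuinely different route from the paper's. For the conditional bounds \eqref{PiPjM} and \eqref{PiPjMdc}, you reduce immediately to the unconditional case via $\EE\left[W_{ij}^{2r}\mid Z_{N,j}=1\right]\le (N/m)\,\EE\left[W_{ij}^{2r}\right]$ (valid because $W_{ij}^{2r}\ge 0$ and $P(Z_{N,j}=1)=m/N$, resp.\ $P(Z_{N,i}=Z_{N,j}=1)=\tfrac{m(m-1)}{N(N-1)}$, stays bounded away from $0$); the paper instead computes the conditional covariance structure of $\vec{Z}_N$ given $Z_{N,j}=1$ explicitly for $r=1$ and, for $r=2$, runs a chain of variance-decomposition inequalities of the form $\sqrt{\EE[W^4\mid Z_j=1]}\le\sqrt{\var[W^2\mid Z_j=1]}+\EE[W^2\mid Z_j=1]$ followed by $\var[W^2\mid Z_j=1]\le (N/m)\,\EE[\var(W^2\mid Z_j)]\le (N/m)\,\EE[W^4]$. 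For the unconditional fourth moment, you invoke Hoeffding's (1963) convex-ordering theorem to dominate the without-replacement sum by an i.i.d.\ with-replacement sum and then apply the standard fourth-moment expansion together with $\sum_k c_k^4\le\|\vec{c}\|^4$; the paper instead expands $\EE\bigl(\sum_k v_kZ_k\bigr)^4$ directly over index patterns and telescopes the cross terms using $\sum_k v_k=0$. Both arguments hinge on the same two structural facts (namely $\vec{P}_{\vec{X}_N}\mathbf{1}=\mathbf{1}$ and the trace identity $\sum_{i,j}\|\vec{p}_{N,i}-\vec{p}_{N,j}\|^2=2N(\rank(\vec{X}_N)-1)$) and both are valid. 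Yours is shorter and more modular --- the probability-weighting reduction and the convex-domination step would extend without change to higher moments or to other conditioning events of non-vanishing probability --- while the paper's is entirely self-contained, using only elementary SRSWOR moment computations and yielding explicit constants without appeal to an external comparison theorem.
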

\begin{proof}%[Proof of~\cref{PiPj}] 
Using the fact that $\vec{P}_{\vec{X}_N}$ is idempotent, we derive the following identity.
\begin{align*}\sum_{j=1}^N \sum_{i=1}^N  \|\vec{p}_{N,i} - \vec{p}_{N,j}\|^2
&= \sum_{j=1}^N \sum_{i=1}^N \left(\vec{P}_{\vec{X}_N}(i,i)+\vec{P}_{\vec{X}_N}(j,j)-2\vec{P}_{\vec{X}_N}(i,j)\right)\\
&= 2N\sum_{i=1}^N  \vec{P}_{\vec{X}_N}(i,i)-2\sum_{i=1}^N \vec{p}_{N,i}^\top \vec{1}\\
&= 2N \left(\Tr\left(\vec{P}_{\vec{X}_N}\right)-1\right) = 2N \left(\rank(\vec{X}_N)-1\right).\numberthis\label{norm}
\end{align*}
Now $\vec{1}$ belongs to the column space of $\vec{X}_N,$ so  $\E\left[(\vec{p}_{N,i} - \vec{p}_{N,j})^\top  \vec{Z}_N\right]=0$ for each $i,j$. Also note that $\var(\vec{Z}_N)$ is of the form $\alpha\vec{I} + \beta\vec{1}\vec{1}^\top$, where $\alpha=\frac{m}{N}(1-\frac{m}{N})$. %, since  $$\cov(Z_{N,i}, Z_{N,j}) = \begin{cases} -\frac{1}{N-1}\frac{m}{N}(1-\frac{m}{N}) & \text{for each $i\neq j$},\\  \frac{m}{N}(1-\frac{m}{N}) & \text{for $j=i$}.\end{cases}$$ 
Consequently, %\E\left[\left((\vec{p}_{N,i} - \vec{p}_{N,j})^\top  \vec{Z}_N\right)^2 \right] &=
\begin{align*}
 \var\left[(\vec{p}_{N,i} - \vec{p}_{N,j})^\top  \vec{Z}_N \right] &= (\vec{p}_{N,i} - \vec{p}_{N,j})^\top \left(\alpha\vec{I} + \beta\vec{1}\vec{1}^\top\right)(\vec{p}_{N,i} - \vec{p}_{N,j})\\
 &= \frac{m}{N}\left(1-\frac{m}{N}\right) \|\vec{p}_{N,i} - \vec{p}_{N,j}\|^2.
\end{align*}
Hence
\begin{align*}
 \sum_{j=1}^N\sum_{i=1}^N \E\left[\left((\vec{p}_{N,i} - \vec{p}_{N,j})^\top  \vec{Z}_N\right)^2 \right] \le \sum_{j=1}^N\sum_{i=1}^N \|\vec{p}_{N,i} - \vec{p}_{N,j}\|^2.
\end{align*}
Thus, \eqref{PiPjMu} follows for $r=1$. 
To prove it for $r=2,$  fix $i$ and $j$ for the moment and write $\left(\vec{p}_{N,i} - \vec{p}_{N,j}\right)^\top  \vec{Z}_N = \sum_{k=1}^N v_k Z_k$. Observe now that
\begin{align*}
\E \bigg( \sum_{k=1}^N v_k Z_k \bigg)^4
= \sum_{k_1=1}^N v_{k_1}^4 \E Z_{1} &+ \sum_{k_1,k_2 \text{ distinct}} ( 3v_{k_1}^2 v_{k_2}^2+4v_{k_1}^3 v_{k_2}) \E Z_{1}Z_{2} \\
&+ \sum_{k_1,k_2,k_3 \text{ distinct}} 6v_{k_1}^2 v_{k_2}v_{k_3} \E Z_{1}Z_{2} Z_{3}\\
&+ \sum_{k_1,k_2,k_3,k_4  \text{ distinct}} v_{k_1} v_{k_2}v_{k_3}v_{k_4} \E Z_{1}Z_{2} Z_{3}Z_{4}.
\end{align*}
Now $\sum_{k=1}^N v_k = (\vec{p}_{N,i} - \vec{p}_{N,j})^\top \vec{1}=0,$ so we obtain 
\begin{align*}
\sum_{k_1,k_2,k_3,k_4  \text{ distinct}} v_{k_1} v_{k_2}v_{k_3}v_{k_4} &= \sum_{k_1,k_2,k_3 \text{ distinct}} v_{k_1} v_{k_2}v_{k_3}(-v_{k_1}-v_{k_2}-v_{k_3})\\
&= - 3\sum_{k_1,k_2,k_3 \text{ distinct}} v_{k_1}^2 v_{k_2}v_{k_3},\\
\sum_{k_1,k_2,k_3 \text{ distinct}} v_{k_1}^2 v_{k_2}v_{k_3} &= \sum_{k_1,k_2 \text{ distinct}} v_{k_1}^2 v_{k_2}(-v_{k_1}-v_{k_2})\\ 
&= - \sum_{k_1,k_2 \text{ distinct}} v_{k_1}^3 v_{k_2} - \sum_{k_1,k_2 \text{ distinct}}v_{k_1}^2 v_{k_2}^2,\\
\sum_{k_1,k_2 \text{ distinct}} v_{k_1}^3 v_{k_2} &= \sum_{k_1=1}^N v_{k_1}^3 (-v_{k_1}) = - \sum_{k_1=1}^N v_{k_1}^4.
\end{align*}
Since $\E \left|Z_1\cdots Z_v\right|\le 1$ for $1\le v\le 4$, it follows from the above identities that 
\begin{align*}
\E^{1/2}\left(\left(\vec{p}_{N,i} - \vec{p}_{N,j}\right)^\top  \vec{Z}_N\right)^4 &\le \sqrt{D'}  \left( \sum_{k=1}^N v_k^4 + \sum_{k\neq l} v_k^2 v_l^2\right)^{1/2} \\
&\le \sqrt{2D'} \sum_{k=1}^N v_k^2= \sqrt{2D'} \|\vec{p}_{N,i} - \vec{p}_{N,j}\|^2
\end{align*}
for some constant $D'>0$ (which is free of $i,j$ and $N$). The rest follows again from \eqref{norm}.

Next, we deal with the identities where we condition upon $Z_{N,j}=1$. Observe that
\begin{align*}
 \E\left[\left((\vec{p}_{N,i} - \vec{p}_{N,j})^\top  \vec{Z}_N\right)^2 \ \Big|\ Z_{N,j}=1\right] 
 %&= \var\left[(\vec{p}_{N,i} - \vec{p}_{N,j})^\top  \vec{Z} \ \Big|\ Z_{N,j}=1\right]\\
&= (\vec{p}_{N,i} - \vec{p}_{N,j})^\top \var\left(\vec{Z}_N \ \big|\ Z_{N,j}=1\right) (\vec{p}_{N,i} - \vec{p}_{N,j}).
\end{align*}
For any $k\neq l$, $$\var(Z_{N,k} \mid Z_{N,j} =1) =  (\alpha_1-\alpha_1^2)\ind{k\neq j},$$ and $$\cov(Z_{N,k},Z_{N,l} \mid Z_{N,j} =1) = (\alpha_2-\alpha_1^2)\ind{k\neq j},$$ where $\alpha_1=\frac{m-1}{N-1},\  \alpha_2=\frac{m-1}{N-1}\cdot\frac{m-2}{N-2}$.
%Next we calculate $ \var\left(\vec{Z} \ \big|\ Z_{N,j}=1\right) $ explicitly. 
We thus obtain 
\begin{multline*}
\E\left[\left((\vec{p}_{N,i} - \vec{p}_{N,j})^\top  \vec{Z}_N\right)^2 \ \Big|\ Z_{N,j}=1\right] \\= \Big[\alpha_1(1-\alpha_1) \|\vec{p}_{N,i} - \vec{p}_{N,j}\|^2  - (\alpha_1-2\alpha_2+\alpha_1^2)(\vec{P}_{\vec{X}_N}(i,j)-\vec{P}_{\vec{X}_N}(j,j))^2\Big].
\end{multline*}
But $(\vec{P}_{\vec{X}_N}(i,j)-\vec{P}_{\vec{X}_N}(j,j))^2\le  \|\vec{p}_{N,i} - \vec{p}_{N,j}\|^2$. So it follows that $$\E\left[\left((\vec{p}_{N,i} - \vec{p}_{N,j})^\top  \vec{Z}_N\right)^2 \ \Big|\ Z_{N,j}=1\right] \leq D  \|\vec{p}_{N,i} - \vec{p}_{N,j}\|^2$$ for some $D>0$ which is free of $i,j$ and $N$. Invoking \eqref{norm} again, we finish the proof of \eqref{PiPjM} for $r=1$. 
To prove \eqref{PiPjM} for $r=2,$ note the following.
\begin{align*}
&\sqrt{\E\left[\left(\left(\vec{p}_{N,i} - \vec{p}_{N,j}\right)^\top  \vec{Z}_N\right)^4 \ \Big|\ Z_{N,j}=1\right] } \\
&\leq \sqrt{\var\left[\left(\left(\vec{p}_{N,i} - \vec{p}_{N,j}\right)^\top  \vec{Z}_N\right)^2 \ \Big|\ Z_{N,j}=1\right] } + \E\left[\left(\left(\vec{p}_{N,i} - \vec{p}_{N,j}\right)^\top  \vec{Z}_N\right)^2 \ \Big|\ Z_{N,j}=1\right] \\
&\leq \sqrt{\frac{N}{m}}\sqrt{\E\left[\var\left(\left(\left(\vec{p}_{N,i} - \vec{p}_{N,j}\right)^\top  \vec{Z}_N\right)^2 \ \Big|\ Z_{N,j}\right)\right] } + \E\left[\left(\left(\vec{p}_{N,i} - \vec{p}_{N,j}\right)^\top  \vec{Z}_N\right)^2 \ \Big|\ Z_{N,j}=1\right] \\
&\leq \sqrt{\frac{N}{m}}\sqrt{\var\left[\left(\left(\vec{p}_{N,i} - \vec{p}_{N,j}\right)^\top  \vec{Z}_N\right)^2\right] } + \E\left[\left(\left(\vec{p}_{N,i} - \vec{p}_{N,j}\right)^\top  \vec{Z}_N\right)^2 \ \Big|\ Z_{N,j}=1\right] \\
&\leq \sqrt{\frac{N}{m}}\sqrt{\E\left[\left(\left(\vec{p}_{N,i} - \vec{p}_{N,j}\right)^\top  \vec{Z}_N\right)^4\right] } + \E\left[\left(\left(\vec{p}_{N,i} - \vec{p}_{N,j}\right)^\top  \vec{Z}_N\right)^2 \ \Big|\ Z_{N,j}=1\right].
\end{align*}
Since $m/N\to\lambda\in (0,1),$ the desired conclusion follows from \eqref{PiPjMu} (with $r=2$) and \eqref{PiPjM} (with $r=1$). 
The proof of \eqref{PiPjMdc} is completely analogous to the proofs of \eqref{PiPjMu}  and \eqref{PiPjM}, hence omitted.
\end{proof} %%%%%%%%%%%%%%%%%%%%%%%%%%%%%%%%%%%%%%%%%%%%%%%%%%%%%%%%%%%%%%%%%%%%%%%%%%%%%%%%%%%%%%%%%%%%%%%%%%%%%%%%%%%%%%%%%%%%%%%%%%%%%%%%%%%%%%%%%%%%%%%%%%%%%%

\begin{lemma}\label{Phi-bound}
For $x\neq 0$ it holds that $\left|\Phi(x/\sigma) - \ind{x\ge 0}\right| \le \sigma \left|x\right|^{-1} \exp(-x^2/2\sigma^2),$
where $\Phi$ is the standard Normal CDF.
\end{lemma}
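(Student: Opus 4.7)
The inequality is a standard Gaussian tail bound. I split into two cases according to the sign of $x$ and reduce each to Mills' ratio.

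For $x > 0$, note that $|\Phi(x/\sigma) - \mathbf{1}\{x \ge 0\}| = 1 - \Phi(x/\sigma)$, which equals $P(Z > x/\sigma)$ for $Z \sim \mathcal{N}(0,1)$. Using $\phi'(s) = -s\phi(s)$ together with the elementary estimate $1 \le s/t$ on $s \ge t > 0$, I would write
\[
1 - \Phi(t) \;=\; \int_t^\infty \phi(s)\, ds \;\le\; \int_t^\infty \frac{s}{t}\phi(s)\, ds \;=\; \frac{\phi(t)}{t},
\]
which is the classical Mills' ratio bound. Applied at $t = x/\sigma$, this gives
\[
1 - \Phi(x/\sigma) \;\le\; \frac{\sigma}{x}\cdot \frac{1}{\sqrt{2\pi}}\exp\!\big(-x^2/(2\sigma^2)\big) \;\le\; \frac{\sigma}{|x|}\exp\!\big(-x^2/(2\sigma^2)\big),
\]
where in the last step I simply discard the harmless $1/\sqrt{2\pi} < 1$ factor to match the statement.

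For $x < 0$, by symmetry $\Phi(x/\sigma) = 1 - \Phi(-x/\sigma)$, so $|\Phi(x/\sigma) - \mathbf{1}\{x \ge 0\}| = \Phi(x/\sigma) = 1 - \Phi(|x|/\sigma)$, and the same Mills' ratio estimate at $t = |x|/\sigma$ yields the identical bound $\sigma|x|^{-1}\exp(-x^2/(2\sigma^2))$. Combining the two cases finishes the proof.

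There is really no obstacle here; the only thing to be careful about is that dropping $1/\sqrt{2\pi}$ is exactly what buys us the clean statement without the $1/\sqrt{2\pi}$ prefactor. The restriction $x \neq 0$ is needed only so that the right-hand side is finite.
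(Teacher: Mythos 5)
Your proof is correct and follows essentially the same route as the paper's: reduce both sign cases to the Gaussian tail $1-\Phi(t)$ and apply the Mills' ratio bound $1-\Phi(t)\le \phi(t)/t$, then discard the factor $1/\sqrt{2\pi}$. The only difference is that you derive Mills' ratio explicitly where the paper simply cites it as a standard inequality.
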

\begin{proof} Let $\phi$ be the density of standard Normal. For $x>0$, we have $$\left|\Phi(x/\sigma) - \ind{x\ge 0}\right| = 1 - \Phi(x/\sigma) \le \sigma \cdot x^{-1} \phi(x/\sigma),$$ using a standard inequality. For $x<0$, we can write $$\left|\Phi(x/\sigma) - \ind{x\ge 0}\right| =  \Phi(x/\sigma) = 1 - \Phi(-x/\sigma) \le \sigma \cdot (-x)^{-1} \phi(-x/\sigma).$$ Hence the result.
\end{proof}

\begin{lemma}\label{bvn-estimate}
Suppose that $(X,Y)$ follows the bivariate Normal distribution with $\E X = \E Y = 0$,  $\var(X) = \var(Y) = 1$, and $\text{corr}(X,Y) = \rho\in (-1,1)$. Then for any $h_1,h_2\in (0,1)$, the following bound holds:
\begin{equation*}
    \left|P(0\le X \le h_1, 0\le Y \le h_2) - P(0\le X \le h_1)P(0\le Y \le h_2)\right|\le C_\rho h_1h_2,
\end{equation*}
where $C_\rho =  |\rho|\left((1-\rho^2)^{-1/2} + (1-|\rho|)^{-2} \right)$.
\end{lemma}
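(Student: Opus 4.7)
The plan is to realize the left-hand side as the variation of a smooth function of the correlation parameter and bound its derivative via Plackett's identity. By the symmetry $(X,Y)\mapsto(X,-Y)$ (which flips the sign of $\rho$ and preserves both probabilities in the LHS, since the standard normal is symmetric: $\Phi(h_2)-\Phi(0)=\Phi(0)-\Phi(-h_2)$), the case $\rho<0$ reduces to an analogous computation on the region $[0,h_1]\times[-h_2,0]$ with $|\rho|$ in place of $\rho$, so I focus on $\rho\ge 0$. Let $f(x,y;\rho)$ denote the joint density and set $g(\rho):=P(0\le X\le h_1,\;0\le Y\le h_2)$. Since $g(0)$ is exactly the product of the marginals, the quantity to bound equals $|g(\rho)-g(0)|\le \int_0^\rho |g'(t)|\,dt$.

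The analytic key is Plackett's identity for the standard bivariate normal, namely $\partial_\rho f = \partial^2 f/(\partial x\,\partial y)$, which is verified by a direct calculation of both sides. Differentiating under the integral sign yields
\[
g'(t)=\int_0^{h_1}\!\int_0^{h_2}\partial_{xy}f(x,y;t)\,dy\,dx,\qquad
\partial_{xy}f = f\cdot\!\left[\frac{t}{1-t^2}+\frac{(x-ty)(y-tx)}{(1-t^2)^2}\right].
\]
I would then bound the two pieces of $\partial_{xy}f$ separately, both times using the uniform estimate $f(x,y;t)\le (2\pi\sqrt{1-t^2})^{-1}$. The first piece contributes $h_1h_2\,t/(2\pi(1-t^2)^{3/2})$ in the $(x,y)$-integral; the $t$-integral evaluates to
\[
\int_0^\rho \frac{s}{(1-s^2)^{3/2}}\,ds \;=\; (1-\rho^2)^{-1/2}-1 \;\le\; \rho\,(1-\rho^2)^{-1/2},
\]
using $1-\sqrt{1-\rho^2}\le \rho^2\le \rho$. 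Absorbing the harmless constant $1/(2\pi)\le 1$, this produces the first summand of $C_\rho h_1h_2$.

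For the second piece, since $x,\,ty,\,y,\,tx\in[0,1)$, the nonnegativity gives $|x-ty|\le\max(x,ty)<1$ and likewise $|y-tx|<1$, so $|(x-ty)(y-tx)|\le 1$. Combined with $(1-t^2)^{-2}\le (1-\rho)^{-2}$ on $t\in[0,\rho]$ and $\int_0^\rho (1-t^2)^{-1/2}\,dt=\arcsin\rho\le \pi\rho/2$, the second piece contributes at most $\rho h_1h_2(1-\rho)^{-2}$, matching the second summand of $C_\rho h_1h_2$. Summing the two contributions gives the claimed bound.

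The main conceptual step—and what dictates the exact shape of $C_\rho$—is extracting the factor $|\rho|$ from a quantity that is identically zero at $\rho=0$. Plackett's identity accomplishes precisely this by rewriting the LHS as $\int_0^\rho$ of a mixed derivative, so the factor $|\rho|$ falls out automatically; after that, the entire proof is routine estimation with Gaussian densities, and the split of the mixed derivative into its two natural summands dictates the two summands in $C_\rho$.
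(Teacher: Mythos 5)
Your argument is correct for $\rho\ge 0$ and is, in substance, the same interpolation-in-$\rho$ argument as the paper's, organized through Plackett's identity instead of a direct computation. The paper writes the difference of probabilities as an integral of the difference of densities, splitting the density into its normalizing constant $(2\pi\sqrt{1-\rho^2})^{-1}$ and the exponential factor $g_{x,y}(\rho)$: the change in the normalizing constant yields the $|\rho|(1-\rho^2)^{-1/2}$ summand, and the bound $|\partial_\rho g_{x,y}|=g_{x,y}\cdot|(x-\rho y)(y-\rho x)|(1-\rho^2)^{-2}\le(1-|\rho|)^{-2}$ on $[0,1]^2$ yields the other. Your two pieces of $\partial_{xy}f$ are exactly these two terms: $f\cdot t/(1-t^2)$ integrates in $t$ to the normalization difference, and $f\cdot(x-ty)(y-tx)/(1-t^2)^2$ is the paper's $\partial_\rho g_{x,y}$ term up to the normalizing constant. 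So the two summands of $C_\rho$ arise identically in both proofs; Plackett's identity buys a slightly cleaner derivation (the factor $|\rho|$ falls out of a single formula) at the price of verifying the identity itself.

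The one loose end is the reduction of $\rho<0$ to $\rho>0$. After reflecting $Y\mapsto -Y$ the region becomes $[0,h_1]\times[-h_2,0]$, and there the bound $|(x-ty)(y-tx)|\le 1$ fails: for $x\in[0,h_1]$ and $y\in[-h_2,0]$ one has $x-ty=x+t|y|$ and $|y-tx|=|y|+tx$, each of which can approach $2$, so the product can approach $4$ (e.g.\ $h_1=h_2=t=0.9$ gives about $2.9$). As it happens your estimate survives, because you only used $\arcsin\rho/(2\pi)\le\rho/4$ where $\rho$ alone would have sufficed, so the extra factor of $4$ is exactly absorbed; but the word ``analogous'' hides this, and as written the negative-$\rho$ case is not established without redoing the product bound on the reflected region. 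The paper avoids the case split entirely by bounding $|\partial_\rho g_{x,y}|$ uniformly in the sign of $\rho$.
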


\begin{proof}%[Proof of~\cref{bvn-estimate}]
 Define $$g_{x,y}(\rho):= \exp\left(-\frac{1}{2(1-\rho^2)}(x^2+y^2-2\rho xy)\right).$$
Note that for any fixed $x,y\in [0,1]$,
\begin{align*}
    \left|\frac{\partial g_{x,y}(\rho)}{\partial \rho}\right| = g_{x,y}(\rho) \left|\frac{xy}{1-\rho^2} - \frac{\rho}{1-\rho^2} \frac{x^2+y^2-2\rho xy}{1-\rho^2}\right|\le  \frac{1}{(1-|\rho|)^2}.
\end{align*}
Hence
\begin{align*}
    &\left|P(0\le X \le h_1, 0\le Y \le h_2) - P(0\le X \le h_1)P(0\le Y \le h_2)\right|\\
    &=\left|\int_{[0,\ h_1]\times[0,\ h_2]} \left((2\pi\sqrt{1-\rho^2})^{-1} g_{x,y}(\rho) -(2\pi)^{-1} g_{x,y}(0)\right)~dx~dy\right|\\
    &\le(2\pi)^{-1}\left|(1-\rho^2)^{-1/2}-1\right|\int_{[0,\ h_1]\times[0,\ h_2]} g_{x,y}(\rho)~dx~dy \\
    &\hspace{5cm}+ (2\pi)^{-1}\int_{[0,\ h_1]\times[0,\ h_2]} \left|g_{x,y}(\rho) - g_{x,y}(0)\right|~dx~dy\\
    &\le \rho^2 (1-\rho^2)^{-1/2} h_1h_2 + \int_{[0,\ h_1]\times[0,\ h_2]} \left|\rho\right|\int_0^1 \left|\frac{\partial g_{x,y}(t\rho)}{\partial \rho} \right|dt~dx~dy\\
    &\le  |\rho|\left((1-\rho^2)^{-1/2} + (1-|\rho|)^{-2} \right) h_1h_2.
\end{align*}
This completes the proof.
\end{proof}

\begin{proposition}[A generalized Efron-Stein inequality]\label{GenES} Given a sequence of independent real-valued random variables $W_{1}, W_{2}, \ldots, W_{n}$ and $F: \R^{n} \rightarrow \R$ be a measurable function. For each $1\le i\le n,$ let $W'_{i}$ be an independent copy of $W_{i}$, independent of the other $W_j$'s. Define $S:=F\left(W_{1}, \ldots, W_{n}\right)$ and $S_{i}:=F\left(W_{1}, \ldots, W_{i-1}, W'_{i}, W_{i+1}, \ldots, W_{n}\right)$. Then for all integers $q \geq 2$, there exists a constant $c_{q}$ (depending only on $q$) such that 
$$
\E\left|S-\mathbb{E}S\right|^q \leq c_{q}\E\left|\mathbb{E}\left[\sum_{i=1}^{n}\left(S-S_{i}\right)^{2} \mid\left(W_{1}, W_{2}, \ldots, W_{n}\right)\right]\right|^{q/2}.
$$
\end{proposition}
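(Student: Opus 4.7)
The plan is to proceed via a Doob martingale decomposition followed by Burkholder-type moment bounds. Setting $\mathcal{F}_i := \sigma(W_1, \ldots, W_i)$ with $\mathcal{F}_0$ trivial, I would define $D_i := \EE[S \mid \mathcal{F}_i] - \EE[S \mid \mathcal{F}_{i-1}]$, so that $S - \EE S = \sum_{i=1}^{n} D_i$ is a sum of martingale differences. The key identity, which exploits that $W'_i$ is an independent copy of $W_i$ (and that $S_i$ does not depend on $W_i$), is $\EE[S_i \mid \mathcal{F}_i] = \EE[S \mid \mathcal{F}_{i-1}]$; hence $D_i = \EE[S - S_i \mid \mathcal{F}_i]$, and conditional Jensen yields $D_i^{2} \le \EE[(S - S_i)^{2} \mid \mathcal{F}_i] =: Y_i$.

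Next, I would invoke the Burkholder--Davis--Gundy inequality for discrete-time martingales: for every integer $q \ge 2$ there is a constant $c'_q$ with $\EE\bigl|\sum_i D_i\bigr|^{q} \le c'_q\,\EE\bigl(\sum_i D_i^{2}\bigr)^{q/2}$. Combined with the bound $D_i^{2} \le Y_i$ above, this already delivers
\[
\EE|S - \EE S|^{q} \;\le\; c'_q\, \EE\Bigl(\sum_{i=1}^{n} Y_i\Bigr)^{q/2}.
\]
It remains to replace the ``predictable'' sum $\sum_i Y_i$, where $Y_i = \EE[Z_i \mid \mathcal{F}_i]$ with $Z_i := \EE[(S - S_i)^{2} \mid \mathcal{F}_n]$, by the target quantity $V := \sum_i Z_i$ at the cost of a multiplicative constant $\kappa_q$ that depends only on $q$.

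This final $n$-free $L^{q/2}$-comparison of $\sum_i Y_i$ and $V$ is the main obstacle: a term-by-term triangle inequality in $L^{q/2}$ introduces an unwanted factor of $n$, which would destroy the bound for moderately complex $F$. The cleanest resolution is not elementary; one should invoke the moment form of the modified log-Sobolev inequality for product measures --- essentially the entropy-method sub-additivity argument of Boucheron, Bousquet, Lugosi, and Massart --- which supplies $\EE(\sum_i Y_i)^{q/2} \le \kappa_q\,\EE V^{q/2}$ directly with $\kappa_q$ independent of $n$. Setting $c_q := c'_q\,\kappa_q^{q/2}$ then closes the argument. Any route bypassing the entropy machinery would have to reproduce the same non-trivial sub-additivity that makes the $n$-free constant possible, which is why the result is properly attributed to the specialised moment-inequality literature.
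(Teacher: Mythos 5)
Your proposal is essentially correct, but it takes a genuinely different route from the paper: the paper offers no argument at all for this proposition, its ``proof'' being a bare citation to Theorem~2 of Boucheron, Bousquet, Lugosi and Massart (2005), whose own derivation goes through the entropy method. Your martingale skeleton is sound: the identity $\EE[S_i\mid\mathcal{F}_i]=\EE[S\mid\mathcal{F}_{i-1}]$ holds because $(W_i',W_{i+1},\dots,W_n)$ and $(W_i,\dots,W_n)$ have the same law and are independent of $\mathcal{F}_{i-1}$, so $D_i=\EE[S-S_i\mid\mathcal{F}_i]$, conditional Jensen gives $D_i^2\le Y_i:=\EE[(S-S_i)^2\mid\mathcal{F}_i]$, and Burkholder's square-function inequality yields $\EE|S-\EE S|^q\le c_q'\,\EE(\sum_i Y_i)^{q/2}$. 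Where you go astray is in the diagnosis of the last step. Writing $Z_i:=\EE[(S-S_i)^2\mid W_1,\dots,W_n]$ and $V:=\sum_i Z_i$, you need $\EE(\sum_i Y_i)^{p}\le \kappa_p\,\EE V^{p}$ with $p=q/2\ge 1$ and $\kappa_p$ free of $n$; this does \emph{not} require the modified log-Sobolev machinery. Since $Y_i=\EE[Z_i\mid\mathcal{F}_i]$ and the $Z_i$ are nonnegative, the adapted increasing process $A_j:=\sum_{i\le j}Y_i$ satisfies $\EE[A_n-A_{j-1}\mid\mathcal{F}_{j-1}]=\sum_{i\ge j}\EE[Z_i\mid\mathcal{F}_{j-1}]\le\EE[V\mid\mathcal{F}_{j-1}]$, and the classical Garsia--Neveu domination lemma then gives $\EE A_n^{p}\le p^{p}\,\EE V^{p}$ directly, with no factor of $n$. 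Setting $c_q:=c_q'(q/2)^{q/2}$ closes the proof by entirely elementary martingale arguments. What the entropy method of Boucheron et al.\ buys over this route is only the sharper growth of the constants in $q$ (of order $q^{q/2}$ rather than $q^{q}$), which is immaterial here since the proposition asks merely for some constant depending on $q$; so your argument, once the final lemma is correctly identified, is actually more self-contained than the paper's citation.
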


\begin{proof}
{See {\citet[Theorem 2]{GES05}} for a proof.}
\end{proof}
The special case $q=2$ yields the Efron-Stein inequality (see \citet{ES81}).

\end{document}